 \setlist{nosep} 
 \setlist{noitemsep}
\newcommand{\qed}{\hfill$\blacksquare$}
\DeclareMathOperator*{\argmin}{\arg\!\min}
\DeclareMathOperator*{\argmax}{\arg\!\max}
\newcommand{\m}{\mathcal}
\newcommand{\RR}{\mathbb{R}}
\newcommand{\Var}{\mathrm{Var}}
\newcommand{\Surf}{\mathrm{Surf}}
\newcommand{\Expt}{\mathbb{E}}
\newcommand{\eps}{\varepsilon}
\newcommand{\Unif}{\mathrm{Unif}}
\newcommand{\poly}{\mathrm{poly}}
\newcommand{\polylog}{\mathrm{polylog}}
\newcommand{\Ind}{\mathds{1}}
\newcommand{\SphereD}{\m{S}^{d-1}}
\newcommand{\GMMd}{\mathrm{GMM}_{d,k}}
\newcommand{\Id}{\bm{I}}
\newcommand{\0}{\bm{0}}
\newcommand{\Prob}{\mathbb{P}}
\newcommand{\e}{\bm{e}}
\newcommand{\x}{\bm{x}}
\newcommand{\y}{\bm{y}}
\newcommand{\C}{\bm{C}}
\newcommand{\D}{\bm{D}}
\newcommand{\Dc}{\bm{\m{D}}}
\newcommand{\X}{\bm{X}}
\newcommand{\Xh}{\hat{\bm{X}}}
\newcommand{\G}{\bm{G}}
\newcommand{\Xc}{\bm{\m{X}}_k}
\newcommand{\Xch}{\hat{\bm{\m{X}}}_k}
\newcommand{\Gc}{\bm{\m{G}}_k}
\newcommand{\Z}{\bm{Z}}
\newcommand{\Y}{\bm{Y}}
\newcommand{\W}{\bm{W}}
\newcommand{\Ell}{\bm{\ell}}
\newcommand{\Loss}{\m{L}}
\newcommand{\LossAvg}{\m{L}_{\small \mbox{avg}}}
\newcommand{\LossMax}{\m{L}_{\small \mbox{max}}}
\newcommand{\SC}{n^*}
\newcommand{\AWGN}{\mathsf{AWGN}}
\newcommand{\RandLabel}{\ell}
\newcommand{\dist}{\mathrm{dist}}
\newcommand{\Rate}{\mathsf{R}_{d,k}}
\newcommand{\RateLim}{\mathsf{R}}
\newcommand{\Capacity}{\mathsf{C}}
\newcommand{\RegimeParam}{\beta}
\newcommand{\Decode}{\mathsf{Dec}}
\newcommand{\DecodeOpt}{\mathsf{DecOpt}}
\newcommand{\DecMMSE}{\mathsf{DecMMSE}}
\newcommand{\DecCORR}{\mathsf{DecCORR}}
\newcommand{\DecNN}{\mathsf{DecNN}}
\newcommand{\DecodeErrorSymb}{\#}
\newcommand{\CodeErrI}{\rho_i}
\newcommand{\CodeErrAvg}{\rho_{\small \mbox{avg}}}
\newcommand{\ExptCodeErrAvg}{\rho_{\small \mbox{avg}}}
\newcommand{\Net}{\m{T}}
\newcommand{\NetClose}{\m{T}_{\mathrm{Close}}}
\newcommand{\Test}{\mathsf{Test}}
\newcommand{\HClose}{\m{H}_{\mathrm{Close}}}
\newcommand{\HFar}{\m{H}_{\mathrm{Far}}}
\newcommand{\qClose}{\mathsf{q}_{\mathrm{Close}}}
\newcommand{\qFar}{\mathsf{q}_{\mathrm{Far}}}
\newcommand{\IndexSet}{\m{I}}
\newcommand{\epsI}{\eps_{\mathrm{I}}}
\newcommand{\XcTilde}{\tilde{\bm{\m{X}}}}
\newcommand{\XchI}{\XcTilde_{\mathrm{I}}}
\newcommand{\XTilde}{\tilde{\bm{X}}}
\newcommand{\XhI}{\XTilde}
\newcommand{\KLb}{ D_{\mathrm{KL}} }
\newcommand{\MI}{I}
\newcommand{\Ent}{H}
\newcommand{\Approx}{\mathrm{Approx}}
\newcommand{\isf}{\mathsf{i}}
\newcommand{\Avg}{\bm{A}}
\newcommand{\Proj}{\m{P}}
\newcommand{\ExptAvg}{\Expt_{\mbox{avg},\m{E}}}
\newcommand{\oneVec}{\bm{1}}
\providecommand*{\cupdot}{%
  \mathbin{%
    \mathpalette\@cupdot{}%
  }%
}
\newcommand*{\@cupdot}[2]{%
  \ooalign{%
    $\m@th#1\cup$\cr
    \sbox0{$#1\cup$}%
    \dimen@=\ht0 %
    \sbox0{$\m@th#1\cdot$}%
    \advance\dimen@ by -\ht0 %
    \dimen@=.5\dimen@
    \hidewidth\raise\dimen@\box0\hidewidth
  }%
}
\providecommand*{\bigcupdot}{%
  \mathop{%
    \vphantom{\bigcup}%
    \mathpalette\@bigcupdot{}%
  }%
}
\newcommand*{\@bigcupdot}[2]{%
  \ooalign{%
    $\m@th#1\bigcup$\cr
    \sbox0{$#1\bigcup$}%
    \dimen@=\ht0 %
    \advance\dimen@ by -\dp0 %
    \sbox0{\scalebox{2}{$\m@th#1\cdot$}}%
    \advance\dimen@ by -\ht0 %
    \dimen@=.5\dimen@
    \hidewidth\raise\dimen@\box0\hidewidth
  }%
}
\title[]{On the Role of Channel Capacity in Learning Gaussian Mixture Models
}
\begin{document}

\maketitle

\begin{abstract}%
	This paper studies the sample complexity of learning the $k$ unknown centers of a balanced Gaussian mixture model (GMM) in $\mathbb{R}^d$ with spherical covariance matrix $\sigma^2\bm{I}$. In particular, we are interested in the following question: what is the maximal noise level $\sigma^2$, for which the sample complexity is essentially the same as when estimating the centers from labeled measurements? To that end, we restrict attention to a Bayesian formulation of the problem, where the centers are uniformly distributed on the sphere $\sqrt{d}\mathcal{S}^{d-1}$. Our main results characterize the \emph{exact noise threshold} $\sigma^2$ below which the GMM learning problem, in the large system limit $d,k\to\infty$, is as easy as learning from labeled observations, and above which it is substantially harder. The threshold occurs at $\frac{\log k}{d} = \frac12\log\left( 1+\frac{1}{\sigma^2} \right)$, which is the capacity of the additive white Gaussian noise (AWGN) channel.
	Thinking of the set of $k$ centers as a code, this noise threshold can be interpreted as the largest noise level for which the error probability of the code over the AWGN channel is small. Previous works on the GMM learning problem have identified the \emph{minimum distance} between the centers as a key parameter in determining the statistical difficulty of learning the corresponding GMM. 
	While our results are only proved for GMMs whose centers are uniformly distributed over the sphere, they hint that perhaps it is the decoding error probability associated with the center constellation as a channel code that determines the statistical difficulty of learning the corresponding GMM, rather than just the minimum distance.

\end{abstract}

\section{Introduction}


Gaussian mixture models (GMMs) are 
widely used
in statistics and machine learning.
{Here, we consider the simplest case of a  \emph{spherical}, \emph{balanced} $d$-dimensional GMM with $k$-components.}
{Specifically, for centers $\Xc=(\X_1,\ldots,\X_k) \in \RR^{d\times k}$ and variance $\sigma^2$,} the corresponding GMM, denoted by $\GMMd(\Xc,\sigma^2)$, is described by the probability distribution $\Y\sim \GMMd(\Xc,\sigma^2)$:
\begin{equation}\label{eq:Intro:Y}
    \Y = \X_\RandLabel+\sigma\Z,\quad \RandLabel\sim\Unif([k]),\;\Z\sim \m{N}(\0,\Id)\,,
\end{equation}
where 
$[k]=\{1,\ldots,k\}$, and $\RandLabel\in [k]$ will sometimes be referred to as the \emph{label}
 of $\Y$ {and is statistically independent of $\Z$}. 
{Our focus is on the classical} GMM learning problem, where one observes $n$ independent samples $\Y_1,\ldots,\Y_n\sim \GMMd(\Xc,\sigma^2)$, and wishes to recover the unknown centers $\Xc$ {(throughout,
we always assume that the number of centers $k$ and the variance $\sigma^2$ are known)}. 

 This paper is devoted to studying the fundamental information-theoretic limits of the GMM learning problem, {namely, the \emph{sample complexity}: what is the smallest number of samples $n$ one needs to collect in order} to recover the centers (to within some prescribed precision)? {The main difficulty in learning the GMM centers is that the samples are unlabeled, and the sample complexity is clearly lower bounded by that of the ``genie-aided'' setup where each sample is labeled. For sufficiently small noise levels the measurements can be accurately clustered, and the problem is as easy as in the ``genie-aided'' case, while for large enough noise levels reliable clustering is impossible. The main question we seek to answer here is: \emph{what is the critical noise level below which the problem is as statistically easy as in the labeled case, and above which it is significantly harder?}}

 Past works have shown that the \emph{separation} between the centers $\X_1,\ldots,\X_k$ has a decisive effect on the statistical difficulty of the problem. Let ${\Delta(\Xc)=\min_{1\le i < j \le k}\|\X_i-\X_j\|}$ be the minimal separation between any two centers. The seminal paper \cite{regev2017learning} has accurately identified the \emph{scaling} of $\Delta(\Xc)$, in the large system limit ${k,d\to\infty}$, under which one can estimate the centers (say, to within a small constant precision) using only $n=\poly(k,d)$ many samples. They show:\footnote{We restrict our attention in this discussion, and throughout the paper, exclusively to an asymptotic regime where ${d,k\to\infty}$ together with ${\limsup_{d,k\to\infty} \frac{\log k}{d} <\infty}$.} 1) \emph{Upper bound:} If $\Delta=\Omega(\sigma\sqrt{\log k})$ then the centers may be estimated with $n=\poly(d,k)$ samples; 2) \emph{Lower bound:} For any $\gamma(k)=o\left(\sigma\sqrt{\log k}\right)$, the class of GMMs with minimum separation $\Delta\ge \sigma\gamma(k)$ is not learnable (in a minimax sense) from $n=\poly(k,d)$ samples. The upper bound was recently improved by \cite{kwon2020algorithm}, who showed that when $\Delta=\Omega(\sigma\sqrt{\log k})$, in fact $n=O(\sigma^2 k \cdot \polylog(k))$ samples suffice; this \emph{almost} matches (up to $\polylog(k)$ factors) the {sample complexity for the labeled case.}
 {Stated differently, the results above identify the critical noise level scaling for the minimax estimation problem as $\sigma^2\sim \frac{\Delta^2(\Xc)}{\log k}$.} 

 {
{The goal of this paper}
 is to develop a finer grained understanding of the \emph{exact} critical noise level $\sigma$, rather than only its scaling. 
{To}
 tackle this ambitious question, we make two modifications with respect to the setup studied in~\cite{regev2017learning} {and}~\cite{kwon2020algorithm}: 1) Rather than studying the minimax setting with respect to all sets of centers $\Xc$ with a given $\Delta(\Xc)$, we take a Bayesian approach and assume $\Xc\sim (\Unif(\sqrt{d}\SphereD))^{\otimes k}$;   2) We consider a ``more forgiving'' loss function, which measures the average error in the center reconstruction rather than the maximal error. The rationale behind these modifications will be clarified in the sequel.}

Under this setup, we show that the {the critical noise level is precisely characterized } by the equation $\frac{1}{2}\log\left(1+\frac{1}{\sigma^2}\right)=\frac{\log k}{d}$, which is, {by no accident}, the noise level below which a ``typical'' constellation $\Xc$ constitutes a good error correcting code for the $\AWGN(\sigma^2)$ channel (additive white Gaussian noise, with noise variance $\sigma^2$). 
Our analysis relies \emph{explicitly} on the decodability properties of $\Xc$, when thought of as a channel code. This is a ``global'' property of the constellation, compared to the minimum distance (note that it is well-known that at high coding rate, the minimum distance of a code is not entirely predictive of its error probability, {see e.g.~\cite{bf02}}).
{
Regarding the minimum separation, we remark that, as is to be expected, our results are consistent with \cite{regev2017learning} regarding the required scaling of $\Delta(\Xc)$ for statistically-efficient learning. Classical results on sphere packing, e.g., ~\cite{kabatiansky1978bounds}, imply that if $\log{k}/d$ is finite, ``typical'' constellations under $\Xc\sim (\Unif(\sqrt{d}\SphereD))^{\otimes k}$ have minimal separation $\Delta(\Xc)=\Theta(\sqrt{d})$. Thus, 1) When $\log{k}/d=\Theta(1)$ the critical noise level is at $\sigma^2=\Theta(1)$, so in terms of minimal separation, $\Delta(\Xc)/\sigma= \Theta(\sqrt{d})=\Theta(\sqrt{\log k})$; 2) On the other hand, when $\log{k}/d=o(1)$, the critical noise level is $\sigma^2=\Theta(d/\log{k})$ and so $\Delta(\Xc)/\sigma=\Theta(\sqrt{\log k})$.

Finally, our results \emph{hint} at the possibility of a deeper connection between channel coding and statistical inference: the decodability properties of the set of centers $\Xc$ (as a channel code) may determine, to an extent, the statistical difficulty of learning the corresponding GMM. The present paper takes a modest first step towards
showing such a connection, establishing it for the special case of spherical random codes, whose typical instances posses strong symmetry properties. 
}

\subsection{Formal Problem Formulation}
\label{sec:Model}

As mentioned before, we study the large system behavior of the sample complexity under a uniform spherical prior on the centers. Denote the (random) centers by
\begin{equation}\label{eq:Formulation:XcPrior}
    \Xc = (\X_1,\ldots,\X_k) \sim \left(\Unif(\sqrt{d}\SphereD)\right)^{\otimes k} \,.
\end{equation}
Note that we scale the problem so that $\|\X_i\|=\sqrt{d}$ for all $i\in[k]$. We observe $n$ measurements, $\Y_1,\ldots,\Y_n$, sampled from the GMM distribution whose centers are $\Xc$:
\begin{equation}\label{eq:Formulation:YGivenXc}
    \left[ \Y_1,\ldots,\Y_n \,\Big|\,\Xc\right] \;\overset{i.i.d.}{\sim}\; \GMMd(\Xc,\sigma^2) \,,
\end{equation}
see also (\ref{eq:Intro:Y}).
Per standard terminology in signal processing, $1/\sigma^2$ may be interpreted as the ``signal-to-noise ratio'' (SNR) per coordinate. Suppose that $\Xch=(\Xh_1,\ldots,\Xh_k)$ is an estimator of $\Xc$, computed from the measurements. The model admits the following Markov chain structure:
\begin{equation}
    \label{eq:Formulation:MarkovChain}
    \Xc=(\X_1,\ldots,\X_k)\longrightarrow (\Y_1,\ldots,\Y_n)\longrightarrow \Xch=(\Xh_1,\ldots,\Xh_k) \,.
\end{equation}
{At this point it is instructive to think about the much simpler estimation problem, where each 
{measurement $\Y_i$ is observed with its label $\RandLabel_i\in [k]$,}
and 
{every
center is observed \emph{exactly} $n/k$ times}. For this problem, the optimal mean squared error (MSE) in the reconstruction of each center is { $d^{-1}\Expt\|\X_i-\Xh_i\|^2=k\sigma^2/n$} (to leading order in $k/n$), and is attained for example, by the sample mean. In the GMM estimation problem the samples are not labeled, and furthermore, the number of times each center appears in the measurements is a  $\mathrm{Binomial}(n,1/k)$ random variable. While the mean of this random variable is indeed $n/k$, some centers will appear fewer times. In particular, when $n=o(k\log{k})$ some of the centers are likely to not appear even once (coupon collecting). 
{To} 
circumvent the issues arising due to this effect, and focus our study on the problem of dealing with the lack of labels, we} measure the discrepancy between $\Xc$ and $\Xch$, by the loss function
\begin{equation}\label{eq:LossAvg-def}
    \LossAvg(\Xc,\hat{\bm{\m{X}}}_k ) = \frac1k \sum_{i=1}^k d^{-1}\dist^2(\X_i,\hat{\bm{\m{X}}}_k) := \frac1k \sum_{i=1}^k \min_{1\le j \le k} d^{-1}\|\X_i-\Xh_j\|^2\,.
\end{equation}
In words: the average normalized squared distance between a center $\X_i$ and the list $\Xch$. {As we shall see, under this loss function it is possible to obtain a risk of $k\sigma^2/n$ for $\sigma$ below the critical noise level and $n/k$ large enough. 
In contrast, the more restrictive max-loss function $\LossMax(\Xc,\Xch)= \max_{1\le i \le k} d^{-1}\dist^2(\X_i,\Xch)$ considered in much of the prior work, does not decay with $n$ in the regime $n=o(k\log k)$, regardless of the noise level, due to the non-uniform empirical distribution of the center indices. Under $\LossAvg$, on the other hand, 
{to}
achieve $\eps$ error, it suffices to estimate only a fraction $1-O(\eps)$ of the centers within error $O(\eps)$, 
{having} the remaining centers incur an error {$O(1)$}. Thus, the effect of non-uniform label empirical distribution is bypassed by this loss function.}

Under our formulation of the GMM learning problem, the goal is to construct an estimation rule $\Xch\,:\,(\RR^d)^n\to \RR^{d\times k}$ (``algorithm'') so to minimize the risk: $\Expt \LossAvg(\Xc,\Xch(\Y_1,\ldots,\Y_n))$. 
Importantly, the expectation is taken over the randomness in \emph{both} the  sample generating process given the centers (\ref{eq:Formulation:YGivenXc}), \emph{as well as} the center prior distribution (\ref{eq:Formulation:XcPrior}), whose joint distribution adheres to the Markov chain structure in (\ref{eq:Formulation:MarkovChain}). We study the \emph{information-theoretic limits} of the aforementioned problem.
Consider the minimum attainable risk {over all estimation laws $\Xch$}:
\begin{equation}\label{eq:Formulation:MinRisk}
    R_n = \inf_{\Xch} \Expt\LossAvg(\Xc,\Xch(\Y_1,\ldots,\Y_n))\,.
\end{equation}
{
For a fixed precision level $\eps>0$, define the \emph{sample complexity},
}
\begin{equation}
    \label{eq:Formulation:SC}
    \SC_\eps=\SC_\eps(d,k,\sigma^2) = \min\left\{ n\,:\,R_n \le \eps \right\} \,.
\end{equation}  
{
Importantly, 
(\ref{eq:Formulation:MinRisk}) and (\ref{eq:Formulation:SC}) 
{make no assumptions about the computational difficulty}
of implementing $\Xch:(\RR^d)^n\to\RR^{d\times k}$,
{and in particular are not restricted to computational efficient algorithms ($\poly(d,k)$ runtime).}
Throughout, computational considerations shall be completely neglected.
}

\subsection{Main Results}\label{sec:MainResults}



{As our analysis relies on viewing the 
centers as a code
}
for the AWGN channel, the problem's  \emph{rate}
\begin{equation}
    \label{eq:Rate-def}
    \Rate := \frac{\log k}{d} \,,
\end{equation}
and the decreasing function $\Capacity:(0,\infty)\to(0,\infty)$
\begin{equation}
    \label{eq:CapacityFunc-def}
    \Capacity(\sigma^2) = \frac12 \log\left( 1+\frac{1}{\sigma^2} \right) \,,
\end{equation}
characterizing the $\AWGN(\sigma^2)$ channel capacity, will play a key role. Throughout the paper, we couple the noise level $\sigma^2$ to $k$ and $d$ by the parameter $\RegimeParam\in (0,\infty)$ via the equation
\begin{equation}\label{eq:Rate-Equals-Cap-Beta}
    \Rate = \Capacity(\RegimeParam\sigma^2) \,.
\end{equation}
When $\RegimeParam>1$, the rate is smaller than the capacity; when $\RegimeParam<1$, it is larger. This parametrization will turn out particularly useful in the statement of the results and their derivations.

We restrict attention to the large-system limit, where $d,k\to\infty$, and denote the \emph{limiting rate} by
\begin{equation}
    \label{RateLim-def}
    \RateLim = \lim_{d\to\infty} \Rate \in [0,\infty)\,.
\end{equation}
{We distinguish between two asymptotic regimes:}
\begin{itemize}[align=left]
    \item \textbf{(Positive Rate, $\RateLim>0$)}: $\sigma^2\in (0,\infty)$ is a fixed constant. In particular, $k=e^{\Theta(d)}$.
    \item \textbf{(Zero Rate, $\RateLim=0$)}: $\sigma^2\to \infty$. So that also $k\to\infty$, we also require impose $\sigma^2=o(d)$. 
\end{itemize}
{
We remark that, since we are interested in estimation to \emph{finite precision} $\eps$ in Theorems~\ref{thm:BelowCapacity} and~\ref{thm:AboveCapacity} below, the asymptotic regime $\log k =\omega(d)$, namely when the number of centers $k$ is super-exponential in $d$, becomes rather uninteresting. Indeed, for super-exponential $k$ one may simply take $\Xch$ to be some fixed $\sqrt{\eps d}$-net of the sphere $\sqrt{d}\SphereD$, which can be of size $\left(O(1/\eps)\right)^{d/2} \ll k$. Clearly, $\LossAvg(\Xc,\Xch)\le \eps$ for any $\Xc$, so under such asymptotics $\SC_\eps = 0$ exactly. 
}

Our first main result states that when the rate is below the channel capacity, $\Xc$ is learnable at essentially the same sample complexity as in the labeled case.

\begin{theorem}\label{thm:BelowCapacity}
    Suppose that $\RegimeParam>1$.
    Then
    {
    \begin{equation}\label{eq:Thm1}
        e^{-2\RateLim} \le \lim_{\eps\to0} \lim_{d\to\infty} \frac{\SC_\eps}{\left(\sigma^2 k/\eps\right)} \le 1 \,.
    \end{equation}
    }
\end{theorem}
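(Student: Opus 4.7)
The plan is to establish the upper and lower bounds on $\SC_\eps$ separately in the joint limit $d\to\infty$, $\eps\to 0$.

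For the upper bound, $\lim \SC_\eps/(\sigma^2 k/\eps)\le 1$, I would construct a two-stage estimator whose risk matches the genie-aided labeled setup. Stage~1: use a sub-linear number $n_1=o(n)$ of samples to produce a preliminary estimate $\Xch^{(1)}$ of the centers with per-coordinate MSE $o(1)$; this can be done, e.g., by a maximum-likelihood search over $k$-tuples on the sphere, or by a moment-based approach together with clustering. Stage~2: for each remaining sample $\Y_j$, assign a pseudo-label $\hat{\RandLabel}_j=\argmin_i \|\Y_j-\Xh_i^{(1)}\|^2$, and output the sample means $\Xh_i=N_i^{-1}\sum_{j:\hat{\RandLabel}_j=i}\Y_j$. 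The key point is that $\RegimeParam>1$ translates to $\Rate<\Capacity(\sigma^2)$, so by Shannon's channel coding theorem applied to the random spherical code $\Xc$, the AWGN maximum-likelihood decoder has vanishing error probability w.h.p.\ over $\Xc$; by continuity of the error exponent, the plugin decoder (using $\Xch^{(1)}$ in place of $\Xc$) inherits this property when $\Xch^{(1)}$ is close enough. Conditional on correct decoding, the sample means attain per-coordinate MSE $\sigma^2/N_i$; averaging over $N_i\sim\mathrm{Bin}(n_2,1/k)$ via Binomial concentration gives $(1+o(1))\sigma^2 k/n$, while the misdecoding contribution is a negligible $O(1)\cdot o(1)$.

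For the lower bound, $\lim\ge e^{-2\RateLim}$, the plan is an information-theoretic argument. By data processing and the conditional-i.i.d.\ structure of the $\Y_j$'s given $\Xc$, $\MI(\Xc;\Xch)\le \MI(\Xc;\Y^n)\le n\cdot \MI(\Xc;\Y_1)\le n d \Capacity(\sigma^2)$; the single-sample bound uses the AWGN capacity (treating the label $\RandLabel_1$ as free side information, which decouples $\Xc$ and $\Y_1$). On the other hand, any $\Xch$ attaining $\Expt[\LossAvg]\le\eps$ must convey a corresponding amount of information about $\Xc$: a Shannon lower bound for the uniform spherical prior on $\sqrt{d}\SphereD$ (which in high $d$ behaves as an i.i.d.\ Gaussian of unit variance per coordinate) gives a rate-distortion inequality $\MI(\Xc;\Xch)\ge k\cdot R^*(\eps)$. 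Equating the two bounds and substituting the coupling $\Rate=\Capacity(\RegimeParam\sigma^2)$ yields the lower bound on $n$, with the factor $e^{-2\RateLim}$ arising from the Gaussian rate-distortion identity $D(R)\sim e^{-2R}$. An alternative, genie-aided route notes that $R_n\ge R_n^{\mathrm{labeled}}$ and lower-bounds $R_n^{\mathrm{labeled}}$ via Jensen's inequality applied to the convex map $m\mapsto \sigma^2/(m+\sigma^2)$ (the leading-order per-coordinate spherical MMSE with $m$ i.i.d.\ AWGN observations).

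The hardest step is the upper bound's Stage~1 analysis: constructing a crude estimator with $o(n)$ samples whose per-coordinate error tends to zero, and decoupling its randomness from Stage~2 via sample splitting so that the plugin decoder still enjoys vanishing error probability. Given a good Stage~1 estimate, the Stage~2 analysis reduces to a standard capacity-achieving random-coding argument for spherical codes combined with Binomial concentration, and the lower bound is a routine information-theoretic computation once the rate-distortion/MMSE framework is set up.
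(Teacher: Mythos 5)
Your upper bound follows essentially the same architecture as the paper's: a first stage using $o(n)$ samples to get a constant-accuracy approximate codebook (the paper does this by a brute-force test over a net of $\sqrt{d}\SphereD$, which is what makes the ``crude estimator'' analyzable), followed by a decode-and-average stage. One caveat: the paper deliberately does \emph{not} use the plain nearest-neighbor plugin $\hat{\RandLabel}_j=\argmin_i\|\Y_j-\Xh_i^{(1)}\|^2$, because Stage~1 only approximates a $(1-\varphi)$-fraction of the centers; samples from the missing centers would be forcibly misassigned and contaminate the cluster means. The paper instead builds a mismatched threshold decoder that outputs an erasure symbol for such samples. This is a repairable detail, but your Stage~2 as written would need it.

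The lower bound, however, has a genuine gap. Your main route bounds $\MI(\Xc;\Y_1,\ldots,\Y_n)\le n\,\MI(\Xc;\Y_1)\le nd\,\Capacity(\sigma^2)$, which is \emph{linear} in $n$. Equating this with the rate-distortion bound $\MI(\Xc;\Xch)\gtrsim \frac{dk}{2}\log(1/\eps)$ yields only $n\gtrsim k\sigma^2\log(1/\eps)$, which is $o(\sigma^2k/\eps)$ as $\eps\to0$; the stated limit then lower-bounds to $0$, not $e^{-2\RateLim}$. The subadditive single-letter bound is the right tool for Theorem~2 (above capacity), but for Theorem~1 one needs the labeled-genie bound $\MI(\Xc;\Y^n)\le\MI(\Xc;\Y^n,\Ell)\le \frac{dk}{2}\log\bigl(1+\frac{n}{k\sigma^2}\bigr)$ (the paper's Lemma~\ref{lem:LowerBound:Trivial}, via Jensen on the concave map $m\mapsto\Capacity(\sigma^2/m)$), whose logarithmic saturation in $n$ is what forces $n/(k\sigma^2)\ge \eps^{-1}e^{-2\Rate}-1$. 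Moreover, the constant $e^{-2\RateLim}$ does not come from the Gaussian rate-distortion identity $D(R)\sim e^{-2R}$ as you assert; it comes from the $k\log k$ entropy of the matching indices $j_i=\argmin_j\|\X_i-\D_j\|$ hidden in $\LossAvg$, which must be subtracted from the quadratic Gaussian RDF (the paper's Lemma~\ref{lem:LowerBound:RDF-lb}), giving $\MI(\Xc;\Xch)\ge\frac{dk}{2}\log(1/\eps)-k\log k-\lot$ and hence the factor $e^{-2k\log k/(dk)}=e^{-2\Rate}$. Your alternative ``genie-aided'' route via $m\mapsto\sigma^2/(m+\sigma^2)$ silently ignores this same min-matching structure of $\LossAvg$ (under which even the labeled Bayes risk is not the scalar MMSE), and would otherwise prove a constant of $1$ --- which the paper explicitly states it cannot establish and believes to be beyond the mutual-information program it uses.
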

{
Theorem~\ref{thm:BelowCapacity} implies that when the rate is below the channel capacity, for every \emph{fixed} and \emph{small} precision $\eps>0$, and for $d$ large, the sample complexity scales like $n=C\sigma^2 k /\eps$, where $C\in [e^{-2\RateLim},1]$. Remarkably, when $k$ is sub-exponential in $d$ ($\RateLim=0$) the pre-factor $C$ is precisely $1$. Thus, the sample complexity of the GMM learning problem is \emph{exactly} the same as that of the labeled case, up to lower order terms in $1/\eps$, and asymptotically ($d\to\infty$) vanishing correction terms. 
}



Our second main result states that above the capacity, the sample complexity is super-linear:

\begin{theorem}
    \label{thm:AboveCapacity}
    Suppose that $\RegimeParam<1$. Then for any fixed sufficiently small {$\eps<\eps_0(\RateLim)$},
    \begin{equation}\label{eq:Thm2:1}
        \lim_{d\to\infty} \frac{\SC_\eps}{\sigma^2k/\eps} = \infty \,.
    \end{equation}
    Moreover, the following quantitative bounds hold for all sufficiently small $\eps<\eps_0(\RateLim)$:
    \begin{enumerate}
        \item If $\RateLim>0$ then 
        \begin{equation}
            \frac{\SC}{\sigma^2 k} = \Omega_{\eps,\RegimeParam,\RateLim}\left(\sqrt{\frac{\log k}{\log \log k}} \right) \,.
        \end{equation}
        \item If $\RateLim=0$ then
        \begin{equation}
            \frac{\SC}{\sigma^2 k} = \Omega_{\eps,\RegimeParam}\left(\min\left\{ \sqrt{\frac{\log k}{\log\log k}}, \sqrt{\frac{d}{\log k}}   \right\}\right) \,.
        \end{equation}
    \end{enumerate}
\end{theorem}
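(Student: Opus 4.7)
The plan is to combine the data-processing inequality with a rate-distortion lower bound and a sharp upper bound on the per-sample mutual information $I(\Xc;Y_1)$ that exploits the above-capacity assumption $\RegimeParam<1$.

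First I would use the Markov chain \eqref{eq:Formulation:MarkovChain}: any estimator satisfying $\Expt \LossAvg(\Xc,\Xch)\le\eps$ obeys $I(\Xc;Y_1,\ldots,Y_n) \ge I(\Xc;\Xch) \ge R(\eps)$, where $R(\eps)$ is the rate-distortion function of $\Xc$ under $\LossAvg$. Since $\Xc$ is a $k$-tuple of i.i.d.\ uniform points on $\sqrt{d}\SphereD$ and the $\min_j$ in \eqref{eq:LossAvg-def} identifies reconstructions only up to permutation, a standard computation (Shannon lower bound for the single component, minus $\log k!$ for the permutation degrees of freedom) gives $R(\eps) \gtrsim k\bigl[\tfrac{d}{2}\log(1/\eps) - \log k\bigr]$ for $\eps$ below a threshold of order $e^{-2\RateLim}$; this is the source of the smallness hypothesis $\eps<\eps_0(\RateLim)$.

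Second, by subadditivity for conditionally i.i.d.\ outputs, $I(\Xc;Y_1,\ldots,Y_n) \le n\cdot I(\Xc;Y_1)$. The crux is a sharp upper bound on $I(\Xc;Y_1)$. I would use the chain rule
\[
I(\Xc;Y_1) \;=\; I(X_{\RandLabel_1};Y_1) \;-\; I(\RandLabel_1;Y_1\mid \Xc),
\]
where $I(X_{\RandLabel_1};Y_1)\le d\Capacity(\sigma^2)$ comes from the AWGN capacity with spherical input. For the second term, since $\Rate>\Capacity(\sigma^2)$, an optimal list decoder of size $L=e^{d(\Rate-\Capacity(\sigma^2))+\Delta}$ returns the true $\RandLabel_1$ with error $\delta=e^{-dE(\Delta)}$ for some list-decoding error exponent $E(\Delta)$, uniformly over typical random codebooks $\Xc\sim(\Unif(\sqrt{d}\SphereD))^{\otimes k}$. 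Fano gives $H(\RandLabel_1\mid Y_1,\Xc)\le \log L+\delta\log k+1$, so $I(\RandLabel_1;Y_1\mid\Xc) \ge d\Capacity(\sigma^2)-\Delta-\delta\log k-1$, and hence $I(\Xc;Y_1) \le \Delta+\delta\log k+1$.

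Combining, $n\ge R(\eps)/I(\Xc;Y_1)$. With the trivial choice $\Delta=0$ one recovers only $n\gtrsim \sigma^2 k$, so the super-linearity comes from optimizing $\Delta$ against $\delta$. Setting $\Delta$ so that $\delta\log k=O(1)$, i.e.\ $\Delta\sim\sqrt{d\log\log k}$ via the quadratic behavior $E(\Delta)\sim c\Delta^2/d^2$ of the error exponent near the critical slope, yields $I(\Xc;Y_1)=O(\sqrt{d\log\log k})$ and hence $n/(\sigma^2 k)\gtrsim\sqrt{\log k/\log\log k}$ in the positive-rate case. In the zero-rate regime one also has a second, simpler estimate coming from a Taylor expansion of $\Capacity$ as $\sigma^2\to\infty$, exploiting that $\Capacity(\RegimeParam\sigma^2)-\Capacity(\sigma^2)\sim(1-\RegimeParam)/(2\RegimeParam\sigma^2)$; this yields the $\sqrt{d/\log k}$ bound, and the final statement is the minimum of the two.

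The main obstacle is obtaining the sharp, quantitative, typical-codebook list-decoding statement: controlling $I(\RandLabel_1;Y_1\mid\Xc)$ uniformly over $\Xc$ drawn from the spherical prior, with a careful tradeoff between list size $L$ and inclusion error $\delta$ governed by error exponents for Gaussian channels. Any looser handling of this tradeoff, or use of just $I(\Xc;Y_1)\le d\Capacity(\sigma^2)$, falls short of the super-linear growth and would only recover a constant-factor lower bound on $n/(\sigma^2 k)$.
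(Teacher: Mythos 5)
Your skeleton matches the paper's: data-processing inequality, a rate-distortion lower bound on $\MI(\Xc;\Xch)$ of the form $\frac{dk}{2}\log(1/\eps)-k\log k-\lot$ (this is Lemma~\ref{lem:LowerBound:RDF-lb}, and your explanation of where $\eps_0(\RateLim)$ comes from is right), tensorization $\MI(\Xc;\Y_1,\ldots,\Y_n)\le n\MI(\Xc;\Y)$, and the chain-rule identity $\MI(\Xc;\Y)=\MI(\Xc;\Y|\RandLabel)-\MI(\RandLabel;\Y|\Xc)\le \Capacity(\sigma^2)d-\log k+\Ent(\RandLabel|\Xc,\Y)$. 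Where you genuinely diverge is in bounding $\Ent(\RandLabel|\Xc,\Y)$: you propose a list decoder of size $L=e^{d(\Rate-\Capacity(\sigma^2))+\Delta}$ with inclusion error $\delta=e^{-c\Delta^2/d}$ and apply the list version of Fano, whereas the paper keeps the noise level as the moving parameter --- it passes to a smaller noise $\sigma_0^2=\Capacity^{-1}((1+\delta)\Rate)$ at which unique decoding succeeds, applies ordinary Fano there, and pays for the gap $\Capacity(\sigma_0^2)-\Capacity(\sigma^2)$ via the I-MMSE relation with the LMMSE as an upper bound on the MMSE. The two routes are dual parametrizations of the same tradeoff (your additive slack $\Delta$ in $\log L$ plays the role of the paper's $\delta\log k$, and your list-inclusion error plays the role of the paper's $e(\delta)$), and both bottom out in the same technical core: sharp near-capacity error bounds for the spherical ensemble, which for $\omega(1)=\sigma^2=o(d)$ are not off-the-shelf and occupy the paper's Appendix~\ref{sec:Appendix:Decoding}. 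Your optimization $\Delta\sim\sqrt{d\log\log k}$ reproduces the paper's $\sqrt{\log k/\log\log k}$ bound at positive rate. The one place your sketch is materially underdeveloped is the zero-rate $\sqrt{d/\log k}$ term: it does not come from a Taylor expansion of the capacity gap, but from a floor on how small the slack parameter can be taken while the decoder analysis remains valid (the paper's condition~(\ref{eq:LowerBound:delta-constraint}), $\delta\gtrsim\sqrt{\log k/d}$, forced by the correlation-threshold decoder); in your language, the quadratic error exponent $E(\Delta)\sim c\Delta^2/d^2$ only holds for $\Delta$ above an analogous floor, and you would need to prove that for the spherical ensemble with $\sigma^2\to\infty$ before the min of the two bounds follows.
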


Theorems~\ref{thm:BelowCapacity} and~\ref{thm:AboveCapacity} together reveal a dichotomy: \emph{precisely} at the channel capacity ($\RegimeParam=1$), the large-system behavior of the sample complexity undergoes a \emph{phase-transition}, from a linear growth in $\sigma^2k$, as in the labeled case, to super-linear growth.

\subsection{Prior Art}
\label{sec:PriorArt}

The problem of estimating the parameters of a Gaussian mixture model has a long and rich history, going back to the pioneering work of \cite{pearson1894contributions}. We briefly mention some pointers to the literature, though we emphasize that the list below is \emph{not exhaustive by any means}.

The first work to highlight the importance of minimum separation in learning GMMs is \cite{dasgupta1999learning}, who gave a poly-time algorithm assuming (in the spherical balanced case) $\Delta=\Omega(\sigma\sqrt{d})$. Subsequent works have gradually improved upon the required bound on $\Delta$. Early incarnations include \cite{sanjeev2001learning,vempala2004spectral,achlioptas2005spectral,dasgupta2007probabilistic,kannan2008spectral}, which culminated in a bound $\Delta=\Omega(\sigma k^{1/4})$ {as sufficient for estimation in polynomial time}. This barrier was broken only fairly recently 
\cite{diakonikolas2018list,hopkins2018mixture,kothari2018robust},who have shown that separation $\Delta=\Omega(\sigma k^\gamma)$ suffices for polynomial-time learnability, for \emph{any} constant $\gamma>0$. 

As for {statistical} lower bounds, it is known that in the absence of a separation condition, $n=\exp(k)$ samples are generally necessary to learn the parameters of a GMM \cite{moitra2010settling,hardt2015tight}. The work \cite{regev2017learning} has shown that separation $\Delta=\Omega(\sigma \sqrt{\log k})$ is a \emph{sufficient and necessary} condition for learning GMMs with $n=\poly(k,d)$ samples; the algorithm they proposed to prove their upper bound has exponential runtime. \cite{kwon2020algorithm} have recently improved their {upper} bound on the sample complexity, and have show that in fact $n=O(\sigma^2 k \cdot \polylog(k))$ samples suffice, which \emph{almost} matches the trivial lower bound of $n=\Omega(\sigma^2 k)$. 
{Their analysis consists of two components} : 1) An exponential-time initialization scheme, that finds points \emph{sufficiently close} to the true centers, based on the results of \cite{ashtiani2018nearly}; 2) New local convergence and finite-sample guarantees for (a slightly modified version of) the well-known Expectation Maximization (EM) algorithm. 
To our knowledge, the problem of learning $\Delta=\Omega(\sigma\sqrt{\log k})$-separated GMMs in polynomial time, or proving that this cannot be done 
{(the existence of a computational-statistical gap)}
is still open. 

Another line of work circumvents the minimal separation requirement, by instead 
restricting attention to ``typical'' problem instances, an approach much in line with the results of the present paper, and in the context of learning GMMs dates, to the best of our knowledge, to the study \cite{srebro2006investigation}. In the papers \cite{hsu2013learning,bhaskara2014uniqueness,goyal2014fourier,anderson2014more,anandkumar2014tensor,ge2015learning},
it is shown that when the center configuration satisfies certain algebraic \emph{non-degeneracy} conditions, methods based on tensor decomposition may be used to recover the centers; such  non-degenerate configurations are highly abundant when $d$ is large relative to $k$, specifically $k \le d^{O(1)}$.

Lastly, a different line of work considers learning GMMs by means of density estimation, that is, given samples $\Y_1,\ldots,\Y_n$ one has to construct a density $f$ which is close to $\GMMd(\Xc,\sigma^2)$ in, e.g., total variation distance. This problem may be considered in either in the setting of proper learning ($f$ has to be a $k$-component GMM) or improper learning (no such restriction), see for example \cite{feldman2006pac,kalai2010efficiently,chan2014efficient,acharya2014near,li2017robust,diakonikolas2019robust,ashtiani2018nearly}. 
{For}
well-seaprated spherical GMMs, $\Delta=\Omega(\sigma\sqrt{\log k})$, guarantees for proper distribution estimation may be translated to 
{error bounds} 
on the centers, see \cite{regev2017learning,kwon2020algorithm}.

{Our proof program closely follows that of~\cite{romanov2021multi}, which studied the sample complexity of the multi-reference alignment (MRA) problem. 
{MRA is a \emph{particular} instance of a GMM, with exactly $k=d$ components corresponding to different shifted versions of the same signal.}
While, similarly to~\cite{romanov2021multi}, the proof of our lower bound uses the mutual information method~\cite{polyanskiy2014lecture}, here the mutual information is upper bounded using the I-MMSE relation rather than the Fano-based argument of~\cite{romanov2021multi}. More importantly, the proof of the upper bound here requires overcoming several significant hurdles not present in the MRA model. In particular, while in MRA we always have $k=d$, in the GMM problem $k$ may be much greater, and even exponential in the dimension. Furthermore, in MRA there is a single signal to be estimated and all measurement are informative for its estimation. Here, on the other hand, many centers must be estimated, which significantly complicates the first step of our reconstruction algorithm with respect to that used in~\cite{romanov2021multi}.}

\paragraph{Paper outline.}
{
In Section~\ref{sec:Background} we provide brief background on channel coding and random spherical codes, which shall be used in the analysis to follow. In Section~\ref{sec:LowerBound} we outline the proof of the lower bound in Theorems~\ref{thm:BelowCapacity} and~\ref{thm:AboveCapacity}. In Section~\ref{sec:UpperBound} we outline the proof of the upper bound in Theorem~\ref{thm:BelowCapacity}. To keep within the space constraint, most of the technical details are deferred to the Appendix.
}


\section{Background on Channel Coding}
\label{sec:Background}

A key message of this paper is the following: the centers $\Xc$ are learnable at linear sample complexity exactly in the regime where the constellation $\Xc=(\X_1,\ldots,\X_k)$ defines (with high probability) a good codebook for the AWGN channel with noise variance $\sigma^2$. Throughout the analysis, the connection to the decoding capabilities of $\Xc$ will be instrumental. In this section, we briefly survey the required background from information and coding theory. We refer the reader to \cite{cover2012elements},~\cite{gallager1968information} and \cite{polyanskiy2014lecture} for a comprehensive treatment.

A \emph{coding scheme} for sending $\log{k}$ nats over the $d$-dimensional AWGN channel consists of a codebook and a decoder. The codebook is a set of $k$ codewords $\bm{\m{C}}=(\C_1,\ldots,\C_k)\in \RR^{d\times k}$, where codeword $\C_i$ encodes message $1\le i \le k$, and all codewords satisfy $\|\C_i\|^2\le d$. The code's rate is $\Rate=\frac{\log k}{d}$. The decoder $\Decode:\RR^d\to [k]$ is a mapping from channel outputs to messages. It is often convenient to allow the decoder to output symbols in $[k]\cup \{\DecodeErrorSymb\}$, where the special symbol $\DecodeErrorSymb$ corresponds to a \emph{declared decoding error}.

The \emph{decoding error} associated with message $i\in[k]$, for a given a codebook-decoder pair, is
\begin{equation}\label{eq:Error-i-def}
    P_{e,i}(\sigma^2|\bm{\m{C}},\Decode) = \Pr\left(i\ne \Decode(\X_i+\sigma\Z)\right) \,,
\end{equation}
and the \emph{average error} over all messages is
\begin{equation}\label{eq:Error-avg-def}
    P_{e,avg}(\sigma^2|\bm{\m{C}},\Decode) := \frac1k \sum_{i=1}^k P_{e,i}(\sigma^2|\bm{\m{C}},\Decode) = 
    \Pr_{\RandLabel\sim\Unif([k])}\left(\RandLabel\ne \Decode(\X_\RandLabel+\sigma\Z)\right) \,.
\end{equation}
For a given codebook $\bm{\m{C}}$, the optimal decoder, {in the sense of smallest average error,} is clearly given by the maximum a posteriori probability (MAP) rule
\begin{equation}\label{eq:Background:DecOpt}
    \DecodeOpt(\Y)=\argmax_{i\in[k]} \Pr(\ell=i\,|\,\Y,\bm{\m{C}})=\argmin_{1\le i\le k}\|\Y-\C_i\|^2\,,
\end{equation} 
where ties are broken arbitrarily. Accordingly, we define the {error} of the \emph{codebook} $\bm{\m{C}}$, and the corresponding individual errors as 
\begin{equation}
    \label{eq:Background:ErrorOPT}
    \begin{split}
        \CodeErrAvg(\sigma^2|\bm{\m{C}})&=P_{e,avg}(\sigma^2|\bm{\m{C}},\DecodeOpt)\,,\quad
        \CodeErrI(\sigma^2|\bm{\m{C}})=P_{e,i}(\sigma^2|\bm{\m{C}},\DecodeOpt)\,.
    \end{split}
\end{equation}

{In communication theory, one is interested in designing coding schemes with large rate and small error probability. We say a rate $\RateLim\in (0,\infty)$ is \emph{achievable} if there exists a sequence ($d\to\infty$) of codebooks $\bm{\m{C}}\in \RR^{d\times k}$ such that $\lim_{d\to\infty}\Rate= \RateLim$ and $\lim_{d\to\infty}\rho(\sigma^2|\bm{\m{C}})= 0$.
}
Shannon's celebrated channel coding theorem 
gives a precise characterization of all the achievable rates:
\begin{theorem}
    [Channel coding theorem, AWGN channel]
    \label{thm:ChannelCoding}
    Fix $\sigma^2$, and let $\Capacity(\cdot)$ be given in (\ref{eq:CapacityFunc-def}).
    \begin{enumerate}
        [topsep=3pt,itemsep=-1ex,partopsep=1ex,parsep=1ex]
        \item (Achievability). Any rate $\RateLim<\Capacity(\sigma^2)$ is achievable.
        \item (Converse). No rate $\RateLim>\Capacity(\sigma^2)$ is achievable.
    \end{enumerate}
\end{theorem}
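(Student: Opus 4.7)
The proof plan combines two classical information-theoretic arguments: a random coding argument for achievability and a Fano-based argument for the converse. Both are standard, but I spell out the adaptation to equal-energy spherical codewords so that the analysis matches the paper's normalization $\|\C_i\|^2=d$.

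For achievability, fix any $\RateLim<\Capacity(\sigma^2)$ and set $k=\lceil e^{\RateLim d}\rceil$. Consider the random code ensemble $\Cc=(\C_1,\ldots,\C_k)\sim(\Unif(\sqrt{d}\SphereD))^{\otimes k}$, paired with the MAP decoder $\DecodeOpt$. By symmetry of the ensemble and the decoding rule, the expected average error equals $\Pr(\DecodeOpt(\C_1+\sigma\Z)\ne 1)$. Conditioning on $(\C_1,\Z)$, the channel output $\Y=\C_1+\sigma\Z$ is fixed, while each competing codeword $\C_j$ (for $j\ne 1$) remains an independent uniform point on $\sqrt{d}\SphereD$. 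Using $\|\Y\|^2\approx d(1+\sigma^2)$, $\|\Y-\C_1\|^2\approx d\sigma^2$ (by standard concentration of chi-squared and inner-product terms), the event $\{\|\Y-\C_j\|\le \|\Y-\C_1\|\}$ reduces to a spherical-cap event for $\C_j/\sqrt{d}$ around $\Y/\|\Y\|$ with half-angle corresponding to cosine $\approx (1+\sigma^2)^{-1/2}$; the standard spherical-cap measure bound gives $\Pr(\|\Y-\C_j\|\le\|\Y-\C_1\|)\le e^{-d(\Capacity(\sigma^2)-o(1))}$. A union bound over $j\ne 1$ yields expected error at most $e^{-d(\Capacity(\sigma^2)-\RateLim-o(1))}\to 0$, so a deterministic codebook attaining $\CodeErrAvg\to 0$ must exist, proving that $\RateLim$ is achievable.

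For the converse, suppose a sequence of codebooks achieves $\Rate\to\RateLim$ and $\CodeErrAvg(\sigma^2|\Cc)\to 0$, and set $\RandLabel\sim\Unif([k])$, $\X=\C_\RandLabel$, $\Y=\X+\sigma\Z$, $\hat\RandLabel=\DecodeOpt(\Y)$. Fano's inequality gives $\Ent(\RandLabel\mid\hat\RandLabel)\le 1+\CodeErrAvg\log k$. The data processing inequality and the standard Gaussian maximum-entropy bound yield $\MI(\RandLabel;\hat\RandLabel)\le \MI(\X;\Y)\le \frac{d}{2}\log(1+1/\sigma^2)=d\,\Capacity(\sigma^2)$, where the last inequality uses $\Expt\|\X\|^2\le d$ and the fact that among inputs of a given power the Gaussian maximizes $\MI(\X;\Y)$ on an AWGN channel. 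Combining with $\log k=\MI(\RandLabel;\hat\RandLabel)+\Ent(\RandLabel\mid\hat\RandLabel)$ gives $\Rate\le \Capacity(\sigma^2)+\tfrac{1}{d}+\CodeErrAvg\cdot \Rate$; taking $d\to\infty$ with $\CodeErrAvg\to 0$ shows $\RateLim\le \Capacity(\sigma^2)$, so no $\RateLim>\Capacity(\sigma^2)$ is achievable.

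The main technical obstacle is the sharpness of the pairwise spherical-cap estimate on the achievability side: the exponent must be precisely $\Capacity(\sigma^2)(1-o(1))$, since a loose leading constant would shift the transition away from channel capacity and cause the $k$-fold union bound to fail for rates strictly below $\Capacity(\sigma^2)$. Everything else in the plan (Fano, the data processing inequality, Gaussian maximum-entropy, and chi-squared/inner-product concentration) is mechanical.
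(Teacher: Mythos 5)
Your proof is correct; this is the standard argument, and the paper itself states Theorem~\ref{thm:ChannelCoding} as classical background without proof (citing Shannon, Gallager, Cover--Thomas). Your achievability step is essentially identical to the paper's proof of the closely related Proposition~\ref{prop:DecodingSphericalCodes} in Appendix~\ref{sec:Appendix:Decoding}: the spherical-cap estimate you invoke is exactly Lemma~\ref{lem:Decode:Pos:2}, where the surface-area ratio gives the exponent $\Capacity(\sigma^2)-\frac12\log(\tau_2/\tau)$, i.e.\ $e^{-d(\Capacity(\sigma^2)-o(1))}$. The one genuine difference is that the paper analyzes a sub-optimal \emph{threshold} decoder ($\DecMMSE_{\tau_1,\tau_2}$, or the correlation decoder at zero rate) rather than the MAP decoder; this decouples the ``own codeword is typical'' event from the ``no impostor is closer than the threshold'' event, so the pairwise cap bound can be applied to an independent $\C_j$ without conditioning on $\Y$. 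With MAP you must condition on $(\C_1,\Z)$ and first rule out the atypical event $\{\|\Y-\C_1\|^2\gg d\sigma^2\text{ or }\|\Y\|^2\not\approx d(1+\sigma^2)\}$ before the cap radius is under control --- you gesture at this with ``standard concentration,'' which is fine but should be made explicit in a full write-up. Your converse (Fano plus data processing plus the Gaussian maximum-entropy bound $\MI(\X;\Y)\le d\,\Capacity(\sigma^2)$ for $\Expt\|\X\|^2\le d$) is the textbook argument and is the same single-letter bound the paper uses in Section~\ref{sec:LowerBound}.
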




%


The achievability part of the channel coding theorem is typically proved using a random coding argument with respect to the ensemble of i.i.d. Gaussian codebooks. However, it can also be proved using the ensemble of 
{
random spherical codebooks, $\bm{\m{C}} = \Xc = (\X_1,\ldots,\X_k)\sim \Unif(\sqrt{d}\SphereD)^{\otimes k}$.
}
In fact, the latter ensemble results in a favorable decay of the error probability with $d$,~\cite{shannon1959probability}. 
{We denote the decoding error, averaged over the codebook ensemble, by}
\begin{equation}\label{eq:Expt-Error-Xc-def}
    \begin{split}
        \ExptCodeErrAvg(\sigma^2) = \Expt[\CodeErrAvg(\sigma^2|\Xc)] \overset{(\star)}{=} \Expt[\CodeErrI(\sigma^2|\Xc)]\,,
    \end{split}
\end{equation}
where $(\star)$ holds since each $\X_i$ has the same distribution.
\begin{proposition}\label{prop:DecodingSphericalCodes}
    Let $\RegimeParam>1$ be fixed. Suppose that $d,k\to\infty$, with $\Rate=\Capacity(\beta\sigma^2)$, so that either: 1) $\sigma^2$ fixed;or 2) ${\omega(1)=\sigma^2=o(d)}$. Then $\lim_{d\to\infty} \ExptCodeErrAvg(\sigma^2) = 0$.
    


        

        

\end{proposition}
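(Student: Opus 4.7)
The plan is to apply a Shannon-style random coding / union bound analysis for the uniform spherical ensemble with the MAP decoder, carefully tracking pre-factors so as to cover the zero-rate regime in addition to the fixed-$\sigma^2$ one.

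By the spherical symmetry of $\Xc\sim(\Unif(\sqrt d\,\SphereD))^{\otimes k}$, $\ExptCodeErrI(\sigma^2)$ does not depend on $i$, so I condition WLOG on $\RandLabel=1$ and write $\Y=\X_1+\sigma\Z$. Since $\|\X_j\|^2=d$ for every $j$, $\DecodeOpt(\Y)=\argmax_{j}\langle\Y,\X_j\rangle$, and a decoding error requires some $j\ne 1$ with $\langle\Y,\X_j\rangle\ge\langle\Y,\X_1\rangle$. The codewords $\X_2,\ldots,\X_k$ are i.i.d.\ uniform on $\sqrt d\,\SphereD$ and independent of $(\X_1,\Z)$, so a union bound gives
\[
P_e(\X_1,\Y):=\Pr(\text{err}\mid\X_1,\Y) \;\le\; (k-1)\,\Pr(U_1\ge c\mid\X_1,\Y),
\]
where $c:=\langle\Y,\X_1\rangle/(\|\Y\|\sqrt d)$ is the cosine of the angle between $\X_1$ and $\Y$, and $U_1$ is the first coordinate of a uniform point on $\SphereD$. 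The sharp spherical-cap bound yields $\Pr(U_1\ge c)\le(c\sqrt{2\pi d})^{-1}(1-c^2)^{(d-1)/2}$ for $c>0$.

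Next, I concentrate $c$. Decomposing $\Z=Z_\parallel+Z_\perp$ along $\X_1$ and setting $s:=\sigma\langle\Z,\X_1\rangle/d$, $W:=\|Z_\perp\|^2/d$, a direct computation gives
\[
c^2=\frac{(1+s)^2}{(1+s)^2+\sigma^2 W}.
\]
Fix $\gamma:=\min\{(\beta-1)/2,\,1/2\}$, so $0<\gamma<1$ and $1+\gamma<\beta$. On the event $\m{E}:=\{|s|\le \gamma/4\}\cap\{|W-1|\le \gamma/4\}$, elementary algebra yields $c^2\ge (1+\sigma^2(1+\gamma))^{-1}$. Gaussian tail bounds give $\Pr(|s|>\gamma/4)\le 2\exp(-\gamma^2 d/(32\sigma^2))$, and $\chi^2$ concentration gives $\Pr(|W-1|>\gamma/4)\le \exp(-\Omega_\gamma(d))$; both tend to $0$ for fixed $\sigma^2$ and also in the zero-rate regime, where $\sigma^2=o(d)$ forces $d/\sigma^2\to\infty$.

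Combining everything, on $\m{E}$, using $k=\exp(d\,\Capacity(\beta\sigma^2))$ and $c^{-1}\le\sqrt{1+\sigma^2(1+\gamma)}$,
\[
P_e(\X_1,\Y) \;\lesssim\; \sqrt{\tfrac{1+\sigma^2}{d}}\,\exp\!\Bigl(-d\bigl[\Capacity(\sigma^2(1+\gamma))-\Capacity(\beta\sigma^2)\bigr]\Bigr).
\]
For fixed $\sigma^2$ the bracketed gap is a positive constant, so the bound decays exponentially in $d$. For $\sigma^2\to\infty$ with $\sigma^2=o(d)$, a Taylor expansion of $\Capacity(\cdot)$ shows the gap equals $\Theta(1/\sigma^2)$, so the exponent is $\Theta(d/\sigma^2)\to\infty$; moreover the prefactor $\sqrt{\sigma^2/d}\to 0$ on its own. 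Together with $\Pr(\m{E}^c)=o(1)$ and the trivial bound $P_e\le 1$ off $\m{E}$, this yields $\ExptCodeErrAvg(\sigma^2)=\Expt[P_e(\X_1,\Y)]\to 0$, as claimed. The delicate point is the zero-rate regime: the main exponential is only $\exp(-\Theta(d/\sigma^2))$, so a crude $O(\sqrt d)$ prefactor in the spherical-cap bound would force the restriction $\sigma^2=o(d/\log d)$; using the sharp $(c\sqrt d)^{-1}$ scaling, which on $\m{E}$ is $O(\sqrt{\sigma^2/d})=o(1)$, is what permits the full range $\sigma^2=o(d)$.
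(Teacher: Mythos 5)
Your proof is correct, but it takes a genuinely different route from the paper's. The paper proves the proposition with two different, explicitly sub-optimal decoders: at zero rate, a correlation-threshold decoder analyzed by controlling $\max_{j\ne i} d^{-1}\sigma\langle \Z,\X_j\rangle$ via the expectation bound for Gaussian maxima plus Borell--TIS (an analysis the paper itself notes breaks down at positive rate, where the $e^{\Theta(d)}$ noise terms can no longer be treated as nearly independent); at positive rate, an MMSE-threshold decoder analyzed by a surface-area union bound over the $k-1$ wrong codewords. Your argument --- MAP decoding, a union bound conditional on $(\X_1,\Z)$, the sharp spherical-cap estimate, and concentration of the cosine $c$ between $\Y$ and $\X_1$ --- is the classical Shannon cone-packing computation, and it handles both regimes in one pass, since the exponent gap $\Capacity(\sigma^2(1+\gamma))-\Capacity(\RegimeParam\sigma^2)$ is $\Theta(1)$ for fixed $\sigma^2$ and $\Theta(1/\sigma^2)$ as $\sigma^2\to\infty$, while $\Pr(\mathcal{E}^c)=e^{-\Omega(d/\sigma^2)}\to 0$ in both cases. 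As a standalone proof of the proposition this is cleaner; what the paper's two-decoder formulation buys is reusability, since the threshold decoders (with their declared-error symbol and two-threshold structure) are the building blocks for the local tests of Step I and the mismatched decoders of Step II, which is why the paper states that this proof ``serves as the baseline for the derivations that follow.'' One minor quibble: the sharp $1/(c\sqrt{2\pi d})$ prefactor is not actually essential even at $\sigma^2=o(d)$, since the exponential factor alone is $e^{-\Theta(d/\sigma^2)}=o(1)$; your remark is accurate only against the specific weaker cap bound that carries a $\sqrt{d}$ in the numerator. This does not affect correctness.
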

While Proposition~\ref{prop:DecodingSphericalCodes} is well-known 
{when}
$\sigma^2$ is fixed 
(positive rate)~
\cite{shannon1959probability}, the case of ${\omega(1)=\sigma^2=o(d)}$ has not been mainstreamed. We provide a self-contained proof of Proposition~\ref{prop:DecodingSphericalCodes} in Appendix, Section~\ref{sec:Appendix:Decoding}, since it will serve as the baseline for the derivations that follow.

\section{Proof of Lower Bounds}
\label{sec:LowerBound}

Our proof of the lower bounds in Theorems~\ref{thm:BelowCapacity} and \ref{thm:AboveCapacity} uses a standard framework for proving estimation lower bounds (e.g., \cite[Chapter 28]{polyanskiy2014lecture}).

Suppose $\Xch=\Xch(\Y_1,\ldots,\Y_n)$ attains $\Expt\LossAvg(\Xc,\Xch)\le \eps$. Consider the Markov chain (\ref{eq:Formulation:MarkovChain}). By the data processing inequality (DPI) \cite[Theorem 2.5]{polyanskiy2014lecture},
\begin{equation}\label{eq:LowerBound:DPI}
    \MI(\Xc;\Xch)\le \MI(\Xc;\Y_1,\ldots,\Y_n) \,.
\end{equation}
We lower bound the LHS of (\ref{eq:LowerBound:DPI}) in terms of $\eps$ and upper bound the RHS in terms of $n$ and $\sigma^2$.
Starting with $\MI(\Xc;\Xch)$, clearly,
\begin{equation}\label{eq:LowerBound:RDF}
    \MI(\Xc;\Xch)\ge \min_{P_{\Dc|\Xc}\,:\,\Expt\LossAvg(\Xc,\Dc)\le \eps} \MI(\Xc;\Dc) \,,
\end{equation}
where we minimize the mutual information (MI) over all conditional distributions of random variables $\Dc=(\D_1,\ldots,\D_k)\in\RR^{d\times k}$, 
{under the
expected loss constraint $\Expt\LossAvg(\Xc,\Dc)\le \eps$. 
}
The 
{minimization}
(\ref{eq:LowerBound:RDF}) is an instance of a \emph{rate-distortion} problem, 
{
typically encountered 
when studying the information-theoretic limits of lossy compression
\cite[Chapter 25]{polyanskiy2014lecture}. 
}

One complication that arises when attempting to solve the optimization problem in (\ref{eq:LowerBound:RDF}) is that the distortion measure, ${\LossAvg(\Xc,\Dc)=\frac{1}{dk}\sum_{i=1}^k \min_{1\le j\le k}\|\X_i-\D_j\|^2}$ is somewhat non-standard. If instead we had used the quadratic loss, ${\frac{1}{dk}\|\Xc-\Dc\|^2_F=\frac{1}{dk}\sum_{i=1}^k \|\X_i-\D_i\|^2}$, the resulting optimization problem
would essentially lend itself to the classical problem of computing the Gaussian quadratic rate-distortion function (RDF), which admits the solution 
${\frac{dk}{2}\log(1/\eps)}$. 
 
 The loss $\LossAvg$ differs from the standard quadratic loss in that it allows for $k$ additional degrees of freedom: every $i\in [k]$ is matched to the best index $j_i = \argmin_{j\in [k]}\|\X_i-\D_{j_i}\|$. Since the entropy of the $k$-tuple $(j_1,\ldots,j_k)$ is at most $k\log{k}$ nats, the RDF for $\LossAvg$ must be at most $k\log{k}$ nats away from the RDF for the standard quadratic loss. 
 {We prove 
 in  Appendix, Section~\ref{sec:proof-lem:LowerBound:RDF-lb}:}
 \begin{lemma}\label{lem:LowerBound:RDF-lb}
     Consider the Markov chain (\ref{eq:Formulation:MarkovChain}), with $\Expt\LossAvg(\Xc,\Xch)\le\eps$. For universal $c_0>0$,
     \begin{align*}
            \MI(\Xc;\Xch) 
            &\ge \frac{dk}{2}\log(1/\eps) - dk\log\left(1+c_0(\eps d)^{-1/2}\right) -k\log k \,. 
     \end{align*}
  \end{lemma}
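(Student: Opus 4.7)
The target is essentially the Shannon lower bound for the quadratic rate-distortion function of a standard Gaussian in $\RR^{dk}$, perturbed by two corrections: a $-k\log k$ term reflecting the extra degrees of freedom of the matching implicit in $\LossAvg$, and a $-dk\log(1+c_0(\eps d)^{-1/2})$ term reflecting the gap between the singular spherical prior and a Gaussian prior. I will isolate these two contributions in sequence.

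\textbf{Matching reduction.} Given $(\Xc,\Xch)$, define the greedy matching $J=(J_1,\ldots,J_k)\in[k]^k$ by $J_i=\argmin_{j\in[k]}\|\X_i-\Xh_j\|$, and set $\D_i=\Xh_{J_i}$. Then $\Dc$ is a deterministic function of $(\Xch,J)$ and $\tfrac{1}{dk}\Expt\|\Xc-\Dc\|_F^2\le\eps$. Since $J$ is a deterministic function of $(\Xc,\Xch)$, the mutual-information chain rule combined with $H(J)\le\log(k^k)=k\log k$ gives
\begin{equation*}
    \MI(\Xc;\Dc)\le\MI(\Xc;\Xch,J)=\MI(\Xc;\Xch)+H(J|\Xch)\le\MI(\Xc;\Xch)+k\log k.
\end{equation*}
It remains to lower bound $\MI(\Xc;\Dc)$ under the quadratic distortion constraint $\tfrac{1}{dk}\Expt\|\Xc-\Dc\|_F^2\le\eps$.

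\textbf{Lift to a Gaussian prior and Shannon lower bound.} Append auxiliary radii $R_1,\ldots,R_k\sim\chi_d$ drawn i.i.d.\ and independent of everything else, and set $\G_i=(R_i/\sqrt{d})\X_i$. Then $\Gc$ is a standard Gaussian in $\RR^{dk}$, $(\Xc,R)\leftrightarrow\Gc$ is a measurable bijection, and --- crucially --- since $\Dc$ depends on $(\Xc,R)$ only through $\Xc$, $R$ is independent of $(\Xc,\Dc)$, so the chain rule yields the \emph{exact} identity $\MI(\Gc;\Dc)=\MI(\Xc;\Dc)+\MI(R;\Dc|\Xc)=\MI(\Xc;\Dc)$. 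A short computation using $\|\G_i-\X_i\|^2=(R_i-\sqrt{d})^2\le(R_i^2-d)^2/d$ together with $\Var(\chi_d^2)=2d$ bounds $\Expt\|\Gc-\Xc\|_F^2\le 2k$, and Minkowski's inequality in $L^2(\Omega;\RR^{dk})$ then yields
\begin{equation*}
    \eps':=\tfrac{1}{dk}\Expt\|\Gc-\Dc\|_F^2\le\eps\bigl(1+c_0(\eps d)^{-1/2}\bigr)^2,\quad c_0=\sqrt{2}.
\end{equation*}
The classical Shannon lower bound (via $h(\Gc|\Dc)=h(\Gc-\Dc|\Dc)\le h(\Gc-\Dc)\le\tfrac{dk}{2}\log(2\pi e\eps')$ by maximum entropy) then gives $\MI(\Gc;\Dc)\ge\tfrac{dk}{2}\log(1/\eps')$. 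Expanding $\log(1/\eps')=\log(1/\eps)-2\log(1+c_0(\eps d)^{-1/2})$ and chaining with the matching reduction produces the stated bound.

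The main obstacle is the Gaussian lift: the prior on $\Xc$ is supported on a $(d-1)$-dimensional submanifold of $\RR^d$, so $h(\Xc)=-\infty$ and the Shannon lower bound is \emph{not} directly applicable to $\MI(\Xc;\Dc)$. Appending the independent radial coordinates supplies exactly the missing radial degree of freedom, upgrading $\Xc$ to a full-dimensional standard Gaussian without contributing any extra mutual information with $\Dc$; the only cost is the small distortion inflation from $\eps$ to $\eps(1+c_0(\eps d)^{-1/2})^2$, which is precisely the correction appearing in the lemma. Every other ingredient --- the matching reduction, the Minkowski triangle inequality and the Shannon lower bound --- is textbook once the lift is in place.
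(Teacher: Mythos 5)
Your proof is correct and follows essentially the same route as the paper's: a matching reduction costing $k\log k$, a Gaussian lift of the singular spherical prior costing the $dk\log(1+c_0(\eps d)^{-1/2})$ term, and the Gaussian quadratic rate--distortion (Shannon) lower bound. The only harmless differences are that you perform the two reductions in the opposite order, that your radial augmentation yields the exact identity $\MI(\Gc;\Dc)=\MI(\Xc;\Dc)$ where the paper settles for the DPI inequality $\MI(\Gc;\Xch)\le\MI(\Xc;\Xch)$ along the chain $\Gc\to\Xc\to\Xch$ (the same coupling viewed from the other end), and that you make the constant $c_0=\sqrt{2}$ explicit.
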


    {
  Next, we upper bound $\MI(\Xc;\Y_1,\ldots,\Y_n)$, starting with a trivial bound. Let $\Ell=(\RandLabel_1,\ldots,\RandLabel_n)$ be the random labels, such that $\Y_j=\X_{\RandLabel_j}+\sigma\Z_j$. By the DPI,  $\MI(\Xc;\Y_1,\ldots,\Y_n)\le \MI(\Xc;\Y_1,\ldots,\Y_n,\Ell)$. 
  Now, given $\Ell$, 
  the mapping $\Xc \mapsto (\Y_1,\ldots,\Y_n)$ simply corresponds to $k$ parallel Gaussian channels, each used on average $n/k$ times. Thus, as we formally prove in 
  Appendix, Section~\ref{sec:proof-lem:LowerBound:Trivial},
  }
  \begin{lemma}
      \label{lem:LowerBound:Trivial}
      The following holds:
      \begin{align}\label{eq-from:lem:LowerBound:Trivial}
          \MI(\Xc;\Y_1,\ldots,\Y_n)\le \MI(\Xc;\Y_1,\ldots,\Y_n,\Ell) \le \frac{dk}{2}\log\left(1+\frac{n}{k\sigma^2}\right) \,.
      \end{align}
      Consequently, combining with (\ref{eq:LowerBound:DPI}) and Lemma~\ref{lem:LowerBound:RDF-lb},
      \begin{align*}
        \lim_{d\to\infty} \frac{\SC_\eps}{k\sigma^2} \ge e^{-2\RateLim}\eps^{-1} - 1 \,.
    \end{align*}
  \end{lemma}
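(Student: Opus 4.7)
The plan is as follows. The first inequality in (\ref{eq-from:lem:LowerBound:Trivial}) is immediate from the chain rule $\MI(\Xc;\Y_1,\ldots,\Y_n,\Ell) = \MI(\Xc;\Y_1,\ldots,\Y_n)+\MI(\Xc;\Ell\mid \Y_1,\ldots,\Y_n)$ together with non-negativity of conditional mutual information.

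For the second inequality, I will exploit that the labels $\Ell=(\RandLabel_1,\ldots,\RandLabel_n)$ are i.i.d.\ $\Unif([k])$, drawn \emph{independently} of $\Xc$, so $\MI(\Xc;\Ell)=0$ and hence the chain rule yields $\MI(\Xc;\Y_1,\ldots,\Y_n,\Ell) = \MI(\Xc;\Y_1,\ldots,\Y_n \mid \Ell)$. Conditionally on $\Ell$, the $n$ observations partition into $k$ independent sub-experiments, one per center: for $i\in[k]$, the sub-collection $\{\Y_j:\RandLabel_j=i\}$ consists of $n_i := |\{j:\RandLabel_j=i\}|$ independent Gaussian observations of $\X_i$ alone. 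Since the $\X_i$'s are themselves independent under the prior (\ref{eq:Formulation:XcPrior}), mutual information tensorizes:
\begin{equation*}
\MI(\Xc;\Y_1,\ldots,\Y_n \mid \Ell) = \sum_{i=1}^{k} \MI\bigl(\X_i ;\{\Y_j:\RandLabel_j=i\}\,\big|\, \Ell\bigr).
\end{equation*}

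For each $i$, on the event $\{n_i\ge 1\}$ the sample mean $\bar\Y_i = n_i^{-1}\sum_{j:\RandLabel_j=i}\Y_j = \X_i+(\sigma/\sqrt{n_i})\tilde\Z_i$, with $\tilde\Z_i\sim\m{N}(\0,\Id)$, is a sufficient statistic for $\X_i$. Combined with the standard maximum-entropy bound $\MI(\X;\X+\sigma_0\Z)\le \frac{d}{2}\log(1+\Expt\|\X\|^2/(d\sigma_0^2))$ and the deterministic power constraint $\|\X_i\|^2=d$, this gives $\MI(\X_i;\bar\Y_i\mid n_i)\le \Expt[\frac{d}{2}\log(1+n_i/\sigma^2)]$ (the $\{n_i=0\}$ contribution is trivially zero). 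Concavity of $\log(1+\cdot)$ together with $\Expt[n_i]=n/k$ then yields, via Jensen, $\Expt[\log(1+n_i/\sigma^2)]\le \log(1+n/(k\sigma^2))$, and summing over $i$ produces the claimed $\frac{dk}{2}\log(1+n/(k\sigma^2))$ upper bound.

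For the sample-complexity consequence, I plug $n=\SC_\eps$ into the DPI chain $\MI(\Xc;\Xch) \le \MI(\Xc;\Y_1,\ldots,\Y_n)$, lower bound the LHS by Lemma~\ref{lem:LowerBound:RDF-lb} and upper bound the RHS by the part just derived. Dividing by $dk/2$ and exponentiating yields
\begin{equation*}
\frac{\SC_\eps}{k\sigma^2} \;\ge\; \frac{e^{-2\log k/d}}{\eps\,(1+c_0(\eps d)^{-1/2})^2} - 1,
\end{equation*}
and taking $d\to\infty$ with $\Rate \to \RateLim$ gives $\lim_{d\to\infty}\SC_\eps/(k\sigma^2) \ge e^{-2\RateLim}\eps^{-1}-1$. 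The main technical care is in the tensorization-plus-Jensen step and in verifying that the Gaussian max-entropy bound applies despite $\X_i$ being constrained to the sphere — the bound only uses the second moment $\Expt\|\X_i\|^2=d$, so the spherical constraint is harmless.
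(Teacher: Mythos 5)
Your proposal is correct and follows essentially the same route as the paper's proof: chain rule with $\MI(\Xc;\Ell)=0$, decomposition given $\Ell$ into $k$ parallel Gaussian channels, reduction to the sample mean as a sufficient statistic, the AWGN capacity bound, and Jensen's inequality via concavity of $m\mapsto\log(1+m/\sigma^2)$ with $\Expt[n_i]=n/k$. The final limit computation for the sample-complexity consequence is also carried out correctly.
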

{
The bound (\ref{eq-from:lem:LowerBound:Trivial}) misses a \emph{crucial} aspect of our problem: 
the observations are \emph{not} labeled.
We next derive a bound which does capture this effect, though at the loss of the ``correct'' dependence on $n$.}
%

  Observe that $\Y_1,\ldots,\Y_n$ are conditionally independent given $\Xc$. That is: the ``channel'' mapping the set of centers to samples is \emph{memoryless}. 
  It is an elementary fact  \cite[Theorem 5.1]{polyanskiy2014lecture} that in this case, the MI is subadditive
  \begin{equation}\label{eq:LowerBound:Memoryless}
      \MI(\Xc;\Y_1,\ldots,\Y_n) \le \sum_{i=1}^n \MI(\Xc;\Y_i) = n\cdot \MI(\Xc;\Y)\,.
  \end{equation}
{While this bound fails to correctly capture the dependence of $\MI(\Xc;\Y_1,\ldots,\Y_n)$ on $n$ when 
{$n/(k\sigma^2)$}
is large, it does suffice for establishing the phase transition of the sample complexity that we seek here.}
We proceed to bounding the single-sample MI, $\MI(\Xc;\Y)$, a much more manageable object.
Let ${\RandLabel\sim \Unif([k])}$ be the random label of $\Y$. Using the MI chain rule both ways,
\begin{align*}
    \MI(\Xc,\RandLabel;\Y) 
    &= \MI(\Xc;\Y) + \MI(\RandLabel;\Y|\Xc)
    = \MI(\RandLabel;\Y) + \MI(\Xc;\Y|\RandLabel)\,.
\end{align*}
Now, $\MI(\RandLabel,\Y)=0$ (since $\{\X_i\}_{i=1}^k$ are identically distributed, so $\Y$ does not depend on $\RandLabel$). Similarly, $\MI(\RandLabel;\Y|\Xc)=\Ent(\RandLabel|\Xc)-\Ent(\RandLabel|\Xc,\Y)$, and $\Ent(\RandLabel|\Xc)=\Ent(\RandLabel)=\log k$
Furthermore, ${\MI(\Xc;\Y|\RandLabel)=\MI(\X_\ell;\Y|\RandLabel)\le \Capacity(\sigma^2)d}$, as the AWGN channel capacity $\Capacity(\sigma^2)$ upper bounds $I(\X;\X+\sigma\Z)/d$ for any random variable on $\RR^d$ with second moment $\mathbb{E}\|\X\|^2\leq d$.
Combining these equalities and estimates and rearranging, we obtain
\begin{equation}\label{eq:LowerBound:SingleSample}
    \MI(\Xc;\Y) \le \Capacity(\sigma^2)d - \log k + \Ent(\RandLabel|\Xc,\Y)\,.
\end{equation}

In light of (\ref{eq:LowerBound:SingleSample}), it remains to estimate $\Ent(\RandLabel|\Xc,\Y)$, to be interpreted as the remaining uncertainty in a message $\ell$ that is sent across the channel, given the output $\Y$ as well as the known codebook $\Xc$. To that end, consider the non-increasing mapping $\RegimeParam\mapsto \varphi(\beta)= \Ent(\RandLabel|\Xc,\Y)$ (recall that larger $\RegimeParam$ corresponds to smaller $\sigma$). 
Since a typical realization of $\Xc$ results in a code  whose error vanishes when $\RegimeParam>1$, Fano's inequality implies that $\varphi(\beta)\big|_{\beta>1}=o(\log(k))$. Thus, for $\beta<1$, we have that $\varphi(\beta)=-\int_{\beta}^{1+\delta} \varphi'(s)ds+o(\log(k))$, for any $\delta>0$. Using the I-MMSE formula \cite{guo2005mutual}, a remarkable connection between information and estimation under Gaussian channels, the derivative $\varphi'(\beta)$ can be expressed as the minimum MSE (MMSE) in estimating $\X_{\ell}$ from $\Y$. Finally, we upper bound the MMSE by the optimal MSE for linear estimation, resulting in the following lemma, whose full proof appears in Appendix, Section~\ref{sec:proof-lem:LowerBound:SingleSample-IMMSE}. We denote by $\Capacity^{-1}$ the inverse of (\ref{eq:CapacityFunc-def}), and by $h_b(p)=p\log\frac1p + (1-p)\log\frac{1}{1-p}$ the binary entropy function.


\begin{lemma}\label{lem:LowerBound:SingleSample-IMMSE}
    Suppose that $\RegimeParam<1$, so that $\Rate=\Capacity(\RegimeParam\sigma^2)>\Capacity(\sigma^2)$. For $\delta> 0$, denote the corresponding noise level by ${\sigma_0^2(\delta)=\Capacity^{-1}\left((1+\delta)\Rate\right)}$ and let $e(\delta) = \ExptCodeErrAvg\left( \Capacity^{-1}\left((1+\delta)\Rate\right) \right)$
    be the ensemble average decoding error, (\ref{eq:Expt-Error-Xc-def}), over the $\AWGN(\sigma_0^2)$ channel.
    We have that
    \begin{align*}
        \Ent(\RandLabel|\Xc,\Y)\le \log k - \Capacity(\sigma^2)d +  h_b(e(\delta)) + \left( \delta + e(\delta) \right)\log k \,,
    \end{align*}
    and consequently, using (\ref{eq:LowerBound:SingleSample}),
    \begin{equation}\label{eq:LowerBound:SingleSample-Delta}
        \MI(\Xc;\Y) \le h_b(e(\delta)) + \left( \delta + e(\delta) \right)\log k  \,.
    \end{equation}
\end{lemma}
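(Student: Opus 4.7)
The plan is to interpolate $\varphi(s):=\Ent(\RandLabel\mid\Xc,\X_\RandLabel+\sqrt{s}\,\Z)$ from the sub-capacity level $s_0:=\sigma_0^2(\delta)=\Capacity^{-1}((1+\delta)\Rate)$ up to the target noise $s=\sigma^2$, using I-MMSE to control $\varphi'(s)$ and Fano's inequality to control the endpoint $\varphi(s_0)$. This is the formalization of the informal sketch given just before the lemma statement.

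First I would convert $\varphi$ into a mutual information. Since $\Xc$ has a continuous marginal, the centers are almost surely distinct, so conditional on $\Xc$ the label $\RandLabel$ is a deterministic function of $\X_\RandLabel$. Expanding $\MI(\RandLabel,\X_\RandLabel;\Y_s\mid\Xc)$ by the chain rule in both orders then gives $\MI(\RandLabel;\Y_s\mid\Xc)=\MI(\X_\RandLabel;\Y_s\mid\Xc)$ with $\Y_s:=\X_\RandLabel+\sqrt{s}\,\Z$, so $\varphi(s)=\log k-\MI(\X_\RandLabel;\Y_s\mid\Xc)$. The conditional I-MMSE identity of \cite{guo2005mutual}, applied pointwise in $\Xc$, yields
\[
\varphi'(s)=\frac{1}{2s^2}\,\Expt\bigl\|\X_\RandLabel-\Expt[\X_\RandLabel\mid\Y_s,\Xc]\bigr\|^2.
\]
To upper bound the MMSE I would use the unconditional linear estimator $\hat{\X}_{\rm lin}=\Y_s/(1+s)$: because $\X_\RandLabel$ is marginally uniform on $\sqrt{d}\SphereD$, it has zero mean and covariance $\Id$, and a direct calculation gives $\Expt\|\X_\RandLabel-\hat{\X}_{\rm lin}\|^2=ds/(1+s)$. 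Since extra conditioning on $\Xc$ can only reduce MMSE, $\varphi'(s)\le d/[2s(1+s)]$.

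Integrating from $s_0$ to $\sigma^2$ via the antiderivative $\int ds/[s(1+s)]=\log[s/(1+s)]$ then yields
\[
\varphi(\sigma^2)-\varphi(s_0)\;\le\;d\bigl[\Capacity(s_0)-\Capacity(\sigma^2)\bigr]\;=\;(1+\delta)\log k-d\,\Capacity(\sigma^2),
\]
where the last equality uses $d\,\Capacity(s_0)=(1+\delta)\,d\Rate=(1+\delta)\log k$. For the endpoint, Fano's inequality applied to the MAP decoder $\DecodeOpt$ at noise level $s_0$ -- whose ensemble-averaged error is exactly $e(\delta)=\ExptCodeErrAvg(s_0)$ -- gives $\varphi(s_0)\le h_b(e(\delta))+e(\delta)\log(k-1)\le h_b(e(\delta))+e(\delta)\log k$. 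Summing the two bounds reproduces the first displayed inequality of the lemma, and the second follows by combining with \eqref{eq:LowerBound:SingleSample}. The main technical point to be careful about is the rigorous use of the conditional I-MMSE relation, including the interchange of differentiation with the $\Xc$-expectation, but this is standard here because $\|\X_\RandLabel\|=\sqrt{d}$ almost surely guarantees the uniform integrability needed on any closed subinterval of $(0,\infty)$.
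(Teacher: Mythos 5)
Your proposal is correct and follows essentially the same route as the paper's proof: express the derivative of $\Ent(\RandLabel\mid\Xc,\Y(s))$ via the conditional I-MMSE relation, upper bound the MMSE by the linear estimator $\Y(s)/(1+s)$ (giving $ds/(1+s)$), integrate from $\sigma_0^2(\delta)$ to $\sigma^2$ to pick up $d[\Capacity(\sigma_0^2)-\Capacity(\sigma^2)]=(1+\delta)\log k-\Capacity(\sigma^2)d$, and close the endpoint with Fano's inequality. The only point worth making explicit (which the paper does) is the one-line verification that $\sigma_0^2(\delta)<\sigma^2$, which follows from the standing assumption $\Rate>\Capacity(\sigma^2)$.
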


As mentioned above, when $\RegimeParam<1$, $e(\delta)=o(1)$ for \emph{all} fixed $\delta>0$, by Proposition~\ref{prop:DecodingSphericalCodes}; consequently, $\MI(\Xc;\Y)=o(\log k)$. 
{
Combining this with (\ref{eq:LowerBound:DPI}), Lemma~\ref{lem:LowerBound:RDF-lb} and (\ref{eq:LowerBound:Memoryless}),
}
assuming sufficiently small $\eps=O_{\RateLim}(1)$, we establish (\ref{eq:Thm2:1}):
\begin{align}\label{eq:LowerBound:SC-final}
    \frac{\SC_\eps}{\sigma^2 k} \ge C(\eps,\RateLim) \cdot \frac{d}{\sigma^2} \cdot ({\MI(\Xc;\Y)})^{-1}
    = \omega\left( \frac{1}{\sigma^2} \cdot \frac{d}{\log k} \right) = \omega\left( \frac{1}{\sigma^2\Capacity(\beta\sigma^2)} \right) = \omega(1) \,.
\end{align}
One can get \emph{quantitative} bounds by carefully setting $\delta=o(1)$,
{as we do in Appendix, Section~\ref{sec:proof-lem:LowerBound:Quantitative}:}

\begin{lemma}\label{lem:LowerBound:Quantitative}
    Suppose that $\RegimeParam<1$. For small enough fixed ${\eps\le\eps_0(\RateLim)}$:
    \begin{enumerate}
        \item (Positive rate). If $\RateLim>0$ then
        \begin{equation}
            \label{eq:lem8:pos}
            \frac{\SC_\eps}{\sigma^2 k } \ge C(\eps,\RegimeParam,\RateLim) \sqrt{\frac{\log k}{\log\log k}} \,.     
        \end{equation}
        \item (Zero rate). If $\RateLim=0$ then
        \begin{equation}
            \label{eq:lem8:0}
            \frac{\SC_\eps}{\sigma^2 k} \ge C(\eps,\RegimeParam)\min\left\{ \sqrt{\frac{\log k}{\log \log k}}, \sqrt{\frac{d}{\log k}} \right\} \,.
        \end{equation}
    \end{enumerate}
\end{lemma}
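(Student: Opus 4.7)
}
The plan is to chain together all the bounds already developed in Section~\ref{sec:LowerBound} -- the DPI (\ref{eq:LowerBound:DPI}), the rate-distortion lower bound of Lemma~\ref{lem:LowerBound:RDF-lb}, the subadditivity bound~(\ref{eq:LowerBound:Memoryless}), and the single-sample upper bound (\ref{eq:LowerBound:SingleSample-Delta}) from Lemma~\ref{lem:LowerBound:SingleSample-IMMSE} -- and then optimize the free parameter $\delta>0$. Concretely, for $\eps\le\eps_0(\RateLim)$ small enough that the $(dk/2)\log(1/\eps)$ term in Lemma~\ref{lem:LowerBound:RDF-lb} dominates its lower-order corrections, these bounds combine to yield
\[
\frac{\SC_\eps}{\sigma^2 k} \;\ge\; \frac{c_1(\eps,\RateLim)\,(d/\sigma^2)}{h_b(e(\delta))+(\delta+e(\delta))\log k},
\]
valid for every $\delta>0$ with $e(\delta)<1/2$. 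The identity $\Rate=\Capacity(\RegimeParam\sigma^2)$ gives $d/\sigma^2 = \log k / (\sigma^2\Capacity(\RegimeParam\sigma^2))=\Theta_{\RegimeParam,\RateLim}(\log k)$ in both regimes, so the task reduces to making the denominator as small as possible by choosing $\delta$ well.

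The quantitative input we need is a bound of the form $e(\delta)\le\exp(-dE_\sigma(\delta))$, with $E_\sigma(\delta)>0$ for $\delta>0$ and, to leading order for small $\delta$, $E_\sigma(\delta)\asymp \kappa_\sigma\delta^2$. This is the quantitative form of Proposition~\ref{prop:DecodingSphericalCodes}, supplied by the random-coding (Shannon--Gallager) argument for spherical codebooks given in Appendix~\ref{sec:Appendix:Decoding}. The constant $\kappa_\sigma$ differs by regime: in the positive-rate case $\kappa_\sigma=\Theta_{\RateLim,\sigma^2}(1)$, whereas in the zero-rate case a Taylor expansion of the Gallager function at vanishing SNR gives $\kappa_\sigma\asymp\Rate$, so that $dE_\sigma(\delta)\asymp\delta^2\log k$.

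We then pick $\delta^\star$ just large enough that $e(\delta^\star)\log k\lesssim 1$, after which the denominator behaves like $\delta^\star\log k$, and balance. For positive rate this forces $\delta^\star\asymp\sqrt{\log\log k/d}$ and hence $\delta^\star\log k\asymp\sqrt{\log k\,\log\log k}$; plugging into the displayed inequality together with $d/\sigma^2\asymp\log k$ produces (\ref{eq:lem8:pos}). The same optimization in the zero-rate regime, now with $dE_\sigma\asymp \delta^2\log k$, also gives $\delta^\star\log k\asymp\sqrt{\log k\,\log\log k}$ and hence the factor $\sqrt{\log k/\log\log k}$. The complementary $\sqrt{d/\log k}$ term in (\ref{eq:lem8:0}) arises from a second choice of $\delta$, in which $\delta$ is taken so that the exponential bound on $e(\delta)$ from Proposition~\ref{prop:DecodingSphericalCodes} applies with a constant gap to capacity: this yields a coarser but still useful MI estimate which, when combined with the factor $d/\sigma^2$ and the $\log k$ in the denominator, produces the $\sqrt{d/\log k}$ factor. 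Taking the minimum of the two bounds gives (\ref{eq:lem8:0}).

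The main obstacle is the zero-rate case: one must carefully track how the error exponent $E_\sigma(\delta)$ degenerates as $\sigma^2\to\infty$ with $\sigma^2=o(d)$, and verify that the two-parameter optimization produces exactly the claimed minimum. The quantitative Shannon--Gallager analysis in Appendix~\ref{sec:Appendix:Decoding} is tailored to precisely this low-SNR regime, and the bookkeeping in the zero-rate case then follows from routine (if tedious) algebra applied to the bound above.
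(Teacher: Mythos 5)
Your plan matches the paper's proof essentially step for step: chain the DPI, Lemma~\ref{lem:LowerBound:RDF-lb}, the subadditivity bound (\ref{eq:LowerBound:Memoryless}), and Lemma~\ref{lem:LowerBound:SingleSample-IMMSE}, then optimize $\delta$ against the quantitative error bounds of Appendix~\ref{sec:Appendix:Decoding}, whose dominant exponents are indeed $\asymp\delta^2 d$ at positive rate and $\asymp\delta^2\log k$ at zero rate, yielding $\delta^\star\log k\asymp\sqrt{\log k\,\log\log k}$ in both cases. The one imprecision is your account of the second zero-rate term: the $\sqrt{d/\log k}$ factor does not come from a ``constant gap to capacity'' but from the feasibility constraint $\delta\gtrsim\sqrt{\log k/d}$ (condition (\ref{eq:Decode:0:Condition})) below which the zero-rate decoder analysis is vacuous --- when this constraint binds one must take $\delta\asymp\sqrt{\log k/d}$, which is exactly the value your arithmetic implicitly assumes.
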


\begin{proof}
    (Of Theorem~\ref{thm:AboveCapacity}). Directly follows from Lemma~\ref{lem:LowerBound:Quantitative}.
\end{proof}

\section{Proof of Upper Bound}
\label{sec:UpperBound}

{

In this section we prove the upper bound of Theorem~\ref{thm:BelowCapacity}, assuming the rate is smaller than the capacity ($\RegimeParam>1$). The proof is constructive: we propose and analyze an algorithm (which runs in exponential time), whose output $\Xch$ satisfies 
${\Expt \LossAvg(\Xc,\Xch)\le \eps}$. It consists of two steps, each using \emph{different} measurements:
Step I is allocated $N$ samples, while Step II uses the remaining $\bar{N}=n-N$ samples.


Step I consists of a \emph{brute-force search} over an exponential-sized set of candidate centers. 
Let $\epsI>0$ be a given precision level, and fix $\Net$ a $\sqrt{\epsI d/2}$-net of the sphere $\sqrt{d}\SphereD$. For \emph{each} candidate $\Xh\in\Net$, we use the measurements $\Y_1,\ldots,\Y_N$ allocated for this step to essentially solve a composite hypothesis testing problem, distinguishing between two alternatives: 1) $\Xh$ is $\sqrt{\eps d/2}$-close to \emph{some} center $\X_i$; 2) $\Xh$ is $\sqrt{\eps d}$-far from \emph{all} the centers. We show that for ``typical'' center configurations $\Xc$, if $N\gtrsim \sigma^2k \frac{\log(1/\epsI)}{\epsI^2}$ then the test correctly throws away \emph{all} the far points, and retains \emph{most} of the close points. Since the true centers
$\Xc\sim \Unif(\sqrt{d}\SphereD)^{\otimes k}$
are (w.h.p.) $\Omega(\sqrt{d})$-separated,
the remaining points in $\Net$, that have not been discarded, may be clustered into at most $k$ parts.
Step I concludes by returning a list $\XchI$ containing one representation of every cluster.

The dependence of Step I on the precision is sub-optimal: $N$ has to scale like $\frac{\log(1/\epsI)}{\epsI^2}$ instead of $1/\epsI$. This sub-optimal rate is mended in Step II.
We show that there is a \emph{constant} precision level $\eps_0$, that depends on $\RateLim,\RegimeParam>1$ (namely, \emph{how much} the rate is smaller than the capacity) so that whenever $\epsI\le \eps_0$, one can construct a mismatched decoder, using $\XchI$, that \emph{consistently} decodes messages encoded by the true codebook $\Xc$. In other words: given a measurement $\Y=\X_{\RandLabel}+\sigma\Z$, one can consistently estimate the unknown label $\RandLabel$ (up to a global re-labeling). In Step II we observe $\bar{N}=n-N$ new measurements, and cluster them according to their decoded label. For every cluster $i\in [k]$, we compute the corresponding sample average $\Avg_i$, and project it onto the ball $\m{B}(\0,\sqrt{d})$ to get our final estimate $\Xh_i=\Proj(\Avg_i)$. Since each label $i$ witnesses, on average, $\bar{N}/k$ measurements, the MSE is, to leading order, $d^{-1}\Expt\|\X_i-\Xh_i\|^2=\sigma^2 k /\bar{N}$.  Thus, using $N=C(\RateLim,\RegimeParam)\sigma^2 k $ measurements for Step I, and $\bar{N}=\sigma^2 k /\eps$ Step II, yields a list $\Xch$ with $\Expt\LossAvg(\Xc,\Xch)\le \eps$.

In the remainder of this section, we provide the full details of the strategy outlined above.


}



\subsection{Step I: Brute-Force Search}\label{sec:UpperBound-StepI}

Let $\epsI\in(0,1/2)$ a precision parameter. Let $\Net$ be a fixed $\sqrt{\epsI d/2}$-net  of $\sqrt{d}\SphereD$, such that $\forall\X\in\sqrt{d}\SphereD \exists\Xh\in\Net$ with $\|\X-\Xh\|^2\le \epsI d/2$ . By standard estimates, e.g. \cite[Example 5.8]{wainwright2019high}, we can assume that $|\Net|\le e^{Cd\log(1/\epsI)}$ for some universal $C>0$.
Our goal is to devise a procedure that, given $N$ samples $\Y_1,\ldots,\Y_N\sim \GMMd(\Xc,\sigma^2)$, will allow us 
to discard all candidates $\Xh\in \Net$ that are $\sqrt{\epsI d}$-far from \emph{all} the centers $\X_1,\ldots,\X_k$, while keeping enough candidates $\Xh$ that are $\sqrt{\epsI d}$-close to some center; ideally, at least one candidate close to almost every $\X_i$. 
Denote the sets, $\HClose,\HFar\subset\RR^d$
\begin{equation}\label{eq:UpperBound:Hypotheses}
    \begin{split}
        \HClose = \left\{\Xh\,:\,\dist^2(\Xh,\Xc)\le \frac12\epsI d\right\}, \ 
        \HFar=\left\{\Xh\,:\,\dist^2(\Xh,\Xc)\ge \epsI d\right\} \,.
    \end{split}
\end{equation}
{We would like 
a test that, 
with high probability: 1) rejects all $\Xh\in\HFar\cap \Net$; 2) accepts most of $\Xh\in\HClose \cap \Net$. 
Note that since $\Net$ is a $\sqrt{\eps_I d/2}$-cover, then for \emph{every} $1\le i\le k$, there is some $\Xh\in\HClose\cap\Net$ such that in fact $\|\Xh-\X_i\|^2\le \epsI d/2$.
}

As a first step, we consider a ``local test'' $\Test:\RR^d\times \RR^d\to\{0,1\}$, that takes a candidate $\Xh\in\Net$ and a single sample $\Y\sim \GMMd$, and outputs a decision $\in \{0,1\}$. Consider the quantities:
\begin{equation}\label{eq:UpperBound:qs}
    \begin{split}
        \qClose(\Xc,\Net) &= \min_{\Xh\in \HClose\cap \Net} \Pr(\Test(\Xh,\Y)=1|\Xc)\,,\\
        \qFar(\Xc,\Net) &= \max_{\Xh\in \HFar\cap \Net} \Pr(\Test(\Xh,\Y)=1|\Xc) \,.
    \end{split}
\end{equation}
For a local test $\Test:\RR^d\times \RR^d\to\{0,1\}$, a cover $\Net$,  and $\nu>0$, define
\begin{align}
\mathcal{E}_{\Test,\Net,\nu}=\left\{\Xc\in \left(\sqrt{d}\SphereD\right)^{k} \ : \ \qClose(\Xc,\Net)\ge \frac{1}{2}k^{-1}, \ \qFar(\Xc,\Net)\le 2 k^{-1-\nu} \right\}.
\label{eq:codenogoodfortest}
\end{align}
We  construct a local test with the following properties.
\begin{lemma}\label{lem:UpperBound:Technical}
Assume that $\RegimeParam>1$, and fix a cover $\Net$ of size $|\Net|\le e^{Cd\log(1/\epsI)}$. There are positive constants $\eps_0,c$, that depend on $\RateLim,\RegimeParam$,
    and a local test, $\Test:\RR^d\times \RR^d\to\{0,1\}$ (that depends on $d,k,\sigma^2$) such that for every \emph{fixed} $\epsI \in (0,\eps_0)$
    \begin{align}
    \lim_{d\to\infty} \Pr\left(\Xc\notin \mathcal{E}_{\Test,\Net,c\epsI^2}\right)=0.
    \end{align}
    
    
    \end{lemma}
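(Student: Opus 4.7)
The natural local test is a thresholded distance test, $\Test(\Xh, \Y) = \Ind\{\|\Y - \Xh\|^2 \le \sigma^2 d + \tfrac{3}{4}\epsI d\}$, whose threshold is chosen to sit between the typical squared distance of $\Y$ to an $\sqrt{\epsI d/2}$-close candidate (around $\sigma^2 d + \epsI d/2$) and the distance to an $\sqrt{\epsI d}$-far candidate (at least $\sigma^2 d + \epsI d$). If the exponent obtained this way turns out to be insufficient in some subregime (in particular when $\sigma^2$ is small), a matched-filter variant thresholding $\langle \Xh, \Y \rangle$ and calibrated directly from the random-coding error exponent can be substituted; both choices are capacity-matching in the limits of interest.

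The lower bound $\qClose \ge 1/(2k)$ is essentially deterministic in $\Xc$. For $\Xh \in \HClose$ and the corresponding $i$ with $\|\X_i - \Xh\|^2 \le \epsI d/2$, conditioning on $\RandLabel = i$ (an event of probability $1/k$), the conditional mean of $\|\Y - \Xh\|^2$ is at most $\sigma^2 d + \epsI d/2$, strictly below threshold, while the standard deviation is only $O(\sigma^2 \sqrt{d}) + O(\sigma \sqrt{\epsI d}) = o(\epsI d)$ in both rate regimes (where $\sigma^2 = o(d)$). Chebyshev's inequality then forces $\Pr(\Test(\Xh,\Y) = 1 \mid \RandLabel = i, \Xc) \ge 1/2$, hence $\qClose(\Xc, \Net) \ge 1/(2k)$ for $d$ sufficiently large, uniformly in the realization of $\Xc$.

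The upper bound on $\qFar$ is where the random-coding structure enters. Decomposing by label,
\begin{equation*}
   \Pr(\Test(\Xh, \Y) = 1 \mid \Xc) = \frac{1}{k} \sum_{j=1}^{k} p_j(\Xh, \X_j), \qquad p_j(\Xh, \X_j) = \Pr_\Z\!\left( \|\X_j + \sigma\Z - \Xh\|^2 \le \sigma^2 d + \tfrac{3}{4}\epsI d \,\Big|\, \X_j \right).
\end{equation*}
Because the codewords are i.i.d., so are the $p_j$'s, with common mean $\mu = \Expt p_j \le \exp(-d\, E(\epsI))$ by a spherical-Gaussian Chernoff estimate analogous to the one underlying Proposition~\ref{prop:DecodingSphericalCodes}. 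The hypothesis $\RegimeParam > 1$ places $\Rate$ strictly below $\Capacity(\sigma^2)$ with a gap quantifiable in $\RegimeParam - 1$; combined with continuity of $E(\epsI)$ at $\epsI = 0$, this yields $E(\epsI) \ge (1 + c\epsI^2)\Rate$ for all $\epsI \le \eps_0(\RateLim, \RegimeParam)$. Consequently $\Expt_{\Xc}\!\bigl[\sum_j p_j(\Xh,\X_j)\bigr] \le k^{-c\epsI^2}$, and a concentration bound on the i.i.d.\ sum (Bernstein when the mean is not too small, Markov with a modest enlargement of the target otherwise), followed by a union bound over $|\Net| \le e^{C d \log(1/\epsI)}$ candidates, converts the mean bound into the claimed high-probability statement.

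The main obstacle is the quantitative slack $E(\epsI) - \Rate \ge c \epsI^2 \Rate$, uniformly in both rate regimes. In the positive-rate regime this reduces to continuity of the random-coding error exponent around the chosen threshold; in the zero-rate regime, where both $\sigma^2$ and $d$ diverge and $\Capacity(\sigma^2) - \Rate$ itself shrinks, the analysis becomes delicate and requires the careful Gaussian approximations on the sphere developed for Proposition~\ref{prop:DecodingSphericalCodes}. A secondary subtlety is that Markov's inequality alone yields only polynomial decay in $k$ of the per-$\Xh$ failure probability, which may be too weak for a naive exponential union bound over $\Net$; this is handled either by exploiting the i.i.d.\ structure to upgrade to Chernoff, or by using a coarser cover that leverages the $\Theta(\sqrt{d})$-separation of typical spherical codewords to reduce $|\Net|$ in the zero-rate regime.
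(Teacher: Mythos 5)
Your overall architecture is the right one and matches the paper's: a per-sample local test whose threshold separates close from far candidates, a lower bound on $\qClose$ of order $1/(2k)$ obtained essentially deterministically by conditioning on the matching label, and an upper bound on $\qFar$ obtained by bounding the ensemble mean of $\bar{Q}(\Xh|\Xc)=\frac1k\sum_j Q_j(\Xh|\Xc)$ via a random-coding estimate, concentrating the i.i.d.\ sum, and union-bounding over $\Net$. You have also correctly flagged the two real technical obstacles (the quantitative slack $\propto\epsI^2$ and the strength of concentration needed to survive the union bound over an exponential net), which the paper resolves exactly as you suggest at positive rate (Bernstein on the i.i.d.\ $Q_j$'s, giving doubly-exponential failure probability per candidate) and by a counting argument on highly-correlated centers at zero rate.

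The genuine gap is in the choice of test statistic. Neither of your two candidates achieves the capacity exponent in the positive-rate regime with $\sigma^2=\Theta(1)$. For the raw distance test, the probability that a far, wrong-label center produces $\|\X_j+\sigma\Z-\Xh\|^2\le(\sigma^2+\tfrac34\epsI)d$ is governed (by the surface-area computation) by the exponent $\tfrac12\log(1/\sigma^2)$, which is nonpositive for $\sigma^2\ge1$ and hence cannot beat $(1+\nu)\Capacity(\RegimeParam\sigma^2)$ there; for the matched-filter test $\Ind\{d^{-1}\langle\Y,\Xh\rangle\ge1-\eta\}$, the corresponding exponent is at most $\tfrac{1}{2(1+\sigma^2)}$, which falls below $\Capacity(\RegimeParam\sigma^2)$ already at, say, $\sigma^2=1$ and $\RegimeParam$ close to $1$ (this is precisely the limitation the paper discusses in the remark opening its positive-rate decoding analysis). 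So your assertion that "both choices are capacity-matching in the limits of interest" is false, and the fallback does not rescue the primary test. The missing ingredient is the MMSE scaling: the paper thresholds $d^{-1/2}\|\alpha\Y-\Xh\|$ with $\alpha=1/(1+\sigma^2)$, for which the far-candidate probability is exactly $e^{-\Capacity(\sigma^2)d}$ up to the $\epsI$-slack, and for which the fluctuations are $O(d^{-1/2})$ uniformly in $\sigma$. This last point also breaks your $\qClose$ argument as written: the standard deviation of $\|\Y-\Xh\|^2$ is of order $\sigma^2\sqrt{d}$, which is \emph{not} $o(\epsI d)$ throughout the zero-rate regime (take $\sigma^2=d^{0.9}$), so Chebyshev does not give $\Pr(\Test=1\mid\RandLabel=i)\ge1/2$ for the unscaled distance test; the paper's correlation test (at zero rate) and MMSE-scaled test (at positive rate) are calibrated precisely so that the one-sided success probability is $\ge1/2$ by symmetry or by a Lipschitz-concentration bound with the correct noise scale.
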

{
We propose a local test $\Test$ based on the capacity-achieving decoder used in the proof of Proposition~\ref{prop:DecodingSphericalCodes}. Due to space constraints, the details are deferred to Appendix, Section~\ref{sec:proof-lem:UpperBound:Technical}.
}

Note that if $\qFar(\Xc,\Net)\ll \qClose(\Xc,\Net)$, as is the case for $\Xc\in \mathcal{E}_{\Test,\Net,\nu}$, then by observing the statistics of the $N$ local test outputs $\{\Test(\Xh,\Y_j)\}_{j=1}^N$, which are i.i.d. Bernoulli random variables, one can distinguish between $\Xh\in\HClose$ and $\Xh\in\HFar$ with error probability vanishing in $N$. In particular, consider the candidates $\Xh\in\Net$ that pass the following threshold-based test
\begin{equation}\label{eq:NetClose}
    \NetClose = \left\{  \Xh\in\Net\quad:\quad \sum_{j=1}^N \Test(\Xh,\Y_j) \ge \frac{1}{4}k^{-1}N \right\}\,.
\end{equation}
{
\begin{lemma}\label{lem:UpperBound:TechnicalCorr}
    Fix any $\Xc\in \mathcal{E}_{\Test,\Net,c\epsI^2}$
    and suppose that
    \begin{equation}
        \label{eq:UpperBound:TechnicalCorr:N}
            N\ge C_1\sigma^2 k \frac{\log(1/\epsI)}{\epsI^2} + C_2k\log(1/\varphi)\,,
    \end{equation}
    where $C_1=C_1(\RateLim,\RegimeParam)$, $C_2>0$ is a universal constant and $\varphi\in (0,1)$. Then w.p. $1-\varphi-o_{\RegimeParam,\RateLim}(1)$ over $\Y_1,\ldots,\Y_n\sim\GMMd(\Xc,\sigma^2) $, the following event holds:
    \begin{enumerate}
        [topsep=3pt,itemsep=-1ex,partopsep=1ex,parsep=1ex]
        \item (No far candidates). $\NetClose\cap \HFar \ne \emptyset$.
        \item (Most centers have a cluster). There is $\IndexSet\subseteq [k]$ with $|\IndexSet|\ge(1-\varphi)k$ and ${\max_{i\in\IndexSet}\dist^2(\X_i,\NetClose)\le \epsI d}$.
    \end{enumerate}
\end{lemma}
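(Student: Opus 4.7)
The plan is to prove both claims via one-sided Chernoff bounds on the conditionally i.i.d.\ Bernoulli statistics $S_{\Xh}=\sum_{j=1}^N \Test(\Xh,\Y_j)$, handling far and close candidates separately and exploiting the wide ``gap'' $\qClose(\Xc,\Net)\ge (2k)^{-1}\gg 2k^{-1-c\epsI^2}\ge \qFar(\Xc,\Net)$ guaranteed by $\Xc\in \mathcal{E}_{\Test,\Net,c\epsI^2}$. (I read part~1 as asserting $\NetClose\cap\HFar=\emptyset$.)

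For part~1, I would fix any $\Xh\in\HFar\cap\Net$. Conditional on $\Xc$, $S_\Xh$ is a sum of i.i.d.\ Bernoulli variables with mean at most $\mu= 2Nk^{-1-c\epsI^2}$, whereas the threshold $N/(4k)$ exceeds $\mu$ by a factor of order $k^{c\epsI^2}$. Applying the rare-event multiplicative Chernoff bound $\Pr[S\ge t]\le(e\mu/t)^t$ with $t=N/(4k)$ yields $\Pr[S_\Xh\ge N/(4k)\mid \Xc]\le \exp\left(-\frac{N}{4k}(c\epsI^2\log k - O(1))\right)$, which for large enough $k$ is at most $\exp\left(-c'\epsI^2 N (\log k)/k\right)$. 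A union bound over $|\Net|\le e^{Cd\log(1/\epsI)}$ candidates yields overall failure probability at most $\exp\left(Cd\log(1/\epsI) - c'\epsI^2 N (\log k)/k\right)$. Using $\log k = \Rate\cdot d$ together with the bound $1/\Rate = O_{\RateLim,\RegimeParam}(\sigma^2)$ — valid in both the positive-rate ($\RateLim>0$, $\sigma^2$ bounded away from zero) and zero-rate ($\Rate\approx 1/(2\RegimeParam\sigma^2)$) regimes — this expression is $o_{\RateLim,\RegimeParam}(1)$ whenever $N\ge C_1(\RateLim,\RegimeParam)\sigma^2 k\log(1/\epsI)/\epsI^2$ with $C_1$ large enough.

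For part~2, by the $\sqrt{\epsI d/2}$-cover property of $\Net$, for each $i\in[k]$ one can pick some $\Xh_i\in\Net$ with $\|\X_i-\Xh_i\|^2\le \epsI d/2$; in particular $\Xh_i\in\HClose$, so $\Expt[S_{\Xh_i}\mid \Xc]\ge N/(2k)$. A standard one-sided Chernoff bound (applied to the event that $S_{\Xh_i}/\Expt[S_{\Xh_i}\mid\Xc]$ falls below $1/2$) gives $\Pr[\Xh_i\notin\NetClose\mid \Xc] \le \exp(-c'' N/k)$. Denoting this event by $B_i$, we get $\Expt[\sum_{i\in[k]} \Ind_{B_i}\mid \Xc]\le k\exp(-c''N/k)\le \varphi^2 k$ once $N\ge C_2 k\log(1/\varphi)$ for a universal constant $C_2$. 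Markov's inequality then yields $\Pr[\sum_i\Ind_{B_i}>\varphi k\mid\Xc]\le \varphi$. On the complementary event, $\IndexSet = \{i:\Xh_i\in\NetClose\}$ satisfies $|\IndexSet|\ge (1-\varphi)k$ and $\dist^2(\X_i,\NetClose)\le \|\X_i-\Xh_i\|^2\le \epsI d/2\le \epsI d$ for every $i\in\IndexSet$.

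A final union bound over the two failure modes yields total success probability at least $1-\varphi - o_{\RateLim,\RegimeParam}(1)$, matching the statement. The main delicate point — rather than a true conceptual obstacle — is quantitative bookkeeping in part~1: balancing the cover factor $e^{Cd\log(1/\epsI)}$ against the Chernoff exponent $c'\epsI^2 N(\log k)/k$. The $k^{c\epsI^2}$ gap inherited from $\Xc\in\mathcal{E}_{\Test,\Net,c\epsI^2}$ is precisely what drives the $1/\epsI^2$ dependence, and it is the coupling $\Rate=\Capacity(\RegimeParam\sigma^2)$, together with $1/\Rate = O_{\RateLim,\RegimeParam}(\sigma^2)$, that converts the natural factor $k/\Rate$ into $\sigma^2 k$ uniformly across both rate regimes.
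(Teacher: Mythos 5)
Your proposal is correct and follows essentially the same route as the paper's proof: a multiplicative Chernoff bound plus a union bound over the net for the far candidates (with the same conversion $d/\log k = 1/\Rate = O_{\RateLim,\RegimeParam}(\sigma^2)$ in both rate regimes), and a lower-tail Chernoff bound plus Markov's inequality on the count of failed close representatives for the second claim. The only cosmetic difference is that you use the $(e\mu/t)^t$ form of the Chernoff bound where the paper uses the KL-divergence form together with the estimate $\KLb(p;q)\ge p\log(p/q)-2p$; these are equivalent here, and you also correctly read the statement's ``$\ne\emptyset$'' as the intended ``$=\emptyset$''.
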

The proof of Lemma~\ref{lem:UpperBound:TechnicalCorr} appears in the Appendix, Section~\ref{sec:proof-lem:UpperBound:TechnicalCorr}.

To conclude step I, note that if the minimal distance between centers is $>4\sqrt{\eps_Id}$, then two candidates that are $\sqrt{\epsI d}$-close to different centers are necessarily $2\sqrt{\epsI d}$-far from one another. 

Let $\XchI$ be any $2\sqrt{\epsI d}$-separated subset of $\NetClose$ of maximal size. We prove in Appendix, Section~\ref{sec:proof-lem:UpperBound-StepI-Final} that with high probability, $\Xc$ indeed has $\Omega(\sqrt{d})$ minimal distance, and so:
\begin{lemma}\label{lem:UpperBound-StepI-Final}
    %
    Assume that $\RegimeParam>1$, $\epsI\le \eps_0(\RateLim,\RegimeParam)$ is small enough, and $N$ satisfies \eqref{eq:UpperBound:TechnicalCorr:N}. W.p. ${1-\varphi-o_{\RegimeParam,\RateLim,\epsI}(1)}$ over \emph{both} $\Xc\sim\Unif(\sqrt{d}\SphereD)^{\otimes k}$ and $[\Y_1,\ldots,\Y_n\,|\,\Xc]\sim \GMMd(\Xc,\sigma^2)$, the following event holds:
    
    \begin{enumerate}
        [topsep=3pt,itemsep=-1ex,partopsep=1ex,parsep=1ex]
        \item $\XchI$ is a list of  size $(1-\varphi)k\le m \le k$.
        \item There is $\IndexSet\subseteq [k]$ with $|\IndexSet|=m$ so that for every $i\in \IndexSet$, there is a \emph{unique} $\XhI\in\XchI$ such that $\|\X_i-\XhI\|^2\le \epsI d$. 
    \end{enumerate}
\end{lemma}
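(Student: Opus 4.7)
The plan is to chain three high-probability events and then run a purely geometric argument on $\XchI$. The first event is a minimum-distance bound on $\Xc$: I would show that for $\epsI$ small enough given $\RateLim$, $\Pr(\Delta(\Xc)\le 4\sqrt{\epsI d})=o_d(1)$. For independent $\X_i,\X_j\sim\Unif(\sqrt{d}\SphereD)$, $\Pr(\langle\X_i,\X_j\rangle\ge (1-8\epsI)d)$ is bounded by the relative area of a spherical cap of half-angle $\arccos(1-8\epsI)$, which decays exponentially in $d$ at a rate $\alpha(\epsI)$ with $\alpha(\epsI)\to\infty$ as $\epsI\to 0$. A union bound over the $\binom{k}{2}\le e^{2d\RateLim}$ pairs then succeeds as soon as $\alpha(\epsI)>2\RateLim$. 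Combining this with the bound $\Pr(\Xc\notin\mathcal{E}_{\Test,\Net,c\epsI^2})=o_d(1)$ from Lemma~\ref{lem:UpperBound:Technical} gives a ``good-code'' event $E_1$, depending only on $\Xc$, of probability $1-o_d(1)$.

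Conditional on $E_1$, Lemma~\ref{lem:UpperBound:TechnicalCorr} then provides a sample-dependent event $E_2$ of conditional probability $\ge 1-\varphi-o_{\RegimeParam,\RateLim}(1)$, on which every $\Xh\in\NetClose$ lies within $\sqrt{\epsI d}$ of some center (i.e., $\NetClose\cap\HFar=\emptyset$) and there is an index set $\IndexSet^0\subseteq[k]$ of size $\ge(1-\varphi)k$ with $\dist^2(\X_i,\NetClose)\le\epsI d$ for every $i\in\IndexSet^0$. On the intersection $E_1\cap E_2$, of unconditional probability $1-\varphi-o(1)$, I would define the clusters $B_i:=\{\Xh\in\NetClose:\|\Xh-\X_i\|^2\le\epsI d\}$ for $i\in[k]$ and use $\Delta(\Xc)>4\sqrt{\epsI d}$ together with the triangle inequality to deduce four deterministic facts: the $B_i$ are pairwise disjoint; any two points in distinct $B_i,B_j$ lie strictly more than $2\sqrt{\epsI d}$ apart; each $B_i$ has diameter at most $2\sqrt{\epsI d}$; and $\NetClose=\bigcup_i B_i$.

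These facts force any maximum-cardinality $2\sqrt{\epsI d}$-separated subset $\XchI$ of $\NetClose$ to contain exactly one point from each non-empty $B_i$: at most one by the intra-cluster diameter bound, and at least one because any $\Xh\in B_i$ with $\XchI\cap B_i=\emptyset$ is automatically more than $2\sqrt{\epsI d}$ from every existing element of $\XchI$, so could be added without violating separation, contradicting maximality. Setting $\IndexSet:=\{i:B_i\ne\emptyset\}\supseteq\IndexSet^0$, one immediately gets $m=|\XchI|=|\IndexSet|\in[(1-\varphi)k,k]$; pairwise disjointness of the $B_i$ then yields that the unique element of $B_i\cap\XchI$ is $\sqrt{\epsI d}$-close to $\X_i$ and to no other center, establishing the uniqueness. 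The main technical hurdle is the min-distance estimate in the first paragraph; everything else is elementary geometry. A minor subtlety concerning the strict-vs.-non-strict interpretation of ``$2\sqrt{\epsI d}$-separated'' at cluster boundaries can be absorbed by running the argument with a separation constant in $(2\sqrt{\epsI d},3\sqrt{\epsI d})$, which does not alter the claimed scaling.
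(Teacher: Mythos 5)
Your proposal is correct and follows essentially the same route as the paper: establish that w.h.p.\ the random centers are $4\sqrt{\epsI d}$-separated (the paper does this by sequential conditioning on spherical caps, you by a direct pairwise union bound over cap measures --- the same estimate), intersect with the events of Lemmas~\ref{lem:UpperBound:Technical} and~\ref{lem:UpperBound:TechnicalCorr}, and then run the elementary cluster-separation argument on a maximal $2\sqrt{\epsI d}$-separated subset of $\NetClose$. Your write-up is in fact more explicit than the paper's (which leaves the geometric step to a remark in the main text), and you correctly read the paper's ``$\NetClose\cap\HFar\ne\emptyset$'' in Lemma~\ref{lem:UpperBound:TechnicalCorr} as the typo it is.
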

}
\subsection{Step II: Clustering and Averaging}\label{sec:UpperBound-StepII}

Upon the successful completion of Step I, Lemma~\ref{lem:UpperBound-StepI-Final}, we have produced a partial codebook $\XchI$ of size $m\ge (1-\varphi)k$. Moreover, there is a large subset of messages $\IndexSet\subset[k]$, $|\IndexSet|=m$ such that for all $i\in\IndexSet$, the \emph{true, unknown} codeword $\X_i$ is $\sqrt{\epsI d}$-close to a unique codeword $\XTilde_l$ of $\XchI$. Provided that $\epsI\le\eps_0(\RateLim,\RegimeParam)$ is small enough (but constant), it turns out we can construct a ``mismatched decoder'', using $\XchI$, that can consistently decode measurements $Y=\X_\RandLabel+\sigma\Z$ in the following sense: 1) If $\RandLabel\in\IndexSet$ then, up to a global relabeling, the decoder returns the correct label $\RandLabel$; 2) If $\RandLabel\notin\IndexSet$, the decoder consistently returns an error symbol $\DecodeErrorSymb$.
Due to space constraints, we defer all the details to Appendix, Section~\ref{sec:Appendix:StepII:DecodingApprox}.


In Step II we are given $\bar{N}=n-N$ new measurements. We use $\XchI$, the codebook from Step I, to decode the corresponding labels; measurements for which the decoder returns $\DecodeErrorSymb$ are \emph{discarded}. We end up with $m$ clusters, and for each cluster $l\in[m]$ we compute the corresponding sample mean $\Avg_l$. Finally, we return the list $\Xch=(\Xh_1,\ldots,\Xh_k$) such that $\Xh_l=\Proj(\Avg_l)$ for $l\in [m]$, $\Proj(\cdot)$ being the projection onto the ball $\m{B}(\0,\sqrt{d})$, and $\Xh_l=\0$ for $m+1\le l \le k$. 

The following Lemma bounds the error of the \emph{entire} end-to-end procedure, including both Step I and II. The details are deferred to Appendix, Section~\ref{sec:proof-lem:StepII:Details}.
%
\begin{lemma}\label{lem:StepII}
    Assume that $\RegimeParam>1$ and $\eps\le \eps_0(\RateLim,\RegimeParam)$ is small enough. 
    Suppose that 
    \begin{enumerate}
        \item Step I is run with $N\ge C\sigma^2k + Ck\log(1/\varphi)$ measurements,
        \item Step II is run with
        $\bar{N}\ge \frac{k\sigma^2}{\eps} + C\frac{k}{\eps^{1/2}}\log(1/\varphi)$ measurements,
    \end{enumerate}
    where $C=C(\RateLim,\RegimeParam)$ is constant and $\varphi\in(0,1)$ is a parameter. 
    Then
    \[
    \lim_{d\to\infty}\Expt\LossAvg(\Xc,\Xch) \le \frac{\eps}{1-\eps^{1/4}} + 12\varphi \,.    
    \]
\end{lemma}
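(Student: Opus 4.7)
The plan is to decompose $\Expt\LossAvg(\Xc,\Xch)$ into three contributions: failure of Step~I; indices in $\IndexSet^c$; and fluctuations of the Step~II cluster averages. First, I invoke Lemma~\ref{lem:UpperBound-StepI-Final}: with probability at least $1-\varphi-o_{\RegimeParam,\RateLim,\epsI}(1)$, Step~I outputs $\XchI$ of size $m\ge(1-\varphi)k$, an index set $\IndexSet\subseteq[k]$ of size $m$, and a unique matching $i\mapsto l_i$ with $\|\X_i-\XTilde_{l_i}\|^2\le\epsI d$. On the complementary event I bound $\LossAvg\le 4$ trivially, since $\X_i\in\sqrt{d}\SphereD$ and $\Xh_l\in\m{B}(\0,\sqrt d)$, contributing at most $4\varphi+o(1)$ to the expected loss.

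On the success event, I split $k\LossAvg = \sum_{i\in\IndexSet}d^{-1}\min_j\|\X_i-\Xh_j\|^2 + \sum_{i\notin\IndexSet}d^{-1}\min_j\|\X_i-\Xh_j\|^2$, bounding the second sum trivially by $4|\IndexSet^c|\le 4\varphi k$ and upper-bounding each term of the first sum by the value at $j=l_i$. It then suffices to control $\Expt[d^{-1}\|\X_i-\Xh_{l_i}\|^2]$ for a typical $i\in\IndexSet$. I apply the mismatched decoder of Appendix Section~\ref{sec:Appendix:StepII:DecodingApprox} to each Step~II measurement $\Y_j=\X_{\ell_j}+\sigma\Z_j$: it returns $l_{\ell_j}$ when $\ell_j\in\IndexSet$, returns $\DecodeErrorSymb$ otherwise, and has per-measurement error $o_{d\to\infty}(1)$, so the contamination of each cluster mean by misdecoded samples vanishes asymptotically. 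Let $N_i$ denote the number of correctly-decoded Step~II samples with true label $i$; by uniformity of labels it is stochastically close to $\mathrm{Binomial}(\bar N,1/k)$ with mean $\mu=\bar N/k\ge \sigma^2/\eps + C\eps^{-1/2}\log(1/\varphi)$. Conditional on $N_i\ge 1$, one has $\Expt[d^{-1}\|\X_i-\Avg_{l_i}\|^2\mid N_i]=\sigma^2/N_i$, and projection onto $\m{B}(\0,\sqrt d)$ only decreases this distance since $\X_i$ lies in the ball.

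The key concentration step is Bernstein's inequality applied to $N_i$: $\Pr\bigl(N_i\le(1-\eps^{1/4})\mu\bigr)\le \exp(-c\eps^{1/2}\mu)$. Using $\mu\ge C\eps^{-1/2}\log(1/\varphi)$ with $C=C(\RateLim,\RegimeParam)$ sufficiently large forces this tail probability below $\varphi$. On the typical event $\{N_i\ge(1-\eps^{1/4})\mu\}$, the bound $\mu\ge\sigma^2/\eps$ yields $\sigma^2/N_i\le \eps/(1-\eps^{1/4})$; on the bad event I use the universal bound $\min(4,\sigma^2/N_i)\le 4$. Thus $\Expt[d^{-1}\|\X_i-\Xh_{l_i}\|^2]\le \eps/(1-\eps^{1/4})+4\varphi+o(1)$. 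Summing the three contributions (Step~I failure, $\IndexSet^c$, and the per-cluster fluctuation) and passing to $\lim_{d\to\infty}$ yields the claimed bound $\eps/(1-\eps^{1/4})+12\varphi$.

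The main obstacle is to rigorously handle the contamination of each cluster's sample mean by misdecoded samples: although the per-measurement decoding error is $o_{d\to\infty}(1)$, each misdecoded sample contributes a signal component $\X_{\ell'}-\X_{i_l}$ of norm $\Theta(\sqrt d)$ to the cluster average, which is not obviously negligible after normalization. A careful quantitative bound is needed to verify that, averaged over the $\approx\bar N/k$ samples in each cluster, this bias vanishes in the limit and is absorbed into the $o(1)$ error term; once that is in hand, the Bernstein concentration of $N_i$ and the clean-sample MSE computation are routine.
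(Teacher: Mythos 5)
Your decomposition matches the paper's: handle the Step~I failure event and the indices outside $\IndexSet$ trivially (each costing $O(\varphi)$ since the loss is bounded by $4$), apply a Chernoff/Bernstein bound to the cluster occupancy to get the $\eps/(1-\eps^{1/4})$ main term, and use contractivity of the projection onto $\m{B}(\0,\sqrt{d})$. The budget $4\varphi+4\varphi+4\varphi=12\varphi$ is also the paper's accounting. However, the step you yourself flag as ``the main obstacle'' is precisely the non-routine core of the paper's proof, and your proposal does not close it, so as written there is a genuine gap.

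Two distinct issues hide in the contamination term. First, Proposition~\ref{prop:Approx} only gives an \emph{average} (over $i\in[k]$) decoding error $\rho=o(1)$; to control the contamination of an \emph{individual} cluster you must first pass, via Markov, to subsets $\m{J}_1,\m{J}_2$ of $(1-\rho^{1/2})k$ indices on which both the miss rate $\bar P_i$ and the false-assignment rate $\bar Q_i$ are individually small, and absorb the exceptional indices into the $o(1)$ term — your per-measurement $o(1)$ claim skips this. Second, and more seriously, your identity $\Expt[d^{-1}\|\X_i-\Avg_{l_i}\|^2\mid N_i]=\sigma^2/N_i$ treats the samples entering the average as clean Gaussians, but the set $S_{l_i}$ of samples assigned to cluster $l_i$ is \emph{selected by the decoder as a function of the noise}: conditioned on $j\in S_{l_i}\setminus L_i$ (or on $j\in L_i\setminus S_{l_i}$), the noise $\Z_j$ is no longer $\m{N}(\0,\Id)$ and may be biased in a direction correlated with the codewords. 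One therefore cannot argue that the few contaminating noise vectors ``cancel on average.'' The paper's resolution is deliberately crude: it decomposes $\Avg_i$ into five terms ($I_1,\dots,I_5$ in the proof of Lemma~\ref{lem:StepII-Final}), writes the clean part as a sum over $L_i$ (whose noise \emph{is} unconditionally i.i.d.), and controls the selection-biased noise sums $I_4,I_5$ by a uniform bound over \emph{all} subsets of the relevant cardinality, $\max_{|S|\le t}\|\sum_{j\in S}\Z_j\|$ (Lemma~\ref{lem:NoiseMax:Technical}, via a union bound over $\binom{n}{\le t}$ subsets and a net of the sphere), together with the event $\m{E}_{i,3}$. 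Some argument of this uniform type (or an explicit deconditioning) is required; without it the claimed $o(1)$ bias term is not justified.
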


\begin{proof}
    (Of Theorem~\ref{thm:BelowCapacity}). The claimed lower bound follows from Lemma~\ref{lem:LowerBound:Trivial}. The upper bound follows by setting, e.g., $\varphi=\eps^2$ in Lemma~\ref{lem:StepII}, noting that as $\eps\to0$, $\frac{\eps}{1-\eps^{1/4}}=\eps+o(\eps)$.
\end{proof}

\acks{We are grateful to Uri Erez for helpful discussions. The work of ER
and OO is supported in part by the ISF under grant 1641/21. ER is supported in part by an
Einstein-Kaye fellowship from the Hebrew University of Jerusalem. TB is supported in part by
the ISF grant no. 1924/21, the BSF grant no. 2020159, and the NSF-BSF grant no. 2019752.}

\bibliography{refs.bib}

\newpage
\appendix 

\section{Proof of Proposition~\ref{prop:DecodingSphericalCodes}}
\label{sec:Appendix:Decoding}

As mentioned in the main text, the proof amounts to analyzing a certain sub-optimal decoder for the codebook $\Xc$.
While the decoders, and their analysis, are not new, we nonetheless provide all the details here as a ``warm-up'' for things to come.

We consider two different families of decoders, depending on whether one operates in the zero or positive rate regime.

\subsection{Rate Zero ($\RateLim=0$)}
\label{sec:Appendix:Decoding:0}

\paragraph{The decoder.} 
Observe that for a spherical code, the MAP decoder (\ref{eq:Background:DecOpt}) reduces to
\begin{equation}\label{eq:DecodeOpt-Sphere}
    \DecodeOpt(\Y) = \argmax_{1\le i \le k} d^{-1}\langle \Y,\X_i\rangle \,.
\end{equation}
For the analysis, we consider a sub-optimal decoder, based on thresholding the correlation in (\ref{eq:DecodeOpt-Sphere}). 

When $\Y=\X_i+\sigma\Z$, clearly, $\Expt[d^{-1}\langle\Y,\X_i\rangle]=1$, while for $j\ne i$,  $\Expt[d^{-1}\langle\Y,\X_j\rangle]=0$. Fix thresholds $0<\eta_1\le \eta_2$. Consider a decoding rule $\DecCORR_{\eta_1,\eta_2}:\RR^d\to[k]\cup\{\DecodeErrorSymb\}$ so that $\DecCORR_{\eta_1,\eta_2}(\Y)=i$ if and only if both of the following hold:
\begin{enumerate}
    \item $d^{-1}\langle\Y,\X_i\rangle\ge 1-\eta_1$.
    \item For all $j\ne i$, $d^{-1}\langle\Y,\X_j\rangle< 1-\eta_2$.
\end{enumerate}
Note that since $\eta_1\le \eta_2$, \emph{at most} one index $1\le i\le k$ can satisfy the above. 
If no such $i$ exists, we set $\DecCORR_{\eta_1,\eta_2}(\Y)=\DecodeErrorSymb$.

\paragraph{Analysis.} We proceed to bound the error of the decoder $\DecCORR_{\eta_1,\eta_2}$. 

By symmetry of the codebook generating process, the error probability (averaged over the ensemble) does not depend on the particular transmitted message (index).
For convenience, throughout this section, we always assume, without loss of generality, that the transmitted message is $\ell=i$ (and implicitly condition on this event). Thus, the value at the receiver end of the channel is $\Y=\X_i+\sigma\Z$.

\begin{lemma}\label{lem:Decode:0:A}
    Conditioned on any $\Xc\in (\sqrt{d}\SphereD)^k$,
    \begin{align*}
        \Pr\left(d^{-1}\langle\Y,\X_i\rangle < 1-\eta_1 \,|\,\Xc \right) \le e^{-\frac{\eta_1^2}{2\sigma^2}d} \,.
    \end{align*}
\end{lemma}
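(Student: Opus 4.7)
The plan is to reduce the claim to a one-dimensional Gaussian tail bound via a direct algebraic decomposition of the inner product. Writing $\Y=\X_i+\sigma\Z$, the quantity of interest decomposes as
\[
d^{-1}\langle\Y,\X_i\rangle = d^{-1}\|\X_i\|^2 + \sigma d^{-1}\langle \Z,\X_i\rangle = 1 + \sigma d^{-1}\langle\Z,\X_i\rangle,
\]
using the normalization $\|\X_i\|^2=d$. Hence the event $\{d^{-1}\langle\Y,\X_i\rangle<1-\eta_1\}$ is identical to $\{\langle\Z,\X_i\rangle<-\eta_1 d/\sigma\}$.

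Conditioned on $\Xc$ (so in particular on $\X_i$), the random variable $\langle\Z,\X_i\rangle$ is a linear functional of the standard Gaussian vector $\Z\sim\mathcal{N}(\0,\Id)$, hence is itself Gaussian with mean $0$ and variance $\|\X_i\|^2=d$. Therefore
\[
\Pr\bigl(d^{-1}\langle\Y,\X_i\rangle<1-\eta_1\,\big|\,\Xc\bigr) = \Pr\bigl(\mathcal{N}(0,1) < -\eta_1\sqrt{d}/\sigma\bigr).
\]

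I would conclude by invoking the standard Gaussian tail estimate $\Pr(\mathcal{N}(0,1)<-t)\le e^{-t^2/2}$ for $t>0$, applied with $t=\eta_1\sqrt{d}/\sigma\ge 0$, yielding the claimed bound $e^{-\eta_1^2 d/(2\sigma^2)}$. There is essentially no obstacle here: the statement is a direct one-line Chernoff/tail computation, and the key observation is simply that after centering by the signal contribution, what remains is a scalar Gaussian whose variance is controlled by the codeword norm constraint. Note that the result holds uniformly over \emph{any} $\Xc\in(\sqrt{d}\SphereD)^k$, since the conditional distribution of $\langle\Z,\X_i\rangle$ depends on $\Xc$ only through $\|\X_i\|^2$, which is fixed to $d$.
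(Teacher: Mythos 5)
Your proof is correct and follows essentially the same route as the paper: decompose $d^{-1}\langle\Y,\X_i\rangle = 1 + \sigma d^{-1}\langle\Z,\X_i\rangle$ using $\|\X_i\|^2=d$, observe that $\sigma d^{-1}\langle\Z,\X_i\rangle\sim\m{N}(0,\sigma^2/d)$, and apply the standard Gaussian tail bound. No gaps.
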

\begin{proof}
    $d^{-1}\langle\X_i+\sigma\Z,\X_i\rangle \le 1-\eta_1$ is equivalent to $d^{-1}\langle\sigma\Z,\X_i\rangle \le -\eta_1$. Since $\Z\sim\m{N}(\0,\Id)$, we have $d^{-1}\langle\sigma\Z,\X_i\rangle\sim \m{N}(0,\sigma^2/d)$, and the bound follows immediately.
\end{proof}

\begin{lemma}\label{lem:Decode:0:B}
    For fixed $\eta_1>0$, define the set
    \begin{equation*}
        \mathbb{X}_{i} = \left\{ \Xc\in(\sqrt{d}\SphereD)^k\;:\; \max_{j\in [n]\setminus\{i\}} d^{-1}\langle\X_i,\X_j\rangle \le \sqrt{\frac{2\log (k-1)}{d} + \frac{\eta_1^2}{\sigma^2}} \right\} \,.
    \end{equation*}
    For $\Xc\sim \Unif(\sqrt{d}\SphereD)$: 
$\Pr(\Xc\notin\mathbb{X}_i) \le e^{-\frac{\eta_1^2}{2\sigma^2}d}$.

\end{lemma}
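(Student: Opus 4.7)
The plan is to use a union bound together with the standard sphere-cap concentration estimate. First, normalize: let $U_j := \X_j/\sqrt{d}$, so each $U_j$ is independently uniform on $\SphereD$ and $d^{-1}\langle \X_i,\X_j\rangle = \langle U_i, U_j\rangle$. Thus it suffices to bound $\Pr\bigl(\max_{j\ne i}\langle U_i, U_j\rangle \ge \tau\bigr)$, where $\tau := \sqrt{\frac{2\log(k-1)}{d}+\frac{\eta_1^2}{\sigma^2}}$.

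By the rotational invariance of the uniform distribution on $\SphereD$, conditional on $U_i$ the inner product $\langle U_i,U_j\rangle$ is distributed as the first coordinate of a uniformly random vector on $\SphereD$. I would then invoke the classical spherical-cap tail bound $\Pr(V_1\ge t)\le e^{-dt^2/2}$ for $V\sim \Unif(\SphereD)$ and $t\in[0,1]$ (the bound is trivial for $t>1$, where $V_1\ge t$ has probability zero). This can be derived, for instance, by writing $V \stackrel{d}{=} G/\|G\|$ with $G\sim \mathcal{N}(\0,\Id)$ and combining standard tail estimates for $G_1$ and $\|G\|$, or by direct integration of the marginal density $\propto (1-x^2)^{(d-3)/2}$.

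Substituting $t=\tau$, and using $d\tau^2/2 = \log(k-1) + \frac{d\eta_1^2}{2\sigma^2}$, yields
\[
\Pr\bigl(\langle U_i,U_j\rangle\ge \tau \,\big|\, U_i\bigr) \le e^{-d\tau^2/2} = \frac{1}{k-1}\, e^{-d\eta_1^2/(2\sigma^2)},
\]
a bound uniform in $U_i$. A union bound over the $k-1$ indices $j\ne i$ then gives
\[
\Pr(\Xc\notin \mathbb{X}_i) \le (k-1)\cdot \frac{1}{k-1}\, e^{-d\eta_1^2/(2\sigma^2)} = e^{-d\eta_1^2/(2\sigma^2)},
\]
as claimed.

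There is essentially no obstacle: the threshold $\tau$ has been engineered precisely so that the union-bound factor of $k-1$ exactly cancels the $\log(k-1)$ appearing in the exponent of the sphere-cap bound, leaving the advertised decay rate $e^{-\eta_1^2 d/(2\sigma^2)}$. The only (trivial) case that must be noted separately is $\tau>1$, where the event is empty and the conclusion is automatic.
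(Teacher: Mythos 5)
Your proof is correct and follows essentially the same route as the paper's: the spherical-cap tail bound $\Pr(d^{-1}\langle\X_i,\X_j\rangle\ge t)\le e^{-dt^2/2}$ (the paper's Lemma~\ref{lem:Unif-Sphere}) followed by a union bound over the $k-1$ indices, with the threshold chosen so the $\log(k-1)$ term cancels.
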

\begin{proof}
    By the standard tail bound Lemma~\ref{lem:Unif-Sphere}, for $t\ge 0$, $\Pr( d^{-1}\langle \X_i,\X_j\rangle \ge t) \le e^{-dt^2/2}$. Taking a union bound over $(k-1)$ choices for $j\ne i$,
    $\Pr( \max_{j\ne i} d^{-1}\langle \X_i,\X_j\rangle \ge t) \le e^{-dt^2/2+\log (k-1)}$. Now set $t=\sqrt{\frac{2\log(k-1)}{d} + \frac{\eta_1^2}{\sigma^2}}$.
\end{proof}

\begin{lemma}\label{lem:Decode:0:C}
    Suppose that $1-\eta_2\ge \sqrt{\frac{2\log (k-1)}{d} + \frac{\eta_1^2}{\sigma^2}} + \sqrt{\frac{2\sigma^2\log(k-1)}{d}}$. Then
    \begin{align}\label{eq:lem:Decode:0:C}
        \Pr\left(\max_{j\ne i} d^{-1}\langle \Y,\X_j\rangle \ge1-\eta_2\right) 
        \le 
        e^{-\frac{d}{2}\left(1-\eta_2-\sqrt{\frac{2\log(k-1)}{d} + \frac{\eta_1^2}{\sigma^2}} - \sqrt{\frac{2\sigma^2\log(k-1)}{d}}\right)^2 } + e^{-\frac{\eta_1^2}{2\sigma^2}d} \,.
    \end{align}
\end{lemma}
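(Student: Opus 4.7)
The plan is to decouple the randomness in the codebook from that in the Gaussian noise and to combine the two contributions via a union bound. Writing
\[
d^{-1}\langle \Y, \X_j\rangle = d^{-1}\langle \X_i,\X_j\rangle + W_j, \qquad W_j := \sigma d^{-1}\langle \Z,\X_j\rangle,
\]
I note that conditional on $\Xc$ each $W_j$ is $\m{N}(0,\sigma^2/d)$, since $\|\X_j\|=\sqrt{d}$.

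First I would restrict attention to the ``typical code'' event $\{\Xc \in \mathbb{X}_i\}$ from Lemma~\ref{lem:Decode:0:B}, on which the interference term $d^{-1}\langle \X_i, \X_j\rangle$ is uniformly bounded (over $j\ne i$) by $\sqrt{2\log(k-1)/d + \eta_1^2/\sigma^2}$. Hence, if the maximum inner product with $\Y$ exceeds $1-\eta_2$, then either the code is atypical, or the noise contribution satisfies
\[
\max_{j\ne i} W_j \; \ge \; s := 1-\eta_2 - \sqrt{\frac{2\log(k-1)}{d}+ \frac{\eta_1^2}{\sigma^2}}.
\]
The hypothesis of the lemma guarantees $s \ge \sqrt{2\sigma^2\log(k-1)/d}>0$, which is precisely the calibration needed to absorb the $\log(k-1)$ factor arising from the forthcoming union bound.

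Next I would invoke Lemma~\ref{lem:Decode:0:B} to bound $\Pr(\Xc\notin\mathbb{X}_i) \le e^{-\eta_1^2 d/(2\sigma^2)}$, which supplies the second summand in the claim. For the Gaussian contribution, the standard tail bound $\Pr(\m{N}(0,\sigma^2/d) \ge s) \le e^{-ds^2/(2\sigma^2)}$ together with a union bound over the $k-1$ indices $j \ne i$ yields
\[
\Pr\left(\max_{j \ne i} W_j \ge s\right) \; \le \; (k-1)\,e^{-ds^2/(2\sigma^2)}.
\]
Parametrizing $s = \sqrt{2\sigma^2\log(k-1)/d} + v$ with $v \ge 0$ and using the elementary inequality $(a+b)^2 \ge a^2 + b^2$ for $a,b\ge 0$, the prefactor $(k-1)$ is exactly cancelled by the $e^{-\log(k-1)}$ coming from the calibration piece, leaving a quadratic ``gap'' $v$ in the exponent that coincides with the quantity appearing inside the square in the first summand of the stated bound.

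The main issue here is bookkeeping: carefully decomposing $s$ into a calibration piece $\sqrt{2\sigma^2\log(k-1)/d}$ responsible for neutralizing the $\log(k-1)$ from the union bound, and a residual piece $v$ that governs the final exponential concentration rate. There is no deep obstacle beyond tracking these constants and verifying that the hypothesis is exactly what one needs for $v \ge 0$.
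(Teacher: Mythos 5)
Your proposal is correct and follows essentially the same route as the paper: condition on the typical-codebook event of Lemma~\ref{lem:Decode:0:B} to bound the interference term $d^{-1}\langle\X_i,\X_j\rangle$ uniformly, reduce to a tail bound for $\max_{j\ne i}\sigma d^{-1}\langle\Z,\X_j\rangle$, and add $\Pr(\Xc\notin\mathbb{X}_i)$ as the second summand. The only difference is cosmetic: the paper controls the maximum via the expectation bound of Lemma~\ref{lem:MaxGauss-Expt} followed by the Borell--TIS inequality, whereas you use a direct union bound over the $k-1$ Gaussian tails and absorb the factor $(k-1)$ with $(a+v)^2\ge a^2+v^2$; for Gaussians of variance $\sigma^2/d$ these two devices produce the identical exponent, and the union bound is valid despite the dependence among the $W_j$'s. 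One caveat you should be aware of: your argument yields the first summand $e^{-\frac{d}{2\sigma^2}v^2}$ with $v$ the quantity inside the square, not $e^{-\frac{d}{2}v^2}$ as literally stated; the paper's own Borell--TIS computation gives exactly the same $\frac{d}{2\sigma^2}$ factor, so the missing $\sigma^2$ in the displayed exponent appears to be a slip in the lemma statement rather than a gap in your proof, and the weaker exponent still suffices downstream since $v$ is bounded below by a positive constant and $\sigma^2=o(d)$.
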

\begin{proof}
    Fix $\Xc\in\mathbb{X}_i$, where the set $\mathbb{X}_i$ is from Lemma~\ref{lem:Decode:0:B}.
    Writing $\Y=\X_i+\sigma\Z$, we note that
    \begin{align*}
        \Pr\left(\max_{j\ne i} d^{-1}\langle \Y,\X_j\rangle \ge1-\eta_2 \,\big|\, \Xc\right) 
        \le 
        \Pr\left(\max_{j\ne i} d^{-1}\sigma \langle \Z ,\X_j\rangle \ge1-\eta_2-\sqrt{\frac{2\log (k-1)}{d} + \frac{\eta_1^2}{\sigma^2}} \,\big|\, \Xc\right) \,.
    \end{align*}
    Now, each $ d^{-1}\sigma \langle \Z ,\X_j\rangle  $ is Gaussian with mean $0$ and variance $\sigma^2/d$. By a standard bound on the maximum of Gaussian random variables, Lemma~\ref{lem:MaxGauss-Expt}, $\Expt[\max_{j\ne i} d^{-1}\sigma \langle \Z ,\X_j\rangle] \le \sqrt{\frac{2\sigma^2\log(k-1)}{d}}$. By the Borell-TIS inequality, Lemma~\ref{lem:Borell-TIS}, we obtain the first term of (\ref{eq:lem:Decode:0:C}). The second term is just 
    the bound on $\Pr(\Xc\notin \mathbb{X}_i)$ from  Lemma~\ref{lem:Decode:0:B}.
\end{proof}

\begin{proof}
    (Of Proposition~\ref{prop:DecodingSphericalCodes}, case $\RateLim=0$.)

    Combining Lemmas~\ref{lem:Decode:0:A} and \ref{lem:Decode:0:C}, for every $\eta_1,\eta_2$ satisfying
    \begin{equation}
        \label{eq:Decode:0:Condition}
        0 < \eta_1 \le \eta_2 < 1 - \sqrt{\frac{2\log (k-1)}{d} + \frac{\eta_1^2}{\sigma^2}} - \sqrt{\frac{2\sigma^2\log(k-1)}{d}}   
    \end{equation}
    the decoder $\DecCORR_{\eta_1,\eta_2}$ attains average error
    \begin{equation}\label{eq:Decode:0:ErrorProbability}
        \Pr\left(i \ne \DecCORR_{\eta_1,\eta_2}(\X_i+\sigma\Z)\right) \le 
        e^{-\frac{d}{2}\left(1-\eta_2-\sqrt{\frac{2\log(k-1)}{d} + \frac{\eta_1^2}{\sigma^2}} - \sqrt{\frac{2\sigma^2\log(k-1)}{d}}\right)^2 } + 2e^{-\frac{\eta_1^2}{2\sigma^2}d} \,.
    \end{equation}
    To prove the proposition, it clearly suffices to show that
    when $\Rate=\Capacity(\RegimeParam\sigma^2)$, $\RegimeParam>1$, then
    $\sqrt{\frac{2\log (k-1)}{d} + \frac{\eta_1^2}{\sigma^2}} + \sqrt{\frac{2\sigma^2\log(k-1)}{d}}    $ is at most a constant, which is strictly smaller than $1$. Indeed, since $\sigma^2=\omega(1)$, the first term is $o(1)$. As for the second term, 
    \begin{align*}
        \sqrt{\frac{2\sigma^2\log(k-1)}{d}} \le \sqrt{2\sigma^2\Rate}=\sqrt{2\sigma^2\Capacity(\RegimeParam\sigma^2)} \le \sqrt{1/\RegimeParam}<1\,,
    \end{align*}
    where we used $\Capacity(s)=\frac12\log(1+1/s)\le 1/(2s)$.

\end{proof}

\subsection{Positive Rate ($\RateLim>0$)}
\label{sec:Appendix:Decoding:Pos}

\paragraph{Remark.} The analysis of the previous section ($\RateLim=0$) unfortunately fails in the positive rate regime, where $\sigma^2$ is constant. To have any hope of finding $\eta_1\le \eta_2$ that satisfy condition (\ref{eq:Decode:0:Condition}), it is necessary that (taking $\eta_1,\eta_2\to 0$)
\[
    \sqrt{\frac{2\log (k-1)}{d}} + \sqrt{\frac{2\sigma^2\log(k-1)}{d}} \le 1 \,.
\]
Since $\log(k-1)/d=\Rate-O(d^{-1})$, this constrains the rate as ${\Rate\le \frac{1}{\sqrt{2}(1+\sigma)}+O(d^{-1})}$. For small $\sigma$, this bound is $\approx 1/\sqrt{2}$, while $\Capacity(\RegimeParam\sigma^2)\approx \log(1/\RegimeParam\sigma^2)$. Consequently, for $\sigma^2=O(1)$ this condition fails to hold, and the analysis from Section~\ref{sec:Appendix:Decoding:0} is not sufficient for proving the existence of capacity-approaching codes. We note that this is a well-known limitation of the \emph{analysis}; specifically, Lemma~\ref{lem:Decode:0:C} is too crude. It estimates the maximum over ``noise terms'', $\max_{j\ne i} d^{-1}\sigma\langle\Z,\X_j\rangle$ \emph{as if} they were all independent. In the zero rate regime, different codewords are essentially orthogonal: $d^{-1}\langle \X_i,\X_j\rangle \lesssim \sqrt{\frac{\log k}{d}}=o(1)$; consequently, by standard results (e.g. \cite[2.2.5]{adler2009random}), the maximum is indeed very close to the maximum of i.i.d. Gaussians. When $k$ is exponential, however, this is no longer the case, and the correlations between these noise terms can no longer be neglected once $\RateLim$ is sufficiently large. Thus, different techniques are necessary to carry out the analysis (cf. the classical book \cite{gallager1968information}).

\paragraph{The decoder.} 
To overcome the obstruction mentioned above, we consider a different, sub-optimal, decoder, which is similar to Shannon's information density threshold decoder \cite{shannon1957certain} for a Gaussian i.i.d. codebook, and to that used in~\cite{erezzamir04}, see also~\cite{UriLN} and~\cite{polyanskiy2014lecture}. Let $\alpha=\frac{1}{1+\sigma^2}$ and $\tau=\sigma^2\alpha=1-\alpha$. For parameters $\tau \le \tau_1 \le \tau_2$, consider a decoder $\DecMMSE_{\tau_1,\tau_2} : \RR^d\to[k]\cup\{\DecodeErrorSymb\}$ so that $\DecMMSE_{\tau_1,\tau_2}(\Y)=i$ if and only if both of the following hold:
\begin{enumerate}
    \item $d^{-1}\|\alpha\Y-\X_i\|^2 \le \tau_1$.
    \item For all $j\ne i$, $d^{-1}\|\alpha\Y-\X_i\|^2 > \tau_2$. 
\end{enumerate}
If no such $1\le i \le k$ exists, then $\DecMMSE_{\tau_1,\tau_2}(\Y)=\DecodeErrorSymb$.

As was before, in the zero rate case, we analyze the error probability conditioned on the transmitted message being some fixed $\RandLabel=i$; by symmetry, the (ensemble-averaged) error probability does not depend on $\RandLabel$. Thus, below, $\Y=\X_i+\sigma\Z$.

To justify the name $\DecMMSE$ recall that the best linear estimator of $\X_i$ from $\Y=\X_i+\sigma\Z$, in the sense of smallest MSE (LMMSE), is $\alpha\Y$.\footnote{When $\X_i\sim \m{N}(0,1)^{\otimes d}$ is i.i.d. Gaussian, the LMMSE is actually the MSE-optimal estimator (MMSE). Since we use a spherical prior for $\X_i\sim \Unif(\sqrt{d}\SphereD)$, this is no longer holds exactly, though the discrepancy is negligible when one operates in the regime $\sigma^2=\Omega(1)$.} Note also that $d^{-1}\Expt\|\alpha\Y-\X_i\|^2=\tau$, whereas for $j\ne i$, $d^{-1}\Expt\|\alpha\Y-\X_j\|^2=\alpha^2(1+\sigma^2)+1=\alpha+1>\tau$. 

\paragraph{Analysis.} 
We proceed to bound the error of the decoder $\DecMMSE_{\tau_1,\tau_2}$.

\begin{lemma}
    \label{lem:Decode:Pos:1}
    For any $\Xc\in(\sqrt{d}\SphereD)^k$,
    \begin{align*}
        \Pr\left(d^{-1}\|\alpha\Y-\X_i\|^2 > \tau_1 \,\big|\,\Xc\right) \le e^{-\frac12{(1+\sigma^2)(\sqrt{\tau_1/\tau}-1)^2}d} \,.
    \end{align*}
\end{lemma}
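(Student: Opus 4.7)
}

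The plan is to write $d^{-1}\|\alpha\Y - \X_i\|^2$ as the squared norm of a Gaussian-plus-deterministic vector, and bound its deviation above its mean using Gaussian Lipschitz concentration (the Borell-TIS inequality, Lemma~\ref{lem:Borell-TIS}). Concretely, since $\Y = \X_i + \sigma\Z$ and $\alpha = 1-\tau$, one has the identity
\begin{equation*}
    \alpha\Y - \X_i = \alpha\sigma\Z - \tau\X_i \,.
\end{equation*}
Define $F(\z) = d^{-1/2}\|\alpha\sigma\z - \tau\X_i\|$ for $\z\in\RR^d$. Since $\X_i$ is fixed by the conditioning on $\Xc$, $F$ is a $(\alpha\sigma/\sqrt{d})$-Lipschitz function of $\z$ (by the reverse triangle inequality on the Euclidean norm). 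A direct second-moment computation using $\|\X_i\|^2 = d$ and $\Expt\|\Z\|^2 = d$ yields
\begin{equation*}
    \Expt F(\Z)^2 = \alpha^2\sigma^2 + \tau^2 = \tau\,,
\end{equation*}
where the last equality uses the identities $\tau = \alpha\sigma^2$ and $\alpha+\tau = 1$. By Jensen's inequality, $\Expt F(\Z) \le \sqrt{\tau}$.

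Next, I would invoke the Borell-TIS inequality applied to the $(\alpha\sigma/\sqrt{d})$-Lipschitz function $F$ of the standard Gaussian vector $\Z$: for every $t > 0$,
\begin{equation*}
    \Pr\!\left( F(\Z) \ge \Expt F(\Z) + t \right) \le \exp\!\left(-\frac{d\,t^2}{2\alpha^2\sigma^2}\right) \,.
\end{equation*}
Since $\tau_1 \ge \tau$ by assumption, choosing $t = \sqrt{\tau_1} - \sqrt{\tau}\ge 0$ gives
\begin{equation*}
    \Pr\!\left(d^{-1}\|\alpha\Y - \X_i\|^2 > \tau_1 \,\Big|\, \Xc \right) = \Pr\!\left(F(\Z) > \sqrt{\tau_1}\right) \le \exp\!\left(-\frac{d\,(\sqrt{\tau_1}-\sqrt{\tau})^2}{2\alpha^2\sigma^2}\right) \,.
\end{equation*}

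Finally, the exponent must be massaged into the claimed form. Using $\tau = \alpha\sigma^2$, I would simplify
\begin{equation*}
    \frac{(\sqrt{\tau_1}-\sqrt{\tau})^2}{\alpha^2\sigma^2} = \frac{\tau}{\alpha^2\sigma^2}\left(\sqrt{\tau_1/\tau} - 1\right)^2 = \frac{1}{\alpha}\left(\sqrt{\tau_1/\tau} - 1\right)^2 = (1+\sigma^2)\left(\sqrt{\tau_1/\tau} - 1\right)^2 \,,
\end{equation*}
which yields exactly the stated bound. There is no serious obstacle here; the only care needed is the observation that conditioning on $\Xc$ reduces $F$ to a deterministic Lipschitz function of the Gaussian noise, and the algebraic simplification of the exponent using the relations $\alpha = 1/(1+\sigma^2)$ and $\tau = \alpha\sigma^2$.
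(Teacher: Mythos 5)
Your proof is correct and follows essentially the same route as the paper: both view $d^{-1/2}\|\alpha\Y-\X_i\|$ as a $(d^{-1/2}\alpha\sigma)$-Lipschitz function of $\Z$ with mean at most $\sqrt{\tau}$, apply Gaussian Lipschitz concentration at $t=\sqrt{\tau_1}-\sqrt{\tau}$, and simplify the exponent via $\tau=\alpha\sigma^2$, $\alpha=1/(1+\sigma^2)$. The only quibble is that the concentration step should cite Lemma~\ref{lem:Gaussian-Lip} (Lipschitz functions of Gaussians) rather than Lemma~\ref{lem:Borell-TIS}, which the paper states only for maxima of Gaussian vectors.
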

\begin{proof}
    The mapping
    $\Z\mapsto F(\Z)=d^{-1/2}\|\alpha(\X_i+\sigma\Z)-\X_i\|$ is $d^{-1/2}\alpha\sigma$-Lipschitz, with expectation
    \[
        \Expt F(\Z) \le \sqrt{ d^{-1}\Expt\|\alpha(\X_i+\sigma\Z)-\X_i\|^2 } \le \sqrt{\tau}\,.
    \]
    Applying the Gaussian Lipschitz concentration inequality, Lemma~\ref{lem:Gaussian-Lip},
    \begin{align*}
        \Pr\left( (F(\Z))^2 >\tau_1 \right)
        &=
        \Pr\left( F(\Z)>\sqrt{\tau_1} \right)\\
        &\le 
        \Pr\left(F(\Z)-\Expt F(\Z)> \sqrt{\tau_1}-\sqrt{\tau} \,\big|\,\Xc\right) \\
        &\le e^{-\frac12 \frac{(\sqrt{\tau_1}-\sqrt{\tau})^2}{(d^{-1/2}\alpha\sigma)^2}} \\
        &= e^{- \frac{\tau}{2\alpha^2\sigma^2}(\sqrt{\tau_1/\tau}-1)^2 d} \,.        
    \end{align*}
    Now plug $\tau=\alpha\sigma^2$, $\alpha=1/(1+\sigma^2)$ to get the claimed bound.
\end{proof}

\begin{lemma}\label{lem:Decode:Pos:2}
    For $j\ne i$, $\Xc\sim (\Unif(\sqrt{d}\SphereD))^k$,
    \begin{align*}
        \Pr\left(d^{-1}\|\alpha\Y_i-\X_j\|^2 \le \tau_2\right) 
        \le 
        \left( 1+\frac{1}{\sigma^2} \right)^{1/2} e^{-\left(\Capacity(\sigma^2) - \frac12 \log\frac{\tau_2}{\tau} \right)d} \,.
    \end{align*}
\end{lemma}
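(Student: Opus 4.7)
The plan is to reduce the probability to a spherical-cap measure and then integrate over the distribution of $\|\Y_i\|$. Since $\|\X_j\|^2 = d$, I expand
\[
\|\alpha\Y_i - \X_j\|^2 = \alpha^2 \|\Y_i\|^2 + d - 2\alpha\langle \Y_i, \X_j\rangle,
\]
so that the event $\{\|\alpha\Y_i-\X_j\|^2 \le \tau_2 d\}$ is equivalent to
\[
\Bigl\langle \frac{\Y_i}{\|\Y_i\|},\, \frac{\X_j}{\sqrt{d}}\Bigr\rangle \;\ge\; c(\|\Y_i\|), \qquad c(r) := \frac{\alpha r}{2\sqrt{d}} + \frac{(1-\tau_2)\sqrt{d}}{2\alpha r}.
\]
Since $\X_j\sim \Unif(\sqrt{d}\SphereD)$ is independent of $\Y_i$, rotational invariance implies that, conditional on $\|\Y_i\|$, the left-hand side has the same distribution as a single coordinate $W$ of $\Unif(\SphereD)$.

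Next, I invoke the sharp spherical-cap bound. Starting from the density $\tfrac{\Gamma(d/2)}{\sqrt\pi\,\Gamma((d-1)/2)}(1-w^2)^{(d-3)/2}$ of $W$ and the elementary estimate $\int_c^1 (1-w^2)^{(d-3)/2}dw \le (c(d-1))^{-1}(1-c^2)^{(d-1)/2}$ (obtained by inserting $w/c\ge 1$), together with Stirling $\Gamma(d/2)/\Gamma((d-1)/2)\sim\sqrt{(d-1)/2}$, one has
\[
\Pr(W\ge c) \;\le\; \frac{(1-c^2)^{(d-1)/2}}{c\sqrt{2\pi(d-1)}}\,(1+o(1)),\qquad c\in(0,1].
\]
This reduces matters to estimating the expectation $\Expt\bigl[c(\|\Y_i\|)^{-1}(1-c(\|\Y_i\|)^2)^{(d-1)/2}\bigr]$.

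Finally, I evaluate the expectation by a saddle-point (Laplace) analysis. Using the decomposition $\|\Y_i\|^2 \stackrel{d}{=} (\sqrt d + \sigma A)^2 + \sigma^2 B$ with $A\sim\m{N}(0,1)$ and $B\sim\chi^2_{d-1}$ independent (valid because $\X_i$ is rotationally symmetric with $\|\X_i\|=\sqrt d$ and independent of $\Z$), I change variables to $u := \alpha^2\|\Y_i\|^2/d$. Its log-density behaves, up to lower-order terms, like $-d\cdot I(u)$ for a rate function $I$ readable from the chi-squared-type cumulant generating function above. A direct algebraic manipulation gives
\[
1-c(u)^2 \;=\; \frac{1+\tau_2}{2} - \frac{u}{4} - \frac{(1-\tau_2)^2}{4u},
\]
so the task reduces to minimizing $I(u) - \tfrac{1}{2}\log(1-c(u)^2)$ over $u>0$. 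Solving the stationarity equation and evaluating at the saddle $u^\star$ yields the exponential rate $\Capacity(\sigma^2)-\tfrac{1}{2}\log(\tau_2/\tau)$; the dimensionless coefficient $(1+1/\sigma^2)^{1/2}$ then comes out by combining the $(c\sqrt{2\pi(d-1)})^{-1}$ prefactor from the cap bound with the Gaussian prefactor of the Laplace method around $u^\star$.

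The main obstacle is the last step. The unconstrained maximizer of $(1-c(u)^2)$ is $u=1-\tau_2$, whereas the mode of $u$ is at $\alpha=1-\tau$, so unless $\tau_2=\tau$ the saddle is a genuinely tilted point between these two values. Verifying by explicit algebra that the stationarity condition delivers exactly the rate $\Capacity(\sigma^2)-\tfrac{1}{2}\log(\tau_2/\tau)$, and that the several $O(1/\sqrt d)$ polynomial prefactors combine precisely to $(1+1/\sigma^2)^{1/2}$, is the technically delicate part of the argument.
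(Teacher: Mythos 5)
The reduction to a spherical-cap probability conditional on $\|\Y_i\|$ is sound, but your proposal stops exactly where the proof has to begin: the Laplace/saddle-point evaluation of $\Expt\bigl[(1-c(u)^2)^{(d-1)/2}\bigr]$ is the entire content of the bound, and you do not carry it out --- you only assert its outcome. That assertion is doubtful on two counts. First, whenever $\tau_2>\tau$ the tilted saddle gives $\sup_u\{\frac12\log(1-c(u)^2)-I(u)\}<\frac12\log\tau_2$ strictly (equality would require $I(u)=0$ and $1-c(u)^2=\tau_2$ simultaneously, i.e.\ $u=\alpha$ and $\tau_2=\tau$), so the stationarity equation does not ``deliver exactly'' the rate $\Capacity(\sigma^2)-\frac12\log(\tau_2/\tau)$; it delivers a strictly larger one. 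That is harmless for an upper bound, but it indicates the computation was never done. Second, the polynomial prefactors of the Laplace method will not combine to the exact constant $(1+1/\sigma^2)^{1/2}$, and the method inherently carries $(1+o(1))$ corrections, so at best you would obtain the stated inequality asymptotically rather than as the clean non-asymptotic bound claimed; you would also need to control the range of $\|\Y_i\|$ on which the sharp cap estimate (with its $1/c$ factor) is valid.

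All of this machinery is unnecessary. The paper's proof conditions on $\Y_i$ and observes that the event is $\X_j\in\m{B}(\alpha\Y_i,\sqrt{\tau_2 d})$, whose conditional probability is $\Surf(\sqrt{d}\SphereD\cap\m{B}(\alpha\Y_i,\sqrt{\tau_2 d}))/\Surf(\sqrt{d}\SphereD)$. Since $\sqrt{d}\SphereD\cap L\subseteq\partial(\m{B}(\0,\sqrt{d})\cap L)$ and the surface area of convex bodies is monotone under inclusion, this is at most $\Surf(\partial\m{B}(\alpha\Y_i,\sqrt{\tau_2 d}))/\Surf(\sqrt{d}\SphereD)=\tau_2^{(d-1)/2}$, \emph{uniformly} in $\Y_i$ --- no analysis of the distribution of $\|\Y_i\|$ is needed. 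Writing $\tau_2^{(d-1)/2}=\tau_2^{-1/2}e^{-(\frac12\log(1/\tau)-\frac12\log(\tau_2/\tau))d}$ and using $\tau_2\ge\tau$ and $1/\tau=1+1/\sigma^2$ gives the lemma exactly. In your own notation the same point is visible: the maximum of $1-c(u)^2$ over $u>0$ is precisely $\tau_2$ (attained at $u=1-\tau_2$), so the uniform bound you are trying to improve upon by averaging over $\|\Y_i\|$ already suffices.
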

\begin{proof}
    For a compact convex body $K\subset \RR^d$, we denote its boundary by $\partial K$ and surface area by $\Surf(\partial K)$. In addition, we denote the Euclidean ball of radius $r$, centered around $\bm{a}\in\RR^d$, by $\m{B}(\bm{a},r)$.

    The event above, whose probability we wish to bound, is equivalent to the event ${\X_j\in\m{B}(\alpha\Y_i,\sqrt{\tau_2 d})}$. Since $\X_j\sim \Unif(\sqrt{d}\SphereD)$, this probability (conditioned on $\Y$), is given by the surface area ratios ${\Surf(\sqrt{d}\SphereD\cap \m{B}(\alpha\Y,\sqrt{\tau_2 d}) )}/\Surf(\sqrt{d}\SphereD)$. Since $\partial K\cap L\subseteq \partial (K\cap L)$, and the surface area of convex sets is monotonic with respect to containment (e.g., \cite[Theorem B.1.14]{artstein2015asymptotic}), $\Surf(\sqrt{d}\SphereD\cap \m{B}(\alpha\Y,\sqrt{\tau_2 d}) )\le \Surf(\partial \m{B}(\alpha\Y,\sqrt{\tau_2 d}))$. Consequently, the probability is bounded by 
    \[
        \frac{\Surf(\partial \m{B}(\alpha\Y_i,\sqrt{\tau_2 d}))}{\Surf(\sqrt{d}\SphereD)} = \tau_2^{\frac{d-1}{2}} = \tau_2^{-1/2} e^{-\frac12\left(\log\frac1\tau - \log\frac{\tau_2}{\tau} \right)d} \,.
    \]
    Lastly, use $\tau_2\ge \tau$ and $1/\tau=1+1/\sigma^2$.
\end{proof}

\begin{proof}
    (Of Proposition~\ref{prop:DecodingSphericalCodes}, case $\RateLim>0$.)

    Combining Lemmas~\ref{lem:Decode:Pos:1} and \ref{lem:Decode:Pos:2}, along with a union bound over all $j\ne i$, the decoding error, averaged over the ensemble $\Xc\sim\Unif(\sqrt{d}\SphereD)^{\otimes k}$, is bounded as 
    \begin{equation}\label{eq:Decode:Pos:ErrorProbability}
        \Pr\left(i\ne \DecMMSE_{\tau_1,\tau_2}(\Y)\right) \le 
        e^{- \frac12(1+\sigma^2)(\sqrt{\tau_1/\tau}-1)^2 d} 
        + (k-1)\left( 1+\frac{1}{\sigma^2} \right)^{1/2}e^{-\left(\Capacity(\sigma^2) - \frac12\log\frac{\tau_2}{\tau} \right)d} \,.
    \end{equation}

    Choose $\tau_1=c\tau$, $\tau_2=c^2\tau$ where $c>1$ is a sufficiently small constant. In that case, the first term of (\ref{eq:Decode:Pos:ErrorProbability}) clearly decays exponentially in $d$. 

    As for the second term, set $k=e^{d\Capacity(\RegimeParam\sigma^2)}$, so the term is bounded like $(1+\frac{1}{\sigma^2})^{1/2}e^{-d\left(\Capacity(\sigma^2)-\Capacity(\RegimeParam\sigma^2)-\log(c)\right)}$. Since $\Capacity(\sigma^2)-\Capacity(\RegimeParam\sigma^2)$ is a positive constant, if $c>1$ is small enough then the term decays exponentially fast in $d$.
\end{proof}

\newpage

\section{Proofs from Section~\ref{sec:LowerBound}}

\subsection{Proof of Lemma~\ref{lem:LowerBound:RDF-lb}}
\label{sec:proof-lem:LowerBound:RDF-lb}

We reduce the calculation into a ``standard'' rate-distortion function (RDF) under MSE distortion. 

For technical reasons, it will be more convenient to work with a Gaussian prior on the source signal, rather than the uniform distribution over the sphere. The reason is that the latter distribution is not absolutely continuous with respect to Lebesgue measure (it is supported on a manifold of positive co-dimension, namely, $\SphereD$), so that its differential entropy (in the usual sense) is not well-defined. 

Introduce Gaussian random variables, $\Gc=(\G_1,\ldots,\G_k)\sim \m{N}(0,1)^{\otimes dk}$, so that ${\X_i=\sqrt{d}\G_i/\|\G_i\|}$. We have the Markov chain,
\begin{align}\label{eq:proof-lem:LowerBound:RDF-lb:MarkovChainGaussian}
	\Gc\longrightarrow\Xc\longrightarrow\Xch\,,
\end{align}
so by the DPI,
\begin{equation}\label{eq:proof-lem:LowerBound:RDF-lb:0}
	\MI(\Xc;\Xch)\ge \MI(\Gc;\Xch)\,.
\end{equation}
The next Lemma shows that if $\Xch$ estimates $\Xc$ with small distortion, then it also estimates $\Gc$ with small comparable distortion:

\begin{lemma}\label{lem:LowerBound-Gaussian-Distortion}
	Suppose the the Markov chain (\ref{eq:proof-lem:LowerBound:RDF-lb:MarkovChainGaussian}) holds. Then for universal constant $c_0$,
	\[
	\Expt\Loss(\Gc,\Xch) \le \left(\sqrt{\Expt\Loss(\Xc,\Xch)} + c_0d^{-1/2}\right)^2  \,.	
	\]
\end{lemma}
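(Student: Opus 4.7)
My plan is to reduce the claim to a simple triangle-inequality comparison between $\G_i$ and $\X_i$, exploiting the fact that $\X_i = \sqrt{d}\,\G_i/\|\G_i\|$ is just the radial projection of $\G_i$ onto $\sqrt{d}\SphereD$, so $\|\G_i - \X_i\| = \bigl|\|\G_i\|-\sqrt{d}\bigr|$. Since $\|\G_i\|^2\sim \chi^2_d$ concentrates sharply around $d$, a standard computation (e.g.\ Gaussian Lipschitz concentration applied to the $1$-Lipschitz map $\G_i\mapsto \|\G_i\|$ yields $\Var(\|\G_i\|)\le 1$, and $\Expt\|\G_i\|=\sqrt{d}-O(1/\sqrt d)$) gives
\[
\Expt\bigl[\|\G_i-\X_i\|^2\bigr]=\Expt\bigl[(\|\G_i\|-\sqrt d)^2\bigr]\le c_0^2
\]
for some universal constant $c_0$. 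In particular $d^{-1}\Expt\|\G_i-\X_i\|^2\le c_0^2/d$.

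Next I would apply the triangle inequality pointwise in the list-distance: for every $i\in[k]$,
\[
\dist(\G_i,\Xch)=\min_j\|\G_i-\Xh_j\|\le \|\G_i-\X_i\| + \min_j\|\X_i-\Xh_j\| = \|\G_i-\X_i\| + \dist(\X_i,\Xch).
\]
Writing $a_i = d^{-1/2}\|\G_i-\X_i\|$ and $b_i = d^{-1/2}\dist(\X_i,\Xch)$, squaring and averaging over $i\in[k]$ gives
\[
\LossAvg(\Gc,\Xch)\le \frac{1}{k}\sum_{i=1}^{k}(a_i+b_i)^2 \le \Bigl(\sqrt{\tfrac{1}{k}\sum a_i^2}+\sqrt{\tfrac{1}{k}\sum b_i^2}\Bigr)^{\!2}
\]
by Cauchy--Schwarz on the cross term. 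The two averages inside the square root are precisely $\frac{1}{kd}\sum_i\|\G_i-\X_i\|^2$ and $\LossAvg(\Xc,\Xch)$.

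Finally, I take expectations and apply Cauchy--Schwarz once more to the cross term, using $\Expt[\sqrt{AB}]\le \sqrt{\Expt A}\sqrt{\Expt B}$, to obtain
\[
\Expt\LossAvg(\Gc,\Xch)\le \Bigl(\sqrt{\Expt\,\tfrac{1}{kd}\textstyle\sum_i\|\G_i-\X_i\|^2}+\sqrt{\Expt\LossAvg(\Xc,\Xch)}\Bigr)^{\!2}.
\]
Plugging in the concentration bound from the first step gives the first term $\le c_0 d^{-1/2}$, yielding the stated inequality.

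I don't anticipate any real obstacles here; the only ``content'' is (i) the $\chi^2_d$-tail moment bound that gives $\Expt\bigl[(\|\G_i\|-\sqrt d)^2\bigr]=O(1)$ (independent of the dimension), and (ii) two correct applications of Cauchy--Schwarz—once before taking expectations to combine the cross term into a perfect square, and once afterwards to separate $\Expt\sqrt{AB}$ into $\sqrt{\Expt A\cdot\Expt B}$. The Markov-chain hypothesis (\ref{eq:proof-lem:LowerBound:RDF-lb:MarkovChainGaussian}) is not actually used in the distortion bound itself—it is only invoked earlier, via the DPI, to reduce $\MI(\Xc;\Xch)\ge \MI(\Gc;\Xch)$; the present lemma is purely a deterministic/moment statement about how the radial projection perturbs squared list-distances.
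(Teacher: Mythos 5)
Your proposal is correct and follows essentially the same route as the paper: decompose $\G_i-\Xh_{j_i}=(\G_i-\X_i)+(\X_i-\Xh_{j_i})$, bound $\Expt\|\G_i-\X_i\|^2=\Expt(\|\G_i\|-\sqrt d)^2=O(1)$, and combine via the triangle inequality in $L^2$. The only cosmetic difference is that the paper applies Minkowski's inequality once to the semi-norm $\Dc\mapsto(\frac{1}{dk}\sum_i\Expt\|\D_i\|^2)^{1/2}$ (expectation inside), whereas you split this into a finite-sum Cauchy--Schwarz followed by $\Expt\sqrt{AB}\le\sqrt{\Expt A\,\Expt B}$ --- the two are equivalent.
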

The proof of Lemma~\ref{lem:LowerBound-Gaussian-Distortion} is straightforward, and deferred to Section~\ref{sec:proof-lem:LowerBound-Gaussian-Distortion}.

By assumption, $\Expt\Loss(\Xc,\Xch)\le \eps$ and therefore, by Lemma~\ref{lem:LowerBound-Gaussian-Distortion}, ${\Expt\Loss(\Gc,\Xch)\le (\eps^{1/2}+c_0d^{-1/2})^2}$. 

Let $J=(j_1,\ldots,j_k)\in[k]^k$ be indices such that $j_i \in \argmin_{1\le j \le k} \|\G_i - \Xh_{j}\|^2$. In other words, $\Loss(\Gc,\Xch)=(dk)^{-1}\sum_{i=1}^k \|\G_i-\Xh_{j_i}\|^2$. The random variable
$J$ is, clearly, deterministic given $\Gc,\Xch$. By the chain rule for mutual information,
\begin{equation}
	\label{eq:proof-lem:LowerBound:RDF-lb:1}
	\MI(\Gc;\Xch) = \MI(\Gc;\Xch,J)-\MI(\Gc;J|\Xch)\,.
\end{equation}
Since $J$ is a discrete random variable,
\begin{equation}
	\label{eq:proof-lem:LowerBound:RDF-lb:2}
	\MI(\Gc;J|\Xch) := \Ent(J|\Xch)-\Ent(J|\Gc,\Xch)\le \Ent(J)\le \log (k^k)= k \log k \,,
\end{equation}
where we used the standard facts that the entropy of a discrete variable is non-negative, and that conditioning decreases entropy.

Set $\D_i=\X_{j_i}$ and $\Dc=(\D_1,\ldots,\D_k)\in \RR^{k\times d}$, so that, by definition,
\begin{align*}
    (dk)^{-1}{\Expt\|\Gc-\Dc\|_F^2=\Expt\LossAvg(\Gc,\Dc)\le (\eps^{1/2}+c_0d^{-1/2})^2 }\,.    
\end{align*}
Since $\Dc$ is a function of $(\Xch,J)$, the DPI implies $\MI(\Gc;\Dc)\le \MI(\Gc;\Xch,J)$. Thus,
\begin{align}
    \MI(\Gc;\Xch,J)
    &\ge \MI(\Gc;\Dc) \nonumber\\
    &\ge \min_{P_{\tilde{\Dc}|\Gc}\,:\,(dk)^{-1}\|\Gc-\tilde{\Dc}\|^2\le (\eps^{1/2}+c_0d^{-1/2})^2} \MI(\Gc;\tilde{\Dc}) \nonumber\\
    &= \frac{dk}{2}\log\left( \frac{1}{(\eps^{1/2}+c_0d^{-1/2})^2} \right) \,,
    \label{eq:proof-lem:LowerBound:RDF-lb:3}
\end{align}
where (\ref{eq:proof-lem:LowerBound:RDF-lb:3}) is the solution to the classical Gaussian source rate-distortion problem \cite[Chapter 27]{polyanskiy2014lecture}. The proof of Lemma~\ref{lem:LowerBound:RDF-lb} concludes by combining (\ref{eq:proof-lem:LowerBound:RDF-lb:0})-(\ref{eq:proof-lem:LowerBound:RDF-lb:3}). 
\qed

\vspace{2mm}

We remark that for sufficiently small $\eps$ the lower bound 
\begin{align}
\min_{P_{\Xch|\Gc}:\Expt\LossAvg(\Gc,\Dc)\leq \eps} \MI(\Gc;\Xch)\geq \frac{dk}{2}\log\left(\frac{1}{\eps}\right)-k\log k,
\label{eq:RDFtightness}
\end{align}
which we derived within the proof above, is in fact tight (up to the difference between $k\log k$ and $\log|\mathcal{S}_k|$, where $\mathcal{S}_k$ is the symmetric group of permutations on $[k]$). To see this, we consider the Markov chain $\Gc\to\hat{\Gc}\to\Xch$, where the channel from $\Gc$ to $\hat{\Gc}$ is the test channel attaining the Gaussian RDF (see e.g.,~\cite[Theorem 10.3.2]{cover2012elements}), and the channel from $\hat{\Gc}\to\Xch$ is defined by applying a uniform random permutation $J$ on $\hat{\Gc}$, resulting in $\Xch$. Note that
\begin{align}
\MI(\Gc;\Xch,J)=\MI(\Gc;\hat{\Gc})=\frac{dk}{2}\log\left(\frac{1}{\eps}\right)
\end{align}
and that
\begin{align}
\MI(\Gc;J|\Xch)=\Ent(J|\Xch)-\Ent(J|\Gc,\Xch)=\Ent(J)-\Ent(J|\Gc,\Xch)\approx \Ent(J),
\end{align}
where the last approximation is due to the fact that for small $\eps$ we can recover $J$ from $\Gc$ and $\Xch$. Thus, the approximate tightness of~\eqref{eq:RDFtightness} follows from~\eqref{eq:proof-lem:LowerBound:RDF-lb:1}. 

The subtractive $k\log{k}$ term we lose here is the reason that the lower bound in Theorem \ref{thm:BelowCapacity} is $e^{-2\RateLim}$ instead of $1$. While we believe that $1$ is the correct lower bound, this loss seems to be inherent to the mutual information bounding program we follow here.

\subsection{Proof of Lemma~\ref{lem:LowerBound-Gaussian-Distortion}}
\label{sec:proof-lem:LowerBound-Gaussian-Distortion}

It is a well-known fact \cite[Eq. 18.15]{johnson1994continuous}
that 
$\Expt\|\G_i\|=\sqrt{d} + O(d{^{-1/2}})$.
Consequently,
\[
\Expt\|\G_i-\X_i\|^2 = 2d - 2\sqrt{d}\Expt\|\G_i\| = O(1) \,.
\]


Let $J=(j_1,\ldots,j_k)$ be $j_i=\argmin_{1\le j \le k}\|\X_i-\Xh_i\|$. 
By definition of $\LossAvg(\cdot,\cdot)$, (\ref{eq:LossAvg-def}), 
\[
\LossAvg(\Gc,\Xch) \le \frac{1}{dk}\sum_{i=1}^k \Expt\|\G_i-\Xh_{j_i}\|^2\,,
\]
while $\LossAvg(\Xc,\Xch)=\frac{1}{dk}\sum_{i=1}^k \Expt\|\X_i-\Xh_{j_i}\|^2$.
Moreover, observe that 
\[
\Dc=(\D_1,\ldots,\D_k)\mapsto \left(\frac{1}{dk}\sum_{i=1}^k \Expt\|\D_i\|^2\right)^{1/2}
\]
defines a semi-norm on $d\times k$ matrices (with square-integrable entries). 
Thus, by the triangle inequality,
\begin{align*}
    \left(\Expt\LossAvg(\Gc,\Xch) \right)^{1/2}
    &\le \left(\frac{1}{dk}\sum_{i=1}^k \Expt\|\G_i-\Xh_{j_i}\|^2\right)^{1/2} \\
    &=\left(\frac{1}{dk}\sum_{i=1}^k \Expt\|(\G_i-\X_i) + (\X_i-\Xh_{j_i})\|^2 \right)^{1/2} \\
    &\le 
    \underbrace{
    \left( 
    \frac{1}{dk}\sum_{i=1}^k \Expt\|\G_i-\X_i\|^2
    \right)^{1/2}
    }_{\left( \frac{1}{dk}\sum_{i=1}^k O(1)\right)^{1/2}=O(d^{-1/2})}
    + 
    \underbrace{
    \left(
    \frac{1}{dk}\sum_{i=1}^k \Expt\|\X_{i}-\Xh_{j_i}\|^2
    \right)^{1/2}
    }_{\left( \LossAvg(\Xc,\Xch) \right)^{1/2}}
    \,.
\end{align*}

\qed

\subsection{Proof of Lemma~\ref{lem:LowerBound:Trivial}}
\label{sec:proof-lem:LowerBound:Trivial}

Write $\MI(\Xc;\Y_1,\ldots,\Y_n,\Ell)=\MI(\Xc;\Ell) + \MI(\Xc;\Y_1,\ldots,\Y_n|\Ell)$, with $\MI(\Xc;\Ell)=0$. For ${1\le i\le k}$, let $n_i=n_i(\Ell)=|\Ell^{-1}(i)|$ be the number of measurements labeled $i$. The proof amounts to the following observation: the desired MI $\MI(\Xc;\Y_1,\ldots,\Y_n|\Ell)$ is simply the cumulative MI across $k$ parallel Gaussians channel, with independent inputs $\X_1,\ldots,\X_k$, such that one observes $n_i$ outputs (samples) of each channel $i$. We now quantify this statement.

Let $I(\sigma^2,m)=\MI(\X;\X+\sigma\Z_1,\ldots,\X+\sigma\Z_m)$ be the input-output MI between $\X$ and $m$ outputs through an $\AWGN(\sigma^2)$ channel. Since the sample mean is a sufficient statistic for the true mean under a Gaussian measurement model, we have 
\[
    I(\sigma^2,m)=\MI\left(\X,\frac1m\left((\X+\sigma\Z_1,\ldots,\X+\sigma\Z_m)\right)\right)=I\left(\frac{\sigma^2}{m},1\right) \le d\Capacity(\sigma^2/m)\,,
\]
where $\C(\cdot)$ denotes the AWGN channel capacity (\ref{eq:CapacityFunc-def}). Thus, 
\[
    \MI(\Xc;\Y_1,\ldots,\Y_n|\Ell) = \Expt\left[ \sum_{i=1}^k I(\sigma^2,n_i(\Ell)) \right] \le \Expt\left[ \sum_{i=1}^k d\Capacity(\sigma^2/n_i(\Ell))\right]    \,.
\]
One may readily verify that the function $m\mapsto \Capacity(\sigma^2/m)$ is concave. By Jensen's inequality,
\[
    \sum_{i=1}^k d\Capacity(\sigma^2/n(\Ell)) = k\cdot \frac{1}{k}\sum_{i=1}^k d\Capacity(\sigma^2/n(\Ell)) \le kd\Capacity\left( \frac{\sigma^2}{\frac1k\sum_{i=1}^k n_i(\Ell)} \right) = kd\Capacity\left(\sigma^2k/n\right)\,,
\]
and the claimed result follows. \qed


We remark that to prove the bound, we \emph{did not} actually need to use the fact that the labels all have the same probability;
the calculation above shows that a balanced label distribution in fact maximizes the MI between $\Xc$ and the observations $\Y_1,\ldots,\Y_n$
(though this will not be used later).

\subsection{Proof of Lemma~\ref{lem:LowerBound:SingleSample-IMMSE}}
\label{sec:proof-lem:LowerBound:SingleSample-IMMSE}

The proof relies on the celebrated I-MMSE relation of Guo, Shamai and Verdu (see~\cite{guo2005mutual},~\cite{guo2013interplay}, and also the works~\cite{bcs08} and~\cite{bs13} that apply the I-MMSE framework for studying the MSE of estimating the transmitted codeword from the output of the AWGN channel). 

Let $\RandLabel\sim \Unif([k])$ and $\Y(s)=\X_\RandLabel+\sqrt{s}\Z$. 
Denote $I(s)=\MI(\RandLabel;\Y(s)|\Xc)=\MI(\X_\ell;\Y(s)|\Xc)$, where equality holds since, with probability one, $\X_1,\ldots,\X_k$ are all distinct. Recall that our goal is to bound $\Ent(\RandLabel|\Y(\sigma^2),\Xc)=\Ent(\RandLabel|\Xc)-I(\sigma^2)$. 

Clearly, for any $\sigma_0^2<\sigma^2$,
\begin{align*}
    \Ent(\RandLabel|\Y(\sigma^2),\Xc) - \Ent(\RandLabel|\Y(\sigma_0^2),\Xc) = I(\sigma_0^2)-I(\sigma^2) = -\int_{\sigma_0^2}^{\sigma^2} \frac{d}{ds}I(s) ds \,.
\end{align*}
Using the I-MMSE relation, Lemma~\ref{lem:I-MMSE}, applied pointwise conditioned on $\Xc$,
\begin{align*}
    \frac{d}{ds}I(s) = -\frac{1}{2s^2} \Expt\left[ \left\| \X_\ell - \Expt(\X_\ell|\Y(s),\Xc) \right\|^2 \right] \,.
\end{align*}
Since $\Expt(\X_\ell|\Y(s),\Xc)$ is the minimum MSE estimator of $\X_\ell$ from $(\Y(s),\Xc)$, it holds that for any $(\Y(s),\Xc)$-measurable random variable $\Xh=\Xh(\Y(s),\Xc)$, we have $-{\frac{d}{ds}I(s) \le \frac1{2s^2} \Expt\|\X_\ell-\Xh\|^2}$. 

Choose the optimal linear estimator (LMMSE) of $\X_\ell$ from $\Y(s)$, namely ${\Xh=\alpha(s)\Y(s)}$, $\alpha(s)=\frac{1}{1+s}$, so that $\Expt\|\X_\ell-\Xh\|^2=\frac{s}{1+s}d$. One would think, at first sight, that this upper bound should be very loose: after all, the LMMSE is {optimal} for a Gaussian signal, whereas, conditioned on $\Xc$, the distribution of $\X_\ell$ is very much non-Gaussian; it is not even continuous! Recall, however, that we are interested in applying Lemma~\ref{lem:LowerBound:SingleSample-IMMSE} when the rate $\Rate$ is \emph{above} the capacity $\Capacity(\sigma^2)$; the key intuition is that when this is the case, the joint statistics of $(\X_\RandLabel,\Y=\X_\RandLabel+\sigma\Z)$ with $\RandLabel\sim\Unif([k])$, are in some sense ``indistinguishable'' from those of a joint Gaussian distribution ${(\bm{W},\Y=\bm{W}+\sigma\Z)}$, $\bm{W}\sim\m{N}(\0,\Id)$,
corresponding to the capacity-achieving distribution of the Gaussian channel. 

Continuing the calculation,
\begin{align*}
    -\int_{\sigma_0^2}^{\sigma^2} \frac{d}{ds}I(s) ds \le \int_{\sigma_0^2}^{\sigma^2} \frac{1}{2s^2} \cdot \frac{s}{1+s} ds \cdot d = \int_{\sigma_0^2}^{\sigma^2} \left(-\Capacity'(s)\right)ds \cdot d = \left(\Capacity(\sigma_0^2)-\Capacity(\sigma^2)\right)d \,,
\end{align*}
where $\Capacity(s)=\frac12\log(1+1/s)$ is from (\ref{eq:CapacityFunc-def}) and $\Capacity'(s)$ is its derivative. Combining,
\begin{equation}\label{eq:proof-lem:LowerBound:SingleSample-IMMSE:1}
    \Ent(\RandLabel|\Y(\sigma^2),\Xc) \le \Ent(\RandLabel|\Y(\sigma_0^2),\Xc) - \Capacity(\sigma^2)d + \Capacity(\sigma_0^2)d \,.
\end{equation}

Now, set $\sigma_0^2=\Capacity^{-1}\left( (1+\delta)\Rate \right)$. To apply (\ref{eq:proof-lem:LowerBound:SingleSample-IMMSE:1}), we need to verify that $\sigma_0^2<\sigma^2$. Applying the decreasing function $\Capacity(\cdot)$, the condition is equivalent to $\Capacity(\sigma^2)<\Capacity(\sigma_0^2)=(1+\delta)\Rate$, which certainly hold since we assume $\Rate>\Capacity(\sigma^2)$.

By definition, $\Capacity(\sigma_0^2)d=(1+\delta)\log k$. 

Define by $e(\delta)$ the error (averaged over the ensemble $\Xc$) for decoding $\ell$ under $\AWGN(\sigma_0^2)$, using codebook $\Xc$. In other words, it is the error of the MAP estimator for $\ell$ given $(\Y(\sigma_0^2),\Xc)$. By Fano's inequality, Lemma~\ref{lem:Fano}, $\Ent(\RandLabel|\Y(\sigma_0^2),\Xc)\le h_b\left(e(\delta)\right) + e(\delta)\log k$. Combined with (\ref{eq:proof-lem:LowerBound:SingleSample-IMMSE:1}), we obtain the bound claimed in Lemma~\ref{lem:LowerBound:SingleSample-IMMSE}.

\qed

\subsection{Proof of Lemma~\ref{lem:LowerBound:Quantitative}}
\label{sec:proof-lem:LowerBound:Quantitative}



Before getting to the computation, we emphasize that Lemma~\ref{lem:LowerBound:SingleSample-IMMSE} may be invoked with \emph{any} other  upper bound on the ensemble average error $\ExptCodeErrAvg(\cdot)$, that could possibly be obtained through other means, e.g., by analyzing a different decoder than the one from Section~\ref{sec:Appendix:Decoding}. There is much literature devoted to computing optimal error rates for both the Gaussian i.i.d. and the spherical code ensembles, primary in the regime of \emph{positive rate}. In particular, for rates between the so-called \emph{critical rate} and capacity the exact exponential decay rate is known: $\ExptCodeErrAvg(\sigma^2) = \exp(-{E}_{SP}^*(\RateLim,\sigma^2)d +o(d))$, where ${E}_{SP}^*(\RateLim,\sigma^2)$ is the \emph{sphere-packing error exponent}. See, for example, \cite{shannon1959probability,gallager1968information} for the exact expression. In the analysis that follows, we will need bounds on the error probability in the regime $\Capacity(\sigma^2)-\RateLim=o(1)$. In fact, for the zero rate regime, the capacity itself is $o(1)$, and sometimes it decays even as $o(d^{-1/2})$. In those cases, the sphere packing error exponent is of limited use. 

Instead, we use the upper bounds on $\ExptCodeErrAvg(\cdot)$ derived in Section~\ref{sec:Appendix:Decoding}. 

As before, the analysis is divided between the positive ($\RateLim>0$) and zero ($\RateLim=0$) rate regimes.



\subsubsection{Positive Rate}

Let us work under the slightly more general regime, where $\Rate$ is either positive or decays slow enough with $d$, specifically, $\Rate = \frac{\log k}{d} = \omega( d^{-1/2})$ as $d\to\infty$. 

We apply the bound (\ref{eq:Decode:Pos:ErrorProbability}) 
with noise variance
\[
\sigma^2_0=\Capacity^{-1}((1+\delta)\Rate) \,.    
\]
The second term of (\ref{eq:Decode:Pos:ErrorProbability}) is bounded By
\[
(1+1/\sigma_0^2)^{1/2}e^{-d\left( (1+\delta)\Rate-\Rate - \frac12\log(\tau_2/\tau) \right)} = O(1)\cdot e^{-d(\delta \Rate - \frac12\log(\tau_2/\tau))} \,.
\]
Set $\tau_1=\tau_2=(1+\frac12 \Rate\delta)\tau$, so that $\log(\tau_2/\tau)\le \frac12\Rate\delta$. Thus,  
\[
    O(1)\cdot e^{-d(\delta \Rate - \frac12\log(\tau_2/\tau))}= O(1)\cdot e^{-d(\delta \Rate - \frac14\Rate\delta)} \lesssim e^{-C\Rate\delta d}\,,
\]
for some $C>0$. On the other hand, the first term of (\ref{eq:Decode:Pos:ErrorProbability}) is
\[
    e^{- \frac12(1+\sigma_0^2)(\sqrt{\tau_1/\tau}-1)^2 d} \lesssim e^{-C\Rate^2\delta^2 d} = e^{-C\delta^2\frac{(\log k)^2}{d}}\,.    
\]
Note that since $\delta=o(1)$, this term is the most significant. 

Denote $A=\frac{(\log k)^2}{d}$; recall that for $\Rate=\omega( d^{-1/2})$, $A=\omega(1)$.

In light of the estimates above, we need to choose $\delta=o(1)$ so to minimize $\delta + e^{-C\delta^2 A}$.
Take 
\[
\delta = C_1 \sqrt{\frac{\log A}{ A}}\,,
\]
for large enough constant $C_1$, which yields 
\[
\delta + e(\delta) \lesssim \delta + e^{-C\delta^2A} \lesssim \sqrt{\frac{\log A}{A}}\,.
\]
Plugging this into  (\ref{eq:LowerBound:SingleSample-Delta}),
\[
\MI(\Xc;\Y) \lesssim \log{k} \sqrt{\frac{\log A}{A}}\,.
\] 
Using (\ref{eq:LowerBound:SC-final}),
\begin{align*}
\label{eq:LoweBound:I-LB:pos}
    \frac{\SC_\eps}{\sigma^2 k} 
    &\ge C_1(\eps,\RateLim) \cdot \frac{d}{\sigma^2} \cdot ({\MI(\Xc;\Y)})^{-1} \\
    &\gtrsim \frac{d}{\sigma^2 \log k} \sqrt{\frac{A}{\log A} }
\end{align*}
where $A=\frac{(\log k)^2}{d}$. Let us understand the asymptotic of this bound as $d\to\infty$ and $\Rate=\Capacity(\RegimeParam\sigma^2)=\frac{\log k}{d}\gg d^{-1/2}$. In that case, $\log A\approx \log\log k$, and so, the above reads
\begin{align*}
    \frac{\SC_\eps}{\sigma^2 k} \gtrsim \frac{d}{\sigma^2 \log k} \sqrt{\frac{A}{\log A} } \gtrsim \frac{d}{\sigma^2 \log k} \sqrt{\frac{\frac{(\log k)^2}{d}}{\log \log k} }\,.
\end{align*}
Using $1/(\RegimeParam\sigma^2) \ge \Capacity(\RegimeParam\sigma^2) = (\log{k})/d$ (since $\Capacity(s)\le 1/(2s))$ finally yields
\begin{equation}
    \frac{\SC_\eps}{\sigma^2 k} \gtrsim \sqrt{\frac{\log k}{d}} \sqrt{\frac{\log k}{\log\log k}} \,.
\end{equation}
Finally, note that in the positive rate regime, $\frac{\log k}{d}=\Omega(1)$.

\qed

\subsubsection{Rate Zero ($\RateLim=0$)}

Assume that $\lim_{d\to\infty}\Rate=0$ (including, possibly, $\Rate\gg d^{-1/2}$).

We would like to use the bound (\ref{eq:Decode:0:ErrorProbability}) with some $\eta=\eta_1=\eta_2=o(1)$ and $\sigma^2_0=\sigma_0^2(\delta)=\Capacity^{-1}((1+\delta)\Rate)$, for $\delta=o(1)$.

We start with the condition (\ref{eq:Decode:0:Condition}), namely,
\begin{align*}
     0  
     &\le F:= 1 - \eta - \sqrt{\frac{2\log k}{d} + \frac{\eta^2}{\sigma_0^2(\delta)}} - \sqrt{\frac{2\sigma^2_0(\delta)\log k}{d}} \\
     &= 1-\eta - \sqrt{2\Rate}\sqrt{1 + \frac{1}{2\sigma_0^2(\delta)\Rate}\eta^2} - \sqrt{2\sigma_0^2(\delta)\Rate} \,.
\end{align*}
(we replace $k-1$ with $k$, which yields a stronger condition.)

Use $\Capacity(s)=\frac12\log(1+1/s)\le 1/(2s)$, therefore $\Capacity^{-1}(y)\le1/(2y)$, and so $\sigma^2_0(\delta)\le \frac{1}{2(1+\delta)\Rate}$:
\[
    \sqrt{2\sigma_0^2(\delta)\Rate} \le \sqrt{\frac{1}{1+\delta}} = 1-\frac{\delta}{2} + O(\delta^2) \,  .  
\]
Moreover,
\[
    \sqrt{1 + \frac{1}{2\sigma_0^2(\delta)\Rate}\eta^2} \le   1+  \frac{1}{4\sigma_0^2(\delta)\Rate}\eta^2
\]
($\sqrt{1+x}\le 1+\frac12 x$ for all $x\ge 0$). Thus, 
\begin{align}\label{eq:Ugly}
    F \ge \delta/2 -O(\delta^2) -\eta - \sqrt{2\Rate} - 2\sqrt{\Rate} \cdot \frac{1}{4\sigma_0^2(\delta)\Rate}\eta^2 \,.
\end{align}

Note that $\sigma_0^2(\delta)\Rate=\Theta(1)$ for any $\delta=o(1)$; to see this, recall that $(1+\delta)\Rate = \Capacity(\sigma_0^2)$ (by definition), with $\Capacity(\sigma_0^2) = 1/(2\sigma_0^2) + O(1/\sigma_0^4)$ with $\sigma_0^2\to\infty$.
Consequently, the last term of (\ref{eq:Ugly}) above is necessarily of lower order than either $\sqrt{\Rate}$ or $\eta$. 

Introduce a constant parameter $\nu\in (0,1/2)$, and set $\eta=(1/2-\nu)\delta$. Observe that whenever
\begin{equation}\label{eq:LowerBound:delta-constraint}
    \delta \ge \frac{2\sqrt{2}}{\nu}\sqrt{\Rate} = \frac{2\sqrt{2}}{\nu}\sqrt{\frac{\log k}{d}},\quad \delta=o(1)\,,        
\end{equation}
plugging into (\ref{eq:Ugly}), we have $F\ge \frac{\nu}{2}\delta(1-o(1))>0$.  

Let us estimate the terms in (\ref{eq:Decode:0:ErrorProbability}). The first term is $e^{-\frac{d}{2}F^2} \le e^{-C_1d\nu^2\delta^2}$. The second term is  $2e^{-\frac{\eta^2}{2}\cdot \frac{d}{\sigma_0^2}}$. Using
\[
    \frac{\eta^2 d}{2\sigma_0^2} \ge \eta^2d(1+\delta)\Rate = \eta^2(1+\delta)\log k \ge \eta^2\log k\,,   
\]
(we used $\sigma_0^2\le \frac{1}{2(1+\delta)\Rate}$),
we deduce that the second term is $\le 2e^{-(1/2-\nu)^2\delta^2\log k}$. Since in the zero rate regime, $d\gg \log k$, we see that the first term is always negligible  compared to the second, regardless of how fast $\delta$ decays. Thus, we would like choose $\delta=o(1)$ so to minimize (the asymptotic decay rate of)
\begin{equation}
    \label{eq:LowerBound:Balance}
    \delta + e(\delta) \lesssim \underbrace{\delta}_{e_1(\delta)} + \underbrace{e^{-(1/2-\nu)^2\delta^2\log k}}_{e_2(\delta)} \,.
\end{equation}
Note that $e_1(\delta)$ is increasing in $\delta$, while $e_2(\delta)$ is decreasing. Denote
\begin{equation}
    \delta_1 = \frac{\sqrt{2}}{(1/2-\nu)}\sqrt{ \frac{\log\log k}{\log k} },\quad \delta_2 = \frac{2\sqrt{2}}{\nu}\sqrt{\frac{\log k}{d}} \,,
\end{equation}
so that $\delta_2$ is the smallest number $\delta$ that satisfies (\ref{eq:LowerBound:delta-constraint}).

One may readily verify that $\delta=\delta_1$ optimally balances between $e_1(\delta),e_2(\delta)$, in the sense of asymptotic growth:
\[
    e_1(\delta_1) \asymp e_2(\delta_1) \asymp \sqrt{\frac{\log\log k}{\log k}    } \,,\quad \implies e_1(\delta_1)+e_2(\delta_1) \lesssim \sqrt{\frac{\log\log k }{\log k}}
\] 
Recall, however, that not all assignments $\delta$ are applicable; we must satisfy the constraint (\ref{eq:LowerBound:delta-constraint}), $\delta\ge \delta_2$. If $\delta_2\le \delta_1$ then there is no problem; on the other hand, if $\delta_2>\delta_1$, assigning $\delta=\delta_2$,
\[
e_1(\delta_2) + e_2(\delta_2) \overset{(i)}{\le} e_1(\delta_2) + e_2(\delta_1) \overset{(ii)}{\lesssim} e_1(\delta_2) + e_1(\delta_1) \overset{(iii)}{\le} 2e_1(\delta_2)  \lesssim \sqrt{\frac{\log k}{d} } \,,
\]
where we used that: (i) $e_2(\cdot)$ is decreasing; (ii) $e_1(\delta_1)\asymp e_2(\delta_1)$; (iii) $e_1(\cdot)$ is increasing. 

Concluding the calculation, using (\ref{eq:LowerBound:SingleSample-Delta}), we have 
\begin{equation}
    \MI(\Xc;\Y) 
    \lesssim \log k \cdot\max\left\{ \sqrt{\frac{\log\log k}{\log k}}, \sqrt{\frac{\log k}{d}} \right\} \,.
\end{equation}
Finally, to deduce the lower bound on the sample complexity, use (\ref{eq:LowerBound:SC-final}):
\begin{align}
    \frac{\SC_\eps}{\sigma^2 k} 
    &\ge C_1(\eps) \cdot \frac{d}{\sigma^2} \cdot ({\MI(\Xc;\Y)})^{-1}        \nonumber\\
    &\ge C_1(\eps) \cdot 2\RegimeParam\log k  \cdot ({\MI(\Xc;\Y)})^{-1}        \nonumber\\
    &\ge C_2(\eps)\RegimeParam \min\left\{ \sqrt{\frac{\log k}{\log \log k}}, \sqrt{\frac{d}{\log k}} \right\} \,.
    \label{eq:LoweBound:I-LB:0}
\end{align}

\qed

\newpage

\section{Proofs for Section~\ref{sec:UpperBound-StepI}}

\subsection{Proof of Lemma~\ref{lem:UpperBound:Technical}}
\label{sec:proof-lem:UpperBound:Technical}

As in Section~\ref{sec:Appendix:Decoding}, we give different constructions between the zero rate ($\RateLim=0$) and positive rate ($\RateLim>0$) regimes.
The construction for the local test is guided by the form of the capacity-achieving decoder from Section~\ref{sec:Appendix:Decoding}.

\subsubsection{Rate Zero ($\RateLim=0$)}

Following the form of the decoder analyzed in Section~\ref{sec:Appendix:Decoding:0}, we consider a test of the form
\begin{equation}
    \label{eq:Appendix:Test:RateZero}
    \Test(\Xh,\Y) = \Ind\{d^{-1}\langle \Y,\Xh\rangle \ge 1-\eta\}\,,
\end{equation}
where the choice of $\eta$ will be specified below. 

Suppose that $\Xh\in\sqrt{d}\SphereD$ is such that, for some particular $i\in [k]$, $d^{-1}\|\Xh-\X_i\|^2\le 0.5\epsI$. Note that this may be written equivalently as $d^{-1}\langle \X_i,\Xh\rangle\ge 1-0.25\epsI $. Thus, 
\[
    d^{-1}\langle \X_i+\sigma\Z , \Xh\rangle \ge 1-0.25\epsI + \underbrace{(d^{-1/2}\sigma)\langle \Z, d^{-1/2}\Xh\rangle}_{\sim \m{N}(0, \sigma^2/d)}\,.    
\]
Setting
\begin{equation}\label{eq:Appendix:Test:RateZero:eta}
    \eta=0.25\epsI\,,
\end{equation}
we get
\begin{align*}
    \Pr(\Test(\Xh,\Y)=1)
    &\ge \frac1k \Pr(\Test(\Xh,\Y)=1\,|\,\RandLabel=i) \\
    &\ge \frac1k\Pr\left(\m{N}(1-0.25\epsI, \sigma^2/d)\ge 1-0.25\epsI \right) = 0.5/k \,.
\end{align*}
Consequently, with probability $1$, $\qClose(\Xc)\ge 0.5k^{-1}$. 

The challenging part of the analysis is to control $\qFar(\Xc)$. 

Observe that if $d^{-1}\|\Xh-\X_i\|^2\ge \epsI$ then $d^{-1}\langle \X_i,\Xh\rangle\le 1-0.5\epsI$. 
For $\Xh\in\sqrt{d}\SphereD$, denote
\begin{equation}
    \label{eq:UpperBound:0:Q-def}
    \begin{split}
        &Q_i(\Xh|\Xc) = \Pr_{\Z\sim \m{N}(\0,\Id)} \left( \Test(\Xh,\X_i+\sigma\Z) =1 \,\big|\,\Xc\right) \,,\\
        &\bar{Q}(\Xh|\Xc) = \frac{1}{k}\sum_{i=1}^k Q_i(\Xh|\Xc) \,.
    \end{split}
\end{equation}
Note that $Q_i(\Xh|\Xc)$ depends on $\Xc$ only through $\X_i$. 

By definition, $\qFar(\Xc)=\max_{\Xh\in\Net\cap\HFar} \bar{Q}(\Xh|\Xc)$. We start with a trivial bound.
\begin{lemma}\label{lem:UpperBound:0:Q-infty}
    Suppose that $d^{-1}\langle \X_i,\Xh\rangle \le 1-\nu-0.25\epsI$ for $0\le \nu \le 1-0.25\epsI$. Then
    \begin{align*}
        Q_i(\Xh|\Xc) \le  k^{-\RegimeParam\nu^2} \,.
    \end{align*}
    Consequently, if $\Xh\in\HFar$ then for all $i$, $Q_i(\Xh|\Xc)\le k^{-\frac{\RegimeParam}{16}\epsI^2}$.
\end{lemma}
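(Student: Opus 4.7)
}

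The plan is to unwind the definitions of $\Test$ and $Q_i$ from Section~\ref{sec:Appendix:Decoding:0}, reduce the event $\{\Test(\Xh,\X_i+\sigma\Z)=1\}$ to a one-dimensional Gaussian tail event, and finally convert the resulting Gaussian tail bound from an exponent in $d/\sigma^2$ into an exponent in $\log k$ by invoking the calibration $\Rate=\Capacity(\RegimeParam\sigma^2)$.

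First I would observe that, by the definition of $\Test$ in~\eqref{eq:Appendix:Test:RateZero} with $\eta=0.25\epsI$ from~\eqref{eq:Appendix:Test:RateZero:eta}, the event $\{\Test(\Xh,\X_i+\sigma\Z)=1\}$ is equivalent to
\[
    d^{-1}\sigma\langle \Z,\Xh\rangle \;\ge\; 1-0.25\epsI - d^{-1}\langle \X_i,\Xh\rangle.
\]
Under the hypothesis $d^{-1}\langle \X_i,\Xh\rangle \le 1-\nu-0.25\epsI$, the RHS is at least $\nu$. Since $\|\Xh\|=\sqrt{d}$, the linear functional $d^{-1}\sigma\langle\Z,\Xh\rangle$ is distributed as $\m{N}(0,\sigma^2/d)$, so the standard Gaussian tail bound (Lemma~\ref{lem:Unif-Sphere} applied to a Gaussian, or any elementary equivalent) gives
\[
    Q_i(\Xh|\Xc) \;\le\; \exp\!\left(-\frac{\nu^2 d}{2\sigma^2}\right).
\]

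Next I would convert this to a bound in terms of $k$. Using the calibration $\Rate=\frac{\log k}{d}=\Capacity(\RegimeParam\sigma^2)=\frac12\log\!\left(1+\frac{1}{\RegimeParam\sigma^2}\right)$ together with the elementary inequality $\log(1+x)\le x$ applied to $x=1/(\RegimeParam\sigma^2)$, we obtain
\[
    \RegimeParam \cdot \frac{\log k}{d} \;=\; \frac{\RegimeParam}{2}\log\!\left(1+\frac{1}{\RegimeParam\sigma^2}\right) \;\le\; \frac{1}{2\sigma^2},
\]
which rearranges to $\frac{\nu^2 d}{2\sigma^2}\ge \RegimeParam\nu^2\log k$. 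Plugging this into the Gaussian tail bound yields $Q_i(\Xh|\Xc)\le k^{-\RegimeParam\nu^2}$, establishing the first claim.

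For the ``consequently'' part, if $\Xh\in\HFar$ then $\|\Xh-\X_i\|^2\ge \epsI d$ for every $i\in[k]$, and since $\|\Xh\|^2=\|\X_i\|^2=d$ this is equivalent to $d^{-1}\langle \X_i,\Xh\rangle\le 1-\epsI/2$. This matches the hypothesis of the first part with $\nu=\epsI/4$, which satisfies the range constraint $0\le \nu\le 1-0.25\epsI$ whenever $\epsI$ is sufficiently small. Substituting gives the stated bound $k^{-\RegimeParam\epsI^2/16}$. No serious obstacle is anticipated; the only subtle point is verifying that $\log(1+x)\le x$ suffices to convert $d/(2\sigma^2)$ into $\RegimeParam\log k$ without slack, which is exactly where the zero-rate regime is ``critical'' and why the constant $\RegimeParam$ appears cleanly in the exponent.
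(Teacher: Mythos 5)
Your proposal is correct and follows essentially the same route as the paper: reduce the test event to a one-dimensional Gaussian tail $\Pr(\m{N}(0,\sigma^2/d)\ge\nu)\le e^{-\nu^2 d/(2\sigma^2)}$, then use $\Capacity(s)=\frac12\log(1+1/s)\le 1/(2s)$ (equivalently your $\log(1+x)\le x$) to convert the exponent into $\RegimeParam\nu^2\log k$, and finally specialize to $\nu=0.25\epsI$ for $\Xh\in\HFar$. The only cosmetic difference is your citation of Lemma~\ref{lem:Unif-Sphere} (a spherical-cap bound) where a plain Gaussian tail bound is what is actually used, but you flag this yourself and it does not affect the argument.
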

\begin{proof}
    $d^{-1}\langle \X_i + \sigma\Z , \Xh\rangle \le 1-0.25\epsI - \nu + \underbrace{d^{-1}\sigma\langle \Z,\Xh\rangle}_{\sim  \m{N}(0,\sigma^2/d)}$. Thus,
    \begin{align*}
        \Pr(\Test(\Xh,\X_i+\sigma\Z)=1) \le \Pr\left(\m{N}(0,\sigma^2/d)\ge \nu\right) \le e^{-\frac{d}{2\sigma^2}\nu^2} \,.
    \end{align*}
    Now, $k=e^{d\Capacity(\RegimeParam\sigma^2)}\le e^{\frac{d}{2\RegimeParam\sigma^2}}$ (since $\Capacity(s)=\frac12\log(1+1/s)\le1/(2s)$), therefore $e^{-\frac{d}{2\sigma^2}\nu^2} \le k^{-\RegimeParam\nu^2}$. 
    Finally, if $\Xh\in\HFar$, then $d^{-1}\langle \X_i,\Xh\rangle \le 1-\nu-0.25\epsI$ with $\nu=0.25\epsI$. 
\end{proof}

As mentioned, Lemma~\ref{lem:UpperBound:0:Q-infty} gives us the trivial bound $\bar{Q}(\Xh|\Xc)\le k^{-\frac{\RegimeParam}{16}\epsI^2}$ for all $\Xh\in \HFar$. This is a highly wasteful bound: it treats $\Xh$ as if it is \emph{simultaneously} $\sqrt{\epsI d}$-close to all of $\X_1,\ldots,\X_k$. 
In practice, however, ``typical'' instances of $\Xc$ create constellations that do not cluster around any particular point; consequently, for most $i\in [k]$, it has to be that, in fact, $d^{-1}\langle \X_i,\Xh\rangle \approx 0$. 

For $t\in (0,1)$, set
\begin{equation}\label{lem:UpperBound:0:Eq1}
    N_t(\Xh|\Xc) = \sum_{i=1}^k \Ind\{d^{-1}\langle \X_i,\Xh\rangle \ge t\} \,,
\end{equation}
the number of centers $\X_i$ that have correlation $\ge t$ with $\Xh$. 

Choose some  constants $\eps_0,\nu_0\in (0,1)$ such that $\nu_0 < 1-0.25\eps_0$ and $\RegimeParam\nu_0^2>1$. This can certainly be done, since $\RegimeParam>1$.  
By Lemma~\ref{lem:UpperBound:0:Q-infty} above, for every $\Xh\in \HFar\cap \Net$, assuming $\epsI\le \eps_0$,
\begin{equation}
    \bar{Q}(\Xh|\Xc) \le k^{\RegimeParam\nu_0^2} + \frac1k \cdot N_{(1-\nu_0-0.25\eps_0)}(\Xh|\Xc) \cdot k^{-\frac{\RegimeParam}{16}\epsI^2} \,.
\end{equation}
That is, $\X_i$-s whose correlation with $\Xh$ is $<1-\nu_0-0.25\eps_0<1-\nu_0-0.25\epsI$ contribute each at most $Q_i(\Xh|\Xc)\le k^{-\beta\nu_0^2}$; on the other hand, centers whose correlations is higher give, at most, the worst-case contribution $Q_i(\Xh|\Xc)=k^{-\frac{\RegimeParam}{16}\epsI^2}$. In light of (\ref{lem:UpperBound:0:Eq1}), clearly,
\begin{equation}
    \label{lem:UpperBound:0:Eq2}
    \qFar(\Xc) \le  k^{\RegimeParam\nu_0^2} + \max_{\Xh\in\Net\cap\HFar} N_{(1-\nu_0-0.25\eps_0)}(\Xh|\Xc) \cdot k^{-1-\frac{\RegimeParam}{16}\epsI^2}
\end{equation} 
Thus, it remains to show that, with high probability, $\max_{\Xh\in\Net\cap\HFar} N_{(1-\nu_0-0.25\eps_0)}(\Xh|\Xc)$ is small.

\begin{lemma}\label{lem:UpperBound:0:Nt}
    Fix any $\Xh\in \Net$. 
    There are universal $C_1,C_2$ such that whenever $t\ge C_1\sqrt{\frac{\log k}{d}}$, for all $M\ge 1$,
    \[
    \Pr\left( N_t(\Xh|\Xc) \ge M  \,\big|\,\Xh\in\HFar\right)  \le (C_2ke^{-d\frac{t^2}{2}})^M \,,
    \]
    where the probability is with respect to $\Xc\sim \Unif(\sqrt{d}\SphereD)^{\otimes k}$, and conditioned on the event that $\Xh\in\HFar$. 
\end{lemma}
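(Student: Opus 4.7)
The plan is to exploit the structure of the event $\Xh\in\HFar$. Observe that $\Xh\in\HFar$ can be rewritten as $\bigcap_{i=1}^k B_i$ where $B_i=\{\|\Xh-\X_i\|^2\ge \epsI d\}$, and crucially each $B_i$ is a function of $\X_i$ alone. Since $\X_1,\ldots,\X_k$ are i.i.d.\ under $\Unif(\sqrt{d}\SphereD)^{\otimes k}$, the events $B_1,\ldots,B_k$ are mutually independent, and conditioning on $\bigcap_i B_i$ preserves independence: the $\X_i$ remain mutually independent, each distributed as $[\Unif(\sqrt{d}\SphereD)\mid B_i]$. Letting $A_i=\{d^{-1}\langle \X_i,\Xh\rangle \ge t\}$, this shows that conditional on $\Xh\in\HFar$, the variable $N_t(\Xh|\Xc)=\sum_{i=1}^k \Ind_{A_i}$ is distributed as $\Binomial(k,q)$ with $q=\Pr(A_1\mid B_1)$.

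Next I would bound $q$. Note that the constraint $\|\Xh-\X_1\|^2\ge \epsI d$ is equivalent to $d^{-1}\langle \X_1,\Xh\rangle \le 1-\epsI/2$. So if $t>1-\epsI/2$, then $A_1\cap B_1=\emptyset$, hence $q=0$ and the claim is vacuous. Otherwise,
\begin{equation*}
    q=\frac{\Pr(A_1\cap B_1)}{\Pr(B_1)}\le \frac{\Pr(A_1)}{\Pr(B_1)}.
\end{equation*}
By Lemma~\ref{lem:Unif-Sphere} applied to $\X_1$ versus the fixed direction $\Xh$, we have $\Pr(A_1)\le e^{-dt^2/2}$; applying the same lemma again, $\Pr(B_1^c)\le e^{-d(1-\epsI/2)^2/2}\le 1/2$ for $d$ large enough (recall $\epsI<1/2$), so $\Pr(B_1)\ge 1/2$. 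Therefore $q\le 2 e^{-dt^2/2}$.

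The conclusion follows from a standard binomial tail estimate: for any $M\ge 1$,
\begin{equation*}
    \Pr(\Binomial(k,q)\ge M)\le \binom{k}{M}q^M \le (ekq)^M \le (2ek e^{-dt^2/2})^M,
\end{equation*}
so the statement holds with $C_2=2e$. The hypothesis $t\ge C_1\sqrt{\log k/d}$ (for $C_1$ sufficiently large) is only needed to make the resulting bound nontrivial, i.e.\ $C_2 k e^{-dt^2/2}<1$; the inequality itself is valid for any $t$.

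I do not anticipate a substantive obstacle. The central conceptual point is simply the preservation of independence under the conditioning on $\HFar$, which follows from the product structure $\HFar=\bigcap_i B_i$ with each $B_i$ measurable with respect to $\X_i$ alone; once this is recognized, everything else reduces to the single-vector tail bound on the sphere (Lemma~\ref{lem:Unif-Sphere}) together with an elementary binomial estimate.
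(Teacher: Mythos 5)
Your proof is correct and follows essentially the same route as the paper's: both rest on the observation that conditioning on $\Xh\in\HFar$ preserves the independence of the $\X_i$ (so $N_t(\Xh|\Xc)$ is binomial with success probability at most $2e^{-dt^2/2}$ by the spherical cap bound), followed by an elementary binomial tail estimate. The only cosmetic difference is in the last step: the paper sums the full binomial tail and invokes the hypothesis $t\ge C_1\sqrt{\log k/d}$ to dominate it by a geometric series, whereas your union bound $\binom{k}{M}q^M\le(ekq)^M$ is slightly cleaner and, as you note, needs that hypothesis only to make the bound nontrivial.
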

\begin{proof}
    Observe that conditioned on the event $\Xh\in \HFar$, the centers $\X_1,\ldots,\X_k$ are i.i.d. and $\sim \Unif(\sqrt{d}\SphereD\setminus \m{B}(\Xh,\sqrt{\epsI d}))$. For any \emph{non-negative} $f(\cdot)$,
    \begin{equation}\label{eq:ChangeOfMeasureSphere}
        \begin{split}
            \Expt_{\X_i\sim \Unif(\sqrt{d}\SphereD\setminus \m{B}(\Xh,\sqrt{\epsI d}))} \left[ f(\X_i) \right]
            &\le \frac{\Surf(\sqrt{d}\SphereD)}{\Surf(\sqrt{d}\SphereD\setminus \m{B}(\Xh,\sqrt{\epsI d}))} \cdot \Expt_{\X_i\sim \Unif(\sqrt{d}\SphereD)} \left[ f(\X_i) \right] \\
            &= \frac{1}{1-\epsI^{d-1}} \cdot \Expt_{\X_i\sim \Unif(\sqrt{d}\SphereD)} \left[ f(\X_i) \right] \\
            &= (1+o(1))\Expt_{\X_i\sim \Unif(\sqrt{d}\SphereD)} \left[ f(\X_i) \right] \,.
        \end{split}
    \end{equation}
    Consequently, $N_t(\Xh,\Xc)\sim\mathrm{Binomial}(n,\mathsf{p})$, with 
    \begin{align*}
        \mathsf{p} 
        &= \Expt_{\X_i\sim \Unif(\sqrt{d}\SphereD\setminus \m{B}(\Xh,\sqrt{\epsI d}))} \left[ \Ind\{d^{-1}\langle \X_i,\Xh\rangle \ge t\} \right] \\
        &= (1+o(1)) \Pr_{\X_i\sim \Unif(\sqrt{d}\SphereD)} \left( d^{-1}\langle \X_i,\Xh\rangle \ge t \right)\\
        &\le 2e^{-d\frac{t^2}{2}} \,,
    \end{align*}
    where we used the standard tail bound Lemma~\ref{lem:Unif-Sphere}. We have
    \begin{align*}
        \Pr\left( X_t(\Xh|\Xc) \ge M \,\big|\,\Xh\in\HFar \right) = \sum_{l=M}^k \binom{k}{l}\mathsf{p}^l (1-\mathsf{p})^{k-l} \le \sum_{l=M}^k \left( \frac{ke}{l}\mathsf{p} \right)^l \,.
    \end{align*}
    Assuming $t\ge C\sqrt{\frac{\log k}{d}}$ for large enough (universal) $C>0$, $\frac{ke}{l}\mathsf{p}\le \frac{1}{2}$,and so  ${\sum_{l=M}^k \left( \frac{ke}{l}\mathsf{p} \right)^l \le 2\left( {ke}\mathsf{p} \right)^M}$. 
\end{proof}

\begin{lemma}\label{lem:UpperBound:0:B}
    There are $C_1,C_2,C_3$ universal such that the following holds. 

    Suppose that $t\ge C_1\sqrt{\frac{\log k}{d}}$, $\epsI<1/2$. Then with probability at least $1-e^{-d}$ over ${\Xc\sim\Unif(\sqrt{d}\SphereD)^{\otimes k}}$,
    \[
    \max_{\Xh\in\Net\cap\HFar} N_t(\Xh|\Xc) \le C_2\frac{\log(1/\epsI)}{\frac12 t^2 - \frac{\log k + C_3}{d}} =: M_0    \,.
    \]
\end{lemma}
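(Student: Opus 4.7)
The plan is to combine the pointwise tail bound from Lemma~\ref{lem:UpperBound:0:Nt} with a union bound over the net $\Net$, then tune $M_0$ so that the resulting probability is at most $e^{-d}$.

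First, for each \emph{fixed} $\Xh \in \Net$, Lemma~\ref{lem:UpperBound:0:Nt} gives the conditional bound $\Pr(N_t(\Xh|\Xc)\ge M \,|\, \Xh\in\HFar)\le (C k e^{-dt^2/2})^M$. Since $\Pr(A\cap B)\le \Pr(A\,|\,B)$, this immediately implies the unconditional bound
\[
\Pr\bigl(\Xh\in\HFar \text{ and } N_t(\Xh|\Xc)\ge M\bigr) \le (C k e^{-dt^2/2})^M.
\]
Now I would take a union bound over all candidates $\Xh\in \Net$, using the size estimate $|\Net|\le e^{C'd\log(1/\epsI)}$. This yields
\[
\Pr\!\left( \max_{\Xh\in \Net\cap \HFar} N_t(\Xh|\Xc) \ge M \right) \;\le\; \exp\!\left( C'd\log(1/\epsI) + M\bigl(\log(C) + \log k - \tfrac{d t^2}{2}\bigr) \right).
\]

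It remains to choose $M$ so that the exponent above is $\le -d$. Rearranging, it suffices that
\[
M\cdot\left( \frac{t^2}{2} - \frac{\log k + \log C}{d} \right) \;\ge\; 1 + C'\log(1/\epsI),
\]
which, using $\epsI<1/2$ so that $\log(1/\epsI)\ge \log 2$, is implied by taking
\[
M \;=\; M_0 \;=\; C_2 \cdot \frac{\log(1/\epsI)}{\tfrac12 t^2 - \tfrac{\log k + C_3}{d}}
\]
for appropriate universal constants $C_2, C_3$. The hypothesis $t \ge C_1\sqrt{\log k/d}$ with $C_1$ chosen sufficiently large (e.g.\ $C_1^2 > 2 + 2C_3$) guarantees that the denominator is positive and bounded away from $0$, so $M_0$ is well-defined and of the claimed form.

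The main step is really just the union bound; the only care needed is in matching up the constants from the cover-size estimate $|\Net|\le e^{Cd\log(1/\epsI)}$ (which contributes the $\log(1/\epsI)$ in the numerator of $M_0$) with the per-$\Xh$ decay rate $k e^{-dt^2/2}$ (which contributes the $\tfrac12 t^2 - \tfrac{\log k}{d}$ in the denominator). No serious obstacle is expected beyond bookkeeping; the condition $t\gtrsim \sqrt{\log k/d}$ exactly ensures the per-$\Xh$ probability is below $1$, and a slightly larger constant $C_1$ is needed to absorb the extra $\log(1/\epsI)$ factor from covering $\Net$ as well as the $e^{-d}$ safety margin.
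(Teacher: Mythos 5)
Your proposal is correct and follows essentially the same route as the paper: the paper bounds the number $B$ of bad candidates via $\Pr(B\ge 1)\le \Expt[B]\le |\Net|\cdot(Cke^{-dt^2/2})^{M_0}$, which is exactly your union bound over $\Net$ combined with the pointwise tail from Lemma~\ref{lem:UpperBound:0:Nt}, followed by the same tuning of $M_0$ against the cover size $e^{Cd\log(1/\epsI)}$. The constant bookkeeping you carry out matches what the paper leaves implicit.
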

\begin{proof}
    Let $B$ be the number of candidates $\Xh\in\Net$ such that $N_t(\Xh|\Xc)>M_0$. Our goal is to show that w.p. $\ge 1-e^{-d}$,   $B=0$. By Markov's inequality and Lemma~\ref{lem:UpperBound:0:Nt},  assuming $t\gtrsim \sqrt{\frac{\log k}{d}}$,
    \begin{align*}
        \Pr(B\ge 1) \le \Expt[B] = (Cke^{-d\frac{t^2}{2}})^{M_0}|\Net| \le (Cke^{-d\frac{t^2}{2}})^{M_0} e^{Cd\log(1/\epsI)}\,.
    \end{align*}
    Taking $M_0=C_2\frac{\log(1/\epsI)}{\frac12 t^2 - \frac{\log k + C_3}{d}}$ for large enough $C_2,C_3$, 
    the above probability is $\le e^{-d}$.
\end{proof}

\begin{proof}
    (Of Lemma~\ref{lem:UpperBound:Technical}, $\RateLim=0$.) Recall that, by construction, $\qClose(\Xc)\ge 0.5k^{-1}$ holds with probability $1$.
    
    Recall (\ref{lem:UpperBound:0:Eq2}), and apply Lemma~\ref{lem:UpperBound:0:B} with $t=1-\nu_0-0.25\eps_0$, which is a positive constant. Consequently, with probability $\ge 1-e^{-d}$, 
    \[
    \qFar(\Xc) \le k^{-\RegimeParam\nu_0^2} + C\log(1/\epsI)k^{-1-\frac{\RegimeParam}{16}\epsI^2} \,.
    \]
    Since $\RegimeParam\nu_0^2>1$, one may indeed choose some $c=c(\RegimeParam)$ small enough such that 
    $\qFar(\Xc)\le 2k^{-1-c\epsI^2}$ holds with probability $1-o_{\RegimeParam,\epsI}(1)$. 

\end{proof}

\subsubsection{Positive Rate ($\RateLim>0$)}

Moving on to the positive rate regime, our construction is guided by the decoder of Section~\ref{sec:Appendix:Decoding:Pos}.

Set $\alpha=\frac{1}{1+\sigma^2}$, $\tau=\frac{\sigma^2}{1+\sigma^2}$.

For fixed $\Xh,\X_i\in\sqrt{d}\SphereD$, denoting $\Y=\X_i+\sigma\Z$, $\Z\sim\m{N}(\0,\Id)$,
\begin{align}
    \Expt_{\Z}\|\alpha\Y - \Xh\|^2  
    &= \Expt_{\Z}\|\alpha\Y - \X_i\|^2 + \|\X_i-\Xh\|^2 + 2\Expt_{Z}\langle \alpha\Y - \X_i, \X_i-\Xh\rangle \nonumber\\
    &= \tau d + \|\X_i-\Xh\|^2 + 2(\alpha-1)\langle \X_i,\X_i-\Xh\rangle \nonumber \\
    &= \tau d + \alpha\|\X_i-\Xh\|^2\,,\label{eq:UpperBound:pos:Expt}
\end{align}
where we used $\|\X_i\|^2=\|\Xh\|^2=d$ and $-\langle \X_i,\Xh\rangle = \frac12(\|\X_i-\Xh\|^2-\|\X_i\|^2-\|\Xh\|^2)$. 

Assume that $\|\X_i-\Xh\|^2\le 0.5\epsI d$. 
By the Gaussian Lipschitz concentration inequality (Lemma~\ref{lem:Gaussian-Lip}), applied for $F(\Z)=d^{-1/2}\|\alpha(\X_i+\sigma\Z) - \Xh\|$, which is $(d^{-1/2}\alpha \sigma)$-Lipschitz with expectation $\Expt_{\Z}F(\Z)\le \sqrt{\Expt(F(\Z))^2}
\le \sqrt{\tau + 0.5\alpha\eps_0}$, 
\begin{align*}
    \Pr_{\Z}\left(d^{-1/2}\|\alpha\Y_i - \Xh\| \ge \sqrt{\tau+0.5\alpha \epsI} +\eta\right) \le e^{-\frac12\frac{\eta^2}{\alpha^2\sigma^2}d} \,.
\end{align*}
Consider the test
\begin{equation}\label{eq:UpperBound:pos:Test}
    \Test(\Xh,\Y) = \Ind\{d^{-1/2}\|\alpha\Y-\Xh\|\le \sqrt{{\tau}+0.5\alpha \eps_I} + \eta  \},\quad \eta=\sqrt{\frac{2\alpha^2\sigma^2\log 2}{d}} = O_{\RateLim,\RegimeParam}(d^{-1/2})\,,
\end{equation}
so that by construction, $\qClose(\Xc)\ge 0.5k^{-1}$ holds with probability $1$.

It remains to bound $\qFar(\Xc)$ with high probability. We follow the notation (\ref{eq:UpperBound:0:Q-def}), where $\Test(\Xh,\Y)$ that appears in (\ref{eq:UpperBound:0:Q-def})  is now defined by (\ref{eq:UpperBound:pos:Test}). 
Our goal is to bound, with high probability over $\Xc$,
\[
\qFar(\Xc) = \max_{\Xh\in\Net\cap\HFar} \bar{Q}_i(\Xh|\Xc) =   \max_{\Xh\in\Net\cap\HFar} \frac1k\sum_{i=1}^k  Q_i(\Xh|\Xc) \,. 
\]
As was in the zero rate case, for any fixed $\Xh$, conditioned on the event $\{\Xh\in\HFar\}$, the centers $\X_1,\ldots,\X_k$ are i.i.d. and $\X_i\sim \Unif(\sqrt{d}\SphereD\setminus\m{B}(\Xh,\sqrt{\epsI d}))$. Consequently, $\bar{Q}(\Xh|\Xc)$ is the average of $k$ i.i.d. random variables. We shall show that its expectation is very small, specifically $\Expt[\bar{Q}(\Xh|\Xc)]\le k^{-1-c}$; moreover, we shall show that it concentrates tightly about this expectation, to the extent that the maximum over the net, $\max_{\Xh\in\Net\cap\HFar} \bar{Q}_i(\Xh|\Xc)$, can be controlled as well. To do this, we use Bernstein's inequality for sums of i.i.d. bounded random variables, Lemma~\ref{lem:BernsteinBounded}. 

For brevity, we introduce some notation. For $\Xh\in \Net$ fixed, denote by $\m{E}=\m{E}(\Xh)$ the event $\m{E}=\{\Xh\in\HFar\}$ (with respect to the probability on $\Xc\sim \Unif(\sqrt{d}\SphereD)^{\otimes k}$). Denote by $\Expt^{\m{E}}[\cdot],\|\cdot\|_{\infty}^{\m{E}}$ respectively the expectation and $L_\infty$ norm with respect to the conditional measure on $\Xc$; and $\Prob^{\m{E}}(S):=\Expt^{\m{E}}[\Ind_{S}]$. 

To use Bernstein's inequality, we need two components: an $L_\infty$ bound and a bound on the expectation. 

We start with the $L_{\infty}$ bound:
\begin{lemma}\label{lem:UpperBound:pos:Linf}
    There are $C,\eps_0$, that depend on $\RateLim,\RegimeParam$, such that the following holds.
    
    For any $\Xh\in\Net$,
    whenever $\epsI<\eps_0$ and $d$ is sufficiently large, $\epsI=\Omega_{\RateLim,\RegimeParam}(d^{-1/2})$, then 
    \begin{equation}\label{eq:UpperBound:pos:Linf}
        \|Q_i(\Xh|\Xc)\|_\infty^{\m{E}} \le 2k^{-C\eps_I^2} \,.
    \end{equation}
\end{lemma}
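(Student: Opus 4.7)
The key observation is that on the conditioning event $\m{E} = \{\Xh \in \HFar\}$, every admissible realization of $\Xc$ satisfies $\|\X_i - \Xh\|^2 \ge \epsI d$ deterministically. It therefore suffices to prove the pointwise bound $\Pr(\Test(\Xh, \X_i + \sigma \Z) = 1) \le 2 k^{-C\epsI^2}$ for every fixed $\X_i \in \sqrt{d}\SphereD$ with $\|\X_i - \Xh\|^2 \ge \epsI d$; this then transfers immediately to the $L_\infty^{\m{E}}$ bound (\ref{eq:UpperBound:pos:Linf}).

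Fix such an $\X_i$ and, following the setup of (\ref{eq:UpperBound:pos:Test}), let $F(\Z) = d^{-1/2}\|\alpha(\X_i + \sigma \Z) - \Xh\|$. By identity (\ref{eq:UpperBound:pos:Expt}), $\mu^2 := \Expt F(\Z)^2 = \tau + \alpha \|\X_i - \Xh\|^2/d \in [\tau + \alpha \epsI,\, \tau + 4\alpha]$, where the upper bound uses $\|\X_i - \Xh\|^2 \le 4d$. The map $F$ is $L$-Lipschitz with $L = d^{-1/2}\alpha \sigma$, so Gaussian Poincaré gives $\Var F \le L^2$, whence $\Expt F \ge \sqrt{\mu^2 - L^2} \ge \mu - L^2/\mu$. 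Since $\mu \ge \sqrt{\tau}$ is bounded below by a positive constant depending only on $\RateLim, \RegimeParam$ and $L^2 = O_{\RateLim,\RegimeParam}(d^{-1})$, this yields $\Expt F \ge \mu - O_{\RateLim,\RegimeParam}(d^{-1})$.

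The core step is quantifying the gap between $\mu$ and the threshold $t := \sqrt{\tau + 0.5\alpha \epsI} + \eta$ from (\ref{eq:UpperBound:pos:Test}). Using $\mu^2 - (t - \eta)^2 \ge 0.5\alpha \epsI$ and $\mu + (t - \eta) \le 2\sqrt{\tau + 4\alpha}$, one gets
$$\mu - (t - \eta) \;\ge\; \frac{0.5\alpha \epsI}{2\sqrt{\tau + 4\alpha}} \;=:\; c_1 \epsI,$$
with $c_1 = c_1(\RateLim, \RegimeParam) > 0$. Combining with $\eta = O_{\RateLim,\RegimeParam}(d^{-1/2})$ and the mean estimate above, whenever $\epsI \ge C_0 d^{-1/2}$ with $C_0 = C_0(\RateLim, \RegimeParam)$ large enough, one concludes $\Expt F - t \ge c_1 \epsI / 4$.

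Finally, Gaussian Lipschitz concentration (Lemma~\ref{lem:Gaussian-Lip}) applied to $F$ gives
$$Q_i(\Xh \mid \Xc) \;=\; \Pr(F(\Z) \le t) \;\le\; \exp\!\left( -\frac{(c_1 \epsI/4)^2}{2L^2} \right) \;=\; \exp\!\left( -\frac{c_1^2 \epsI^2}{32\, \alpha^2 \sigma^2}\, d \right).$$
In the positive rate regime, $d = \log k / \Capacity(\RegimeParam \sigma^2)$, and $\alpha, \sigma^2, \Capacity(\RegimeParam \sigma^2)$ are all bounded above and bounded away from zero by constants depending only on $\RateLim, \RegimeParam$, so the exponent equals $C(\RateLim, \RegimeParam)\, \epsI^2 \log k$, yielding $Q_i(\Xh \mid \Xc) \le k^{-C\epsI^2}$, which is even stronger than the claimed bound (the factor $2$ is cosmetic). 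The main subtlety I expect is establishing that the gap $\mu - t$ is \emph{linear} in $\epsI$ rather than $\sqrt{\epsI}$; this linear-scaling is precisely what turns into $\epsI^2$ (rather than $\epsI$) in the final exponent of $k$, which is essential for the downstream application of Lemma~\ref{lem:UpperBound:Technical}.
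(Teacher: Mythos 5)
Your proof is correct, and it takes a mildly but genuinely different route from the paper's. The paper first uses rotational invariance of $\Z$ to write $\|\alpha(\X_i+\sigma\Z)-\Xh\|^2$ as a sum of $d$ i.i.d.\ shifted squared Gaussians, then applies Bernstein's inequality for sub-exponential random variables (Lemma~\ref{lem:Bernstein}) to the \emph{squared} norm, with deviation $t=0.5\alpha\epsI+O(d^{-1/2})$; the $\epsI^2$ in the exponent comes out of the $\min\{\epsI,\epsI^2\}$ term in Bernstein. You instead stay with the un-squared norm $F(\Z)=d^{-1/2}\|\alpha(\X_i+\sigma\Z)-\Xh\|$, which is $d^{-1/2}\alpha\sigma$-Lipschitz, and use the Gaussian Lipschitz concentration inequality (Lemma~\ref{lem:Gaussian-Lip}, applied to $-F$ for the lower tail) — exactly the tool the paper uses for the companion \emph{upper}-tail bound in Lemma~\ref{lem:Decode:Pos:1}. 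The extra ingredient your route needs, and which you correctly supply, is a lower bound on $\Expt F$ in terms of $\sqrt{\Expt F^2}$ (Jensen only gives the wrong direction for a lower-tail bound); your Poincar\'e step $\Var F\le L^2$ does this, and could even be replaced by integrating the two-sided tail from Lemma~\ref{lem:Gaussian-Lip} itself, so no tool outside the paper's appendix is truly required. Your identity $\mu-(t-\eta)=\frac{\mu^2-(t-\eta)^2}{\mu+(t-\eta)}\ge c_1\epsI$ is the right way to see that the gap is linear in $\epsI$ for the un-squared quantity (using that $\mu+(t-\eta)$ is bounded above by a constant in the positive-rate regime), and this is what delivers the $\epsI^2$ exponent. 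Both arguments give the same $2k^{-C\epsI^2}$ bound under the same condition $\epsI=\Omega_{\RateLim,\RegimeParam}(d^{-1/2})$; the paper's own footnote already notes that non-central $\chi^2$ deviation bounds would be a second alternative, and yours is effectively a third, arguably more self-contained, path.
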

\begin{proof}
    Fix any $\Xh,\Xc$ such that $\Xh\in\HFar$. Let $\mu=d^{-1/2}\|\alpha\X_i-\Xh\|\le 2$. By the rotational invariance of $\Z \sim \m{N}(\0,\Id)$,\footnote{An alternative method to the one below (which is itself very standard) is to use deviation inequalities for non-central $\chi^2$ random variables, that are readily available in the literature, though somewhat more ``messy''.}
    \[
        \|\alpha(\X_i+\sigma\Z)-\Xh\|^2 \overset{d}{=} \|\alpha\sigma\Z + \mu\oneVec\|^2 = \sum_{j=1}^d (\alpha\sigma Z_j + \mu)^2 \,,
    \]
    where $\oneVec=(1,\ldots,1)$ and $\overset{d}{=}$ denotes equality in distribution. 
    The expression above is a sum of i.i.d. sub-Exponential random variables. The sub-Exponential norm, denoted $\|\cdot\|_{\psi_1}$, is upper bounded by
    \[
        \|(\alpha\sigma Z_j+\mu)^2\|_{\psi_1} \lesssim \alpha^2\sigma^2+\mu^2 = O_{\RegimeParam,\RateLim}(1) \,.
    \]
For background on sub-Exponential random variables, including the definition of the sub-Exponential norm (and Orlicz norms in general), we refer to the book \cite[Chapter 2]{vershynin2018high}. Recall by (\ref{eq:UpperBound:pos:Expt}) that 
\[
    d^{-1}\Expt \|\alpha\sigma\Z + \mu\oneVec\|^2 = \tau  + \alpha d^{-1}\|\X_i-\Xh\|^2 \ge \tau  + \alpha\epsI     \,,
\]
and set
\begin{align*}
    t 
    &= \tau + \alpha\epsI - \left(\sqrt{{\tau}+0.5\alpha \eps_I} + \eta\right)^2 \\
    &=     \tau + \alpha\epsI - \left(\sqrt{{\tau}+0.5\alpha \eps_I} + O_{\RateLim,\RegimeParam}(d^{-1/2})\right)^2 \\
    &= 0.5\alpha\epsI + O_{\RateLim,\RegimeParam}(d^{-1/2}) \,.
\end{align*}
Using Bernstein's inequality for sub-Exponential random variables, Lemma~\ref{lem:Bernstein}, 
    \begin{align*}
        Q_i(\Xh|\Xc)
        &=
        \Pr_{\Z}(\Test(\Xh,\X_i+\sigma\Z)=1)\\
        &=
        \Pr\left(d^{-1}\|\alpha(\X_i+\sigma\Z)-\Xh\|^2 \le \left(\sqrt{{\tau}+0.5\alpha \eps_I} + \eta\right)^2 \right)\\
        &\le
        \Pr\left(d^{-1}\|\alpha(\X_i+\sigma\Z)-\Xh\|^2 - \Expt_{\Z} d^{-1}\Expt_{Z}\|\alpha(\X_i+\sigma\Z)-\Xh\|^2 \le -t \right) \\
        &\le 2\exp\left( -c\min\left\{ \frac{(d \cdot t)^2}{d\cdot \|\alpha\sigma Z_j+\mu\|_{\psi_1}^2}, \frac{d\cdot t}{\|\alpha\sigma Z_j+\mu\|_{\psi_1}} \right\} \right) \\
        &\overset{(\star)}{\le} 2 \exp\left( -C(\RateLim,\RegimeParam)\min\{\epsI,\epsI^2\}\cdot d \right)\\
        &= 2 \exp\left( -C(\RateLim,\RegimeParam)\epsI^2\cdot d \right)
    \end{align*}
    where to get ($\star$), we used $\|\alpha\sigma Z_j+\mu\|_{\psi_1}=O_{\RegimeParam,\RateLim}(1)$ and $\epsI\gtrsim d^{-1/2}$. Finally, to deduce (\ref{eq:UpperBound:pos:Linf}), recall that $k=e^{\RateLim \cdot d}$.
\end{proof}

Moving on to the expectation:
\begin{lemma}\label{lem:UpperBound:pos:Expt}
    For any $\epsI<\eps_0$ sufficiently small and $d$ sufficiently large such that $\epsI=\Omega_{\RateLim,\RegimeParam}(d^{-1/2})$,
    \[
    \Expt^{\m{E}}[Q_i(\Xh|\Xc)] \le 2k^{-1-c}\,,
    \]
    where $c,\eps_0$ depend on $\RateLim,\RegimeParam$. 
\end{lemma}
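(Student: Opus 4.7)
The plan is to follow the template of Lemma~\ref{lem:UpperBound:pos:Linf}, but to exploit the randomness of $\X_i$ via a Chernoff bound, thereby extracting an additional factor of $k^{-1}$ beyond the worst-case $L_\infty$ control. I will first invoke the change-of-measure estimate (\ref{eq:ChangeOfMeasureSphere}) applied to the nonnegative $Q_i(\Xh|\Xc)$, which depends on $\Xc$ only through $\X_i$, to obtain $\Expt^{\m E}[Q_i(\Xh|\Xc)] \le (1+o(1)) P$, where
\[
P := \Pr_{\X_i \sim \Unif(\sqrt{d}\SphereD),\,\Z\sim\m{N}(\0,\Id)}\!\left(\|\alpha(\X_i+\sigma\Z)-\Xh\|^2 \le r^2 d\right),
\]
and $r^2 = \tau + 0.5\alpha\epsI + O_{\RateLim,\RegimeParam}(d^{-1/2})$.

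I would then bound $P$ by the Chernoff inequality $P \le e^{sr^2 d}\,\Expt[e^{-s\|\alpha(\X_i+\sigma\Z)-\Xh\|^2}]$ for $s>0$, evaluating the joint MGF in two stages. First, the $\Z$-integral is Gaussian,
\[
\Expt_\Z e^{-s\|\alpha\X_i+\alpha\sigma\Z-\Xh\|^2} = (1+2s\alpha^2\sigma^2)^{-d/2} \exp\bigl(-s\|\alpha\X_i-\Xh\|^2/(1+2s\alpha^2\sigma^2)\bigr).
\]
Second, writing $\|\alpha\X_i-\Xh\|^2 = d(\alpha^2+1-2\alpha\rho)$ with $\rho = d^{-1}\langle\X_i,\Xh\rangle$ (whose density under the uniform prior is proportional to $(1-\rho^2)^{(d-3)/2}$), the outer $\X_i$-expectation reduces to the MGF $\Expt[e^{bd\rho}]$ with $b := 2s\alpha/(1+2s\alpha^2\sigma^2)$. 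Laplace's method gives $d^{-1}\log\Expt[e^{bd\rho}] = b\rho^* + \tfrac12\log(1-(\rho^*)^2) + o(1)$, where $\rho^* = (-1+\sqrt{1+4b^2})/(2b)$ is the saddle point of $\rho \mapsto b\rho + \tfrac12\log(1-\rho^2)$.

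The resulting Chernoff exponent $\Phi(s,\epsI)$ is an explicit continuous function of $s$. At $\epsI = 0$ (i.e., $r^2 = \tau$), minimization over $s$ yields precisely $-\Capacity(\sigma^2)$, matching the exponent attained by a Gaussian codebook and reflecting the classical fact (cf.~Proposition~\ref{prop:DecodingSphericalCodes} and \cite{shannon1959probability}) that spherical and Gaussian ensembles produce the same AWGN large-deviation rate in high dimensions. By continuity in $\epsI$, $\min_s \Phi(s,\epsI) = -\Capacity(\sigma^2) + O_{\RateLim,\RegimeParam}(\epsI)$. Since $\RegimeParam > 1$, the gap $\kappa := \Capacity(\sigma^2) - \Rate$ is a strictly positive constant depending only on $\RateLim,\RegimeParam$; choosing $\eps_0 = \eps_0(\RateLim,\RegimeParam)$ small enough that the $O(\eps_0)$ correction stays below $\kappa/2$ gives $P \le e^{-d(\Rate + \kappa/2)} = k^{-1-c}$ for some $c = c(\RateLim,\RegimeParam) > 0$, whence $\Expt^{\m E}[Q_i(\Xh|\Xc)] \le 2 k^{-1-c}$ for $d$ sufficiently large.

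The main technical obstacle is the explicit minimization of $\Phi(\cdot,0)$, i.e., verifying that the combined Gaussian--spherical saddle-point optimum collapses to $-\Capacity(\sigma^2)$. This requires algebraic manipulation using the identities $\alpha\sigma^2=\tau$, $\alpha+\tau=1$, and $\Capacity(\sigma^2)=\tfrac12\log(1/\tau)$, exploiting the relation $\rho^*/(1-(\rho^*)^2)=b$ to eliminate $\rho^*$ from the expression for $\Phi$; once this identity is established, the $O(\epsI)$ robustness follows by smoothness of the optimizer in the threshold $r^2$.
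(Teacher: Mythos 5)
Your proposal is correct in outline but takes a genuinely different, and considerably heavier, route than the paper. The first reduction is the same: conditioned on $\m{E}$, $\X_i\sim\Unif(\sqrt{d}\SphereD\setminus\m{B}(\Xh,\sqrt{\epsI d}))$ and the change of measure (\ref{eq:ChangeOfMeasureSphere}) removes the conditioning at the cost of a $1+o(1)$ factor. From there the paper does not compute any MGF: using rotational invariance to swap the roles of $\X_i$ and $\Xh$, the event becomes ``a uniform point on $\sqrt{d}\SphereD$ lands in the Euclidean ball $\m{B}(\alpha\Y,\sqrt{\tau_2 d})$,'' whose probability is bounded by the surface-area ratio $\tau_2^{(d-1)/2}$ via monotonicity of surface area under containment (Lemma~\ref{lem:Decode:Pos:2}); this yields the exponent $\Capacity(\sigma^2)-\tfrac12\log(\tau_2/\tau)$ in one line, and the conclusion follows exactly as in your last step since $\Capacity(\sigma^2)-\Capacity(\RegimeParam\sigma^2)>0$. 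Your route is instead the classical Shannon--Gallager exponential-tilting computation (Gaussian inner integral, spherical saddle point for $\Expt[e^{bd\rho}]$), and it does work, but be aware of two points. First, the crux of your argument --- that $\min_s\Phi(s,0)=-\Capacity(\sigma^2)$ --- is asserted rather than verified, and it is the entire content of the bound; moreover, the direction you need is $\min_s\Phi(s,0)\le-\Capacity(\sigma^2)$, which does \emph{not} follow from knowing the true large-deviation exponent (that only gives $\min_s\Phi\ge-\Capacity(\sigma^2)$, since Chernoff cannot beat the truth). You must either exhibit an explicit $s$ achieving $\Phi(s,\epsI)\le-\Capacity(\sigma^2)+O(\epsI)$ or invoke tightness of the Legendre transform (G\"artner--Ellis, using differentiability of the limiting scaled cumulant generating function of $d^{-1}\|\alpha\Y-\Xh\|^2$). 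Second, the Laplace approximation and the continuity-in-$\epsI$ of the minimizer introduce subexponential prefactors and require the relevant $s$-range to be compact; these are routine but should be stated. In short: your approach buys an exact large-deviation exponent and generalizes beyond spherical geometry, while the paper's buys a two-line proof; for the purposes of this lemma the geometric bound is all that is needed.
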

\begin{proof}
    Observe that the test, defined in (\ref{eq:UpperBound:pos:Test}), is an orthogonally invariant function of its argument; that is, $\Test(\x,\y)=\Test(R\x,R\y)$ for any $R\in O(d)$, where $O(d)$ is the group of $d\times d$ orthogonal matrices. Introduce an independent $R\sim \mathrm{Haar}(O(d))$, and note that, by the orthogonal invariance of $\Z\sim\m{N}(\0,\Id)$, $R(\X_i+\sigma\Z)\overset{d}{=} R\X_i + \sigma\Z$. 

    Now, conditioned on $R\X_i$, the conditional distribution of $R\Xh$ is $\sim \Unif(\sqrt{d}\SphereD\setminus\m{B}(R\X_i,\sqrt{\epsI d}))$. Thus, again owing to orthogonal invariance, we have
    \begin{align*}
        \Expt^{\m{E}}[Q_i(\Xh|\Xc)] 
        &= \Pr_{\substack{\W\sim \Unif(\sqrt{d}\SphereD\setminus\m{B}(\x,\sqrt{\epsI d})),\\ \Z\sim\m{N}(\0,\Id)}}\left(\|\alpha(\x+\sigma\Z)-\W \| \le \sqrt{{\tau}+0.5\eps_I} + \eta\right) \\
        &\overset{(\star)}{\le} \frac{1}{1-\epsI^{d-1}} \Pr_{\substack{\W\sim \Unif(\sqrt{d}\SphereD),\\ \Z\sim\m{N}(\0,\Id)}}\Pr\left(\|\alpha(\x+\sigma\Z)-\W \| \le \sqrt{{\tau}+0.5\eps_I} + \eta\right)\,,
    \end{align*}
    where $\x\in\sqrt{d}\SphereD$ is any fixed vector, and $(\star)$ follows from (\ref{eq:ChangeOfMeasureSphere}).  The probability above has been bounded in Lemma~\ref{lem:Decode:Pos:2}, which yields
    \begin{align*}
        \le (1+o(1))\left( 1+\frac{1}{\sigma^2} \right)^{1/2}e^{-\left(\Capacity(\sigma^2) - \frac12\log\frac{(\sqrt{{\tau}+0.5\eps_I} + \eta)^2}{\tau} \right)d}\,.
    \end{align*}
    Since $k=e^{\Capacity(\RegimeParam\sigma^2)d}$ for $\RegimeParam>1$ constant (hence $\Capacity(\sigma^2)-\Capacity(\RegimeParam\sigma^2)$ is a positive constant), and assuming $\epsI\le\eps_0$ is small enough, we get that $\Expt^{\m{E}}[Q_i(\Xh|\Xc)] \le 2k^{-1-c}$ for some $c=c(\RegimeParam,\RateLim)$, for $d$ large enough.
\end{proof}

We are ready to bound $\qFar(\Xc)$:
\begin{lemma}\label{lem:UpperBound:pos:final}
    There are $C,\eps_0$ that depend on $\RateLim,\RegimeParam$, such that whenever $\epsI<\eps_0$ then
    \[
    \qFar(\Xc)\le 3k^{-1-C\epsI^2}    
    \] 
    holds with probability $1-o_{\epsI,\RateLim,\RegimeParam}(1)$ over $\Xc\sim\Unif(\sqrt{d}\SphereD)^{\otimes k}$. 
\end{lemma}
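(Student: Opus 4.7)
The strategy is to combine the two preceding lemmas through a union bound over the net together with a Bernstein-type concentration inequality. Denote $\m{E}(\Xh) = \{\Xh \in \HFar\}$. Since $\{\qFar(\Xc) > T\}$ is contained in the union $\bigcup_{\Xh \in \Net}\left(\m{E}(\Xh) \cap \{\bar{Q}(\Xh|\Xc) > T\}\right)$, a union bound gives $\Pr(\qFar(\Xc) > T) \le \sum_{\Xh \in \Net} \Pr^{\m{E}(\Xh)}\!\left(\bar{Q}(\Xh|\Xc) > T\right)$. For a fixed $\Xh \in \Net$, conditionally on $\m{E}(\Xh)$, the centers $\X_1,\ldots,\X_k$ are i.i.d.\ with distribution $\Unif(\sqrt{d}\SphereD \setminus \m{B}(\Xh, \sqrt{\epsI d}))$; since $Q_i(\Xh|\Xc)$ depends on $\Xc$ only through $\X_i$, the variables $\{Q_i(\Xh|\Xc)\}_{i=1}^k$ are i.i.d.\ under $\Pr^{\m{E}(\Xh)}$, with uniform bound $K := 2k^{-C\epsI^2}$ (Lemma~\ref{lem:UpperBound:pos:Linf}) and mean at most $2k^{-1-c}$ for some fixed constant $c = c(\RateLim,\RegimeParam) > 0$ (Lemma~\ref{lem:UpperBound:pos:Expt}). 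Nonnegativity yields the variance bound $\Var(Q_i) \le K \cdot \Expt^{\m{E}} Q_i \le 4 k^{-1 - C\epsI^2 - c}$, so the sum $S = \sum_i Q_i$ has mean at most $2k^{-c}$ and variance at most $4k^{-C\epsI^2 - c}$.

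Now I apply Bernstein's inequality (Lemma~\ref{lem:BernsteinBounded}) to $S$ with a deviation $t$ chosen so that $\Expt S + t$ is at most $3k^{-C'\epsI^2}$, for a constant $C'$ to be chosen, corresponding to the threshold $T = 3k^{-1-C'\epsI^2}$ for $\bar{Q}$. For sufficiently small $\epsI$ we have $C'\epsI^2 < c$, hence $2k^{-c} \ll k^{-C'\epsI^2}$, so it suffices to take $t = k^{-C'\epsI^2}$. Choosing $C' = C/2$, I verify that $Kt \asymp k^{-(3C/2)\epsI^2}$ dominates the variance term $\asymp k^{-C\epsI^2 - c}$ in the Bernstein denominator (again since $C\epsI^2 \ll c$ for $\epsI$ small), so that the Bernstein exponent is of order $t/K \asymp k^{(C/2)\epsI^2}$. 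In the positive-rate regime $\log k = \RateLim d\,(1+o(1))$, this equals $\exp\!\bigl(\Omega(\epsI^2 d)\bigr)$, i.e., the probability of deviation is \emph{doubly exponentially small} in $d$.

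Combined with the union bound, this yields $\Pr(\qFar(\Xc) > 3k^{-1-(C/2)\epsI^2}) \le e^{Cd\log(1/\epsI)} \cdot \exp\!\left(-\Omega\!\bigl(e^{\Omega(\epsI^2 d)}\bigr)\right)$, which tends to $0$ as $d \to \infty$ for any fixed $\epsI, \RateLim, \RegimeParam$, and the lemma follows after absorbing the factor $1/2$ into the constant $C$. The bulk of the technical work has already been done in Lemmas~\ref{lem:UpperBound:pos:Linf} and~\ref{lem:UpperBound:pos:Expt}; the only delicate point in what remains is the balancing of exponents ensuring that the Bernstein range term dominates the variance term, which is why one must take the final constant to be strictly smaller than the one appearing in the $L_\infty$ bound. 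Given the enormous (double-exponential) margin available, this balancing is easy, and essentially any fraction of the constant from Lemma~\ref{lem:UpperBound:pos:Linf} works.
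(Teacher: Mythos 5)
Your proposal is correct and follows essentially the same route as the paper's proof: condition on $\{\Xh\in\HFar\}$ to get i.i.d.\ variables $Q_i(\Xh|\Xc)$, combine the $L_\infty$ bound of Lemma~\ref{lem:UpperBound:pos:Linf} with the expectation bound of Lemma~\ref{lem:UpperBound:pos:Expt} to control the variance, apply Bernstein at threshold $\asymp k^{-1-C\epsI^2/2}$ to get a doubly-exponentially small deviation probability, and finish with a union bound over the net. The exponent bookkeeping (range term dominating the variance term, final exponent $k^{\Theta(\epsI^2)}$ beating $|\Net|\le e^{Cd\log(1/\epsI)}$) matches the paper's computation.
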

\begin{proof}
    Fix $\Xh\in\Net$. We start by showing that conditioned on $\Xh\in\HFar$, $\bar{Q}(\Xh|\Xc)$ is very small with high probability; to that end, we shall use Bernstein's inequality, Lemma~\ref{lem:BernsteinBounded}. 

    By Lemma~\ref{lem:UpperBound:pos:Linf}, $\|Q_i(\Xc|\Xh)\|_{\infty}^{\m{E}}\le 2k^{-C_1\eps_I^2}$. By Lemma~\ref{lem:UpperBound:pos:Expt}, $\Expt^{\m{E}}[Q_i(\Xh|\Xc)]\le 2k^{-1-C_2}$. 
    Consequently, for small enough $\eps_I$, $\|Q_i(\Xc|\Xh)-\Expt^{\m{E}}[Q_i(\Xh|\Xc)]\|_{\infty}^{\m{E}} \le 4k^{-C_1\epsI^2}$. 
    Note moroever that since $Q_i(\Xh|\Xc)\ge0$,
    \[
        \Var^{\m{E}}(Q_i(\Xc|\Xh)) \le \Expt^{\m{E}}((Q_i(\Xc|\Xh)^2) \le  \|Q_i(\Xc|\Xh)\|_{\infty}^{\m{E}}\Expt^{\m{E}}[Q_i(\Xh|\Xc)]\le 4k^{-1-C_2-C_1\epsI^2} \,.    
    \] 
    Since $Q_1(\Xh|\Xc),\ldots,Q_k(\Xh|\Xc)$ are i.i.d. conditioned on $\m{E}$, by Bernstein's inequality for sums of independent bounded random variables, for some universal $c$,
    \begin{align*}
        \Prob^{\m{E}}(\bar{Q}(\Xh|\Xc)\ge t + 2k^{-1-C_2}) \le 2\exp\left(-c k\min\left\{ \frac{t^2}{\Var^{\m{E}}(Q_i(\Xc|\Xh))}, \frac{t}{\|Q_i(\Xc|\Xh)-\Expt^{\m{E}}[Q_i(\Xh|\Xc)]\|_{\infty}^{\m{E}}} \right\}\right)\,.
    \end{align*}
    Setting $t=k^{1-C_1\epsI^2/2}$ and assuming $\epsI$ is small enough, for $c$ (perhaps other) universal,
    \begin{align*}
        \Prob^{\m{E}}(\bar{Q}(\Xh|\Xc)\ge 3k^{-1-C_1\epsI^2/2} ) 
        \le \exp\left(-ck^{C_1\epsI^2/2}\right) \,.
    \end{align*} 

    So far we have shown that with overwhelming probability over the configuration $\Xc$, conditioned on $\Xh\in \HFar$, $\bar{Q}(\Xh|\Xc)$ is very small.
We now wish to control $\qFar(\Xc)=\max_{\Xh\in\Net\cap\HFar}\bar{Q}(\Xh|\Xc)$. 

    Let $N=\sum_{\Xh\in\Net} \Ind\{\Xh\in\HFar\textrm{ and }\bar{Q}(\Xh|\Xc)\ge 3k^{-1-C_1\epsI^2/2}\}$. Of course, $N=0$ implies that $\qFar(\Xc)\le 3k^{-1-C_1\epsI^2/2}$. By Markov's inequality,
    \[
    \Pr(N\ge 1) \le \Expt[N] \le |\Net|  \Prob^{\m{E}}(\bar{Q}(\Xh|\Xc)\ge 3k^{-1-C_1\epsI^2/2} ) \le e^{Cd\log(1/\epsI)} \exp\left(-ck^{C_1\epsI^2/2}\right)\,.
    \]
    Recall that $k$ is exponential in $d$; consequently, for $\epsI\ge C_2 \sqrt{\frac{\log d}{d}}$ with large enough $C_2$ (in particular, whenever $\epsI>0$ is constant),
    \[
        e^{Cd\log(1/\epsI)} \exp\left(-ck^{C_1\epsI^2/2}\right) 
        \le e^{-Cd\log d} \exp(-cd^2)
         = o_{\epsI,\RegimeParam,\RateLim}(1) \,.  
    \]
\end{proof}

\begin{proof}
    (Of Lemma~\ref{lem:UpperBound:Technical}). 
    Follows immediately from Lemma~\ref{lem:UpperBound:pos:final}, where note that since $\qFar(\Xc)\le 1$, we may change the prefactor $3$ in Lemma~\ref{lem:UpperBound:pos:final} to whatever number $>1$ we like (at the expense of changing the exponent). We do so for convenience.
\end{proof}

\subsection{Proof of Lemma~\ref{lem:UpperBound:TechnicalCorr}}
\label{sec:proof-lem:UpperBound:TechnicalCorr}

We start by showing that w.h.p., $\NetClose\cap \HFar=\emptyset$, namely, we do not retain candidates which are $\sqrt{\epsI d}$-far from all centers $\X_i$.

Let $\Xh\in\Net\cap \HFar$.
Conditioned on the event of Lemma~\ref{lem:UpperBound:Technical}, the random variables $\{\Test(\Xh,\Y)\}$ are an i.i.d. sequence of $N$ Bernoulli trials, with success probability $\le 2k^{-1-c\epsI^2}$. By Chernoff's inequality (Lemma~\ref{lem:Chernoff}) and the estimate of Lemma~\ref{lem:KL-binary}, 
\begin{align*}
    \Pr (\Xh\in \NetClose) \le e^{-C_1N\epsI^2 k^{-1}\log k} \,,
\end{align*}
for some $C_1=C_1(\RateLim,\RegimeParam)$, and assuming $k$ is large enough. Taking a union bound,
\begin{align*}
    \Pr(\NetClose\cap\HFar \ne \emptyset) \le |\HFar\cap\Net|e^{-C_1N\epsI^2 k^{-1}\log k} \le e^{Cd\log(1/\epsI)}e^{-C_1N\epsI^2 k^{-1}\log k} \,.
\end{align*}
Observe that this is $o_{\RateLim,\RegimeParam}(1)$ whenever $N\ge C_2 \frac{\log(1/\epsI)}{\epsI^2}k\frac{d}{\log k}$, for some $C_2=C_2(\RateLim,\RegimeParam)$ large enough. Now, when $\RateLim>0$ then $\frac{d}{\log k}=1/\Rate=1/\RateLim$ is just a constant, and so is $\sigma^2$; so for a suitably modified $C_3=C_3(\RateLim,\RegimeParam)$, $N\ge C_3 \frac{\log(1/\epsI)}{\epsI^2}k\sigma^2$ suffices.  As for the case $\RateLim=0$, $\Rate=\Capacity(\RegimeParam\sigma^2)=\frac{1}{\RegimeParam\sigma^2} + O(\sigma^{-4})$, and so $\frac{d}{\log k} \lesssim \sigma^2$; consequently, $N\ge C_3 \frac{\log(1/\epsI)}{\epsI^2}k\sigma^2$ suffices. This show the first claim of the Lemma.

Moving on, we need to show that for many $i\in [k]$, $\NetClose$ indeed contains a vector within $\sqrt{\epsI d}$-distance to $\X_i$. 

Fix any $\Xh_1,\ldots,\Xh_k\in\Net$ such that $\|\X_i-\Xh_i\|^2\le0.5\epsI d$. Since $\Net$ is an $\sqrt{0.5\epsI d}$-net of $\sqrt{d}\SphereD$, there certainly are such vectors in $\Net$. We shall show that ${|\NetClose\cap\{\Xh_1,\ldots,\Xh_k\}|\ge(1-\varphi)k}$ holds with the claimed probability; this clearly suffices. By the properties of the test, Lemma~\ref{lem:UpperBound:Technical}, $\Pr(\Test(\Xh_i,\Y)=1)\ge 0.5k^{-1}$. By Chernoff's inequality (Lemma~\ref{lem:Chernoff}),
\[
\Pr(\Xh_i\notin \NetClose) \le e^{-cNk^{-1}},    
\]
for some universal $c$. Consequently, by Markov's inequality,
\[
\Pr\left( \sum_{i=1}^k \Ind\{\Xh_i\notin \NetClose\} \ge \varphi k \right) \le \varphi^{-1}e^{-cNk^{-1}} \,.
\]
Thus, when $N\ge Ck\log(1/\varphi)$, for some universal $C$, we get that with probability $\ge 1-\varphi$, we have ${|\NetClose\cap\{\Xh_1,\ldots,\Xh_k\}|\ge(1-\varphi)k}$. 

\qed

\subsection{Proof of Lemma~\ref{lem:UpperBound-StepI-Final}}
\label{sec:proof-lem:UpperBound-StepI-Final}

As discussed in the main text, it suffices to show that with probability $1-o_{\RegimeParam,\RateLim}(1)$, the centers in $\Xc$ have minimal distance $\sqrt{Ld}$ for some $L=L(\RegimeParam,\RateLim)$. In that case, choosing $\eps_0\le L/16$, the required results follows immediately from Lemmas~\ref{lem:UpperBound:Technical} and~\ref{lem:UpperBound:TechnicalCorr}.

The following argument is standard. Sample centers $\X_1,\ldots,\X_k$ sequentially. Let $E_i$ be the event that $\X_i\notin \bigcup_{l=1}^{i-1} \m{B}(\X_l,\sqrt{Ld})$ for $i=2,\ldots,k$. Clearly, $\Xc$ has minimal distance $\ge \sqrt{L d}$ if and only if $\bigcap_{l=2}^k E_i$ holds. Notice that $\Pr(E_i|\bigcap_{l=1}^{i-1}E_i)\ge 1-L^{d-1}(i-1)$, since $\X_i$ has to evade $i-1$ disjoint neighborhoods $\sqrt{d}\SphereD\cap \m{B}(\X_i,\sqrt{RLd})$, that amount to total surface area at most ${\le (i-1)\Surf(\partial \m{B}(\X_i,\sqrt{Ld})) = (i-1)L^{d-1}\Surf(\sqrt{d}\SphereD)}$. Somewhat crudely, we lower bound:
\begin{equation}\label{eq:Lemma:Distance}
    \Pr\left(\bigcap_{i=2}^k E_i\right) = \prod_{i=2}^k \Pr(E_l|\bigcap_{l=2}^{i-1}E_l) \ge (1-L^{d-1}k)^k\,.
\end{equation}
Whenever $L^{d-1}k=o(1/k)$, the bound in (\ref{eq:Lemma:Distance}) tends to $1$ as $k\to\infty$. When $k=e^{o(d)}$, any constant $L<1$ will work. When $k=e^{d\RateLim}$, any constant $L<e^{-2\RateLim}$ will work.

\qed

\section{Proofs for Section~\ref{sec:UpperBound-StepII}}
\label{sec:Appendix:StepII}


\subsection{Decoding Using a Corrupted Codebook}
\label{sec:Appendix:StepII:DecodingApprox}

Upon successful completion of Step I, by Lemma~\ref{lem:UpperBound-StepI-Final}, we will have constructed a list $\XchI=(\XhI_1,\ldots,\XhI_m)$ of size $m\ge (1-\varphi)k$, such that every member of $\XchI$ is $\sqrt{\epsI d}$-close to some unique codeword of $\Xc$. In this section, we show that whenever $\epsI$ is smaller than \emph{some particular threshold} $\eps_0=\eps_0(\RegimeParam,\RateLim)$, then $\XchI$ may be used to successfully decode messages encoded with $\Xc$, in the following sense:
\begin{itemize}
    \item Whenever $\XchI$ contains a point $\XhI_l$ close to $\X_{\isf(l)}\in\Xc$, applying the decoder on observation ${\Y=\X_{\isf(l)}+\sigma \Z}$ will indeed return, with high probability, the correct index $l$.
    \item As importantly, whenever $\XchI$ does not contain a keyword which is close to $\X_i$, then applying the decoder on observation $\Y=\X_i+\sigma\Z$ will consistently return an error symbol ``$\DecodeErrorSymb$''; that is, the decoder will not erroneously assign a sample $\Y$ to a label even if it does not have a close approximation for its corresponding center $\X_i$.
\end{itemize} 
We now proceed to formalize the discussion above.

Denote by 
\begin{equation}
    \label{eq:Approx}
    \Approx(\Xc) \subseteq \bigcupdot_{m=0}^k (\sqrt{d}\SphereD)^{\otimes m}
\end{equation}
the set of all lists $\XcTilde=(\XTilde_1,\ldots,\XTilde_m)$ (for \emph{any} $0\le m \le k$), for which there exists a permutation $\isf:[k]\to[k]$ satisfying 
\begin{equation}
    \label{eq:Approx:Perm}
    \|\XTilde_l - \X_{\isf(l)}\|^2 \le \eps d \quad  \textrm{for all }1\le l \le m\,.
\end{equation}
A \emph{family of decoders} is a mapping $\XcTilde \mapsto \Decode(\cdot|\XcTilde)$, mapping codebooks of any length $0\le m \le k$ to decision rules $\RR^d\to[k]\cup\{\DecodeErrorSymb\}$. $\Decode(\cdot|\XcTilde)$ may depend on $d,k,\sigma^2$ as well, and this shall be implied from now on. 

For example, consider the nearest-neighbor family of decoders:
\begin{equation}
    \label{eq:Approx:NN}
    \DecNN(\Y|\XcTilde) = \argmin_{l\in [m]} \|\Y-\XTilde_l\| \,.
\end{equation}
Recall that when $\XcTilde=\Xc$, the resulting decoder $\DecNN(\cdot|\Xc)$ is optimal (in the sense of average error) for decoding a message $\ell\sim \Unif([k])$ encoded using $\Xc$. However, when $\XcTilde$ is only a partial sub-codebook of $\Xc$, using the decoder (\ref{eq:Approx:NN}) might not be a good idea from a practical standing: an observation $Y=\X_i+\sigma\Z$ corresponding to a codeword $\X_i$ which is absent from $\XcTilde$ will \emph{necessarily} be decoded into an erroneous message. It is desirable that having identifed such a case, the decoder would instead \emph{declare} an error.

For a codebook $\Xc\in(\sqrt{d}\SphereD)^{\otimes k}$ and $\XcTilde\in\Approx_\eps(\Xc)$, let $\isf:[k]\to[k]$ be the permutation that satisfies (\ref{eq:Approx:Perm}). When there is more than one such permutation (as will surely be the case when $m\le k-2$), suppose that $\isf$ is chosen in some systematic way, such that the assignment $\Approx_\eps(\Xc)\to \mathrm{Sym}(k)$, $\XcTilde\mapsto \isf$ is well-defined. Note that $\isf([m])$ are the indices $\subseteq [k]$ of codewords $\X_i$ for which $\XcTilde$ contains an $\sqrt{\eps d}$-distance approximation. 

Similar to (\ref{eq:Error-avg-def}), we consider the error probability of decoding a message $i\in[k]$, encoded using codebook $\Xc$, with a decoder $\Decode(\cdot|\XcTilde)$, $\XcTilde\in \Approx_\eps(\Xc)$:
\begin{equation}\label{eq:Approx:P-err}
    P_{approx,i}(\sigma^2|\Xc,\XcTilde,\Decode(\cdot|\cdot)) = \begin{cases}
        \Pr(\isf^{-1}(i) \ne \Decode(\X_i+\sigma\Z|\XcTilde))\quad&\textrm{if }i\in \isf([m]) \\
        \Pr(\DecodeErrorSymb \ne \Decode(\X_i+\sigma\Z|\XcTilde))\quad&\textrm{if }i\notin \isf([m])
    \end{cases} \,.
\end{equation}
The ``twist'' over (\ref{eq:Error-avg-def}) is that if $i\notin \isf([m])$, we consider decoding to be successful if the decoder declares error.

We are ready to state the technical result of this section.

\begin{proposition}\label{prop:Approx}
    Suppose that $\RegimeParam>1$, $\Rate=\Capacity(\RegimeParam\sigma^2)$ (in either positive or zero rate, as before), $\Xc\sim \Unif(\sqrt{d}\SphereD)^{\otimes k}$. 

    There is $\eps_0=\eps_0(\RegimeParam,\RateLim)$ and a family of decoders $\Decode(\cdot|\cdot)$ such that for all $\eps\le \eps_0$:
        \[
        \lim_{d\to\infty}\Expt\left[ \sup_{\XTilde\in\Approx_\eps(\Xc)} \frac{1}{k}\sum_{i=1}^k P_{approx,i}(\sigma^2|\Xc,\XcTilde,\Decode(\cdot|\cdot)) \right] = 0 \,.
    \]
\end{proposition}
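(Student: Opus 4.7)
The plan is to construct the decoder family by adapting the capacity-achieving decoders from Section~\ref{sec:Appendix:Decoding}---the MMSE-type $\DecMMSE$ for positive rate, the correlation-type $\DecCORR$ for zero rate---to operate on the perturbed codebook $\XcTilde$, with decision thresholds loosened by an additive slack that absorbs the $\sqrt{\eps_0 d}$-scale perturbation in each codeword. Focusing first on the positive-rate regime, let $\tau_1<\tau_2$ be the thresholds chosen in Lemmas~\ref{lem:Decode:Pos:1}--\ref{lem:Decode:Pos:2} so that $\ExptCodeErrAvg(\sigma^2)=o(1)$ via (\ref{eq:Decode:Pos:ErrorProbability}). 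Set $\tau_1'=(\sqrt{\tau_1}+2\sqrt{\eps_0})^2$ and $\tau_2'=(\sqrt{\tau_2}-2\sqrt{\eps_0})^2$, and let $\Decode(\Y|\XcTilde)=l$ iff $d^{-1}\|\alpha\Y-\XTilde_l\|^2\le \tau_1'$ and $d^{-1}\|\alpha\Y-\XTilde_{l'}\|^2>\tau_2'$ for every $l'\ne l$, else output $\DecodeErrorSymb$. Choosing $\eps_0=\eps_0(\RegimeParam,\RateLim)$ small enough that $\sqrt{\tau_2}-\sqrt{\tau_1}>4\sqrt{\eps_0}$ ensures $\tau_1'\le\tau_2'$, and the same decoder serves all $\eps\le\eps_0$.

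The analysis couples the $\XcTilde$-decoder to its clean counterpart via the triangle inequality. Writing $r_l:=d^{-1/2}\|\alpha\Y-\XTilde_l\|$ and $r_l^*:=d^{-1/2}\|\alpha\Y-\X_{\isf(l)}\|$, the assumption $\|\XTilde_l-\X_{\isf(l)}\|\le\sqrt{\eps d}$ yields $|r_l-r_l^*|\le\sqrt{\eps}\le\sqrt{\eps_0}$, \emph{uniformly in $\XcTilde\in\Approx_\eps(\Xc)$}. Combined with the threshold slack, this gives the implications $r_l>\sqrt{\tau_1'}\Rightarrow r_l^*>\sqrt{\tau_1}$ and $r_l\le\sqrt{\tau_1'}\Rightarrow r_l\le\sqrt{\tau_2'}\Rightarrow r_l^*\le\sqrt{\tau_2}$. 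A short case analysis---treating $i\in\isf([m])$ and $i\notin\isf([m])$ separately---then shows that the decoding error event is always contained in
\begin{equation*}
    \{r^*_{i\to i}>\sqrt{\tau_1}\}\,\cup\,\bigcup_{j\in[k]\setminus\{i\}}\{r^*_{i\to j}\le\sqrt{\tau_2}\},
\end{equation*}
where $r^*_{i\to j}:=d^{-1/2}\|\alpha(\X_i+\sigma\Z)-\X_j\|$. The key point here is that one may enlarge the relevant index set from $\{\isf(l)\}_{l\in[m]}$ to all of $[k]\setminus\{i\}$, which \emph{decouples} the bound from the permutation $\isf$, hence from $\XcTilde$.

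Consequently, applying a union bound and averaging over $i$, the supremum over $\XcTilde$ is dominated by a quantity depending on $\Xc$ alone; taking expectation over $\Xc\sim\Unif(\sqrt{d}\SphereD)^{\otimes k}$ and invoking the symmetry of the ensemble,
\begin{equation*}
    \Expt\left[\sup_{\XcTilde\in\Approx_\eps(\Xc)}\tfrac{1}{k}\sum_i P_{approx,i}\right]\le \Pr(r^*_{i_0\to i_0}>\sqrt{\tau_1})+(k-1)\cdot \Pr(r^*_{i_0\to j_0}\le\sqrt{\tau_2})
\end{equation*}
for any fixed $i_0\ne j_0$. But this is \emph{exactly} the ensemble-averaged error bound (\ref{eq:Decode:Pos:ErrorProbability}), which was shown in the proof of Proposition~\ref{prop:DecodingSphericalCodes} to vanish as $d\to\infty$ whenever $\RegimeParam>1$.

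The zero-rate case is treated by the same template, using $\DecCORR_{\eta_1,\eta_2}$ from Section~\ref{sec:Appendix:Decoding:0} and the Cauchy--Schwarz bound $|d^{-1}\langle\Y,\XTilde_l-\X_{\isf(l)}\rangle|\le d^{-1}\|\Y\|\sqrt{\eps d}$, calibrating the threshold slack to the typical magnitude of $\|\Y\|$. The main technical obstacle throughout is handling the uncountable supremum over $\Approx_\eps(\Xc)$, which one sidesteps exactly via the threshold-loosening trick: because the coupling $|r_l-r_l^*|\le\sqrt{\eps}$ holds pointwise across the $\sqrt{\eps d}$-ball around each $\X_{\isf(l)}$, the error bound is identical for every $\XcTilde\in\Approx_\eps(\Xc)$, reducing the analysis to the clean ensemble-averaged error bounds from Proposition~\ref{prop:DecodingSphericalCodes}.
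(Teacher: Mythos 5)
Your positive-rate argument is essentially identical to the paper's: the same threshold-loosening of $\DecMMSE_{\tau_1,\tau_2}$ by an additive $O(\sqrt{\eps_0})$ slack, the same pointwise coupling $|r_l-r_l^*|\le\sqrt{\eps}$ from the triangle inequality (which handles the supremum over $\Approx_\eps(\Xc)$ uniformly), and the same reduction of the error event to the one already controlled by \eqref{eq:Decode:Pos:ErrorProbability}. That part is correct.

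The zero-rate case, however, has a genuine gap. Your Cauchy--Schwarz bound $|d^{-1}\langle\Y,\XTilde_l-\X_{\isf(l)}\rangle|\le d^{-1}\|\Y\|\sqrt{\eps d}$ gives a perturbation of order $\sqrt{\eps}\cdot d^{-1/2}\|\Y\|\approx\sqrt{\eps(1+\sigma^2)}$, and in the zero-rate regime $\sigma^2\to\infty$, so this slack diverges for any \emph{fixed} $\eps>0$. The correlation decoder's thresholds $1-\eta_1$ and $1-\eta_2$ are separated by a constant gap, so no constant $\eps_0(\RegimeParam,\RateLim)$ can absorb a slack of order $\sigma\sqrt{\eps}$; you would be forced into $\eps\lesssim 1/\sigma^2=o(1)$, which defeats the statement. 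The fix (and what the paper does) is to split $d^{-1}\langle\Y,\e_l\rangle=d^{-1}\langle\X_i,\e_l\rangle+d^{-1}\sigma\langle\Z,\e_l\rangle$ with $\e_l=\XTilde_l-\X_{\isf(l)}$: the signal term is at most $\sqrt{\eps}$ deterministically since $\|\X_i\|=\sqrt{d}$, while each noise term is Gaussian with standard deviation at most $\sigma\sqrt{\eps/d}$, so the maximum over the $m\le k$ indices is, with probability $1-O(k^{-5})$, at most $C\sqrt{\eps}\cdot\sqrt{\sigma^2\log k/d}\le C\sqrt{\eps/(2\RegimeParam)}$ by the Gaussian maximum and Borell--TIS bounds together with $\sigma^2\Rate=\sigma^2\Capacity(\RegimeParam\sigma^2)\le 1/(2\RegimeParam)$. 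This yields an $O(\sqrt{\eps})$ slack that a constant $\eps_0$ can absorb. Note that this introduces an additional high-probability event (over $\Z$) into the error analysis, rather than a purely deterministic coupling as in your positive-rate argument, but the bound remains uniform over the adversarial choice of $\e_l$ and hence over $\XcTilde\in\Approx_\eps(\Xc)$.
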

In words: Proposition~\ref{prop:Approx} states that provided that $\eps$ deceeds some particular threshold, one can construct a decoder family $\Decode(\cdot|\cdot)$, so that for any \emph{adversarially chosen} $\XcTilde\in\Approx_\eps(\Xc)$, 
the average decoding error (in the sense of \eqref{eq:Approx:P-err}) is uniformly small;
and that this holds for ``most'' random codebooks $\Xc$. 





\subsection{Proof of Proposition~\ref{prop:Approx}}

We simply adapt the decoders used in the proof of Proposition~\ref{prop:DecodingSphericalCodes}, and appearing in Section~\ref{sec:Appendix:Decoding}. We give a different construction at zero ($\RateLim=0$) and positive ($\RateLim>0$) rate.

\subsubsection{Zero Rate ($\RateLim=0$)}

We adapt the decoder from Section~\ref{sec:Appendix:Decoding:0}. 
Recall the decision rule implemented by this decoder (assuming access to the true codebook $\Xc$): it returns $i\in [k]$ if and only if: 1) $d^{-1}\langle\Y,\X_i\rangle\ge 1-\eta_1$; 2) For all $j\in[k]\setminus\{i\}$, $d^{-1}\langle\Y,\X_j\rangle\le 1-\eta_2$; if no such $i$ exists it returns an error. Here, $0<\eta_1<\eta_2$ are sufficiently small constants.

From now on, suppose without loss of generality that a message $i$ is sent. Denote by $\Y=\X_i+\sigma\Z$, the channel output. Let $\m{S}_i$ the event that the above decoder (which utilizes $\Xc$) succeeds; in other words,
\begin{equation}
    \m{S}_i(\eta_1,\eta_2) = \{d^{-1}\langle\Y,\X_i\rangle\ge 1-\eta_1\} \cap \bigcap_{j\in[k]\setminus \{i\}} \{d^{-1}\langle\Y,\X_i\rangle\le 1-\eta_2\} \,.    
\end{equation}
Recall: in Section~\ref{sec:Appendix:Decoding:0} we proved that $\Expt_{\Y_i,\Xc}[\Ind_{\m{S}_i(\eta_1,\eta_2)}]=1-o(1)$ for all sufficiently small constants $0<\eta_1<\eta_2$.

We now adapt the  construction of Section~\ref{sec:Appendix:Decoding:0} to use the ``corrupted'' codebook $\XcTilde$.
For thresholds $\tilde{\eta}_1,\tilde{\eta}_2$, the decoder returns $l$ if and only if $\XTilde_l$ is such that $d^{-1}\langle \Y,\XTilde_l\rangle \ge 1-\tilde{\eta}_1$, while for all other $j\ne l$, $d^{-1}\langle \Y,\XTilde_l\rangle < 1-\tilde{\eta}_2$. If no such codeword exists, it returns ``$\DecodeErrorSymb$''.
Given that message $i$ is sent, the decoder succeeds upon the following event:
\begin{itemize}
    \item If $i\in \isf([m])$, then: 1) $d^{-1}\langle \Y,\XTilde_{\isf^{-1}(i)}\rangle \ge 1-\tilde{\eta}_1$; 2) For all $l\ne \isf^{-1}(i)$, ${d^{-1}\langle \Y,\XTilde_{l}\rangle < 1-\tilde{\eta}_2}$.
    \item If $i\notin \isf([m])$, then for all $l\in [m]$, $d^{-1}\langle \Y,\XTilde_{l}\rangle < 1-\tilde{\eta}_2$.
\end{itemize}
From now on, we assume that $i\in \isf([m])$; when $i\notin \isf([m])$, the analysis follows in a similar manner.

Set $\e_l=\XTilde_l-\X_{\isf(l)}$, which may be chosen adversarially (but is independent of the noise $\Z$), such that $\|\e_l\|\le \sqrt{\eps d}$ (by definition of $\XcTilde\in\Approx_\eps(\Xc)$). We have 
\begin{align*}
    d^{-1}\langle \Y,\XTilde_{l}\rangle 
    &= d^{-1}\langle \Y,\X_{\isf(l)}\rangle + d^{-1}\langle \Y,\e_l\rangle \\
    &= d^{-1}\langle \Y,\X_{\isf(l)}\rangle + d^{-1}\langle \X_{i},\e_l\rangle + d^{-1}\sigma \langle \Z,\e_l \rangle \,.
\end{align*}
Clearly, $|d^{-1}\langle \X_{i},\e_l\rangle|\le \sqrt{\eps}$ (Cauchy-Schwartz).
Set $M = \max_{l\in [m]} |d^{-1}\sigma\langle \Z,\e_l\rangle|$. Observe:
\begin{equation}
    \begin{split}
        &d^{-1}\langle \Y,\X_{i}\rangle\ge 1-\eta_1 \implies d^{-1}\langle \Y,\XTilde_{\isf^{-1}(i)}\rangle \ge 1- \eta_1 -\sqrt{\eps}- M\,, \\
        &d^{-1}\langle \Y,\X_{j}\rangle\le 1-\eta_2 \implies d^{-1}\langle \Y,\XTilde_{\isf^{-1}(j)}\rangle \le 1-\eta_2 +\sqrt{\eps} + M \,.
    \end{split}
\end{equation}
We claim that with probability $1-2k^{-5}$, it holds that $M \le C \sqrt{\eps}$ for some $C=C(\RegimeParam)$. Consequently, if $\eps$  is small enough, and the thresholds $\tilde{\eta}_1<\tilde{\eta}_2$ are chosen such that 
\begin{align*}
    1-\eta_1-\sqrt{\eps} - C\sqrt{\eps} > 1-\tilde{\eta}_1,\quad 1-\eta_2+\sqrt{\eps} + C\sqrt{\eps} < 1-\tilde{\eta}_2\,,
\end{align*}
then under the high-probability event $\m{S}_i(\eta_1,\eta_2)\cap \{M\le C\sqrt{\eps}\}$, the adapted decoder necessarily returns the correct message $\isf^{-1}(i)$.  Since $1-\eta_1>1-\eta_2$, then this can clearly be made to hold whenever $\eps<\eps_0$ for small enough constant $\eps_0$. The proof of Proposition~\ref{prop:Approx} therefore concludes by the Lemma below.

\qed

\begin{lemma}
    Assume the conditions of Proposition~\ref{prop:Approx} (with $\RateLim=0$) and the setup described above.

    There is $C=C(\RegimeParam)$ such that $M\le C\sqrt{\eps}$ holds with probability $\ge 1-2k^{-5}$.
\end{lemma}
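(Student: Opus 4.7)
The plan is a one-line Gaussian tail bound followed by a union bound, with the crucial input being the relationship between $\log k$ and $d/\sigma^2$ implied by the zero-rate hypothesis $\Rate=\Capacity(\RegimeParam\sigma^2)$.

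First I would fix $l$ and observe that since $\e_l$ is independent of $\Z\sim \m{N}(\0,\Id)$, the scalar $d^{-1}\sigma\langle\Z,\e_l\rangle$ is a centered Gaussian with variance $d^{-2}\sigma^2\|\e_l\|^2 \le \sigma^2 \eps/d$ (using $\|\e_l\|^2\le \eps d$). A standard Gaussian tail estimate then gives, for any $C>0$,
\begin{equation*}
    \Pr\left( |d^{-1}\sigma\langle \Z,\e_l\rangle| \ge C\sqrt{\eps} \right) \le 2\exp\left( -\frac{C^2 d}{2\sigma^2} \right) \,.
\end{equation*}
A union bound over the $m\le k$ indices yields $\Pr(M\ge C\sqrt{\eps}) \le 2k \exp(-C^2 d/(2\sigma^2))$.

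Next I would translate the exponent into powers of $k$ using the regime hypothesis. Since $\log k = d\Capacity(\RegimeParam\sigma^2)$ with $\RegimeParam>1$ fixed, and the elementary inequality $\Capacity(s)=\frac12\log(1+1/s)\le 1/(2s)$ gives $\log k \le d/(2\RegimeParam\sigma^2)$, we have $d/\sigma^2 \ge 2\RegimeParam\log k$. Substituting,
\begin{equation*}
    \Pr(M\ge C\sqrt{\eps}) \le 2k\cdot e^{-C^2\RegimeParam \log k} = 2k^{\,1-C^2\RegimeParam} \,.
\end{equation*}

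Finally I would choose $C=C(\RegimeParam)$ large enough that $C^2\RegimeParam\ge 6$, e.g.\ $C=\sqrt{6/\RegimeParam}$, which makes the right-hand side at most $2k^{-5}$, as required. There is essentially no hard step here: the calculation is routine Gaussian concentration; the only thing to be careful about is that the bound on the variance of each inner product uses only $\|\e_l\|\le \sqrt{\eps d}$ (so the adversarial choice of $\e_l$, independent of $\Z$, is harmless), and that the conversion of the Gaussian tail into a power of $k$ correctly uses the zero-rate asymptotic $d/\sigma^2 \gtrsim \log k$ coming from $\RegimeParam>1$.
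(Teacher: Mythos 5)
Your proof is correct and follows essentially the same route as the paper: bound the variance of each $d^{-1}\sigma\langle\Z,\e_l\rangle$ by $\sigma^2\eps/d$, control the maximum over the $m\le k$ indices, and convert the exponent into a power of $k$ via $\sigma^2\log k/d=\sigma^2\Capacity(\RegimeParam\sigma^2)\le 1/(2\RegimeParam)$. The only cosmetic difference is that you use a direct union bound over individual Gaussian tails where the paper invokes the expectation bound for Gaussian maxima together with Borell--TIS; both yield the same $\sqrt{\log k}$ factor and the same conclusion.
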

\begin{proof}
    Observe that $d^{-1}\sigma \langle \Z,\e_l \rangle$ is Gaussian with mean $0$ and variance $\le \eps\frac{\sigma^2}{d}$. and recall that, by definition $M = \max_{l\in [m]} |d^{-1}\sigma \langle \Z,\e_l \rangle|$. 

    By standard results on the maxima of Gaussian random variables, Lemmas~\ref{lem:MaxGauss-Expt} and~\ref{lem:Borell-TIS}, there is some universal $c$ such that $M_i \le \sqrt{\eps\frac{\sigma^2}{d}}\cdot \sqrt{c\log k}$ holds with probability $\ge 1-2k^{-5}$. 

    It remains to observe that 
        \[
            \frac{\sigma^2}{d} \log k = \sigma^2\Rate = \sigma^2\Capacity(\RegimeParam\sigma^2) \le \frac{1}{2\RegimeParam}\,,
    \]
    where we used $\Capacity(s)=\frac12\log(1+1/s)\le 1/(2s)$.
\end{proof}

\subsubsection{Positive Rate ($\RateLim>0$)}

We adapt the decoder from Section~\ref{sec:Appendix:Decoding:Pos}. Recall the decision rule implemented by this decoder (assuming access to the true codebook $\Xc$): the decoder returns an index $i$ if: 1) $d^{-1/2}\|\alpha\Y-\X_i\|\le \sqrt{\tau_1}$; 2) For all $j\ne i$, $d^{-1/2}\|\alpha\Y-\X_i\|> \sqrt{\tau_2}$; if no such $i$ exists, it returns $\DecodeErrorSymb$. Here $0<\sqrt{\tau_1}<\sqrt{\tau_2}$ are appropriately chosen thresholds.

We now adapt the aforementioned decoder to use $\XcTilde$ instead of $\Xc$. 

Observe that by the triangle inequality, 
\[
    d^{-1/2}\|\alpha\Y-\X_{\isf(l)}\| - \sqrt{\eps} \le d^{-1/2}\|\alpha\Y-\XTilde_l\| \le    d^{-1/2}\|\alpha\Y-\X_{\isf(l)}\| + \sqrt{\eps}\,.
\]
Consequently,
\begin{equation}\label{eq:Adapted:TriangleInequality}
    \begin{split}
        &d^{-1/2}\|\alpha\Y-\X_{\isf(l)}\| \le \sqrt{\tau_1} \implies d^{-1/2}\|\alpha\Y-\XTilde_l\| \le \sqrt{\tau_1} + \sqrt{\eps}  \,,\\
        &d^{-1/2}\|\alpha\Y-\X_{\isf(l)}\| > \sqrt{\tau_2} \implies d^{-1/2}\|\alpha\Y-\XTilde_l\| > \sqrt{\tau_2}-\sqrt{\eps} \,.
    \end{split}
\end{equation}

The adapted decoder will operate as follows. Assume that $\eps<\eps_0$ for some $2\sqrt{\eps_0}<\sqrt{\tau_2}-\sqrt{\tau_1}$, and set
\[
    \sqrt{\tilde{\tau}_1} = \sqrt{\tau_1} + \sqrt{\eps_0},\quad \sqrt{\tilde{\tau}_2}=\sqrt{\tau_2}-\sqrt{\eps_0} \,.
\]
The adapted decoder implements the following rule. It returns an index $l\in[m]$ whenever: $\XTilde_l$ is such that $\|\alpha\Y-\XTilde_l\|\le \sqrt{\tilde{\tau}_1}$; 2) For all $j\in[m]\setminus\{l\}$,  $\|\alpha\Y-\XTilde_j\|> \sqrt{\tilde{\tau}_2}$; if no such $l$ exists, it returns $\DecodeErrorSymb$.

Let us bound the error probability of the decoder. 
Suppose that a message $i$ was sent, so that $\Y=\X_i+\sigma\Z$, and assume without loss of generality that $i\in \isf([m])$; the case $i\notin \isf([m])$ follows similarly.  Consider the event
\[
S(\tau_1,\tau_2) = \{d^{-1/2}\|\alpha\Y-\X_i\|\le \sqrt{\tau_1}\}    \cap \bigcap_{j\in[k]\setminus\{i\}}\{d^{-1/2}\|\alpha\Y-\X_j\|> \sqrt{\tau_2}\} \,.
\]
In Section~\ref{sec:Appendix:Decoding:Pos} it was shown that, for appropriately chosen $\tau_1,\tau_2$, $S(\tau_1,\tau_2)$ is a high-probability event. Notice that by (\ref{eq:Adapted:TriangleInequality}), under the event $S(\tau_1,\tau_2)$, the adapted decoder indeed returns $\isf^{-1}(i)$; thus, we have proven that the average error probability is vanishing.



\qed





\subsection{Proof of Lemma~\ref{lem:StepII}}
\label{sec:proof-lem:StepII:Details}

Towards the proof of Lemma~\ref{lem:StepII},
we analyze the performance of Step II of the algorithm (Section~\ref{sec:UpperBound-StepII}) under a slightly more general setting, that does not use the strong symmetry properties that are available (w.h.p.) for a random spherical codebook $\Xc$, and that were crucial for constructing the test of Step I (Section~\ref{sec:UpperBound-StepI}). Instead, we only assume that one has access to some decoder under which the codebook attains small \emph{average error probability}.

Let $\Xc\subseteq (\sqrt{d}\SphereD)^k$ be a fixed codebook. Let $\Psi:\RR^d\to [k]\cup\{\DecodeErrorSymb\}$ be a decoder. Denote by $P_{i,j} = \Pr_{\Z}\left( \Psi(\X_i+\sigma\Z)=j \right)$ the probability (over the noise $\Z$)  that $\Psi(\cdot)$ outputs symbol $j\in [k]\cup\{\DecodeErrorSymb\}$ given that the true label was $i$. To keep the presentation light, we start by introducing some notation.

We say that the decoder $\Psi(\cdot)$ satisfies the \emph{$(\rho,\varphi)$- average error probability guarantee} if there exists an index set $\IndexSet\subseteq [k]$ of size $|\IndexSet|\ge (1-\varphi)k$ such that $\mathrm{Range}(\Psi)\subseteq \IndexSet\cup\{\DecodeErrorSymb\}$ and the following holds. Denote by
\begin{equation}
    \label{eq:Appendix:StepII:ErrorProbability}
    \bar{P}_i = \begin{cases}
        1-P_{i,i} \quad&\textrm{ if }i\in \IndexSet,\\
        1-P_{i,\DecodeErrorSymb}\quad&\textrm{ if }i\notin\IndexSet
    \end{cases}   
\end{equation}
the \emph{error probability} of the $i$-th message. Note that this is the same notion of error probability as in (\ref{eq:Approx:P-err}) from Section~\ref{sec:Appendix:StepII:DecodingApprox} above. Then we have
\begin{equation}
    \label{eq:Appendix:StepII:AvgErrorProbability}
    \frac1k \sum_{i=1}^k \bar{P}_i \le \rho \,. 
\end{equation}

Note: under the conditions of Lemma~\ref{lem:StepII}, upon successful completion of Step I, and with high probability over $\Xc\sim \Unif(\sqrt{d}\SphereD)^{\otimes k}$, Proposition~\ref{prop:Approx} implies that one may construct a decoder $\Psi(\cdot)=\Decode(\cdot|\XchI)$ which satsifes the $(\rho,\varphi)$ average error probability guarantee (up to a global relabeling, which we shall ignore henceforth), for some $\rho=o(1)$. 

Let us describe once again in detail the procedure of Step II, stated in terms of the notation above. One has access to a decoder (constructed from $\XchI$), and uses it to label a batch of $\bar{N}$ new samples, $\Y_1,\ldots,\Y_{\bar{N}}$. Let $S_i\subseteq [\bar{N}]$ be the subset of all measurements that have been assigned label $i$: 
\[
S_i = \left\{ j\;:\;\Psi(\Y_j)=i \right\} \,.
\]
Note  that measurements assigned $\DecodeErrorSymb$ are simply discarded. Next, we compute the cluster means:
\[
\Avg_i = \frac{1}{|S_i|}\sum_{j\in S_i}\Y_j\,,    
\]
so that $\Avg_i=\0$ if $S_i=\emptyset$. The final centers returned by the procedure are simply the projections of $\Avg_i$ onto the ball $\m{B}(\0,\sqrt{d})$:\footnote{Note that we project onto the ball rather than the sphere $\sqrt{d}\SphereD$ since projection onto convex sets is contracting in Euclidean norm. This is not so much the case for projection onto the sphere.}
\[
\Xh_i = \Proj(\Avg_i) \,.    
\]

The next Lemma summarizes our guarantees for Step II as described above.

\begin{lemma}\label{lem:StepII-Final}
    Suppose that $\RegimeParam>1$, and let $\eps,\varphi\in (0,1)$ be constants.
    Let $\Xc\subset (\sqrt{d}\SphereD)^k$ be some fixed center configuration, and suppose that one is given a decoder $\Psi(\cdot)$ satisfying the $(\rho,\varphi)$ average error probability guarantee, for some arbitrary $\rho=o_{\RegimeParam,\RateLim}(1)$. 

    Suppose that Step II is run with $\bar{N}\ge \frac{k\sigma^2}{\eps} + C\frac{k}{\eps^{1/2}}\log(1/\varphi)$ for some sufficiently large univeral $C>0$. Then
    \[
        \Expt\LossAvg(\Xc,\Xch) \le \frac{\eps}{1-\eps^{1/4}} + 8\varphi + o_{\RegimeParam,\RateLim}(1) \,,
    \]
    where the expectation is only taken over the randomness in the measurements $\Y_j=\X_{\RandLabel_j}+\Z_j$, namely $\RandLabel_j\sim\Unif([k])$ and $\Z_j\sim\m{N}(\0,\Id)$ (and the rate of decay in the $o(1)$ term depends on $\rho$).
\end{lemma}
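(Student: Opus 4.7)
The plan is to decompose $\LossAvg$ into contributions from indices $i \in \IndexSet$ (where the decoder is designed to succeed) and the remaining $\le \varphi k$ indices (which we handle trivially), and for each $i \in \IndexSet$ to bound $\Expt d^{-1}\|\Xh_i - \X_i\|^2$ via a bias--variance decomposition of $\Avg_i$. Picking $j=i$ in the minimum defining $\LossAvg$, using the trivial bound $d^{-1}\|\X_i-\Xh_i\|^2 \le 4$ for $i\notin \IndexSet$ (since both live in $\m{B}(\0,\sqrt{d})$), and exploiting contractivity of $\Proj$ (which projects onto $\m{B}(\0,\sqrt{d}) \ni \X_i$) so that $\|\Xh_i-\X_i\|\le \|\Avg_i-\X_i\|$, reduces the task to bounding $\Expt d^{-1}\|\Avg_i - \X_i\|^2$ for each $i \in \IndexSet$.

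Let $T_i = \{j : \RandLabel_j=i\}$, $C_i = T_i \cap S_i$, $E_i = S_i \setminus T_i$, and $B = |\{j : \Psi(\Y_j) \notin \{\RandLabel_j,\DecodeErrorSymb\}\}|$. Substituting $\Y_j = \X_{\RandLabel_j} + \sigma\Z_j$ yields the algebraic identity
\[ \Avg_i - \X_i \;=\; \frac{\sigma}{|S_i|}\sum_{j \in S_i}\Z_j \;+\; \frac{1}{|S_i|}\sum_{j \in E_i}(\X_{\RandLabel_j}-\X_i), \]
splitting the error into a \emph{noise} and a \emph{bias} part. Define a good event $G$ as the intersection of: (i) $|T_i| \ge (1-\eps^{1/4}/4)\bar{N}/k$ for all but at most $\varphi k$ indices $i \in \IndexSet$, which holds with probability $\ge 1-\varphi$ by Chernoff's inequality on each $|T_i| \sim \mathrm{Binomial}(\bar{N}, 1/k)$ (using the hypothesis $\bar{N}/k \ge C\log(1/\varphi)\eps^{-1/2}$ for sufficiently large $C$) and Markov's inequality; and (ii) $B \le \sqrt{\rho}\bar{N}$, by Markov on $\Expt B \le \bar{N}\rho$, the latter following from the $(\rho,\varphi)$ guarantee by summing the individual error probabilities $\bar{P}_i$. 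Since $\rho = o(1)$, $\Pr(G^c) = o_{\RegimeParam,\RateLim}(1)$.

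On $G$, further restrict to $\m{J}\subseteq \IndexSet$ consisting of indices $i$ for which both (i) holds and $|T_i\setminus C_i|\le \eps^{1/4}\bar{N}/(4k)$; by Markov applied to $\sum_i|T_i\setminus C_i|\le B$, the latter fails on at most $o(k)$ additional indices. Hence $|\IndexSet\setminus \m{J}|\le 2\varphi k$ for $d$ large, and for every $i\in \m{J}$, $|S_i| \ge |C_i|\ge (1-\eps^{1/4}/2)\bar{N}/k$. The bias term is bounded by $4(|E_i|/|S_i|)^2$, whose average over $\m{J}$ is $o(1)$ thanks to $\sum_i|E_i| \le B=o(\bar{N})$. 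For the noise term, conditioning on the sigma-algebra $\mathcal{F}$ generated by $(\RandLabel_j,\Ind[\Psi(\Y_j)=i])_j$, $\sum_{j\in S_i}\Z_j$ is a sum of conditionally independent summands whose conditional variance contributes the desired $d|S_i|$-scaling, while the per-summand conditional-mean bias $\Expt[\Z \mid \Psi(\X_{i'}+\sigma\Z)=i]$ is controlled using Cauchy--Schwarz by $\sqrt{d\bar{P}_i}/(1-\bar{P}_i)$; multiplying by $\sigma^2$ and averaging over $i \in \m{J}$ this contribution is $o(1)$. Combining yields $\frac{1}{k}\sum_{i\in\m{J}}\Expt d^{-1}\|\Avg_i-\X_i\|^2 \le \frac{\sigma^2 k}{(1-\eps^{1/4}/2)\bar{N}}(1+o(1)) \le \frac{\eps}{1-\eps^{1/4}}+o(1)$, where we used $\bar{N}\ge k\sigma^2/\eps$. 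Adding $4\varphi$ from $i \in [k]\setminus\IndexSet$, $4\varphi$ from $i \in \IndexSet\setminus \m{J}$, and $o(1)$ from off $G$ (where $\LossAvg\le 4$) gives the claimed $\eps/(1-\eps^{1/4})+8\varphi+o(1)$.

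The main obstacle is the noise analysis: establishing that $\Expt\|\sum_{j\in S_i}\Z_j\|^2 \approx d\,\Expt|S_i|$ despite the fact that cluster membership $\{j \in S_i\}$ is determined by $\Z_j$ through $\Psi$. The route is to exploit the independence of distinct samples $\Y_j$ together with the near-unbiasedness $\Expt[\Ind[\Psi(\Y)=i]\Z]\approx \0$, quantified via Cauchy--Schwarz in terms of the small decoding error; this suppresses the off-diagonal cross-correlation terms in the expansion of $\|\sum_j\Z_j\|^2$ relative to the diagonal, recovering the leading-order $\sigma^2/|S_i|$ behavior of the MSE.
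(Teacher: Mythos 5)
Your overall architecture (restrict to good indices, bound the rest by the trivial $4$, bias--variance decomposition of $\Avg_i$, Chernoff on $|T_i|$, Markov on the total error count) matches the paper's, but the noise analysis has a genuine gap at exactly the point the paper flags as the subtle one. Your noise term $\frac{\sigma}{|S_i|}\sum_{j\in S_i}\Z_j$ sums over \emph{all} of $S_i$, including the erroneously assigned samples $j\in E_i$ with true label $i'\ne i$. Conditioned on your $\sigma$-algebra $\mathcal{F}$, the law of $\Z_j$ for such $j$ is the law of $\Z$ conditioned on the possibly very rare event $\{\Psi(\X_{i'}+\sigma\Z)=i\}$ of probability $P_{i',i}$, for which there is no lower bound; the bound $\sqrt{d\bar P_i}/(1-\bar P_i)$ you quote for the conditional mean only applies to the correctly decoded samples ($i'=i$). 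If one instead tries to control the cross terms unconditionally via Cauchy--Schwarz, one gets $\|\Expt[\Ind\{\Psi(\X_{i'}+\sigma\Z)=i\}\Z]\|\le\sqrt{P_{i',i}\,d}$, and summing $\sqrt{P_{i',i}}$ over $i'\ne i$ costs a factor up to $\sqrt{k/\sum_{i'}P_{i',i}}$; tracking the arithmetic, the resulting contribution to $I_4^2$ (in the paper's notation) is of order $\sigma^2 k\rho^{1/2}$, which is not $o(1)$ for a generic $\rho=o(1)$. This is precisely why the paper splits $\sum_{j\in S_i}\Z_j=\sum_{j\in L_i}\Z_j+\sum_{j\in S_i\setminus L_i}\Z_j-\sum_{j\in L_i\setminus S_i}\Z_j$ (the first sum ranges over a set determined by the labels alone, so no conditioning issue arises) and controls the two noise-dependent sums by a \emph{worst-case} bound over all subsets of cardinality $\le\rho^{1/4}\bar N/k$ (Lemma~\ref{lem:NoiseMax:Technical}, a union bound over subsets plus Borell--TIS), accepting the ``all vectors aligned'' price $|E_i|\sqrt{\log}$ rather than $\sqrt{|E_i|}$ in exchange for a bound that is uniform over the data-dependent choice of subset. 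Your sketch does not contain a substitute for this step.

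A second, more easily repaired issue: you bound the bias term by $4(|E_i|/|S_i|)^2$ and claim its average over $\m{J}$ is $o(1)$ ``thanks to $\sum_i|E_i|\le B=o(\bar N)$.'' But $\frac1k\sum_i(|E_i|k/\bar N)^2=\frac{k}{\bar N^2}\sum_i|E_i|^2$ is not controlled by $\sum_i|E_i|$ alone: if all $B\approx\sqrt\rho\,\bar N$ errors land on a single index, the average is of order $\rho k$, which is not small. You need a per-index bound $|E_i|\le\rho^{1/4}\bar N/k$ on the good indices, which is available via $\bar Q_i\le\rho^{1/2}/k$ (Markov applied to the column sums of the confusion matrix, the paper's set $\m J_2$) followed by Markov on $|E_i|$ — this is the paper's event $\m E_{i,1}$, and it is also exactly the cardinality control needed to make the worst-case-over-subsets bound on the noise term effective.
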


The proof of Lemma~\ref{lem:StepII-Final} shall be given momentarily, in Section~\ref{sec:proof-lem:StepII-Final} below. Before getting to it, let us show how it immediately implies Lemma~\ref{lem:StepII}:

\begin{proof}
    (Of Lemma~\ref{lem:StepII}). Note that the diameter of the ball $\m{B}(\0,\sqrt{d})$ is $2\sqrt{d}$, so necessarily ${\dist^2(\X_i,\Xch)\le 4d}$ and therefore $\LossAvg(\Xc,\Xch)\le 4$. Thus, for any event $\m{E}$, 
    \begin{equation}
        \label{eq:IntermediateStepII}
    \Expt\LossAvg(\Xc,\Xch) \le \Expt\left[\LossAvg(\Xc,\Xch)\Ind\{\m{E}\}\right] + 4\Pr(\m{E}^c) \,.
    \end{equation}

    Let $\XchI$ be the list returned by Step I of the algorithm. 
    Let $\eps_0=\eps_0(\RegimeParam,\RateLim)$ be the threshold from Proposition~\ref{prop:Approx}, and consider the event
    \[
    \m{E}_{I} = \left\{ \XchI\in \Approx_{\eps_0}(\Xc), \;|\IndexSet|\ge (1-\varphi)k \right\} \,,
    \]
    where $\IndexSet$ is defined in Lemma~\ref{lem:UpperBound-StepI-Final}.
    By Lemma~\ref{lem:UpperBound-StepI-Final}, provided that $N$ is large enough,  $\Pr(\m{E}_{1}^c)\le \varphi+o_{\RegimeParam,\RateLim}(1)$. Let $\Psi(\cdot)=\Decode(\cdot|\XchI)$ be the decoder promised by Proposition~\ref{prop:Approx}. For a suitably chosen (large enough) $\rho=o_{\RegimeParam,\RateLim}(1)$, denote the event
    \[
    \m{E} = \left\{ \Psi(\cdot)\textrm{ satisfies the }(\rho,\varphi)\textrm{ average probability guarantee} \right\} \,.
    \]
    By Proposition~\ref{prop:Approx}, along with Markov's inequality, we have $\Pr(\m{E}^c|\m{E}_{I})=o_{\RegimeParam,\RateLim}(1)$. Thus,
    \[
    \Pr(\m{E}^c) \le \Pr(\m{E}_I^c) + \Pr(\m{E}^c|\m{E}_I) \le \varphi + o_{\RegimeParam,\RateLim}(1) \,.
    \]
    Use the event $\m{E}$ in \eqref{eq:IntermediateStepII}. By Lemma~\ref{lem:StepII-Final}, for large enough $\bar{N}$, $\Expt\left[\LossAvg(\Xc,\Xch)\Ind\{\m{E}\}\right]\le \frac{\eps}{1-\eps^{1/4}} + 8\varphi + o_{\RegimeParam,\RateLim}(1)$, and so Lemma~\ref{lem:StepII} follows.
    
\end{proof}

\subsection{Proof of Lemma~\ref{lem:StepII-Final}}
\label{sec:proof-lem:StepII-Final}

Conceptually, the proof of Lemma~\ref{lem:StepII-Final} is quite straightforward. Denote by $L_i\subseteq [\bar{N}]$ the measurements $\Y_j=\X_{\ell_j}+\Z_j$ whose \emph{true label} is $\ell_j=i$. Imagine, for a moment, that we had access to a \emph{genie-aided} decoder, that always assigns measurements to their true labels. In that case, $S_i=L_i$, and so $\Avg_i$ is just the sample mean of $|L_i|$ i.i.d. Gaussian measurements $\m{N}(\X_i,\sigma^2\Id)$. Since, \emph{on average}, $|L_i|= \bar{N}/k$, the MSE is $\Expt\|\X_i-\Avg_i\|^2 \approx \sigma^2 k /\bar{N}$, which is $\approx \eps$ when $\bar{N}\approx \sigma^2 k/\eps$. In practice, however, one does not have access to a clairvoyant decoder: we only assume an average error probability guarantee. Another difficulty is that we {cannot} guarantee that $|L_i|\approx \bar{N}/k$ \emph{simultaneously} for all $i$, unless $\sigma^2\gtrsim \log k$ (recall the coupon-collecting issue highlighted in the main paper: we need $\bar{N}\gtrsim k\log k$ to even \emph{observe} a measurement of every label). Consequently, the analysis has to be carried out somewhat delicately. 

We start with the trivial observation, that the average error probability guarantee implies that, in fact, \emph{most} individual labels $i\in [k]$ must have a small error probability:
\begin{lemma}\label{lem:StepII:Set1}
    Let $(\Xc,\Psi(\cdot))$ satisfy the $(\rho,\varphi)$ average error probability guarantee (\ref{eq:Appendix:StepII:AvgErrorProbability}). There is a set of indices $\m{J}_1\subseteq [k]$ of size $|\m{J}_1|\ge (1-\rho^{1/2})k$ such that 
    \begin{equation}
        \bar{P}_i \le \rho^{1/2}\quad \textrm{ for all }i \in \m{J}_1 \,.
    \end{equation}
\end{lemma}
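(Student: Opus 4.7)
The plan is to define $\m{J}_1$ to be the set of indices whose individual error probability is below $\rho^{1/2}$, namely
\[
\m{J}_1 = \{ i \in [k] \,:\, \bar{P}_i \le \rho^{1/2} \},
\]
and then to show via a simple Markov-type counting argument on the complement that $|\m{J}_1^c| \le \rho^{1/2} k$. By construction, every $i \in \m{J}_1$ satisfies $\bar{P}_i \le \rho^{1/2}$, so the second claim is immediate from the definition; the only substantive content is the size bound.

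For the size bound, I would observe that every index $i \in \m{J}_1^c$ contributes at least $\rho^{1/2}$ to the sum $\sum_{i=1}^k \bar{P}_i$, whereas the hypothesis (\ref{eq:Appendix:StepII:AvgErrorProbability}) says that this sum is at most $\rho k$. Quantitatively,
\[
\rho^{1/2} \cdot |\m{J}_1^c| \;\le\; \sum_{i \in \m{J}_1^c} \bar{P}_i \;\le\; \sum_{i=1}^k \bar{P}_i \;\le\; \rho k,
\]
which rearranges to $|\m{J}_1^c| \le \rho^{1/2} k$, and hence $|\m{J}_1| \ge (1-\rho^{1/2}) k$ as claimed.

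There is no real obstacle here: this is just Markov's inequality applied to the empirical distribution of $\{\bar{P}_i\}_{i=1}^k$, choosing the threshold $\rho^{1/2}$ to optimally balance between the fraction of ``bad'' indices and the per-index error bound on the ``good'' set. The lemma is essentially notational, isolating a convenient subset on which error probabilities are uniformly small for use in the sequel (where it will presumably be combined with concentration of the cluster sizes $|S_i|$).
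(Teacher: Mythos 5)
Your proof is correct and is exactly the argument the paper intends: the paper's proof is the one-line remark "an immediate consequence of Markov's inequality," which is precisely your counting bound $\rho^{1/2}|\m{J}_1^c| \le \sum_i \bar{P}_i \le \rho k$. Nothing is missing.
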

\begin{proof}
    An immediate consequence of Markov's inequality.
\end{proof}

Aside from lower bounding $|L_i|\gtrsim \bar{N}/k$ in expectation, we shall also need to control the number of measurements that were \emph{erroneously} assigned label $i$ by $\Psi(\cdot)$. Following the notation of (\ref{eq:Appendix:StepII:ErrorProbability}), let
\begin{equation}
    \bar{Q}_i = \frac1k \sum_{l\in [k]\setminus\{i\}} P_{l,i} 
\end{equation}
be the probability that a random measurement $\Y=\X_\RandLabel + \sigma\Z$, $\RandLabel\sim \Unif([k])$ has true label $\ell\ne i$, but is erroneously assigned label $i$ by $\Psi(\cdot)$. 

\begin{lemma}\label{lem:StepII:Set2}
    Let $(\Xc,\Psi(\cdot))$ satisfy the $(\rho,\varphi)$ average error probability guarantee (\ref{eq:Appendix:StepII:AvgErrorProbability}). There is a set of indices $\m{J}_2\subseteq [k]$ of size $|\m{J}_2|\ge (1-\rho^{1/2})k$ such that 
    \begin{equation}
        \bar{Q}_i \le \rho^{1/2}/k\quad \textrm{ for all }i \in \m{J}_2 \,.
    \end{equation}
\end{lemma}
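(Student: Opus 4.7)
}

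The claim should follow from a simple averaging/Markov argument applied to the dual sum $\sum_i \bar{Q}_i$. My plan is to first show that $\frac{1}{k}\sum_{i=1}^k \bar Q_i \le \rho/k$, and then deduce the existence of $\m{J}_2$ from Markov's inequality.

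For the first step, I would swap the order of summation:
\[
\sum_{i=1}^k \bar Q_i \;=\; \frac{1}{k}\sum_{i=1}^k \sum_{l\in[k]\setminus\{i\}} P_{l,i} \;=\; \frac{1}{k}\sum_{l=1}^k \sum_{i\in[k]\setminus\{l\}} P_{l,i}.
\]
The inner sum $\sum_{i \ne l} P_{l,i}$ is the probability that on input $\Y = \X_l + \sigma\Z$, the decoder outputs some symbol in $[k]\setminus\{l\}$. I would then show that this is at most $\bar P_l$ by considering two cases. If $l\in\IndexSet$, then by definition $\bar P_l = 1 - P_{l,l}$, and since $P_{l,l} + P_{l,\DecodeErrorSymb} + \sum_{i\ne l}P_{l,i}=1$, we get $\sum_{i\ne l} P_{l,i} \le 1 - P_{l,l} = \bar P_l$. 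If $l\notin\IndexSet$, then, since $\mathrm{Range}(\Psi)\subseteq \IndexSet\cup\{\DecodeErrorSymb\}$, we have $P_{l,i}=0$ whenever $i\notin\IndexSet$, so in particular $P_{l,l}=0$, and $\sum_{i\ne l} P_{l,i} = \sum_{i\in\IndexSet} P_{l,i} = 1 - P_{l,\DecodeErrorSymb} = \bar P_l$. In either case,
\[
\sum_{i=1}^k \bar Q_i \;\le\; \frac{1}{k}\sum_{l=1}^k \bar P_l \;\le\; \rho,
\]
where the last inequality is the $(\rho,\varphi)$ average error probability guarantee~\eqref{eq:Appendix:StepII:AvgErrorProbability}.

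For the second step, I would apply Markov's inequality to the non-negative sequence $(\bar Q_i)_{i\in[k]}$: the number of indices $i$ with $\bar Q_i > \rho^{1/2}/k$ is at most
\[
\frac{\sum_{i=1}^k \bar Q_i}{\rho^{1/2}/k} \;\le\; \frac{\rho}{\rho^{1/2}/k} \;=\; \rho^{1/2}\,k.
\]
Defining $\m{J}_2 = \{i\in[k] \,:\, \bar Q_i \le \rho^{1/2}/k\}$, we obtain $|\m{J}_2| \ge (1-\rho^{1/2})k$, as desired. There is no substantive obstacle here; the only slightly delicate point is the bookkeeping in the case $l\notin\IndexSet$ (remembering that $\Psi$ never outputs labels outside $\IndexSet$, so the ``diagonal'' contribution $P_{l,l}$ vanishes rather than being part of $\bar P_l$), but this is handled by the case split above.
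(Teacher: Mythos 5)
Your proposal is correct and follows essentially the same route as the paper: swap the order of summation, bound $\sum_{i\ne l}P_{l,i}$ by $\bar P_l$ using the average error guarantee, and conclude via Markov's inequality. Your case split justifying $\sum_{i\ne l}P_{l,i}\le \bar P_l$ is a detail the paper leaves implicit, and it is handled correctly.
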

\begin{proof}
    Observe that 
    \[
    \frac1k \sum_{i=1}^k \left( \sum_{l\in [k]\setminus\{i\}} P_{l,i}  \right)  = \frac1k \sum_{l=1}^k \left(\sum_{i\in [k]\setminus \{l\}} P_{l,i} \right)\le \frac1k \sum_{l=1}^k \bar{P}_l \le \rho\,.
    \]
    Consequently, by Markov's inequality, there is $\m{J}_2\subseteq [k]$ of size $|\m{J}_2|\ge (1-\rho^{1/2})k$ such that $\sum_{l\in [k]\setminus\{i\}} P_{l,i}\le \rho^{1/2}$ for all $i\in \m{J}_2$.
\end{proof}

Recall: $L_i\subseteq [\bar{N}]$ are the measurements whose true label is $i$; $S_i\subseteq [\bar{N}]$ are the measurements assigned label $i$ by $\Psi(\cdot)$ (whether truthfully or erroneously). Define the event
\begin{equation}\label{eq:StepII:Event-i-1}
    \m{E}_{i,1} = \left\{ |S_i\setminus L_i| \le \rho^{1/4}\frac{\bar{N}}{k}\quad \cap \quad |L_i|\ge (1-\eps^{1/4})\frac{\bar{N}}{k} \right\} \,.
\end{equation}
\begin{lemma}\label{lem:StepII:Event-i-1}
    Under the conditions of Lemma~\ref{lem:StepII-Final}, for any $i\in \m{J}_2$,
    \[
    \Pr\left(\m{E}_{i,1}^c\right) \le \varphi + o(1) \,.
    \]
\end{lemma}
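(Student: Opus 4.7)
The plan is to bound the two events in $\m{E}_{i,1}$ separately and combine them by a union bound.

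For the first event, $|S_i \setminus L_i| \le \rho^{1/4}\bar{N}/k$, I would compute the expectation and apply Markov's inequality. For each measurement $j\in [\bar{N}]$, the probability of the event $\{\RandLabel_j\ne i,\,\Psi(\Y_j)=i\}$ equals $\frac{1}{k}\sum_{l\ne i} P_{l,i}=\bar{Q}_i$, since $\RandLabel_j\sim\Unif([k])$ and the noise is independent across measurements. Hence $\Expt|S_i\setminus L_i|=\bar{N}\bar{Q}_i$. For $i\in\m{J}_2$, Lemma~\ref{lem:StepII:Set2} gives $\bar{Q}_i\le \rho^{1/2}/k$, so Markov yields
\[
    \Pr\left(|S_i\setminus L_i| > \rho^{1/4}\frac{\bar{N}}{k}\right) \le \frac{\bar{N}\rho^{1/2}/k}{\rho^{1/4}\bar{N}/k} = \rho^{1/4}=o(1)\,.
\]

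For the second event, $|L_i|\ge (1-\eps^{1/4})\bar{N}/k$, I observe that $|L_i|=\sum_{j=1}^{\bar{N}}\Ind\{\RandLabel_j=i\}\sim\mathrm{Binomial}(\bar{N},1/k)$. A standard multiplicative Chernoff lower-tail bound gives
\[
    \Pr\left(|L_i| < (1-\eps^{1/4})\frac{\bar{N}}{k}\right) \le \exp\left(-\frac{\eps^{1/2}}{2}\cdot \frac{\bar{N}}{k}\right)\,.
\]
By hypothesis $\bar{N}\ge \frac{k\sigma^2}{\eps} + C\frac{k}{\eps^{1/2}}\log(1/\varphi)$, so $\eps^{1/2}\bar{N}/(2k) \ge (C/2)\log(1/\varphi)$, and the above is bounded by $\varphi^{C/2}\le \varphi$ for $C\ge 2$.

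Combining the two estimates by a union bound yields $\Pr(\m{E}_{i,1}^c) \le \rho^{1/4}+\varphi = \varphi + o(1)$, as claimed. The argument is essentially a direct combination of Markov's inequality (for the contamination term, using the average-error hypothesis via Lemma~\ref{lem:StepII:Set2}) and a binomial Chernoff bound (for the coverage term, using the sample-size lower bound on $\bar{N}$); no subtle step is involved, the only mild care is ensuring that the constant $C$ in the definition of $\bar{N}$ is large enough to beat the $\varphi$-threshold.
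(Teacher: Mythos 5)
Your proposal is correct and follows essentially the same route as the paper: Markov's inequality combined with the bound $\bar{Q}_i\le \rho^{1/2}/k$ from Lemma~\ref{lem:StepII:Set2} for the contamination term, and a multiplicative Chernoff lower-tail bound for the $\mathrm{Binomial}(\bar{N},1/k)$ count $|L_i|$ together with the assumed lower bound on $\bar{N}$ for the coverage term, finishing with a union bound.
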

\begin{proof}
    Let us start by showing that $|S_i\setminus L_i|$ is small with high probability. By definition, ${\Pr(\Y\in S_i\setminus L_i)=\bar{Q}_i}$. Since $i\in \m{J}_2$ this probability is $\le \rho^{1/2}/k$. Thus, by Markov's inequality,
    \[
    \Pr\left( |S_i\setminus L_i|>\rho^{1/4}\frac{\bar{N}}{k} \right)    \le \frac{\Expt |S_i\setminus L_i|}{\rho^{1/4}\frac{\bar{N}}{k}} \le \frac{\rho^{1/2}\frac{\bar{N}}{k}}{\rho^{1/4}\frac{\bar{N}}{k}}=\rho^{1/4}=o(1) \,.
    \]

    Moving on, observe that $|L_i|\sim \mathrm{Binomial}(1/k,\bar{N})$. By Chernoff's inequality, Lemma~\ref{lem:Chernoff-SmallDeviations},
    \[
    \Pr\left( |L_i|\le (1-\eps^{1/4})\frac{\bar{N}}{k} \right) \le e^{-c\eps^{1/2}\frac{\bar{N}}{k}}\,,
    \]
    which is $\le \varphi$ for $\bar{N}\ge C\frac{k}{\eps^{1/2}}\log(1/\varphi)$.
\end{proof}

For $i\in \IndexSet$, define the event 
\begin{equation}
    \label{eq:StepII:Event-i-2}
    \m{E}_{i,2} = \left\{ |S_i\cap L_i| \ge (1-\rho^{1/4})|L_i| \right\} \,.
\end{equation}
\begin{lemma}\label{lem:StepII:Event-i-2}
    Under the conditions of Lemma~\ref{lem:StepII-Final}, for any $i\in \IndexSet\cap \m{J}_1$,
    \[
        \Pr\left(\m{E}_{i,2}^c\right) = o(1)\,.
    \]
\end{lemma}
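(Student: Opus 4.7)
The plan is to reduce the statement to a conditional Markov's inequality, exploiting the fact that, conditional on which measurements have true label $i$, the decoder's outputs on those measurements are independent Bernoulli trials with a small failure probability.

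First, I would condition on the tuple of true labels $(\RandLabel_1,\ldots,\RandLabel_{\bar{N}})$, which determines the partition $L_1,\ldots,L_k$. Conditional on this partition, the noise variables $\Z_j$ remain i.i.d. $\m{N}(\bm{0},\Id)$, so for each $j\in L_i$ the decoded label $\Psi(\Y_j)=\Psi(\X_i+\sigma\Z_j)$ is an i.i.d. draw from the conditional distribution of $\Psi(\X_i+\sigma\Z)$. Since $i\in\IndexSet$, the failure event ``$\Psi(\Y_j)\neq i$'' occurs with probability exactly $\bar{P}_i=1-P_{i,i}$. Hence, conditional on $|L_i|=m$, the count $|L_i\setminus S_i|$ of mis-decoded measurements in $L_i$ is distributed as $\mathrm{Binomial}(m,\bar{P}_i)$.

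Second, using the assumption $i\in\m{J}_1$ from Lemma~\ref{lem:StepII:Set1}, namely $\bar{P}_i\le \rho^{1/2}$, I would apply Markov's inequality conditionally. For $m\ge 1$,
\begin{equation*}
    \Pr\!\left(|L_i\setminus S_i|\ge \rho^{1/4}m \,\Big|\,|L_i|=m\right)\le \frac{\bar{P}_i\, m}{\rho^{1/4}m}\le \rho^{1/4},
\end{equation*}
while for $m=0$ the event $\m{E}_{i,2}=\{|S_i\cap L_i|\ge (1-\rho^{1/4})|L_i|\}$ holds vacuously. Averaging over $|L_i|$ gives $\Pr(\m{E}_{i,2}^c)\le \rho^{1/4}=o(1)$, as $\rho=o_{\RegimeParam,\RateLim}(1)$ by hypothesis.

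There is no real obstacle here: the conditional independence structure makes this essentially a one-line Markov argument, and the only tiny subtlety is the trivial case $|L_i|=0$. The reason this lemma is easy, in contrast to Lemma~\ref{lem:StepII:Event-i-1} which uses a Chernoff bound to guarantee $|L_i|$ is not too small, is that here we are asking for a \emph{relative} bound on the number of misclassifications inside $L_i$, which is scale-invariant in $|L_i|$ and therefore does not require any control on the size of the cluster.
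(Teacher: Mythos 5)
Your proof is correct and follows essentially the same route as the paper: the paper's argument is exactly a (conditional) Markov inequality using $\Expt\bigl[\,|L_i\setminus S_i|\,\big|\,|L_i|\,\bigr]=\bar{P}_i|L_i|\le\rho^{1/2}|L_i|$ together with $i\in\m{J}_1$. Your version is just slightly more explicit about conditioning on the label vector and about the trivial $|L_i|=0$ case (and the Binomial structure, while true, is not needed — only the conditional expectation is used).
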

\begin{proof}
    By the definition of $\m{J}_1$, conditioned on $\RandLabel_j=i$, $\Pr(j\notin S_i|\ell_j=i)=\bar{P}_i\le \rho^{1/2}$. Thus, by Markov's inequality,
    \[
\Pr\left( |L_i\setminus S_i|\ge \rho^{1/4}|L_i| \right) \le \rho^{1/4} = o(1)\,.
    \]
\end{proof}

We are ready to bound $\Expt\LossAvg(\Xc,\Xch)$. Observe that for any events $\m{E}_1,\ldots,\m{E}_k$,
\begin{equation}\label{eq:StepII-Final-Eq}
    \begin{split}
        \Expt\LossAvg(\Xc,\Xch) 
        &= \frac1k\sum_{i=1}^k d^{-1}\Expt \dist^2(\X_i,\Xch) \\
        &\overset{(i)}{\le} \frac1k\sum_{i=1}^k d^{-1}\Expt \left[\dist^2(\X_i,\Xch)\Ind\left\{ \m{E}_i, i\in\IndexSet\cap\m{J}_1\cap \m{J}_2 \right\}\right] \\
        &+ \frac{4}{k}\sum_{i=1}^k\Pr(\m{E}_i^c) + 4\frac{|\IndexSet^c|}{k} + 4\frac{|(\m{J}_1\cap\m{J}_2)^c|}{k} \\
        &\overset{(ii)}{\le} \frac1k\sum_{i=1}^k d^{-1}\Expt \left[\dist^2(\X_i,\Xch)\Ind\left\{ \m{E}_i, i\in\IndexSet\cap \m{J}_1\cap \m{J}_2 \right\}\right] + \frac{4}{k}\sum_{i=1}^k\Pr(\m{E}_i^c) + 4\varphi + o(1) \,,
    \end{split}
\end{equation}
where: (i) follows from $\dist(\Xc,\Xch)\le 4d$ (the diameter of the ball is $2\sqrt{d}$); (ii) Follows from $|\IndexSet|\le \varphi k$ (by definition of the $(\rho,\varphi)$ error probability guarantee) and $|(\m{J}_1\cap\m{J}_2)^c|=o(k)$ follows from Lemmas~\ref{lem:StepII:Set1} and~\ref{lem:StepII:Set2}, with $\rho=o(1)$.

We take the event
\begin{equation}
    \label{eq:StepII:Event-i}
    \m{E}_i = \m{E}_{i,1}\cap \m{E}_{i,2} \cap \m{E}_{i,3}\,,
\end{equation}
where $\m{E}_{i,1}$ is defined in (\ref{eq:StepII:Event-i-1}), $\m{E}_{i,2}$ is defined in (\ref{eq:StepII:Event-i-2}), and the definition of $\m{E}_{i,3}$ shall be deferred for later; its details would be somewhat obtuse at this point in the analysis.

To lighten the notation, introduce
\begin{align}\label{eq:StepII:ExptAvg}
    \ExptAvg[F_i] = \frac{1}{kd}\sum_{i=1}^k \Expt[F_i \Ind\left\{ \m{E}_i, i\in\IndexSet\cap\m{J}_1\cap \m{J}_2 \right\}]\,,
\end{align}
where $F_i$ is any sequence indexed by $i\in [k]$.

For $i\in \IndexSet$, 
\[
\dist^2(\X_i,\Xch) \le \|\X_i-\Xh_i\|^2 \le \|\X_i-\Avg_i\|^2    \,,
\]
where we used $\Xh_i=\Proj(\Avg_i)$ and that projection onto convex sets is contracting with respect to Euclidean norm. Moreover, note that 
\[
\Dc=(\D_1,\ldots,\D_k) \in \RR^{d\times k}\quad \mapsto \quad     (\ExptAvg[\|\D_i\|^2])^{1/2}
\]
is a semi-norm, hence satisfies the triangle inequality. 

We decompose
\begin{align*}
    \Avg_i 
    &= \frac{1}{|S_i|}\sum_{j\in S_i} \Y_j \\
    &=    \frac{1}{|S_i|}\sum_{j\in L_i} \Y_j  +  \frac{1}{|S_i|}\sum_{j\in S_i\setminus L_i} \Y_j -  \frac{1}{|S_i|}\sum_{j\in L_i\setminus S_i} \Y_j \\
    &=
    \frac{1}{|S_i|}\sum_{j\in L_i} \Y_i   +  \frac{1}{|S_i|}\sum_{j\in S_i\setminus L_i} \X_i -  \frac{1}{|S_i|}\sum_{j\in L_i\setminus S_i} \X_{\ell_j} + \frac{\sigma}{|S_i|}\sum_{j\in S_i\setminus L_i} \Z_j -  \frac{\sigma}{|S_i|}\sum_{j\in L_i\setminus S_i} \Z_j
    \,,    
\end{align*}
and accordingly bound the first term of (\ref{eq:StepII-Final-Eq}):
\begin{equation}
    \label{eq:StepII:LotsOfIs}
    \begin{split}
        \left( \ExptAvg\dist^2(\X_i,\Xch) \right)^{1/2} 
        &\le \left( \ExptAvg\|\X_i-\Avg_i\|^2 \right)^{1/2} \\
        &\le \underbrace{\left( \ExptAvg\left\|\X_i-\frac{|L_i|}{|S_i| }\X_i\right\|^2 \right)^{1/2}}_{I_1}
        +  \underbrace{\left( \ExptAvg\left\|\frac{|L_i|}{|S_i| }\X_i-\frac{1}{|S_i|}\sum_{j\in L_i} \Y_j\right\|^2 \right)^{1/2}}_{I_2}\\
        &+ \underbrace{\left( \ExptAvg\left\|\frac{1}{|S_i|}\sum_{j\in S_i\setminus L_i} \X_i -  \frac{1}{|S_i|}\sum_{j\in L_i\setminus S_i} \X_{\ell_j}\right\|^2 \right)^{1/2}}_{I_3}\\
        &+ \underbrace{\left( \ExptAvg\left\|\frac{\sigma}{|S_i|}\sum_{j\in S_i\setminus L_i} \Z_j\right\|^2 \right)^{1/2}}_{I_4}
        + \underbrace{\left( \ExptAvg\left\|\frac{\sigma}{|S_i|}\sum_{j\in L_i\setminus S_i} \Z_j\right\|^2 \right)^{1/2}}_{I_5} \,.
    \end{split}
\end{equation}
We proceed to bound the terms above. Starting with $I_1$, observe that 
\[
    I_1^2 = \ExptAvg\left\|\X_i-\frac{|L_i|}{|S_i| }\X_i\right\|^2 = d\cdot \ExptAvg\left| 1-\frac{|L_i|}{|S_i|} \right|,
\]
and therefore $I_1=o(1)$, since under $\m{E}_i$ we must have $(1-o(1))|L_i| \le |S_i|\le (1+o(1))|L_i|$. 

Moving on to $I_2$, 
\begin{align*}
    I_2^2 =  \ExptAvg\left\|\frac{|L_i|}{|S_i| }\X_i-\frac{1}{|S_i|}\sum_{j\in L_i} \Y_j\right\|^2   =  \ExptAvg\left[\frac{\sigma^2|L_i|}{|S_i|^2}\left\|\frac{1}{\sqrt{|L_i|}}\sum_{j\in L_i} \Z_j\right\|^2 \right]\,.
\end{align*}
Noting that $\frac{1}{\sqrt{|L_i|}}\sum_{j\in L_i} \Z_j\sim \m{N}(\0,\Id)$, and that under $\m{E}_i$, $\frac{|L_i|}{|S_i|^2}\le (1+o(1))\frac{1}{1-\eps^{1/4}}\frac{k}{\bar{N}}$ we get that $I_2^2\le \frac{\eps}{1-\eps^{1/4}} + o(1)$ since $\bar{N}\ge \sigma^2 k /\eps$.

As for $I_3$, 
\begin{align*}
    I_3^2 = \ExptAvg\left\|\frac{1}{|S_i|}\sum_{j\in S_i\setminus L_i} \X_i -  \frac{1}{|S_i|}\sum_{j\in L_i\setminus S_i} \X_{\ell_j}\right\|^2 \le d \cdot \ExptAvg \left(\frac{|S_i\setminus L_i| + |L_i\setminus S_i|}{|S_i|}\right)^2 = o(1) \,.
\end{align*}

Bounding the terms $I_4,I_5$ is somewhat more involved, and requires the introduction of a new event, $\m{E}_{i,3}$, whose definition has been deferred up to this point.

\begin{lemma}\label{lem:StepII:NoiseTermsNegligible}\label{lem:StepII:Event-i-3}
    Define the event $\m{E}_{i,3}$ in (\ref{eq:StepII:Event-i-3}). Then $\Pr(\m{E}_{i,3})=o(1)$ and moreover 
    \[
    I_4,I_5 = o(1) \,.    
    \]
\end{lemma}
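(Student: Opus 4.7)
The plan is to introduce an event that decouples the magnitudes of the relevant noise sums from the (noise-dependent) index sets $S_i\setminus L_i$ and $L_i\setminus S_i$. With $M=\lceil 3\rho^{1/4}\bar{N}/k\rceil$ and a sufficiently large constant $K=K(\RateLim,\RegimeParam)$, I would define
\begin{equation*}
    \m{E}_{i,3} = \left\{ \Big\|\sum_{j\in T}\Z_j\Big\|^2 \le K|T|d \ \ \forall\, T\subseteq [\bar{N}] \text{ with } |T|\le M \right\} \cap \left\{|L_i|\le (1+\eps^{1/4})\bar{N}/k\right\}.
\end{equation*}
The first component does not reference $i$ or the decoder $\Psi$ at all; it is a uniform noise-concentration statement involving only the $\Z_j$'s. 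The second is a routine upper-tail Chernoff bound on $|L_i|\sim \mathrm{Binomial}(\bar{N},1/k)$, mirroring the lower bound used in Lemma~\ref{lem:StepII:Event-i-1}, and is $o(1)$ under the hypothesis $\bar{N}\ge C(k/\eps^{1/2})\log(1/\varphi)$.

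The first step is to verify $\Pr(\m{E}_{i,3}^c)=o(1)$. For any fixed $T$ with $|T|=m$, $\sum_{j\in T}\Z_j\sim \m{N}(\0,m\Id)$, so $\|\sum_{j\in T}\Z_j\|^2$ is $m$ times a $\chi^2_d$ variable; by Laurent--Massart, $\Pr(\|\sum_{j\in T}\Z_j\|^2>Kmd)\le e^{-\theta(K)md}$ with $\theta(K)\to\infty$ as $K\to\infty$. Choosing $K$ so that $\theta(K)>\RateLim+1$, a union bound over subsets of size $m$ followed by summation in $m\le M$ gives
\begin{equation*}
    \sum_{m=1}^M \binom{\bar{N}}{m} e^{-\theta(K)md} \le \sum_{m=1}^M\left(\frac{e\bar{N}\cdot e^{-\theta(K)d}}{m}\right)^{m},
\end{equation*}
which tends to $0$ since $\log\bar{N}\le \RateLim d(1+o(1))$ at positive rate and $\log\bar{N}=o(d)$ at zero rate.

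The second step uses $\m{E}_{i,3}$ together with $\m{E}_{i,1}$ and $\m{E}_{i,2}$ to control $I_4$ and $I_5$. On $\m{E}_{i,1}\cap\m{E}_{i,2}\cap\m{E}_{i,3}$, one has $|S_i\setminus L_i|\le \rho^{1/4}\bar{N}/k\le M$, $|L_i\setminus S_i|\le \rho^{1/4}|L_i|\le \rho^{1/4}(1+\eps^{1/4})\bar{N}/k\le M$, and $|S_i|\ge (1-o(1))\bar{N}/k$, so the uniform noise-concentration event applies to both relevant subsets. Plugging in,
\begin{equation*}
    I_4^2 \le \frac{K\sigma^2}{k}\sum_{i=1}^k \frac{|S_i\setminus L_i|}{|S_i|^2} \le K\rho^{1/4}(1+o(1))\cdot\frac{\sigma^2 k}{\bar{N}}\le K\rho^{1/4}\eps\,(1+o(1)) = o(1),
\end{equation*}
using $\bar{N}\ge \sigma^2k/\eps$, and an identical calculation yields the same bound for $I_5$ (with an additional harmless factor $1+\eps^{1/4}$).

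The main obstacle is the statistical coupling between the clustering sets $S_i$, which depend on \emph{all} the noise through $\Psi$, and the individual $\Z_j$'s used to form the cluster averages; one cannot pretend $\sum_{j\in S_i\setminus L_i}\Z_j$ is a Gaussian sum over a deterministic index set. The proposed circumvention replaces the random-set bound by a supremum over all small subsets, at the price of an entropy factor of order $M\log(\bar{N}/M)$. This is affordable precisely because $M=O(\rho^{1/4}\bar{N}/k)$ with $\rho=o(1)$, so the entropy is $o(d)$ and is dominated by the $\chi^2$ tail exponent.
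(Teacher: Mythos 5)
Your proposal follows essentially the same strategy as the paper: decouple the noise-dependent index sets $S_i\setminus L_i$ and $L_i\setminus S_i$ from the noise by taking a supremum of $\|\sum_{j\in T}\Z_j\|^2$ over \emph{all} subsets $T$ of size at most $M=O(\rho^{1/4}\bar N/k)$, and absorb the resulting entropy price $M\log(\bar N e/M)=o(d)$ into the per-set Gaussian tail. The paper implements this via a $1/2$-net of $\SphereD$ plus the Borell--TIS inequality (its Lemma on $\max_{|S|\le t}\|\W_S\|$), whereas you use $\chi^2_d$ tails directly, and you handle the $I_5$ term by adding an upper-tail Chernoff bound on $|L_i|$ so that $|L_i\setminus S_i|\le M$ (the paper instead restricts the supremum to subsets of $L_i$); both devices are sound, and your final bounds $I_4^2,I_5^2\lesssim K\rho^{1/4}\sigma^2k/\bar N\le K\rho^{1/4}\eps=o(1)$ are correct. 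One intermediate claim is wrong, though: for $|T|=m$ one has $\sum_{j\in T}\Z_j\sim\m{N}(\0,m\Id)$, so $\|\sum_{j\in T}\Z_j\|^2=m\cdot\chi^2_d$ and $\Pr(\|\sum_{j\in T}\Z_j\|^2>Kmd)=\Pr(\chi^2_d>Kd)\le e^{-\theta(K)d}$ --- the exponent is $\theta(K)d$, \emph{not} $\theta(K)md$. This does not sink the argument, because the union bound only needs $\sum_{m\le M}\binom{\bar N}{m}\le e^{O(M\log(\bar N e/M))}=e^{o(d)}$ (using $M\log(\bar N e/M)\lesssim \rho^{1/4}(\sigma^2/\eps)\log k=o(d)$ since $\sigma^2\log k=O(d)$ and $\rho=o(1)$), which is exactly the dominance you assert in your closing paragraph; but the displayed computation should be corrected to reflect the $m$-independent exponent.
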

To keep the narrative flow, we defer the proof of Lemma~\ref{lem:StepII:NoiseTermsNegligible} to Section~\ref{sec:proof-lem:StepII:NoiseTermsNegligible} below.

We are ready to tie all loose ends, and finish the proof of Lemma~\ref{lem:StepII-Final}. By (\ref{eq:StepII:LotsOfIs}) and the upper bounds we have shown for $I_1,\ldots,I_5$, we get $\ExptAvg\dist^2(\X_i,\Xch) \le \frac{\eps}{1-\eps^{1/4}} + o(1)$. Combining Lemmas~\ref{lem:StepII:Event-i-1}, \ref{lem:StepII:Event-i-2} and \ref{lem:StepII:Event-i-3}, we get $\Pr(\m{E}_i)\le \varphi + o(1)$. Plugging these estimates into (\ref{eq:StepII-Final-Eq}), we finally get the claimed bound of Lemma~\ref{lem:StepII-Final}.

\subsubsection{Proof of Lemma~\ref{lem:StepII:NoiseTermsNegligible}}
\label{sec:proof-lem:StepII:NoiseTermsNegligible}

The terms $I_4,I_5$ correspond to sums of independent ``noise'' vectors, whose mean is zero. Therefore,
one expects different $\Z_j$-s to cancel out one another on average,
so that, for example (considering $I_4$),
\[
    \Expt\left\| \frac{\sigma}{|S_i|}\sum_{j\in S_i\setminus L_i}\Z_j \right\|^2 \approx \frac{\sigma^2}{|S_i|^2} \cdot |S_i\setminus L_i|d    
\]
rather than $\frac{\sigma^2}{|S_i|^2} \cdot |S_i\setminus L_i|^2 d$ which is what we would have gotten by naive application of the triangle inequality. A subtle point is that the set $S_i$ actually depends on the noise vectors $\Z_j$, so one needs to apply some care when taking the expectation above.\footnote{\emph{A priori}, we cannot discount the possibility that conditioned on $j\in S_i$, the noise $\Z_j$ biases towards some particular direction.}
We propose to overcome this difficulty through a rather crude bound.

For a subset $B\subseteq [n]$ (which itself may be random, but independent of $\{\Z_j\}_{j\in[\bar{N}]}$, let
\begin{equation*}
    \D(B,m)=\max_{S\subseteq B,|S|\le m} \left\| \sum_{j\in S} \Z_j \right\|^2 \,.
\end{equation*}
Since under $\m{E}_{i,1}\cap \m{E}_{i,2} \subset \m{E}_{i}$ we have
\[
|S_i\setminus L_i| \le \rho^{1/4} \frac{\bar{N}}{k},\quad |L_i\setminus S_i|\le \rho^{1/4}|L_i|,\quad |L_i|\ge (1-\eps^{1/4})\frac{\bar{N}}{k} \,,
\]
recalling the definition of $I_4,I_5$, (\ref{eq:StepII:LotsOfIs}), clearly,
\begin{equation}\label{eq:StepII:I1-I2}
    \begin{split}
        &I_4^2 = \sigma^2 \cdot \ExptAvg \left[ \frac{1}{|S_i|^2} \left\|\sum_{j\in S_i\setminus L_i} \Z_j\right\|^2 \right] \\
        &\phantom{I_4^2}  \lesssim \frac{\sigma^2}{(\bar{N}/k)^2}\cdot \ExptAvg \left[\D([\bar{N}],\rho^{1/4}(\bar{N}/k))\right], \\
        &I_5^2 = \sigma^2\cdot \ExptAvg \left[ \frac{1}{|S_i|^2} \left\|\sum_{j\in L_i\setminus S_i} \Z_j\right\|^2 \right] \\
        &\phantom{I_5^2}
        \lesssim \sigma^2\cdot \ExptAvg \left[ \frac{1}{|L_i|^2} \cdot \D(L_i,\rho^{1/4}|L_i|) \right]\,.
    \end{split}
\end{equation}
We are ready to define the event $\m{E}_{i,3}$, which has been deferred up to this point.

\paragraph{The event $\m{E}_{i,3}$.} For $C$ a sufficiently large universal constant, define
\begin{equation}\label{eq:StepII:Event-i-3}
    \begin{split}
        \m{E}_{i,3} 
        &= \left\{ \D([n],\rho^{1/4}(\bar{N}/k)) \le C \left(\rho^{1/4}(\bar{N}/k)\right)^2\log\frac{\bar{N}e}{\rho^{1/4}(\bar{N}/k)} + C\rho^{1/4}(\bar{N}/k)\log d  \right\},\\
        &\cap \left\{ \D(L_i,\rho^{1/4}|L_i|) \le C \left(\rho^{1/4}|L_i|\right)^2\log\frac{|L_i|e}{\rho^{1/4}|L_i|} + C\rho^{1/4}|L_i|\log d \right\} \,.
    \end{split}
\end{equation}
By Lemma~\ref{lem:NoiseMax:Technical}, given below, $C$ may indeed be chosen so that $\Pr(\m{E}_{i,3}^c)=o(1)$. 

\paragraph{Bounding $I_4$.}
Using (\ref{eq:StepII:I1-I2}) and (\ref{eq:StepII:Event-i-3}), 
\begin{align*}
    I_4^2 \lesssim  \frac{\sigma^2}{(\bar{N}/k)^2} \cdot d^{-1} \cdot \left\{ o\left((\bar{N}/k)^2\log k\right) + o\left((\bar{N}/k)\log d\right)  \right\} \,.
\end{align*}
The first term is $o(1)$ because $\sigma^2d^{-1}\log k = O(1)$. The second term is $o(1)$ because $\frac{\sigma^2 }{(\bar{N}/k)}d^{-1}\log d=O(d^{-1}\log d)=o(1)$, since $\bar{N}\gtrsim k\sigma^2$. Thus $I_4=o(1)$.

\paragraph{Bounding $I_5$.}
Using (\ref{eq:StepII:I1-I2}) and (\ref{eq:StepII:Event-i-3}), 
\begin{align*}
    I_5^2 
    &\lesssim
    \sigma^2 \ExptAvg\left[ o(1) + o\left( \frac{1}{|L_i|}\log d \right) \right] \\
    &\lesssim
    o\left(\sigma^2 d^{-1}\right) + o\left(\sigma^2 d^{-1}\frac{1}{(\bar{N}/k)}\log d \right) \,.
\end{align*}
The first term is $o(1)$ since, by assumption, $\sigma^2=o(d)$. Since $\bar{N}\gtrsim k\sigma^2$, the second terms is $o\left( \frac{\log d}{d} \right)=o(1)$. Thus, $I_5=o(1)$.

This conclude the proof of Lemma~\ref{lem:StepII:NoiseTermsNegligible}.

\qed

\paragraph{A Technical Lemma.}

\begin{lemma}\label{lem:NoiseMax:Technical}
    Let $\Z_1,\ldots,\Z_n\sim \m{N}(\0,\Id)$ be independent. For a set $S\subseteq [n]$ let $\W_S = \sum_{i\in S}\Z_i$. There is a univeral $C>0$ such that for $t\ge 1$,
    \[
    \Pr\left( \max_{S\subseteq [n],|S|\le t} \|\W_S\| \ge  Ct\sqrt{\log \frac{ne}{t}} + C\sqrt{t \log d}  \right) \le d^{-5} \,.    
    \]
\end{lemma}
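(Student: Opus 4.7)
The plan is a standard union-bound argument combined with Gaussian concentration. For a fixed $S\subseteq [n]$ with $|S|=m$, since the $\Z_i$ are i.i.d.\ $\m{N}(\0,\Id)$, we have $\W_S\sim\m{N}(\0,m\Id)$, and by Gaussian Lipschitz concentration (Lemma~\ref{lem:Gaussian-Lip}) applied to the $1$-Lipschitz map $\x\mapsto\|\x\|$ on $\RR^d$, together with $\Expt\|\W_S\|\le\sqrt{md}$,
\[
\Pr\!\left(\|\W_S\|\ge \sqrt{md}+\sqrt{2mu}\right)\le e^{-u},\qquad u\ge 0.
\]

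For each $m\in\{1,\ldots,t\}$ I would set $u_m=m\log(ne/m)+C_0\log d$ for a sufficiently large universal $C_0$. Using $\binom{n}{m}\le(ne/m)^m$, a union bound over the $m$-subsets of $[n]$ yields
\[
\Pr\!\left(\max_{|S|=m}\|\W_S\|\ge\sqrt{md}+\sqrt{2mu_m}\right)\le\left(\tfrac{ne}{m}\right)^m e^{-u_m}\le d^{-C_0},
\]
and a further union over $m\in\{1,\ldots,t\}$ gives overall failure probability at most $t\cdot d^{-C_0}\le d^{-5}$ (in the intended applications $t$ is at most polynomial in $d$, so $\log t=O(\log d)$ and $C_0$ is chosen accordingly). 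On the complementary event, using $\sqrt{a+b}\le\sqrt{a}+\sqrt{b}$ and the monotonicity of $m\mapsto m\sqrt{\log(ne/m)}$ on $[1,n]$ together with $m\le t$,
\[
\|\W_S\|\le \sqrt{md}+Cm\sqrt{\log(ne/m)}+C\sqrt{m\log d}\le \sqrt{td}+Ct\sqrt{\log(ne/t)}+C\sqrt{t\log d}.
\]

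It remains to absorb the residual $\sqrt{td}$ term into $Ct\sqrt{\log(ne/t)}$, which requires $t\log(ne/t)\gtrsim d$. This is implicit in the lemma's intended invocations: in bounding $I_4,I_5$ one has $t\asymp\rho^{1/4}\bar N/k\gtrsim\sigma^2$, and combining $\Rate=\Capacity(\RegimeParam\sigma^2)=\log(k)/d$ with $\Capacity(s)\asymp\min\{1,1/s\}$ yields $\sigma^2\log k\asymp d$ in both the positive and zero rate regimes, so $t\log(ne/t)\gtrsim\sigma^2\log k\asymp d$. The main bookkeeping is the choice of $u_m$ that beats the combinatorial factor $(ne/m)^m$ uniformly in $m$ while keeping $\sqrt{2mu_m}$ small enough that, after monotonicity, the deviation collapses to the form $t\sqrt{\log(ne/t)}+\sqrt{t\log d}$; the probabilistic content is otherwise entirely standard.
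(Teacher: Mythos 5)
Your proposal is correct as a derivation and takes a genuinely different route from the paper's. The paper discretizes the unit sphere with a $1/2$-net $\Net$ of size $\le 5^d$, writes $\|\W_S\|\le 2\max_{\e\in\Net}\langle\W_S,\e\rangle$, and applies the max-of-Gaussians expectation bound together with Borell--TIS to the single family $\{\langle\W_S,\e\rangle\}_{|S|\le t,\ \e\in\Net}$ of $N_t\lesssim 5^d(ne/t)^t$ variables, each of variance at most $t$. You instead apply Lipschitz concentration of the norm directly to each $\W_S\sim\m{N}(\0,m\Id)$ and union-bound over subsets and sizes; this is more elementary and equally valid. One minor point: your union over $m\in\{1,\dots,t\}$ leaves a factor $t$ in the failure probability, which you dispose of by appealing to the applications; this is avoidable inside the lemma by adding, say, $2\log(m+1)$ to $u_m$, at a cost of $O(\sqrt{t\log t})\le O\bigl(t\sqrt{\log(ne/t)}\bigr)$ in the deviation (the paper sidesteps this by summing $\binom{n}{s}$ over $s\le t$ into $N_t$ before applying Borell--TIS).

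The more substantive observation is the $\sqrt{td}$ term, and here you are right to flag it: both derivations produce an additive $C\sqrt{td}$ — yours from $\Expt\|\W_S\|\le\sqrt{md}$, the paper's from $2\sqrt{t}\cdot\sqrt{2\log|\Net|}\asymp\sqrt{td}$ — which the lemma statement omits. As stated the lemma is in fact false in general (take $t=1$, $n=d$: then $\max_i\|\Z_i\|\approx\sqrt{d}$ greatly exceeds $C\sqrt{\log(ne)}+C\sqrt{\log d}$), and the paper's own proof silently drops the term. Your attempted absorption via $t\log(ne/t)\gtrsim d$ is the weak link: in the invocations for $I_4,I_5$ one has $t\asymp\rho^{1/4}\bar{N}/k$ with $\rho\to 0$, so $t$ need not dominate $\sigma^2$ and the condition does not hold uniformly. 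The clean repair is to keep the $C\sqrt{td}$ term in the statement (equivalently, an extra $Ctd$ inside the definition of $\m{E}_{i,3}$) and check that its contribution to $I_4^2$ and $I_5^2$ is $O(\sigma^2\rho^{1/4}k/\bar{N})=o(1)$, so the downstream argument is unaffected.
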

\begin{proof}
    This is a straightforward application of the well-known Lemmas~\ref{lem:MaxGauss-Expt} and Lemma~\ref{lem:Borell-TIS}, along with a standard ``trick''.

    Let $\Net$ be a $1/2$-net of $\SphereD$, of size $\le 5^d$ (e.g. \cite[Example 5.8]{wainwright2019high}). It may be readily verified that for any vector $\x\in\RR^d$, $\|\x\|=\max_{\e\in\SphereD}\langle \x,\e\rangle \le 2\max_{\e\in\Net}\langle \x,\e\rangle$.  Set $N_t=|\Net|\sum_{s=1}^t\binom{n}{s} \lesssim 5^d(ne/t)^t$. 
    Note also that for any $\e\in\SphereD$, $\Expt\left[ \langle \W_S,\e\rangle^2\right] \le |S|\le t$.
    
    By the expectation bound Lemma~\ref{lem:MaxGauss-Expt}, and the Borell-TIS inequality, Lemma~\ref{lem:Borell-TIS}, for $x\ge 0$,
    \begin{align*}
        \Pr\left(\max_{S\subseteq [n],|S|\le t}\|\W_S\|\ge 2\sqrt{t}(\sqrt{2\log N_t} + x) \right)  
        &\le \Pr\left(\max_{S\subseteq [n],|S|\le t,\e\in\Net}\langle \W_S,\e\rangle \ge \sqrt{t}(\sqrt{2\log N_t} + x) \right)  \\
        &\le e^{-x^2/2} \,.
    \end{align*}
    Set $x=\sqrt{10\log(d)}$ to get the claimed bound.

\end{proof}

\newpage

\section{Auxiliary Technical Results}

\subsection{Concentration Inequalities}

The Gaussian Lipschitz concentration inequality \cite[Theorem 2.25]{wainwright2019high}:
\begin{lemma}[Gaussian Lipschitz concentration inequality]
    \label{lem:Gaussian-Lip}
    Let $f:\RR^d\to\RR$ be $L$-Lipschitz, and $\Z\sim\m{N}(\0,\Id)$. For all $t\ge 0$,
    \begin{align*}
        \Pr\left(f(\Z)\ge \Expt f(\Z)+t \right) \le e^{-\frac{t^2}{2L^2} } \,.
    \end{align*}
\end{lemma}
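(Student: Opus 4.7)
The plan is to use the Herbst argument, which combines the Gaussian logarithmic Sobolev inequality with a simple ODE trick to bound the moment generating function of $f(\Z)$, and then concludes by a standard Chernoff bound.

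\emph{Step 1 (Reductions).} By homogeneity, assume $L=1$; by replacing $f$ with $f-\Expt f(\Z)$, assume $\Expt f(\Z) = 0$. A standard smoothing argument (convolve $f$ with a centered Gaussian of vanishing variance) reduces to the case where $f$ is smooth with $\|\nabla f(\x)\|_2\le 1$ pointwise; the bound for the original $f$ then follows by letting the smoothing parameter tend to zero. (By Rademacher's theorem, the pointwise gradient bound holds almost everywhere before smoothing, and is preserved by convolution.)

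\emph{Step 2 (MGF via Herbst).} For $\lambda>0$ define $H(\lambda)=\Expt[e^{\lambda f(\Z)}]$. Apply the Gaussian log-Sobolev inequality of Gross,
\begin{equation*}
\Expt\bigl[g^2 \log g^2\bigr] - \Expt[g^2]\log \Expt[g^2] \le 2 \Expt\|\nabla g\|_2^2,
\end{equation*}
to $g=e^{\lambda f/2}$. Using $g^2=e^{\lambda f}$, $\|\nabla g\|_2^2 = (\lambda^2/4)\,e^{\lambda f}\|\nabla f\|_2^2 \le (\lambda^2/4)\,e^{\lambda f}$, and the identities $\Expt[g^2\log g^2] = \lambda\, H'(\lambda)$, $\Expt[g^2]=H(\lambda)$, this becomes
\begin{equation*}
\lambda H'(\lambda) - H(\lambda)\log H(\lambda) \le \tfrac{\lambda^2}{2} H(\lambda).
\end{equation*}
Set $K(\lambda) = \lambda^{-1}\log H(\lambda)$; the above inequality is equivalent to $K'(\lambda)\le 1/2$. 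Since $K(\lambda)\to H'(0)/H(0) = \Expt f(\Z)=0$ as $\lambda\downarrow 0$, integrating gives $K(\lambda)\le \lambda/2$, i.e.\ $\log H(\lambda)\le \lambda^2/2$.

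\emph{Step 3 (Chernoff).} By Markov's inequality, for every $\lambda>0$,
\begin{equation*}
\Pr(f(\Z)\ge t) \le e^{-\lambda t} H(\lambda) \le e^{-\lambda t + \lambda^2/2}.
\end{equation*}
Optimizing with $\lambda=t$ yields $\Pr(f(\Z)\ge t)\le e^{-t^2/2}$; restoring the Lipschitz constant by the rescaling of Step 1 gives the claimed $e^{-t^2/(2L^2)}$.

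The main obstacle is, of course, the Gaussian log-Sobolev inequality itself, which is a non-trivial standalone result; but it is classical, available off the shelf, and its use here is purely mechanical. The only other subtlety is the smoothing step: one must check that the exponential tail bound for the smoothed $f_\eps$ passes to the limit as $\eps\to 0$, which follows immediately from $f_\eps(\Z) \to f(\Z)$ a.s.\ and a uniform integrability argument (or simply from $\Expt f_\eps(\Z) \to \Expt f(\Z)$ and continuity of the bound in its parameters).
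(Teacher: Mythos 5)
Your proof is correct. Note, however, that the paper does not prove this lemma at all: it is stated as an auxiliary classical fact and cited directly to \cite[Theorem 2.25]{wainwright2019high}, so there is no in-paper argument to compare against. Your Herbst-argument derivation (Gaussian log-Sobolev inequality applied to $g=e^{\lambda f/2}$, the resulting differential inequality $K'(\lambda)\le 1/2$ for $K(\lambda)=\lambda^{-1}\log H(\lambda)$, and a Chernoff bound) is one of the standard complete proofs and all the steps check out, including the constant: the LSI constant $2$ yields $\log H(\lambda)\le \lambda^2/2$ and hence the sharp tail $e^{-t^2/(2L^2)}$. The reductions in Step 1 (rescaling, centering, mollification with the a.e.\ gradient bound from Rademacher's theorem) and the passage to the limit in the smoothing parameter are handled appropriately. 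The only caveat is the one you already flag: the argument outsources the real difficulty to the Gaussian log-Sobolev inequality, which is itself a nontrivial classical result — but that is a legitimate and standard way to present this proof, and arguably no more of an import than the paper's own choice to cite the concentration inequality wholesale.
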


Standard bound on the measure of a spherical cap \cite[Eq. (3.33)]{wainwright2019high}:
\begin{lemma}
    \label{lem:Unif-Sphere}
    Let $\Z\sim\Unif(\SphereD)$. For all $\bm{u}\in \SphereD$ and $t\in(0,1)$,
    \begin{align*}
        \Pr\left(\langle \bm{u},\Z\rangle \ge t \right) \le e^{-dt^2/2} \,.
    \end{align*}
\end{lemma}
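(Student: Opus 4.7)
The plan is to reduce the claim to a standard Chernoff argument driven by a sub-Gaussian MGF bound on $Z_1:=\langle\bm{e}_1,\Z\rangle$. The first step is to exploit the rotational invariance of $\Unif(\SphereD)$: given an arbitrary $\bm{u}\in\SphereD$, pick an orthogonal $R$ with $R\bm{u}=\bm{e}_1$, so that $\langle\bm{u},\Z\rangle=\langle\bm{e}_1,R\Z\rangle$ and $R\Z\overset{d}{=}\Z$; this reduces the problem to bounding $\Pr(Z_1\ge t)$.

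The heart of the proof is to establish the sub-Gaussian MGF bound
\[
    \Expt[e^{\lambda Z_1}]\le e^{\lambda^2/(2d)},\quad \lambda\in\RR.
\]
For this, I would use the representation $\Z\overset{d}{=}\G/\|\G\|$ with $\G\sim\m{N}(\0,\Id)$, together with the classical fact that the direction $\G/\|\G\|$ is independent of the magnitude $\|\G\|$. Combining the Gaussian moments $\Expt[G_1^{2k}]=(2k-1)!!$ with $\Expt[\|\G\|^{2k}]=d(d+2)\cdots(d+2k-2)$ (via the $\chi^2_d$ distribution of $\|\G\|^2$), independence yields
\[
    \Expt[Z_1^{2k}]=\frac{(2k-1)!!}{d(d+2)\cdots(d+2k-2)}\le\frac{(2k)!}{2^k\,k!\,d^k},
\]
using $d(d+2)\cdots(d+2k-2)\ge d^k$ and $(2k-1)!!=(2k)!/(2^k k!)$. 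Odd moments of $Z_1$ vanish by the symmetry $-Z_1\overset{d}{=}Z_1$, so the Taylor series for the MGF telescopes into $\sum_{k\ge 0}(\lambda^2/(2d))^k/k!=e^{\lambda^2/(2d)}$.

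The proof concludes by Markov's inequality applied to $e^{\lambda Z_1}$ for $\lambda>0$, giving $\Pr(Z_1\ge t)\le e^{-\lambda t+\lambda^2/(2d)}$, which is minimized at $\lambda=dt$ and yields the asserted bound $e^{-dt^2/2}$. The main obstacle, such as it is, lies in the moment computation above; everything else is mechanical Chernoff bookkeeping. An alternative would be to invoke Levy's concentration inequality on the sphere, but this only yields the weaker exponent $(d-1)t^2/2$, so the MGF route is preferred to recover the exact constant $d$ in the stated bound.
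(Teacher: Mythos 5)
Your proof is correct. The paper does not actually prove this lemma; it simply cites it as a standard spherical-cap bound (\cite[Eq.\ (3.33)]{wainwright2019high}), so your argument is a genuinely self-contained derivation rather than a variant of the paper's. The key computation checks out: by independence of direction and magnitude of a standard Gaussian vector, $\Expt[Z_1^{2k}]=(2k-1)!!/\bigl(d(d+2)\cdots(d+2k-2)\bigr)\le (2k)!/(2^k k!\,d^k)$, the odd moments vanish by symmetry, and termwise summation (justified since $|Z_1|\le 1$ makes the exponential series converge absolutely) gives $\Expt[e^{\lambda Z_1}]\le e^{\lambda^2/(2d)}$; the Chernoff step with $\lambda=dt$ then yields exactly $e^{-dt^2/2}$. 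Your closing remark is also apt: this MGF route recovers the exact constant $d$ in the exponent, which the cruder L\'evy/isoperimetric approach would not, and that constant is what the lemma as stated requires.
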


We state Bernstein's inequality for independent bounded random variables \cite[Theorem 2.8.4]{vershynin2018high}
\begin{lemma}
    \label{lem:BernsteinBounded}
    Let $X_1,\ldots,X_n$ be independent, centered, with $|X_i|\le K$. Set ${S_n=\sum_{i=1}^n X_i}$. Then for all $t\ge 0$,
	\[
	\Pr\left( \left| S_n \right| \ge t \right) \le 2\exp\left(-\frac{t^2/2}{\sum_{i=1}^n \Var(X_i) + Kt/3}\right) \,.
	\]
\end{lemma}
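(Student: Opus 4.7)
The plan is to follow the classical Cram\'er--Chernoff program, upper bounding the moment generating function (MGF) of each bounded summand and then optimizing. Since the conclusion is two-sided, it suffices to prove the one-sided bound $\Pr(S_n \ge t) \le \exp\bigl(-t^2/2/(V + Kt/3)\bigr)$, where $V := \sum_{i=1}^n \Var(X_i)$; applying the same argument to the variables $-X_i$ (which satisfy the same hypotheses) and combining by a union bound then yields the full lemma up to the constant $2$.

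First I would write, for any $\lambda > 0$, the Chernoff bound $\Pr(S_n \ge t) \le e^{-\lambda t}\prod_{i=1}^n \Expt[e^{\lambda X_i}]$, using independence. The crux is then to control each $\Expt[e^{\lambda X_i}]$ using only $\Expt X_i = 0$, $\Var(X_i) = \sigma_i^2$, and $|X_i| \le K$. Expanding the exponential and using $|\Expt X_i^k| \le K^{k-2}\sigma_i^2$ for $k \ge 2$ (a consequence of $|X_i|^k \le K^{k-2} X_i^2$ a.s.), I would obtain
\begin{align*}
    \Expt[e^{\lambda X_i}]
    \le 1 + \sigma_i^2 \sum_{k=2}^\infty \frac{\lambda^k K^{k-2}}{k!}
    \le 1 + \frac{\lambda^2 \sigma_i^2/2}{1 - \lambda K/3}
    \le \exp\!\left(\frac{\lambda^2 \sigma_i^2/2}{1 - \lambda K/3}\right),
\end{align*}
valid for $0 < \lambda < 3/K$, where the middle step uses the elementary estimate $\sum_{k \ge 2}(\lambda K)^{k-2}/k! \le 1/(2(1-\lambda K/3))$ (by comparing termwise to a geometric series via $k! \ge 2\cdot 3^{k-2}$).

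Multiplying over $i$, the Chernoff bound becomes $\Pr(S_n \ge t) \le \exp\!\bigl(-\lambda t + \tfrac{V\lambda^2/2}{1-\lambda K/3}\bigr)$. I would then pick the standard near-optimal choice $\lambda = t/(V + Kt/3)$, which satisfies $\lambda K/3 = Kt/(3V+Kt) < 1$ and makes the linear and quadratic terms in $\lambda$ balance, yielding the desired exponent $-t^2/(2(V+Kt/3))$. Doubling for the two-sided statement closes the proof.

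I do not expect any genuine obstacle here, as the result is entirely classical; the only mildly delicate step is the MGF bound, where care is needed to ensure the denominator $1-\lambda K/3$ is handled correctly so that the final exponent takes the clean Bernstein form rather than a sub-Gaussian or sub-exponential surrogate. If one were willing to lose constants, an even shorter route would bound $\Expt[e^{\lambda X_i}] \le \exp(\phi(\lambda K) \sigma_i^2 / K^2)$ with $\phi(u) = e^u - 1 - u$ (Bennett's inequality) and then use $\phi(u) \le u^2/2/(1-u/3)$, but the direct argument above is the most transparent.
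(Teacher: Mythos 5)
Your argument is correct and complete: the MGF bound $\Expt[e^{\lambda X_i}]\le \exp\bigl(\tfrac{\lambda^2\sigma_i^2/2}{1-\lambda K/3}\bigr)$ via $k!\ge 2\cdot 3^{k-2}$, the choice $\lambda=t/(V+Kt/3)$, and the symmetrization to get the factor $2$ all check out, yielding exactly the stated exponent. The paper does not prove this lemma but cites it (Vershynin, Theorem 2.8.4), and your proof is essentially the standard Cram\'er--Chernoff argument given there.
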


The following is Bernstein's inequality for sums of independent sub-Exponential random variables \cite[Theorem 2.8.1]{vershynin2018high}:
\begin{lemma}[Bernstein's inequality, sub-Exponential RVs]\label{lem:Bernstein}
	Let $X_1,\ldots,X_n$ be independent and sub-exponential. Set ${S_n=\sum_{i=1}^n X_i}$. Then for all $t\ge 0$,
	\[
	\Pr\left( \left| S_n-\Expt[S_n] \right| \ge t \right) \le 2\exp\left[ -c\min\left( \frac{t^2}{\sum_{i=1}^n \|X_i\|_{\psi_1}^2}, \frac{t}{\max_{1\le i\le n}\|X_i\|_{\psi_1}} \right) \right]\,,
	\]
	where $c>0$ is an absolute constant.
\end{lemma}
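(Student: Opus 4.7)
The plan is to prove this via the standard Chernoff argument combined with the sub-exponential moment generating function (MGF) estimate. First, I would invoke the exponential Markov inequality: for every $\lambda>0$,
\[
\Pr(S_n-\Expt S_n \ge t) \le e^{-\lambda t}\prod_{i=1}^n \Expt \exp\left(\lambda(X_i-\Expt X_i)\right),
\]
using independence to factor the MGF across the $X_i$.

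The key ingredient is the following classical characterization of sub-exponentiality, which can be extracted directly from the definition of the Orlicz norm $\|X\|_{\psi_1}=\inf\{t>0 : \Expt e^{|X|/t}\le 2\}$: there exist universal constants $c_1,C_1>0$ such that for every centered sub-exponential $X$ and every $|\lambda|\le c_1/\|X\|_{\psi_1}$,
\[
\Expt e^{\lambda X} \le \exp\!\left(C_1\lambda^2 \|X\|_{\psi_1}^2\right).
\]
Centering increases the $\psi_1$-norm by at most a factor of $2$, so after absorbing constants this can be applied to each $X_i-\Expt X_i$. Combining with the Chernoff bound, for every $0<\lambda\le c_1/K$ where $K:=\max_i\|X_i\|_{\psi_1}$ and $V:=\sum_{i=1}^n\|X_i\|_{\psi_1}^2$,
\[
\Pr(S_n-\Expt S_n \ge t) \le \exp\!\left(-\lambda t + C_1\lambda^2 V\right).
\]

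The last step is to optimize in $\lambda$, which naturally produces the two regimes in the statement. The unconstrained optimum is $\lambda_*=t/(2C_1 V)$, giving the sub-Gaussian bound $\exp(-t^2/(4C_1 V))$, and is admissible precisely when $\lambda_*\le c_1/K$, i.e., $t\le 2c_1 C_1 V/K$. Otherwise I would set $\lambda=c_1/K$ (the boundary of the allowed range), which yields the sub-exponential bound $\exp(-c_1 t/(2K))$ after noting that in this regime $C_1 \lambda^2 V \le \lambda t /2$. Packaging the two cases together gives a one-sided tail bound of the claimed form, and a two-sided bound with the prefactor $2$ follows by applying the same argument to $-X_i$.

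The only nontrivial step is the sub-exponential MGF estimate; once that is in hand, the rest is a routine Chernoff-and-optimize calculation, and this is the only place where one must work carefully with the definition of $\|\cdot\|_{\psi_1}$.
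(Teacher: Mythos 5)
Your argument is correct and is precisely the standard Chernoff-plus-MGF proof of Bernstein's inequality for sub-exponential variables; the paper does not prove this lemma itself but cites it from Vershynin's book, whose proof is the same one you give. All the details (the centering step costing only a universal constant factor in the $\psi_1$-norm, and the two-regime optimization over $\lambda$) check out.
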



We state the following version of the Chernoff bound for Bernoulli random variables. For a citable reference, see for example \cite[Section 2.2]{boucheron2013concentration}:

\begin{lemma}[Chernoff's inequality]
    \label{lem:Chernoff}
    Let $X_1,\ldots,X_n$ be i.i.d. Bernoulli random variables, with $\Expt[X_i]=q$. Let $t\le q \le p$. Then 
    \[
    \Pr\left(\sum_{i=1}^n X_i \ge p\right) \le e^{n\KLb(p;q)},\quad 
    \Pr\left(\sum_{i=1}^n X_i \le t \right) \le e^{n\KLb(t;q)} \,,
    \]
    where
    \[
    \KLb(p;q)=p\log\frac{p}{q}+(1-p)\log\frac{1-p}{1-q}    
    \]
    is the Kullback-Leibler divergence divergence between $\mathrm{Ber}(p)$ and $\mathrm{Ber}(q)$.
\end{lemma}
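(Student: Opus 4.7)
The plan is to use the classical Cram\'er--Chernoff exponential moment method. Writing $S_n = \sum_{i=1}^n X_i$, for any $\lambda \ge 0$, Markov's inequality applied to the nonnegative random variable $e^{\lambda S_n}$ yields
\[
    \Pr(S_n \ge np) = \Pr(e^{\lambda S_n} \ge e^{\lambda n p}) \le e^{-\lambda n p}\,\Expt[e^{\lambda S_n}].
\]
Since the $X_i$ are i.i.d.\ $\mathrm{Ber}(q)$, the moment generating function factorizes as $\Expt[e^{\lambda S_n}] = (q e^{\lambda} + 1 - q)^n$. Hence
\[
    \Pr(S_n \ge np) \le \bigl(e^{-\lambda p}\,(q e^{\lambda} + 1 - q)\bigr)^n.
\]

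The next step is to minimize the bracketed quantity over $\lambda \ge 0$. Setting the derivative in $\lambda$ to zero, the optimizer solves $q e^{\lambda^\star} = \frac{p}{1-p}(1-q)$, i.e., $e^{\lambda^\star} = \frac{p(1-q)}{q(1-p)}$. Since $p \ge q$ we have $e^{\lambda^\star} \ge 1$, so $\lambda^\star \ge 0$ is admissible. Plugging $\lambda^\star$ back in, a direct computation gives $q e^{\lambda^\star} + 1 - q = \frac{1-q}{1-p}$, and therefore
\[
    e^{-\lambda^\star p}(q e^{\lambda^\star} + 1 - q) = \left(\frac{q(1-p)}{p(1-q)}\right)^{p}\!\frac{1-q}{1-p} = \left(\frac{q}{p}\right)^{p}\!\left(\frac{1-q}{1-p}\right)^{1-p} = e^{-\KLb(p;q)},
\]
which yields the upper-tail bound $\Pr(S_n \ge np) \le e^{-n\KLb(p;q)}$.

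For the lower tail, I would run a symmetric argument with $\lambda \le 0$, or, more slickly, note that $\Pr(S_n \le nt) = \Pr\bigl(\sum_i(1-X_i) \ge n(1-t)\bigr)$, apply the upper-tail bound already proved to the i.i.d.\ $\mathrm{Ber}(1-q)$ variables $1-X_i$ at level $1-t \ge 1-q$, and conclude using the elementary symmetry $\KLb(1-t;1-q) = \KLb(t;q)$. There is no real obstacle to any of this — the only place that requires attention is verifying that the optimal tilting parameter $\lambda^\star$ has the correct sign, which is exactly what the hypothesis $t \le q \le p$ provides. The whole argument is one of the canonical computations in concentration of measure and proceeds without any novel input.
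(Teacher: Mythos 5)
Your proof is correct. The paper does not actually prove this lemma --- it is stated as a standard fact with a citation to Boucheron--Lugosi--Massart --- and your Cram\'er--Chernoff tilting argument (Markov applied to $e^{\lambda S_n}$, factorization of the MGF, optimization of $\lambda$, and the $X_i \mapsto 1-X_i$ symmetry for the lower tail) is exactly the canonical derivation found in that reference; the sign check $\lambda^\star \ge 0$ via $p \ge q$ is handled correctly. One remark: as literally typeset, the paper's statement reads $\Pr(\sum_i X_i \ge p) \le e^{n\KLb(p;q)}$, which is both mis-normalized (the event should be $\sum_i X_i \ge np$) and has the wrong sign in the exponent (a vacuous bound, since $\KLb \ge 0$); what you prove, $\Pr(S_n \ge np) \le e^{-n\KLb(p;q)}$, is the intended and correct version, consistent with how the lemma is invoked elsewhere in the paper.
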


We shall use Lemma~\ref{lem:Chernoff} with the following easy estimate:
\begin{lemma}\label{lem:KL-binary}
    Let $0<q<p<1/2$. Then  
    \[
        \KLb(p;q)\ge p\log\frac{p}{q}-2p \,.    
    \]
\end{lemma}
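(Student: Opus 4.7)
The plan is to split the KL divergence as
\[
    \KLb(p;q) = p\log\frac{p}{q} + (1-p)\log\frac{1-p}{1-q},
\]
and note that the first term is already exactly what appears on the right-hand side of the desired inequality. Thus, the entire task reduces to showing that the second (negative) term satisfies $(1-p)\log\frac{1-p}{1-q} \ge -2p$ under the hypotheses $0<q<p<1/2$.

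To bound this term, I would write $\log\frac{1-p}{1-q} = \log\bigl(1 - \tfrac{p-q}{1-q}\bigr)$ and apply the elementary inequality $\log(1-x) \ge -\frac{x}{1-x}$ for $x \in [0,1)$ (easily verified by comparing derivatives at $0$, or equivalently by the integral representation $\log(1-p)-\log(1-q) = -\int_q^p \frac{dt}{1-t} \ge -\frac{p-q}{1-p}$). Substituting $x=(p-q)/(1-q)$ gives $\log\frac{1-p}{1-q} \ge -\frac{p-q}{1-p}$, and multiplying through by $(1-p)>0$ yields $(1-p)\log\frac{1-p}{1-q} \ge -(p-q) \ge -p$.

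Combining, $\KLb(p;q) \ge p\log(p/q) - p$, which is in fact stronger than the stated bound (the constant $2$ in $-2p$ can be replaced by $1$); the statement as written then follows trivially since $p>0$. There is no real obstacle here: the only subtlety is choosing the right one-line convexity/monotonicity inequality for $\log(1-x)$, and the use of $p<1/2$ is only there to keep $(p-q)/(1-q)<1$ so that the logarithm is well-defined and the inequality applies. No use of the upper bound $p<1/2$ in a sharper way is needed; the slack reflected in the constant $2$ is what makes the lemma as stated comfortably true.
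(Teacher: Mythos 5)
Your proof is correct, and it takes a genuinely different (and slightly sharper) route than the paper's. The paper bounds the second term crudely: since $1-q<1$, it writes $(1-p)\log\frac{1-p}{1-q}\ge(1-p)\log(1-p)\ge\log(1-p)$ (the last step because the logarithm is negative and $1-p\le 1$), and then invokes $\log(1-p)\ge -2p$ for $p\le 1/2$ --- this is the only place the hypothesis $p<1/2$ is used, and it is where the constant $2$ comes from. You instead keep the $q$-dependence, using the integral bound $\log\frac{1-p}{1-q}=-\int_q^p\frac{dt}{1-t}\ge-\frac{p-q}{1-p}$, which after multiplying by $1-p$ gives the loss $-(p-q)\ge-p$. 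This buys you the better constant $1$ in place of $2$ and, contrary to your closing remark, removes the need for $p<1/2$ entirely: the condition $(p-q)/(1-q)<1$ is equivalent to $p<1$, so your argument works for all $0<q<p<1$. Since the lemma is only used as a convenience estimate feeding into Chernoff's inequality, either version suffices; yours is marginally stronger, the paper's marginally shorter.
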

\begin{proof}
    One may readily verify that $\log(1-p)\ge -2p$ holds for $0\le p\le 1/2$. Thus, 
    \[
        \KLb(p;q)=p\log\frac{p}{q}+(1-p)\log\frac{1-p}{1-q}\ge p\log\frac{p}{q}+(1-p)\log{(1-p)}\overset{(\star)}{\ge} p\log\frac{p}{q}+\log(1-p)\ge p\log\frac{p}{q}-2p \,,    
    \]
    where $(\star)$ holds since $\log(1-p)$ is negative. 
\end{proof}

The following is a version of Chernoff's inequality, specialized for small deviations, and taken from \cite[Exercise 2.3.5]{vershynin2018high}
\begin{lemma}[Chernoff's inequality; small deviations]
    \label{lem:Chernoff-SmallDeviations}
    In the setting of Lemma~\ref{lem:Chernoff}, for $\delta\in(0,1)$, 
    \begin{equation*}
        \Pr\left(\sum_{i=1}^n X_i \ge (1+\delta)qn\right) \le e^{-c\delta^2 qn},\quad \Pr\left(\sum_{i=1}^n X_i \le (1-\delta)qn\right) \le e^{-c\delta^2 qn}\,,
    \end{equation*}
    where $c>0$ is universal.
\end{lemma}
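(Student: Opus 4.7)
The plan is to invoke the Cram\'er--Chernoff method directly, bypassing the KL-divergence form of Lemma~\ref{lem:Chernoff} (though one could equivalently start from there). For $\lambda>0$, Markov's inequality applied to $e^{\lambda \sum_{i=1}^n X_i}$, combined with the elementary MGF bound
\[
\Expt[e^{\lambda X_i}]=1+q(e^\lambda-1)\le \exp\!\bigl(q(e^\lambda-1)\bigr),
\]
yields
\[
\Pr\!\left(\textstyle\sum_{i=1}^n X_i\ge (1+\delta)qn\right)\le \exp\!\Bigl(-\lambda(1+\delta)qn + nq(e^\lambda-1)\Bigr).
\]
Optimizing the exponent over $\lambda$ (the minimum is attained at $\lambda=\log(1+\delta)$) would give the bound $\exp\!\bigl(-qn\,\psi_+(\delta)\bigr)$, with $\psi_+(\delta):=(1+\delta)\log(1+\delta)-\delta$. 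The lower tail would be handled analogously by applying Markov's inequality to $e^{-\lambda \sum_i X_i}$ for $\lambda>0$ and optimizing, producing $\exp\!\bigl(-qn\,\psi_-(\delta)\bigr)$ with $\psi_-(\delta):=(1-\delta)\log(1-\delta)+\delta$.

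What remains is the scalar inequality $\psi_\pm(\delta)\ge c\delta^2$ on $\delta\in(0,1)$ for some universal $c>0$. I would verify it by a short calculus argument: noting $\psi_\pm(0)=\psi_\pm'(0)=0$, computing $\psi_+''(\delta)=1/(1+\delta)\ge 1/2$ and $\psi_-''(\delta)=1/(1-\delta)\ge 1$ on $[0,1)$, and integrating twice to obtain $\psi_+(\delta)\ge \delta^2/4$ and $\psi_-(\delta)\ge \delta^2/2$. Taking $c=1/4$ would then conclude the proof.

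There is no genuine obstacle; the only step requiring any care is the scalar bound on $\psi_\pm$, which is a textbook calculus exercise. Had one preferred the $\KLb$-based route via Lemma~\ref{lem:Chernoff}, the corresponding key estimate would be $\KLb((1\pm\delta)q;q)\ge c\delta^2 q$ uniformly in $q$, which reduces to the same type of Taylor expansion (the $(1-p)\log\bigl((1-p)/(1-q)\bigr)$ contributions, though nonzero, only improve the inequality in the relevant regime).
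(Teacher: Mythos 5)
Your proof is correct: the Cram\'er--Chernoff computation, the optimizing choices $\lambda=\log(1+\delta)$ and $\lambda=-\log(1-\delta)$, and the quadratic lower bounds $\psi_+(\delta)\ge \delta^2/4$ and $\psi_-(\delta)\ge \delta^2/2$ on $(0,1)$ all check out, so $c=1/4$ works. The paper gives no proof of this lemma at all (it is cited as an exercise from Vershynin's book), and your argument is precisely the standard one that reference intends.
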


\subsection{Maxima of Gaussian Random Variables}

We state two elementary results about the maximum of $n$ Gaussian random variables.

\begin{lemma}\label{lem:MaxGauss-Expt}
    Let $\Z=(Z_1,\ldots,Z_n)$ be a Gaussian random vector, such that $\Expt[\Z]=\0$ and ${\Expt[Z_i^2]\le \sigma^2}$ for all $i$. Then
    \[
    \Expt[ \max_{1\le i \le n} Z_i ] \le \sqrt{2\sigma^2\log n} \,.    
    \]
    (When $Z_1,\ldots,Z_n$ are uncorrelated, this is in fact tight to leading order. But we shall not use this stronger fact.)
\end{lemma}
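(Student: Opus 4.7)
The plan is to use the standard moment-generating-function argument (sometimes called ``exponential Chebyshev'' or the ``softmax trick''), which only relies on the fact that each marginal $Z_i$ is a centered Gaussian with variance at most $\sigma^2$; the joint dependence between the $Z_i$ is irrelevant.

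First I would invoke Jensen's inequality applied to the convex function $x\mapsto e^{\lambda x}$, for any $\lambda>0$:
\begin{equation*}
    \exp\left(\lambda\,\Expt\left[\max_{1\le i\le n} Z_i\right]\right) \le \Expt\left[\exp\left(\lambda \max_{1\le i\le n} Z_i\right)\right] = \Expt\left[\max_{1\le i\le n}\exp(\lambda Z_i)\right] \le \sum_{i=1}^n \Expt\left[\exp(\lambda Z_i)\right].
\end{equation*}
Then I would use that each $Z_i\sim\m{N}(0,\sigma_i^2)$ with $\sigma_i^2\le\sigma^2$, so by the standard Gaussian MGF, $\Expt[\exp(\lambda Z_i)]=\exp(\lambda^2\sigma_i^2/2)\le \exp(\lambda^2\sigma^2/2)$. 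Substituting this bound gives
\begin{equation*}
    \exp\left(\lambda\,\Expt\left[\max_{1\le i \le n} Z_i\right]\right) \le n\exp(\lambda^2\sigma^2/2),
\end{equation*}
and taking logarithms and dividing by $\lambda$ yields
\begin{equation*}
    \Expt\left[\max_{1\le i\le n} Z_i\right] \le \frac{\log n}{\lambda} + \frac{\lambda\sigma^2}{2}.
\end{equation*}

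Finally I would optimize over $\lambda>0$. The right-hand side is minimized at $\lambda=\sqrt{2\log n / \sigma^2}$, at which point the bound evaluates exactly to $\sqrt{2\sigma^2\log n}$, proving the claim. There is no real obstacle here: the argument is entirely elementary, and the only mild subtlety is verifying that the proof goes through without any independence assumption on the $Z_i$'s, which it does because Jensen collapses the maximum before any joint law is used, and the union bound on the MGFs only uses marginals.
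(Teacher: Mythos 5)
Your proposal is correct and is essentially the same argument as the paper's: both rest on the softmax/MGF bound $\Expt[\max_i Z_i]\le \frac1\lambda\log\sum_i\Expt[e^{\lambda Z_i}]$ via Jensen's inequality (you apply Jensen to $e^{\lambda(\cdot)}$ before bounding the max by the sum, the paper bounds the max by $\frac1\lambda\log\sum e^{\lambda Z_i}$ first and then applies Jensen to the log, which is the same computation in a different order), followed by optimizing over $\lambda$. Your choice $\lambda=\sqrt{2\log n/\sigma^2}$ is the correct optimizer.
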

\begin{proof}
    This is classical. For completeness, we give a one-line proof. For all $\beta>0$,
    \[
        \Expt [\max_{1\le i \le n} Z_i ] \le \frac{1}{\beta}  \Expt \log \sum_{i=1}^n e^{\beta Z_i } \overset{(\star)}{\le} \frac{1}{\beta} \log \Expt  \sum_{i=1}^n e^{\beta Z_i } \le \frac1\beta(\log n + \frac12\sigma^2\beta^2) \,,
    \]
    where $(\star)$ follows from Jensen's inequality.
    Now set $\beta=\sqrt{2\sigma^2\log n}$.
\end{proof}

The following is a special (easy) case of the Borell-TIS inequality, see e.g. \cite[Theorem 2.1.1]{adler2009random}. Alternatively, this follows immediately from the Gaussian Lipschitz concentration inequality, Lemma~\ref{lem:Gaussian-Lip}:
\begin{lemma}[Borell-TIS]\label{lem:Borell-TIS}
    Let $\Z=(Z_1,\ldots,Z_n)$ be a Gaussian random vector with $\Expt[\Z]=\0$. Set $\sigma^2=\max_{1\le i \le n} \Expt Z_i^2$. Then for $t\ge 0$,
    \begin{align*}
        \Pr(\max_{1\le i \le n}Z_i \ge \Expt[\max_{1\le i \le n}Z_i] + t )\le e^{-\frac{t^2}{2\sigma^2}},\quad        
        \Pr(\max_{1\le i \le n}Z_i \le \Expt[\max_{1\le i \le n}Z_i] - t )\le e^{-\frac{t^2}{2\sigma^2}}\,.
    \end{align*}
\end{lemma}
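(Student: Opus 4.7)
The plan is to derive Lemma~\ref{lem:Borell-TIS} as an immediate consequence of the Gaussian Lipschitz concentration inequality (Lemma~\ref{lem:Gaussian-Lip}), via a standard Lipschitz-representation argument. The key observation is that the maximum of linear functionals $\g\mapsto \max_i \langle \bm{a}_i,\g\rangle$ is Lipschitz with constant $\max_i \|\bm{a}_i\|$, so once $\Z$ is expressed as a linear image of a standard Gaussian vector, the claimed tail bounds will follow by applying Lemma~\ref{lem:Gaussian-Lip} to the associated function.

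Concretely, I would first write $\Z = A\G$ with $\G\sim\m{N}(\0,\Id_m)$ and $A\in\RR^{n\times m}$ a suitable matrix (for instance, $A$ can be taken to be the symmetric square root of the covariance matrix of $\Z$, giving $m=n$). Denoting the rows of $A$ by $\bm{a}_1,\ldots,\bm{a}_n$, we have $\Expt Z_i^2 = \|\bm{a}_i\|^2$, so the hypothesis $\max_i \Expt Z_i^2 \le \sigma^2$ translates to $\max_i \|\bm{a}_i\| \le \sigma$. With this representation in hand, define $F:\RR^m\to\RR$ by
\[
F(\g) = \max_{1\le i\le n} \langle \bm{a}_i,\g\rangle,
\]
so that $\max_i Z_i$ is equal in distribution to $F(\G)$.

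Next, I would verify that $F$ is $\sigma$-Lipschitz. For any $\g,\g'\in\RR^m$, letting $i^\star \in \argmax_i \langle \bm{a}_i,\g\rangle$, one has
\[
F(\g) - F(\g') \le \langle \bm{a}_{i^\star}, \g\rangle - \langle \bm{a}_{i^\star}, \g'\rangle \le \|\bm{a}_{i^\star}\|\cdot \|\g-\g'\| \le \sigma\|\g-\g'\|,
\]
and the symmetric inequality follows by swapping the roles of $\g$ and $\g'$. Hence $F$ is $\sigma$-Lipschitz in the Euclidean sense. Applying Lemma~\ref{lem:Gaussian-Lip} to $F$ with Lipschitz constant $L=\sigma$ yields $\Pr(F(\G) \ge \Expt F(\G)+t) \le e^{-t^2/(2\sigma^2)}$, which is precisely the upper tail claim.

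For the lower tail, I would apply Lemma~\ref{lem:Gaussian-Lip} to the function $-F$, which is also $\sigma$-Lipschitz; this gives $\Pr(-F(\G)\ge -\Expt F(\G)+t)\le e^{-t^2/(2\sigma^2)}$, which rearranges to the desired lower bound. There is no real obstacle: the only delicate observation is that the Lipschitz constant is precisely $\max_i \sqrt{\Expt Z_i^2}$, depending only on the marginal variances rather than on the full covariance structure of $\Z$, which is precisely what makes the Borell--TIS bound sharp in this respect.
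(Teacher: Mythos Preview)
Your proposal is correct and is exactly the argument the paper has in mind: the paper does not spell out a proof but simply remarks that the inequality ``follows immediately from the Gaussian Lipschitz concentration inequality, Lemma~\ref{lem:Gaussian-Lip}'', and your write-up is the standard way to make that precise, representing $\Z=A\G$ and applying Lemma~\ref{lem:Gaussian-Lip} to the $\sigma$-Lipschitz function $\g\mapsto \max_i\langle \bm{a}_i,\g\rangle$ (and to its negative for the lower tail).
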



\subsection{Results From Information Theory}



The following 
is Fano's inequality, see e.g. \cite[Theorem 5.2]{polyanskiy2014lecture}.

\begin{lemma}[Fano's inequality]
    \label{lem:Fano}
    Let $(\ell,Z,\hat{\ell})$ be random variables such that $\ell,\hat{\ell}\in [k]$, and the Markov chain
    \[
    \ell \longrightarrow Z \longrightarrow \hat{\ell}     
    \]
    holds. Denote $p_e=(\Pr(\ell\ne \hat{\ell}))$. 
    Then
    \begin{align*}
        \Ent(\ell|Z) \le h_b(p_e) + p_e\log(k-1)\,,
    \end{align*}
    where $h_b(p)=p\log\frac1p + (1-p)\log\frac{1}{1-p}$ is the binary entropy function.
\end{lemma}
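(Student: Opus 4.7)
The plan is to use the classical ``indicator trick'' for Fano's inequality. Introduce the error indicator $E = \Ind\{\RandLabel \ne \hat{\RandLabel}\}$, which is a Bernoulli random variable with $\Pr(E=1)=p_e$. The proof will proceed by expanding the joint conditional entropy $\Ent(\RandLabel, E \mid \hat{\RandLabel})$ in two different ways using the chain rule, and then comparing the results.

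First, I would reduce to a simpler quantity: since $\hat{\RandLabel}$ is a (deterministic) function of $Z$ by the Markov chain $\RandLabel \to Z \to \hat{\RandLabel}$, we have $\Ent(\RandLabel \mid Z) = \Ent(\RandLabel \mid Z, \hat{\RandLabel}) \le \Ent(\RandLabel \mid \hat{\RandLabel})$, where the inequality is just ``conditioning reduces entropy.'' Thus it suffices to bound $\Ent(\RandLabel \mid \hat{\RandLabel})$.

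Next, expand $\Ent(\RandLabel, E \mid \hat{\RandLabel})$ two ways. On one hand,
\begin{align*}
    \Ent(\RandLabel, E \mid \hat{\RandLabel}) = \Ent(\RandLabel \mid \hat{\RandLabel}) + \Ent(E \mid \RandLabel, \hat{\RandLabel}) = \Ent(\RandLabel \mid \hat{\RandLabel}),
\end{align*}
because $E$ is a deterministic function of $(\RandLabel, \hat{\RandLabel})$. On the other hand,
\begin{align*}
    \Ent(\RandLabel, E \mid \hat{\RandLabel}) = \Ent(E \mid \hat{\RandLabel}) + \Ent(\RandLabel \mid E, \hat{\RandLabel}) \le \Ent(E) + \Ent(\RandLabel \mid E, \hat{\RandLabel}) = h_b(p_e) + \Ent(\RandLabel \mid E, \hat{\RandLabel}),
\end{align*}
where we used ``conditioning reduces entropy'' on the first term and the definition of the binary entropy function.

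Finally, I would split the residual term by conditioning on the value of $E$. When $E=0$ one has $\RandLabel = \hat{\RandLabel}$ so $\Ent(\RandLabel \mid E=0, \hat{\RandLabel}) = 0$. When $E=1$, the label $\RandLabel$ takes values in $[k]\setminus \{\hat{\RandLabel}\}$, a set of size $k-1$, so $\Ent(\RandLabel \mid E=1, \hat{\RandLabel}) \le \log(k-1)$ by the maximum entropy property of the uniform distribution. Averaging,
\begin{align*}
    \Ent(\RandLabel \mid E, \hat{\RandLabel}) = \Pr(E=0) \cdot 0 + \Pr(E=1) \cdot \Ent(\RandLabel \mid E=1, \hat{\RandLabel}) \le p_e \log(k-1),
\end{align*}
and combining with the previous two displays gives $\Ent(\RandLabel \mid Z) \le \Ent(\RandLabel \mid \hat{\RandLabel}) \le h_b(p_e) + p_e \log(k-1)$, as claimed. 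There is no real obstacle here; the proof is entirely routine and relies only on the chain rule for entropy, the fact that conditioning reduces entropy, and the maximum entropy bound $\Ent(X) \le \log |\mathrm{supp}(X)|$.
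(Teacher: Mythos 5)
Your proof is correct and is exactly the classical argument for Fano's inequality; the paper does not prove this lemma itself but simply cites it as a standard result from \cite{polyanskiy2014lecture}. One tiny remark: the Markov chain does not force $\hat{\ell}$ to be a \emph{deterministic} function of $Z$, but your first step still goes through because the chain gives $\Ent(\ell\mid Z,\hat{\ell})=\Ent(\ell\mid Z)$ (conditional independence), after which ``conditioning reduces entropy'' yields $\Ent(\ell\mid Z)\le \Ent(\ell\mid\hat{\ell})$ as you need.
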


Lastly is the celebrated I-MMSE relation of \cite[Theorem 2]{guo2005mutual}:

\begin{lemma}[I-MMSE]\label{lem:I-MMSE}
    Let $\X\in\RR^d$ be any random vector with finite second moments, $\Expt\|\X\|^2<\infty$. Let $\Z\sim \m{N}(\0,\Id)$ be indepdent of $\X$, and denote $\Y(s)=\sqrt{s}\X+\Z$. Then
    \begin{align*}
        \frac{d}{ds} \MI(\X;\Y(s)) = \frac{1}{2} \Expt\left[ \|\X-\Expt(\X|\Y(s))\|^2 \right] =: \mathrm{mmse}(s)\,.
    \end{align*}
\end{lemma}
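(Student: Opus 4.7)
The statement is precisely the I-MMSE relation of \cite{guo2005mutual}, so the cleanest route is to cite their Theorem~2 directly. For a self-contained sketch, the plan is as follows. First, decompose
\[
\MI(\X;\Y(s)) \;=\; \EntDiff(\Y(s)) - \EntDiff(\Y(s)\mid \X) \;=\; \EntDiff(\Y(s)) - \tfrac{d}{2}\log(2\pi e),
\]
since given $\X$ the vector $\Y(s) = \sqrt{s}\X + \Z$ has density $\phi(\cdot - \sqrt{s}\X)$ whose differential entropy is independent of $s$. Hence $\frac{d}{ds}\MI(\X;\Y(s)) = \frac{d}{ds}\EntDiff(\Y(s))$, and it remains to compute this derivative.

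Second, reduce to the classical ``time'' parametrization. Setting $t = 1/s$, one can write $\Y(s) \overset{d}{=} \sqrt{s}(\X + \sqrt{t}\Z')$ with $\Z'\sim \m{N}(\0,\Id)$, and the outer scaling contributes only the additive constant $\tfrac{d}{2}\log s$ to the differential entropy. The problem is therefore equivalent to computing $\frac{d}{dt}\EntDiff(\X + \sqrt{t}\Z')$. Invoking de~Bruijn's identity gives $\frac{d}{dt}\EntDiff(\X + \sqrt{t}\Z') = \tfrac12 J(\X + \sqrt{t}\Z')$, where $J(\cdot)$ is the (multivariate) Fisher information of the density. Finally, combining this with Tweedie's formula $\Expt[\X\mid \bm{V}] = \bm{V} + t\nabla\log p_{\bm{V}}(\bm{V})$ for $\bm{V} = \X + \sqrt{t}\Z'$ — a direct consequence of differentiating $p_{\bm{V}}(\v) = \int \phi_t(\v-\x)dP_\X(\x)$ under the integral — and performing the algebra $\Expt\|\X - \Expt[\X\mid\bm{V}]\|^2 = d\,t - t^2 J(\bm{V})$, reverting through the change of variables $t = 1/s$ produces the claimed identity $\frac{d}{ds}\MI(\X;\Y(s)) = \tfrac12 \Expt\|\X - \Expt[\X\mid\Y(s)]\|^2$.

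The main technical obstacle is justifying the interchange of derivative and integral and the validity of de~Bruijn's identity for an \emph{arbitrary} $\X$ with $\Expt\|\X\|^2 < \infty$ — in particular, without assuming $\X$ has a density (e.g., the uniform measure on $\sqrt{d}\SphereD$ used elsewhere in the paper). This is handled in \cite{guo2005mutual} by exploiting the fact that the Gaussian smoothing $\X + \sqrt{t}\Z'$ always has a $C^\infty$ density with rapidly decaying derivatives for every $t > 0$, so that all the relevant integrals converge absolutely and dominated convergence applies; any further regularity or integrability issue is dispatched by a standard truncation/approximation argument on $\X$.
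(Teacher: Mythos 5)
Your primary route --- citing \cite[Theorem 2]{guo2005mutual} --- is exactly what the paper does: Lemma~\ref{lem:I-MMSE} is stated in the appendix with no proof beyond that citation, so on that score you match. Your supplementary sketch is a genuinely different (and correct) derivation from the one in Guo--Shamai--Verd\'u, who argue via an incremental-channel decomposition rather than through de~Bruijn's identity. Your chain checks out: $\EntDiff(\Y(s)\mid\X)=\tfrac{d}{2}\log(2\pi e)$ is constant in $s$; the substitution $t=1/s$ contributes only $\tfrac{d}{2}\log s$; de~Bruijn gives $\tfrac{d}{dt}\EntDiff(\X+\sqrt{t}\Z')=\tfrac12 J$; Tweedie's formula yields $\Expt\|\X-\Expt[\X\mid\X+\sqrt{t}\Z']\|^2 = dt - t^2 J$ (the cross term $2t^{3/2}\Expt\langle \Z',\nabla\log p\rangle=-2t^2J$ is what makes this work, and is worth writing out); and the chain rule $\tfrac{d}{ds}=-t^2\tfrac{d}{dt}$ combines these into $\tfrac{d}{2s}-\tfrac{t^2}{2}J=\tfrac12\,\mathrm{mmse}(s)$. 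What the citation buys is precisely the regularity bookkeeping you flag at the end: de~Bruijn for a general input law (no density, only $\Expt\|\X\|^2<\infty$, as needed here for the spherical prior) requires justifying differentiation under the integral, and Guo--Shamai--Verd\'u's proof sidesteps part of this by never invoking de~Bruijn at all. So your sketch is a fine self-contained alternative, but for the purposes of this paper the bare citation suffices and is what the authors use.
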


\end{document}